\documentclass[]{article}

%%%%%%%%%%%%%%%%%%%%%%%%%%%%%%%%%%%%%%%%%%%%%%%%%
%%%%%%%%%%%%%%%%%%%%%%%%%%%%%%%%%%%%%%%%%%%%%%%%%
\usepackage{amsmath,amssymb,amsthm,amsfonts}   %%
\numberwithin{equation}{section}   %%%%%%%%%%%%%%
\usepackage{physics}   %%%%%%%%%%%%%%%%%%%%%%%%%%
\usepackage[english]{babel}   %%%%%%%%%%%%%%%%%%%
\usepackage[margin=3cm]{geometry}   %%%%%%%%%%%%%
\usepackage{csquotes}   %%%%%%%%%%%%%%%%%%%%%%%%%
\usepackage{mathtools}   %%%%%%%%%%%%%%%%%%%%%%%%
\usepackage{relsize}   %%%  mathlarger{}    %%%%%
\usepackage{indentfirst}  %% first paragrph %%%%%
\usepackage{thm-restate}   %%%%%%%%%%%%%%%%%%%%%%
\usepackage{mathrsfs}   %%%%%%%%%%%%%%%%%%%%%%%%%
\usepackage[dvipsnames]{xcolor}   %%%%%%%%%%%%%%%
\usepackage{paracol}    %%%%%%%%%%%%%%%%%%%%%%%%%
\usepackage{enumitem}    %%%%%%%%%%%%%%%%%%%%%%%%
\usepackage{soul}       %%%%%%%%%%%%%%%%%%%%%%%%%
\usepackage{aligned-overset}    %%%%%%%%%%%%%%%%%
\usepackage[thinc]{esdiff}  %%%%%%%%%%%%%%%%%%%%%
\usepackage{upgreek}    %%%%%%%%%%%%%%%%%%%%%%%%%
\usepackage{datetime}   %%%%%%%%%%%%%%%%%%%%%%%%%
\usepackage{tensor}     %%%%%%%%%%%%%%%%%%%%%%%%%
\usepackage{caption}    %%%%%%%%%%%%%%%%%%%%%%%%%
\usepackage{subcaption} %%%%%%%%%%%%%%%%%%%%%%%%%
\usepackage{appendix}   %%%%%%%%%%%%%%%%%%%%%%%%%
\allowdisplaybreaks

%%%%%%%%%%%%%%%%%%%%%%%%%%%%%%%%%%%%%%%%%%%%%%%%%
%%%%%%%%    Cleveref compatible href and bib   %%
\usepackage[dvipsnames]{xcolor}     %%%%%%%%%%%%%
\usepackage{hyperref}	            %%%%%%%%%%%%%
\newcommand\myshade{85}         %%%%%%%%%%%%%%%%%
\colorlet{mylinkcolor}{violet}  %%%%%%%%%%%%%%%%%
\colorlet{mycitecolor}{YellowOrange}    %%%%%%%%%
\colorlet{myurlcolor}{Aquamarine}       %%%%%%%%%
\hypersetup{            %%%%%%%%%%%%%%%%%%%%%%%%%
    linkcolor  = blue!\myshade!black,      %%%%%%
    citecolor  = mycitecolor!\myshade!black,   %%
    urlcolor   = myurlcolor!\myshade!black, %%%%%
    colorlinks = true,      %%%%%%%%%%%%%%%%%%%%%
    hypertexnames=false,
}               %%%%%%%%%%%%%%%%%%%%%%%%%%%%%%%%%
\usepackage{cleveref}           %%%%%
\usepackage[                                %%%%%
style = alphabetic,                         %%%%%
maxbibnames=9,maxcitenames=9,               %%%%%
doi=true,isbn=true,giveninits=true,         %%%%%
backend=biber,                              %%%%%
url=false,arxiv=true]{biblatex}             %%%%%
\addbibresource{ref.bib}	                %%%%%

%%%%%%%%%%%%%%%%%%%%%%%%%%%%%%%%%%%%%%%%%%%%%%%%%

%%%%%%%%%%%%%%%%%%%%%%%%%%%%%%%%%%%%%%%%%%%%%%%%%
%%%%%%%%    Theorems etc.   %%%%%%%%%%%%%%%%%%%%%
\theoremstyle{plain}   %%%%%%%%%%%%%%%%%%%%%%%%%%
   %%%%%%%%%%%%%%%%%%%%%
 
\newtheorem{biglemma}{Lemma}
    %%%%%
   %%%%%%%%%%%%%%%%%%%%%
\theoremstyle{plain}   %%%%%%%%%%%%%%%%%%%%%%%%%%
\newtheorem{assumption}{Assumption}     %%%%%%%%%
   %%%%%%%%%
\crefname{assumption}{assumption}{assumptions}
     %%%%%%%%%%%%%%%%%%%%%%%%%
%%%%%%%%%%%%%%%%%%%%%%%%%%%%%%%%%%%%%%%%%%%%%%%%%

%%%%%%%%%%%%%%%%%%%%%%%%%%%%%%%%%%%%%%%%%%%%%%%%%
\newtheorem{lemma}{Lemma}[section]   %%%%%%%%%%%%
\crefname{lemma}{lemma}{lemmas}
\newtheorem{theorem}{Theorem}[section]  %%%%%%%%%

\newtheorem{definition}{Definition}[section]    %
\newtheorem{prop}[lemma]{Proposition}   %%%%%%%%%
\crefname{prop}{proposition}{propositions}
\newtheorem{cor}[lemma]{Corollary}   %%%%%%%%%%%%
\newtheorem{remark}[lemma]{Remark}   %%%%%%%%%%%%
\newtheorem{dfn}{Definition}    %%%%%%%%%%%%%%%%%   
%%Fancy Package to allow page numbering changes%%
\usepackage{pageslts}   %%%%%%%%%%%%%%%%%%%%%%%%%
\usepackage{fancyhdr}   %%%%%%%%%%%%%%%%%%%%%%%%%
\pagestyle{fancy}   %%%%%%%%%%%%%%%%%%%%%%%%%%%%%
\fancyhf{}   %%%%%%%%%%%%%%%%%%%%%%%%%%%%%%%%%%%%
   %%%%%%%%%%%%
\setlength{\headheight}{15pt}   %%%%%%%%%%%%%%%%%
%%%%%%%%%%%%%%%%%%%%%%%%%%%%%%%%%%%%%%%%%%%%%%%%%
%%%%%%%%%%%%%%%%%%%%%%%%%%%%%%%%%%%%%%%%%%%%%%%%%
%%%%%%%%    Colord QED Symbole              %%%%%
 %%
%%%%%%%%%%%%%%%%%%%%%%%%%%%%%%%%%%%%%%%%%%%%%%%%%

\theoremstyle{plain}
\newtheorem{example}{Example}

%%%%%%%%%%%%%%%%%%%%%%%%%%%%%%%%%%%%%%%%%%%%%%%%%%%%%%%%%%%%%%%%%%%%%%%%%%%%
%%%%%%%%    Math notation syntax                    %%%%%%%%%%%%%%%%%%%%%%%%
\usepackage{xparse} %%%%%%%%%%%%%%%%%%%%%%%%	new mb.., mc.., ms       %%%     
\ExplSyntaxOn       %%%%%%%%%%%%%%%%%%%%%%%%                             %%%
\NewDocumentCommand{\definealphabet}{mmmm}{%                             %%%
  \int_step_inline:nnn { `#3 } { `#4 }{%                                 %%%
    \cs_new_protected:cpx { #1 \char_generate:nn { ##1 }{ 11 } }{%       %%%
      \exp_not:N #2 { \char_generate:nn { ##1 } { 11 } }}}}              %%%           
\ExplSyntaxOff      %%%%%%%%%%%%%%%%%%%%%%%%                             %%%
\definealphabet{mb}{\mathbb}{A}{Z}      %%%%%%%%                         %%%  
\definealphabet{mc}{\mathcal}{A}{Z}     %%%%%%%%                         %%%
\definealphabet{scr}{\mathscr}{A}{Z}    %%%%%%%%                         %%%
\definealphabet{mf}{\mathfrak}{A}{Z}    %%%%%%%%                         %%%
%%%%%%%%%%%%%%%%%%%%%%%%%%%%%%%%%%%%%%%%%%%%%%%%%%%%%%%%%%%%%%%%%%%%%%%%%%%%

%%%%%%%%%%%%%%%%%%%%%%%%%%%%%%%%%%%%%%%%%%%%%%%%%%%%%%%%%%%%%%%%%%%%%%%%%%%%
%%%%%%%%    Standalone Math Operators          %%%%%%%%%%%%%%%%%%%%%%%%%%%%%
\DeclareMathOperator{\states}{\mbS_D}   %%%%%%%%%   states amthbb{S}_d  %%%%
\DeclareMathOperator{\tens}{\mbT_{d,D}}
\DeclareMathOperator{\mapspace}{\mcL(\mbM_D)}   %%%%%%%%%%%%%%%%%%%%%%%%%%%%
         %%%%%%%%%%%%%%%%%%%%%%%%%%%%
\DeclarePairedDelimiterX{\corr}[2]{\mathrm{corr}(}{)}{#1,#2}       %%%%%%%%%%%
\DeclarePairedDelimiterX{\cov}[2]{\mathrm{cov}(}{)}{#1,#2}         %%%%%%%%%%%
\DeclarePairedDelimiterX{\dis}[2]{\mathrm{d}(}{)}{#1, #2} %%%%%%%%%%%%%%%%%%%%
   %%%%%%%%%%%%%%%%%%%%

 %%%%%%%%%   mathcal{B} for Borel  %%
%   mathcal{B}() scales   %%  
\DeclareMathOperator{\pr}{Pr}           %%%%%%%%%   text{Pr} probability  %%
\newcommand{\adj}{^\dagger}    %%%%%%%%%   adjoint               %%
  %%%%%   outcomes space        %%
\DeclareMathOperator{\matrices}{\mbM_D} %%%%%%%%%   mathbb{M}_d matrices  %%
\DeclarePairedDelimiterX{\inner}[2]{\langle}{\rangle}{#1| #2} %%% <1,  2> %%
    %%%%%   essential_sup       %%%%
\DeclareMathOperator{\ten}{\mbC^d\otimes(\mbC^D)^{\otimes 2}}   %%% tensors%
    %%%%%   distribution/law      %%

\newcommand{\Law}{\mathrm{Law}}
%%%%%%%%%%%%%%%%%%%%%%%%%%%%%%%%%%%%%%%%%%%%%%%%%%%%%%%%%%%%%%%%%%%%%%%%%%%%

%%%%%%%%%%%%%%%%%%%%%%%%%%%%%%%%%%%%%%%%%%%%%%%%%%%%%%%%%%%%%%%%%%%%%%%%%%%%
%%%%%%%%    Defining trace  %%%%%%%%%%%%%%%%%%%%%%%%%%%%%%%%%%%%%%%%%%%%%%%%
 %%%%%%%%%%%%%%%%%%%%%%%%%%%%%%%%%%%   tr{} scales %%%%%%%%%%%%
\renewcommand{\tr}[1]{\operatorname{tr}\left[#1\right]} %   tr[] sclaes %%%%

 %%%%%%%%%%%%%%%%%%   tr          %%%%%%%%%%%%
   %%%%%%%%%%%%
%%%%%%%%%%%%%%%%%%%%%%%%%%%%%%%%%%%%%%%%%%%%%%%%%%%%%%%%%%%%%%%%%%%%%%%%%%%%
                        %%%%%%%%%%%%%%%%%%%%%%%%
               %%%%%%%%%%%%%%%%%%%%%%%%
      %%%%%%%%%%%%%%%%%%%%%%%%
%%%%%%%%%%%%%%%%%%%%%%%%%%%%%%%%%%%%%%%%%%%%%%%%%%%%%%%%%%%%%%%%%%%%%%%%%%%%

\renewcommand{\vec}[1]{\text{vec}(#1)}

%%%%%%%%%%%%%%%%%%%%%%%%%%%%%%%%%%%%%%%%%%%%%%%%%%%%%%%%%%%%%%%%%%%%%%%%%%%%
%%%%%%%%    Defining Rank-1 Sup-op      %%%%%%%%%%%%%%%%%%%%%%%%%%%%%%%%%%%%
        %%%%%%%%%%%%%%%%%%%%%%%%%%%%%%%%
         %%%%%%%%%%%%%%%%%%%%
%%%%%%%%%%%%%%%%%%%%%%%%%%%%%%%%%%%%%%%%%%%%%%%%%%%%%%%%%%%%%%%%%%%%%%%%%%%%

%%%%%%%%%%%%%%%%%%%%%%%%%%%%%%%%%%%%%%%%%%%%%%%%%%%%%%%%%%%%%%%%%%%%%%%%%%%%
\newcommand{\avg}[1]{\mbE\left[#1\right]}           %%%%%%%%%%%%%%%%%%%%%%%%
%%%%%%%%%%%%%%%%%%%%%%%%%%%%%%%%%%%%%%%%%%%%%%%%%%%%%%%%%%%%%%%%%%%%%%%%%%%%
%%%%%%%%%%%%%%%%%%%%%%%%%%%%%%%%%%%%%%%%%%%%%%%%%%%%%%%%%%%%%%%%%%%%%%%%%%%%
\DeclarePairedDelimiterX{\cnum}[1]{\mathrm{c}(}{)}{#1}  %%%%%%%%%%%%%%%%%%%%
%%%%%%%%%%%%%%%%%%%%%%%%%%%%%%%%%%%%%%%%%%%%%%%%%%%%%%%%%%%%%%%%%%%%%%%%%%%%
%%%%%%%%%%%%%%%%%%%%%%%%%%%%%%%%%%%%%%%%%%%%%%%%%%%%%%%%%%%%%%%%%%%%%%%%%%%%
\newcommand{\floor}[1]{\left\lfloor #1 \right\rfloor}   %%%%%%%%%%%%%%%%%%%%
\newcommand{\ceil}[1]{\left\lceil #1 \right\rceil}      %%%%%%%%%%%%%%%%%%%%
%%%%%%%%%%%%%%%%%%%%%%%%%%%%%%%%%%%%%%%%%%%%%%%%%%%%%%%%%%%%%%%%%%%%%%%%%%%%
%%%%%%%%%%%%%%%%%%%%%%%%%%%%%%%%%%%%%%%%%%%%%%%%%%%%%%%%%%%%%%%%%%%%%%%%%%%%
\DeclarePairedDelimiterX{\diam}[1]{\text{diam}(}{)}{#1} %%%%%%%%%%%%%%%%%%%%
%%%%%%%%%%%%%%%%%%%%%%%%%%%%%%%%%%%%%%%%%%%%%%%%%%%%%%%%%%%%%%%%%%%%%%%%%%%%
%%%%%%%%%%%%%%%%%%%%%%%%%%%%%%%%%%%%%%%%%%%%%%%%%%%%%%%%%%%%%%%%%%%%%%%%%%%%
\newcommand{\proj}{{\mathlarger{\boldsymbol{\cdot}}}} %%%%%%%%%%%%%%%%%%%%%%
%%%%%%%%%%%%%%%%%%%%%%%%%%%%%%%%%%%%%%%%%%%%%%%%%%%%%%%%%%%%%%%%%%%%%%%%%%%%
%%%%%%%%%%%%%%%%%%%%%%%%%%%%%%%%%%%%%%%%%%%%%%%%%%%%%%%%%%%%%%%%%%%%%%%%%%%%
\let\oldker\ker         %%%%%%%%%%%%%%%%%%%%%%%%%%%%%%%%%%%%%%%%%%%%%%%%%%%%
\renewcommand{\ker}[1]{\oldker\left(#1\right)}  %%%%%%%%%%%%%%%%%%%%%%%%%%%%
%%%%%%%%%%%%%%%%%%%%%%%%%%%%%%%%%%%%%%%%%%%%%%%%%%%%%%%%%%%%%%%%%%%%%%%%%%%%
%%%%%%%%%%%%%%%%%%%%%%%%%%%%%%%%%%%%%%%%%%%%%%%%%%%%%%%%%%%%%%%%%%%%%%%%%%%%
     %%%%%%%%%%%%%%%%%%%%%%%%%%%%%%%%%%%%%%%%%%%%%%%%%%%%
       %%%%%%%%%%%%%%%%%%%%%%%%%%%%%%%%%%%%
%%%%%%%%%%%%%%%%%%%%%%%%%%%%%%%%%%%%%%%%%%%%%%%%%%%%%%%%%%%%%%%%%%%%%%%%%%%%
%%%%%%%%%%%%%%%%%%%%%%%%%%%%%%%%%%%%%%%%%%%%%%%%%%%%%%%%%%%%%%%%%%%%%%%%%%%%

%%%%%%%%%%%%%%%%%%%%%%%%%%%%%%%%%%%%%%%%%%%%%%%%%%%%%%%%%%%%%%%%%%%%%%%%%%%%
%%%%%%%% For Tensor Network Digrams     %%%%%%%%%%%%%%%%%%%%%%%%%%%%%%%%%%%%
\usepackage{tikz}       %%%%%%%%%%%%%%%%%%%%%%%%%%%%%%%%%%%%%%%%%%%%%%%%%%%%
\usepackage{tikzscale}  %%%%%%%%%%%%%%%%%%%%%%%%%%%%%%%%%%%%%%%%%%%%%%%%%%%%
\usetikzlibrary{positioning, shapes.misc, arrows.meta, fit, backgrounds, decorations.markings, shapes.geometric, decorations.pathreplacing}
\usepackage{floatrow}   %%%%%%%%%%%%%%%%%%%%%%%%%%%%%%%%%%%%%%%%%%%%%%%%%%%%
\usepackage{adjustbox}  %%%%%%%%%%%%%%%%%%%%%%%%%%%%%%%%%%%%%%%%%%%%%%%%%%%%
\usepackage[section]{placeins}  %%%%%%%%%%%%%%%%%%%%%%%%%%%%%%%%%%%%%%%%%%%%
%%%%%%%%%%%%%%%%%%%%%%%%%%%%%%%%%%%%%%%%%%%%%%%%%%%%%%%%%%%%%%%%%%%%%%%%%%%%
%%%%%%%% Colors     %%%%%%%%%%%%%%%%%%%%%%%%%%%%%%%%%%%%%%%%%%%%%%%%%%%%%%%%
\definecolor{newgreen}{RGB}{64,176,166} %%%%%%%%%%%%%%%%%%%%%%%%%%%%%%%%%%%%
\definecolor{newyellow}{RGB}{225,190,106}   %%%%%%%%%%%%%%%%%%%%%%%%%%%%%%%%
\definecolor{newred}{RGB}{220,50,32}    %%%%%%%%%%%%%%%%%%%%%%%%%%%%%%%%%%%%
%%%%%%%%%%%%%%%%%%%%%%%%%%%%%%%%%%%%%%%%%%%%%%%%%%%%%%%%%%%%%%%%%%%%%%%%%%%%
%%%%%%%%%%%%%%%%%%%%%%%%%%%%%%%%%%%%%%%%%%%%%%%%%%%%%%%%%%%%%%%%%%%%%%%%%%%%
%%%%%%%%    Tikz Styles   %%%%%%%%%%%%%%%%%%%%%%%%%%%%%%%%%%%%%%%%%%%%%%%%%%
%%%%%%%%    Rank 3 tensor transpose     %%%%%%%%%%%%%%%%%%%%%%%%%%%%%%%%%%%%
\tikzset{tensor/.style={
    rectangle, draw=blue!50, fill=gray!20, thick,
    minimum width=1cm, minimum height=1cm,
    inner sep=0pt, text width=1cm, align=center, font=\small, alias=thisone, 
    append after command={
        (thisone.west) -- ++(-0.5cm, 0)
        (thisone.east) -- ++(0.5cm, 0)
       (thisone.north) -- ++(0, 0.5cm)
    }
}}
%%%%%%%%%%%%%%%%%%%%%%%%%%%%%%%%%%%%%%%%%%%%%%%%%%%%%%%%%%%%%%%%%%%%%%%%%%%%
%%%%%%%%    Rank 3 tensor   %%%%%%%%%%%%%%%%%%%%%%%%%%%%%%%%%%%%%%%%%%%%%%%%
\tikzset{tensorT/.style={
    rectangle, draw=blue!50, fill=gray!20, thick,
    minimum width=1cm, minimum height=1cm,
    inner sep=0pt, text width=1cm, align=center, font=\small, alias=thisone, 
    append after command={
        (thisone.west) -- ++(-0.5cm, 0)
        (thisone.east) -- ++(0.5cm, 0)
       (thisone.south) -- ++(0, -0.5cm)
    }
}}
%%%%%%%%%%%%%%%%%%%%%%%%%%%%%%%%%%%%%%%%%%%%%%%%%%%%%%%%%%%%%%%%%%%%%%%%%%%%
%%%%%%%%%%%%%%%%%%%%%%%%%%%%%%%%%%%%%%%%%%%%%%%%%%%%%%%%%%%%%%%%%%%%%%%%%%%%

%%%%%%%%%%%%%%%%%%%%%%%%%%%%%%%%%%%%%%%%%%%%%%%%%%%%%%%%%%%%%%%%%%%%%%%%%%%%
%%%%%%%%%%%%%%%%%%%%%%%%%%%%%%%%%%%%%%%%%%%%%%%%%%%%%%%%%%%%%%%%%%%%%%%%%%%%
\title{Correlation Lengths for Stochastically Generated Matrix Product States}
\usepackage{authblk}

\author[1]{Lubashan Pathirana\thanks{lpk@math.ku.dk}}
\author[2]{Albert H Werner \thanks{werner@math.ku.dk}}
\affil[1,2]{Department of Mathematical Sciences and QMATH, University of Copenhagen, Denmark.}
\date{}
%%%%%%%%%%%%%%%%%%%%%%%%%%%%%%%%%%%%%%%%%%%%%%%%%%%%%%%%%%%%%%%%%%%%%%%%%%%%
%%%%%%%%%%%%%%%%%%%%%%%%%%%%%%%%%%%%%%%%%%%%%%%%%%%%%%%%%%%%%%%%%%%%%%%%%%%
\begin{document}
\pagenumbering{arabic}
\lhead{\thepage}
\maketitle
\vspace{-1cm}
%%%%%%%%%%%%%%%%%%%%%%%%%%%%%%%%%%%%%%%%%%%%%%%%%%%%%%%%%%%%%%%%%%%%%%%%%%%%
%%%%%%%%%%%%%%%%%%%%%%%%%%%%%%%%%%%%%%%%%%%%%%%%%%%%%%%%%%%%%%%%%%%%%%%%%%%%

\begin{abstract}
    We introduce a general model of stochastically generated matrix product states (MPS) in which the local tensors share a common distribution and form a strictly stationary sequence, without requiring spatial independence.
    Under natural conditions on the associated transfer operators, we prove the existence of thermodynamic limits of expectation values of local observables and establish almost-sure exponential decay of two-point correlations.
    In the homogeneous (random translation-invariant) case, for any error tolerance in probability, the two-point function decays exponentially in the distance between the two sites, with a deterministic rate.
    In the i.i.d.\ case, the exponential decay still holds with a deterministic rate, with the probability approaching one exponentially fast in the distance. 
    For strictly stationary ensembles with decaying spatial dependence, the correlation decay quantitatively reflects the mixing profile: $\rho$–mixing yields polynomial bounds with high probability, while stretched-exponential (resp. exponential) decay in $\rho$ (resp. $\beta$) yields stretched-exponential (resp. exponential) decay of the two-point function, again with correspondingly strong high-probability guarantees.  
    Altogether, the framework unifies and extends recent progress on stationary ergodic ensembles and Gaussian translation-invariant ensembles, providing a transfer-operator route to typical correlation decay in random MPS.
\end{abstract}

%%%%%%%%%%%%%%%%%%%%%%%%%%%%%%%%%%%%%%%%%%%%%%%%%%%%%%%%%%%%%%%%%%%%%%%%%%%%
%%%%%%%%%%%%%%%%%%%%%%%%%%%%%%%%%%%%%%%%%%%%%%%%%%%%%%%%%%%%%%%%%%%%%%%%%%%%

 % {\hypersetup{linkcolor=Aquamarine!65!black}\tableofcontents}

%%%%%%%%%%%%%%%%%%%%%%%%%%%%%%%%%%%%%%%%%%%%%%%%%%%%%%%%%%%%%%%%%%%%%%%%%%%%
%%%%%%%%%%%%%%%%%%%%%%%%%%%%%%%%%%%%%%%%%%%%%%%%%%%%%%%%%%%%%%%%%%%%%%%%%%%%

\section{Introduction}
\label{sec:intro}

%%%%%%%%%%%%%%%%%%%%%%%%%%%%%%%%%%%%%%%%%%%%%%%%%%%%%%%%%%%%%%%%%%%%%%%%%%%%
%%%%%%%%%%%%%%%%%%%%%%%%%%%%%%%%%%%%%%%%%%%%%%%%%%%%%%%%%%%%%%%%%%%%%%%%%%%%

\subsection{Matrix Product States}
\label{subsection:MPS}

%%%%%%%%%%%%%%%%%%%%%%%%%%%%%%%%%%%%%%%%%%%%%%%%%%%%%%%%%%%%%%%%%%%%%%%%%%%%
%%%%%%%%%%%%%%%%%%%%%%%%%%%%%%%%%%%%%%%%%%%%%%%%%%%%%%%%%%%%%%%%%%%%%%%%%%%%

    Matrix product states (MPS) play a central role in quantum many-body theory and quantum information. They provide efficient representations of not-too-strongly entangled quantum states \cite{fannes1992finitely, PerezGarcia2007} and form the basis of powerful numerical methods, such as the density matrix renormalization group (DMRG) \cite{White_1992,schollwock2011} and its descendants. MPS and tensor-network methods have become key tools for understanding quantum many-body systems, for instance, in the many-body localized phase where constrained entanglement growth preserves computational tractability \cite{abanin2019,orus2019}. MPS and their multidimensional generalization—tensor networks (TN)—also appear prominently in variational quantum algorithms (VQA) \cite{Haghshenas_2022,Rudolph_2023,Watanabe_2024}. A key structural feature of MPS is exponential clustering of correlations in the thermodynamic limit, a property that underlies their algorithmic usefulness \cite{fannes1992finitely, PerezGarcia2007}. For \emph{deterministic, translation–invariant} MPS, exponential clustering is well understood via the spectral theory of transfer operators \cite{orus2014,hastings2007}.
    
    Random MPS arise naturally in the study of noisy quantum circuits, statistical ensembles of tensor networks, and models of disordered quantum matter such as many-body localized phases \cite{Brandao2016, Nahum2017, Bauer2013}. The behavior of certain \emph{random} MPS (and more generally random tensor networks) has only recently begun to be systematically explored \cite{Garnerone_2010,Haferkamp_2021,abanin2019,Movassagh_2022,Lancien_2021,Cheng_2023, Chen_2024, L_io_2025,Hayden2016, Qi2018}. 
    %new citations above
    Understanding their typical correlation properties is crucial for bridging quantum information theory with statistical mechanics in disordered settings. Moreover, randomized MPS have proved useful for analyzing the presence (and absence) of barren plateaus in quantum machine learning \cite{Garcia2023,Liu_2022}.
    
    Motivated by these developments, and building on recent advances in the mathematical theory of tensor networks \cite{cirac2021,bridgeman2017,orus2014,orus2019}, we introduce and analyze a general model of \emph{stochastically generated matrix product states}. 
    In this model, the local order-three tensors $(\mcA_n)_{n\in\mbZ}\subset \mbC^d\otimes(\mbC^D)^{\otimes 2}$ share a common marginal law $\nu$ but may be arbitrarily correlated (in a stochastic sense) across sites. 
    %newsentence below
    The term \emph{stochastic MPS} has been used in a different sense in the literature—namely for nonnegative MPS representations of \emph{classical} probability distributions (e.g.\ \cite{Temme_2010}); in contrast, our \emph{stochastically generated MPS} are \emph{quantum} MPS whose local tensors (equivalently, transfer operators) are produced by an underlying stochastic mechanism, and the randomness is in the generation of the state itself.
    This framework unifies random translation-invariant, i.i.d., and more general stochastically-correlated ensembles, extending prior results for ergodic sequences of MPS \cite{Movassagh_2022} and Gaussian homogeneous (translation-invariant) MPS \cite{Lancien_2021}, and complementing studies of Haar-random states \cite{Garnerone_2010}. Furthermore, our analysis of transfer operators directly interfaces with open-system dynamics realized by repeated interactions with a stochastic environment \cite{Attal2006,Pathirana_2023,Pathirana_2025,bruneau2008random,bruneau2010infinite,nechita2012random,bougron2022markovian}.

%%%%%%%%%%%%%%%%%%%%%%%%%%%%%%%%%%%%%%%%%%%%%%%%%%%%%%%%%%%%%%%%%%%%%%%%%%%%
%%%%%%%%%%%%%%%%%%%%%%%%%%%%%%%%%%%%%%%%%%%%%%%%%%%%%%%%%%%%%%%%%%%%%%%%%%%%

\subsection{Stochastically Generated MPS}
\label{subsection:STIMPS}

%%%%%%%%%%%%%%%%%%%%%%%%%%%%%%%%%%%%%%%%%%%%%%%%%%%%%%%%%%%%%%%%%%%%%%%%%%%%
%%%%%%%%%%%%%%%%%%%%%%%%%%%%%%%%%%%%%%%%%%%%%%%%%%%%%%%%%%%%%%%%%%%%%%%%%%%%

    \begin{figure}[ht]
         \begin{tikzpicture}[scale=1, thick]   
            \tikzstyle{line} = [thin, gray]
            \tikzstyle{vertical} = [thin, blue]
            \def\dy{1} % Vertical line height
            \node at (-7,0.25) {$\ket{\Psi_{N}}:=$};
            \node[tensor] (a1) at (-5,0) {\small{$\mcA_{N}$}};
            \node[tensor] (a2) at (-3,0) {\small{$\mcA_{N-1}$}};
            \node at (-1,0) {\(\ldots\)};
            \node[tensor] (a3) at (0,0) {\small{$\mcA_0$}};
            \node at (1,0) {\(\ldots\)};
            \node[tensor] (a4) at (3,0) {\small{$\mcA_{-N+1}$}};
            \node[tensor] (a5) at (5,0) {\small{$\mcA_{-N}$}};
            \foreach \i in {1,...,5} {
            \draw[vertical] (a\i) --++ (0,\dy);
            }
            \draw[line] (-6,0) --(a1);    
            \draw[line] (a1) -- (a2);
            \draw[line] (a2) -- (-1.5,0);
            \draw[line] (-0.5,0) -- (a3);
            \draw[line] (a3) -- (0.5,0);
            \draw[line] (1.5,0) -- (a4);
            \draw[line] (a4) -- (a5);
            \draw[line] (a5) -- (6,0);
            \draw[line] (-6,0) -- (-6,-1);
            \draw[line] (-6,-1) -- (6,-1);
            \draw[line] (6,-1) -- (6,0);
        \end{tikzpicture}
        \caption{Random MPS on $2N+1$ sites with periodic boundary conditions. Throughout, site labels increase from right to left in our tensor-network diagrams.}
    \label{fig:MPS_N}
    \end{figure}  

    In this subsection we introduce a unifying framework for random matrix product states (MPS) that subsumes two prominent regimes studied in the literature: (i) \emph{ergodic} sequences of local tensors as in \cite{movassagh2021theory}, and (ii) \emph{homogeneous} (translation–invariant) random MPS sampled from specific multivariate Gaussian laws as in \cite{Lancien_2021}. 
    The common structural feature in both settings is \emph{strict stationarity} of the sequence of local tensors. 
    We therefore place strict stationarity at the foundation of our model and derive consequences for correlation decay under mild dynamical hypotheses on the associated transfer superoperators.

    \paragraph{The Random Model}
        Fix integers $d,D\ge2$ and consider the space of rank–three local tensors with physical index $d$ and bond index $D$,
            \[
                \mbC^{d}\,\otimes\,(\mbC^{D})^{\otimes2}.
            \]
        A \emph{local tensor} is encoded as $\mcA=(A_1,\ldots,A_d)\in\ten$ with $A_p\in\mbM_D$.  
        Let $(\Omega,\mcF,\Pr)$ be a probability space and let $(\mcA_n)_{n\in\mbZ}$ be a $\ten$–valued stochastic process.

    \begin{definition}[Stochastically Generated Local Tensors]
        \label{def:strict-stationary-local}
        We say that $(\mcA_n)_{n\in\mbZ}$ is \emph{strictly stationary} (or simply stationary) if, for each $r\in\mbZ$, the joint law of $(\mcA_{n_1},\ldots,\mcA_{n_k})$ coincides with that of $(\mcA_{n_1+r},\ldots,\mcA_{n_k+r})$ for every \(k\in\mbN\), every \(n_1,\ldots,n_k\in\mbZ\), and every \(r\in\mbZ\).
        We call a strictly stationary sequence $(\mcA_n)_{n\in\mbZ}$ of $\ten$–valued random variables \emph{stochastically generated local tensors}.  
    \end{definition}

    \begin{remark}
        We use the term “stochastically generated” because strict stationarity precisely means that there exists a measure–preserving dynamical system $(\Omega,\mcF,\Pr,\theta)$ and a measurable sampling rule $\mcS:\Omega\to\ten$ such that $\mcA_n=\mcS\circ\theta^n$ for all $n\in\mbZ$; in other words, the local data are generated by a stationary stochastic process.
        Here $\theta:\Omega\to \Omega$ is $\mcF$-measurable, $\pr$-preserving and invertible with an $\mcF$-measurable inverse.
        We collect such canonical constructions—Markov-, Bernoulli-, conditionally Bernoulli-, and renewal-modulated instances—in Appendix~\ref{app:examples}.

    \end{remark}

    Given a realization $\omega\in\Omega$, placement of $\mcA_m(\omega),\ldots,\mcA_n(\omega)$ on the sites $m,\ldots,n$ and contracting the adjacent bond indices and the two open bond indices at the boundary yields the (generally unnormalized) random MPS vector on $[m,n]$,
    \begin{equation}
    \label{eq:MPS}
       \ket{\Psi_{[m,n]}^\omega}
        :=
        \sum_{p_m,\ldots,p_n=1}^{d}
        \tr{
            A_{p_n}^{(n),\omega}
            \cdots
            A_{p_m}^{(m),\omega}
        }
        \ket{p_m,\ldots,p_n}.
    \end{equation}
    While $\ket{\Psi_{[m,n]}(\omega)}$ need not be translation–invariant for fixed $\omega$, its \emph{law} is translation–invariant whenever $(\mcA_n)$ is strictly stationary. 
    For this reason, we sometimes refer to such ensembles as \emph{stochastically translation–invariant} (STI) MPS.
    It is not assumed a priori that $\ket{\Psi_{[m,n]}(\omega)}$ is normalized.
    Equivalently, we do not assume that $\bra{\Psi_{[m,n]}(\omega)}\ket{\Psi_{[m,n]}(\omega)}=1$.
     
    \begin{figure}
        \begin{subfigure}{0.47\textwidth}
            \resizebox{\textwidth}{!}{
            \begin{tikzpicture}
                \draw (-6,0) --  (-6, -1) ;
                \draw (-5, 0) node[tensor] {$\mathcal{A}_{N}$};
                \draw (-3, 0) node[tensor] {$\mathcal{A}_{N-1}$};
                \draw (-1.5,0) node[] {\ldots};
                \draw (0, 0) node[tensor] {$\mathcal{A}_{0}$};
                 \draw (1.5,0) node[] {\ldots};
                \draw (3, 0) node[tensor] {$\mathcal{A}_{-N+1}$};
                \draw (5, 0) node[tensor] {$\mathcal{A}_{-N}$};
                \draw (6,0) --  (6, -1) ;
                \draw(6,-1) -- (-6,-1);
                \draw[red] (-5,-0.5) -- (-5,-0.95);
                \draw[red] (-5,-1.05) -- (-5,-2);
                \node at (-5,-2) {\large\textcolor{red}{$\uparrow$}};
                \draw[red] (-3,-0.5) -- (-3,-0.95);
                \draw[red] (-3,-1.05) -- (-3,-2);
                \node at (-3,-2) {\large\textcolor{red}{$\uparrow$}};
                \draw[red] (-0,-0.5) -- (-0,-0.95);
                \draw[red] (-0,-1.05) -- (-0,-2);
                \node at (0,-2) {\large\textcolor{red}{$\uparrow$}};
                \draw[red] (3,-0.5) -- (3,-0.95);
                \draw[red] (3,-1.05) -- (3,-2);
                \node at (3,-2) {\large\textcolor{red}{$\uparrow$}};
                \draw[red] (5,-0.5) -- (5,-0.95);
                \draw[red] (5,-1.05) -- (5,-2);
                \node at (5,-2) {\large\textcolor{red}{$\uparrow$}};
                \draw [decorate,decoration={brace,amplitude=5pt,mirror,raise=4ex},red]
                (-5,-2) -- (5,-2) node[midway,yshift=-3em]{{\color{black}$\omega$}};
            \end{tikzpicture}
            }
            \caption{Random Translation-Invariant MPS.}
            \label{fig:TI_sampled}
        \end{subfigure}
        \begin{subfigure}{0.47\textwidth}
            \resizebox{\textwidth}{!}{
                \begin{tikzpicture}
                    \draw (-6,0) --  (-6, -1) ;
                    \draw (-5, 0) node[tensor] {$\mathcal{A}_{N}$};
                    \draw (-3, 0) node[tensor] {$\mathcal{A}_{N-1}$};
                    \draw (-1.5,0) node[] {\ldots};
                    \draw (0, 0) node[tensor] {$\mathcal{A}_{0}$};
                     \draw (1.5,0) node[] {\ldots};
                    \draw (3, 0) node[tensor] {$\mathcal{A}_{-N+1}$};
                    \draw (5, 0) node[tensor] {$\mathcal{A}_{-N}$};
                    \draw (6,0) --  (6, -1) ;
                    \draw(6,-1) -- (-6,-1);
                    \draw[red] (-5,-0.5) -- (-5,-0.95);
                    \draw[red] (-5,-1.05) -- (-5,-2);
                    \draw (-5,-2.5) node[] {$\omega_N$};
                    \node at (-5,-2) {\large\textcolor{red}{$\uparrow$}};
                    \draw[red] (-3,-0.5) -- (-3,-0.95);
                    \draw[red] (-3,-1.05) -- (-3,-2);
                    \draw (-3,-2.5) node[] {$\omega_{N-1}$};
                    \node at (-3,-2) {\large\textcolor{red}{$\uparrow$}};
                    \draw[red] (-0,-0.5) -- (-0,-0.95);
                    \draw[red] (-0,-1.05) -- (-0,-2);
                    \draw (-0,-2.5) node[] {$\omega_0$};
                    \node at (0,-2) {\large\textcolor{red}{$\uparrow$}};
                    \draw[red] (3,-0.5) -- (3,-0.95);
                    \draw[red] (3,-1.05) -- (3,-2);
                    \draw (3,-2.5) node[] {$\omega_{-N+1}$};
                    \node at (3,-2) {\large\textcolor{red}{$\uparrow$}};
                    \draw[red] (5,-0.5) -- (5,-0.95);
                    \draw[red] (5,-1.05) -- (5,-2);
                    \draw (5,-2.5) node[] {$\omega_{-N}$}; 
                    \node at (5,-2) {\large\textcolor{red}{$\uparrow$}};
                \end{tikzpicture}
            }
            \caption{Random IID sampled MPS.}
            \label{fig:IID_sampled}
        \end{subfigure}
        \caption{Some random MPS.}
    \end{figure}

    Strict stationarity encompasses, as extreme instances, the following two regimes:
    At one extreme, \emph{maximal} stochastic correlation is realized by placing the \emph{same} random tensor at every site, yielding a \emph{translation–invariant} (TI) (also called homogeneous) random MPS (\Cref{fig:TI_sampled}).
    At the other extreme, the tensors are independent and identically distributed (i.i.d.), producing the least stochastically correlated ensemble (\Cref{fig:IID_sampled}).
    Between these poles lie intermediate families in which spatial stochastic correlations decay with the distance separating sites (e.g.\ $\rho$–, $\beta$–, $\psi$–, or $\phi$–mixing families).
    Thus, the present framework simultaneously covers random translation–invariant and i.i.d.\ models, as well as ensembles with progressively weaker correlations as the separation increases.
    Furthermore, by taking the marginal to be a Dirac mass, this setting also allows one to study completely deterministic TI-MPS. 

    \paragraph{Two Simple Examples}
    
        \begin{example}
        \normalfont
         Fix $M$ rank–three tensors $\mcA^{(1)},\dots,\mcA^{(M)}$ and, at each site $n$, choose one of these $M$ tensors independently with probability $1/M$.
         \Cref{fig:unif_sampled} depicts the resulting ensemble; dotted arrows indicate the uniform choice, and distinct colors emphasize independence.
        \end{example}
    
        \begin{example}[Markov–modulated MPS]
        \label{ex:MMPS}
        \normalfont
            Let $X=(X_n)_{n\in\mbZ}$ be an irreducible, aperiodic, time–homogeneous Markov chain on a finite state space $S=\{1,\dots,m\}$ with stationary distribution $\pi$.  For each $i\in S$ fix a \emph{branch law} $\lambda_{B_i}\in\mcP(\ten)$, and let $\{B^{(i)}_n\}_{n\in\mbZ}$ be i.i.d.\ with one–site law $\lambda_{B_i}$, independently across $i$ and independently of $X$.  On the product probability space, carry the left shift $\theta$ and define the local tensor at site $n$ by
            \[
            \mcA_n(\omega) \;:=\; B^{(X_n(\omega))}_n(\omega)\in\ten.
            \]
            Then $(\mcA_n)_{n\in\mbZ}$ is strictly stationary (because each factor process is shift–invariant), with one–site marginal $\Law(\mcA_0)=\sum_{i=1}^m \pi_i\,\lambda_{B_i}$.  Conditionally on the whole path of $X$, the variables $(\mcA_n)$ are independent with $\Law(\mcA_n\mid X)=\lambda_{B_{X_n}}$; unconditionally they inherit the temporal dependence of $X$.
        \end{example}

     \begin{figure}[ht]
        \centering
        \scalebox{0.8}{
            \begin{tikzpicture}[node distance=1.8cm and 1.2cm, >=Stealth]
                \node[tensor] (A1) at (0,4) {$\mathcal{A}_{k_1}$};
                \node[tensor] (A2) at (2,4) {$\mathcal{A}_{k_2}$};
                \node[tensor] (A3) at (4,4) {$\mathcal{A}_{k_3}$};
                \node[tensor] (A4) at (6,4) {$\mathcal{A}_{k_4}$};
                \node at (7,4) {\(\cdots\)};
                \node[tensor] (A5) at (8,4) {$\mathcal{A}_{k_n}$};
                \draw[thick] (A1.east) -- (A2.west);
                \draw[thick] (A2.east) -- (A3.west);
                \draw[thick] (A3.east) -- (A4.west);
                \draw[thick] (A4.east) -- (6.75,4);  
                \draw[thick] (7.25,4) -- (A5.west);
                \foreach \x in {A1,A2,A3,A4,A5} {
                  \draw[thick] (\x.north) -- ++(0,0.8);
                }
                \draw[thick] (A1.west) -- (-1,4);
                \draw[thick] (-1,4)--(-1,3);
                \draw[thick] (A5.east) -- (9,4);
                \draw[thick] (9,4) --(9,3);
                \draw[thin] (9,3)--(-1,3);
                \node[tensor] (T1) at (0,0) {$\mathcal{A}_1$};
                \node[tensor] (T2) at (2,0) {$\mathcal{A}_2$};
                \node[tensor] (T3) at (4,0) {$\mathcal{A}_3$};
                \node at (6,0) {\(\cdots\)};
                \node[tensor] (TN) at (8,0) {$\mathcal{A}_M$};
                \begin{pgfonlayer}{background}
                  \node[draw=gray, fill=brown!10, thick, rounded corners, inner sep=0.5cm, fit=(T1)(T2)(T3)(TN)] {};
                \end{pgfonlayer}
                \node at (4,-1.4) {\textbf{Pool of tensors (picked uniformly at random at each site)}};
                \foreach \source in {T1,T2,T3,TN} {
                  \draw[dashed,->,thick,red] (\source) -- (A1);
                }
                \foreach \source in {T1,T2,T3,TN} {
                  \draw[dashed,->,thick,blue] (\source) -- (A2);
                }
                \foreach \source in {T1,T2,T3,TN} {
                  \draw[dashed,->,thick,green!60!black] (\source) -- (A3);
                }
                \foreach \source in {T1,T2,T3,TN} {
                  \draw[dashed,->,thick,orange] (\source) -- (A4);
                }
                \foreach \source in {T1,T2,T3,TN} {
                  \draw[dashed,->,thick] (\source) -- (A5);
                }
            \end{tikzpicture}
        }
        \caption{Random MPS generated by uniform sampling from a finite collection of tensors.}
        \label{fig:unif_sampled}
    \end{figure}

    \paragraph{Transfer Maps and Standing Assumptions (informal)}
        To each site \(n\), we associate the completely positive (not necessarily trace--preserving) transfer superoperator
        \[
            \phi_n^\omega:\mbM_D\to\mbM_D,
            \qquad
            \phi_n^\omega(X)=\sum_{p=1}^d A_p^{(n),\omega}\,X\,(A_p^{(n),\omega})\adj,
        \]
        and we write
        \[
            \Phi_{[a,b]}^\omega:=\phi_b^\omega\circ\cdots\circ\phi_a^\omega
        \]
        for forward compositions.
        The analysis proceeds under the following structural assumptions:
        \begin{description}
            \item[(A1)] Neither the transfer maps nor their adjoints annihilate any quantum state.
            \item[(A2)] Finite forward compositions of the transfer maps become positivity improving after some random but almost surely finite depth.
        \end{description}
        The assumptions used in \cite{movassagh2021theory} imply both conditions above.
        In the homogeneous Gaussian models of \cite{Lancien_2021}, the full set of hypotheses considered here is satisfied; see Appendix~\ref{section:examples} for a self-contained discussion.
        
        \paragraph{Preview of Results (informal)}
            Under the structural assumptions above, one first obtains projective boundary states and rank--one asymptotics for long transfer blocks.
            The periodic thermodynamic limit is obtained by combining the two exterior transfer blocks into a single positive map before applying the rank--one approximation.
            This avoids any separate lower bound on the overlap of the two boundary processes.
            Consequently, the same assumptions yield fast decay of connected two--point correlations, with high probability, in a range of stochastic regimes:
            \begin{itemize}
                \item \textbf{Random TI--MPS.} For any error tolerance \(\epsilon\in(0,1)\), the two--point function decays \emph{exponentially} in the separation with a \emph{deterministic} rate and prefactor, with probability at least \(1-\epsilon\).
                \item \textbf{I.i.d.\ local tensors.} The two--point function decays \emph{exponentially} with \emph{exponentially} high probabilities. 
                \item \textbf{Stochastically decorrelating ensembles.} If the maximal spatial \(\rho\)-mixing coefficients satisfy \(\rho_n\to0\), then for every \(k\in\mbN\) we obtain \emph{polynomial} decay \( |n-m|^{-k}\) with probability at least \(1-|n-m|^{-k}\).
                If \(\rho_n\) decays at least stretched--exponentially, then the two--point function decays \emph{stretched--exponentially}, with probability approaching \(1\) at a stretched--exponential rate in the separation.
                Under \emph{exponentially} decaying \(\beta\)-mixing, we obtain \emph{exponential} decay with probability approaching \(1\) at a \emph{subexponential} rate.
            \end{itemize}
        Precise statements appear in \Cref{sec:Main_Results}.

%%%%%%%%%%%%%%%%%%%%%%%%%%%%%%%%%%%%%%%%%%%%%%%%%%%%%%%%%%%%%%%%%%%%%%%%%%%%
%%%%%%%%%%%%%%%%%%%%%%%%%%%%%%%%%%%%%%%%%%%%%%%%%%%%%%%%%%%%%%%%%%%%%%%%%%%%

\subsection{Norm Conventions and Matrix Representations}
\label{subsection:Norm_Conventions}

%%%%%%%%%%%%%%%%%%%%%%%%%%%%%%%%%%%%%%%%%%%%%%%%%%%%%%%%%%%%%%%%%%%%%%%%%%%%
%%%%%%%%%%%%%%%%%%%%%%%%%%%%%%%%%%%%%%%%%%%%%%%%%%%%%%%%%%%%%%%%%%%%%%%%%%%%

    Throughout, let $\mbM_D$ denote the space of all $D\times D$ complex matrices. 
    We equip $\mbM_D$ with the Schatten $p$-norms for $p\in[1,\infty]$. 
    For $p\in[1,\infty)$ and $\rho\in\mbM_D$,
    \[
        \norm{\rho}_{p}
        \;=\;
        \left(\tr{|\rho|^{\,p}}\right)^{1/p}
        \qquad\text{with}\qquad
        |\rho|:=\sqrt{\rho\adj\rho},
    \]
    and the Schatten $\infty$-norm (spectral norm) is
    \[
        \norm{\rho}_{\infty} \;=\; \sup\{\|\rho u\|_{2} : u\in\mbC^{D},\ \|u\|_{2}=1\}.
    \]
    We use $\inner{\,\cdot\,}{\,\cdot\,}$ for the Hilbert–Schmidt inner product on $\mbM_D$,
    \begin{equation}
        \inner{A}{B} \;=\; \tr{A\adj B}
        \qquad
        (A,B\in\mbM_D).
    \end{equation}

    Let
    \[
      \states \;:=\; \{\rho\in\mbM_D:\ \rho\ge 0,\ \tr{\rho}=1\}
      \qquad\text{and}\qquad
      \states^{\mathrm{o}} \;:=\; \{\rho\in\states:\ \rho>0\},
    \]
    denote, respectively, the set of $D$–dimensional density matrices and the subset of strictly positive definite density matrices. 
    We write $\partial\states:=\states\setminus\states^{\mathrm{o}}$ for the rank–deficient states.
    For $\rho\in\states^{\mathrm{o}}$ let $\eta(\rho)$ denote its smallest eigenvalue.
    
    Let $\mapspace$ denote the space of linear maps $\phi:\mbM_D\to\mbM_D$. 
    We endow $\mapspace$ with the induced operator norm (the $p\!\to\!p$ norm):
    \[
        \norm{\phi}_{p\to p}
        \;=\;
        \sup\{\norm{\phi(\rho)}_{p} : \rho\in\mbM_D,\ \norm{\rho}_{p}\le 1\}\qquad(1\le p\le\infty).
    \]
   
    Every $\phi\in\mapspace$ admits a unique $D^2\times D^2$ matrix representation $K(\phi)$ via the standard vectorization map $\vec{\,\cdot\,}:\mbM_D\to\mbC^{D^2}$:
    \[
        \vec{\phi(\rho)} \;=\; K(\phi)\,\vec{\rho}\qquad(\rho\in\mbM_D).
    \]
    The assignment $\phi\mapsto K(\phi)$ is a linear isomorphism $\mapspace\to\mbM_{D^2}$, often called the \emph{natural (Liouville) representation}. 
    When the context is clear, we do not distinguish notationally between a superoperator and its Liouville matrix.

    We define the \emph{superoperator trace} by
    \[
        \Tr{\phi}
        \;:=\;
        \tr{K(\phi)}.
    \]
    Here $\tr{\,\cdot\,}$ on the right-hand side is the usual trace on $\mbM_{D^2}$.
    To avoid ambiguity, we reserve $\Tr{\,\cdot\,}$ for the trace of a superoperator and $\tr{\,\cdot\,}$ for the ordinary matrix trace, with the matrix dimension determined by context. 
    Equivalently, for any orthonormal basis $\{e_i\}_{i=1}^D$ of $\mbC^D$,
    \[
        \Tr{\phi}
        =
        \sum_{i,j=1}^{D}
        \tr{
            \ket{e_j}\bra{e_i}
            \,
            \phi\left(\ket{e_i}\bra{e_j}\right)
        }.
    \]
    In particular, if
    \[
      \phi(\rho) \;=\; \sum_{i,j=1}^{d} A_i\,\rho\,B_j^{T},
    \]
    where $(\,\cdot\,)^{T}$ denotes matrix transpose, then, with the convention $\vec{A\rho B^{T}}=(B\otimes A)\vec{\rho}$,
    \[
      K(\phi)\;=\;\sum_{i,j=1}^{d} B_j\otimes A_i,
      \qquad
      \Tr{\phi}\;=\;\sum_{i,j=1}^{d} \tr{A_i}\,\tr{B_j}.
    \]

    We shall also use the following elementary consequence of the Liouville representation.

    \begin{lemma}
    \label{lemma:superoperator_trace_cyclicity}
        Let \(\phi_1,\ldots,\phi_r\in\mapspace\).
        Then
        \[
            K\left(\phi_1\circ\cdots\circ\phi_r\right)
            =
            K(\phi_1)\cdots K(\phi_r).
        \]
        Consequently, for every \(j\in\{1,\ldots,r\}\),
        \[
            \Tr{\phi_1\circ\cdots\circ\phi_r}
            =
            \Tr{
                \phi_j\circ\phi_{j+1}\circ\cdots\circ\phi_r
                \circ\phi_1\circ\cdots\circ\phi_{j-1}
            }.
        \]
    \end{lemma}

    \begin{proof}
        We first prove the composition identity for two maps.
        Let \(\phi,\psi\in\mapspace\), and let \(\rho\in\mbM_D\).
        By the defining property of the Liouville representation,
        \[
        \begin{aligned}
            \vec{(\phi\circ\psi)(\rho)}
            &=
            \vec{\phi\left(\psi(\rho)\right)}
            \\
            &=
            K(\phi)\vec{\psi(\rho)}
            \\
            &=
            K(\phi)K(\psi)\vec{\rho}.
        \end{aligned}
        \]
        Since this identity holds for every \(\rho\in\mbM_D\), and since vectorization is a linear isomorphism, it follows that,
        \[
            K\left(\phi\circ\psi\right)
            =
            K(\phi)K(\psi).
        \]
        Iterating this identity gives
        \[
            K\left(\phi_1\circ\cdots\circ\phi_r\right)
            =
            K(\phi_1)\cdots K(\phi_r).
        \]
        Hence
        \[
            \Tr{\phi_1\circ\cdots\circ\phi_r}
            =
            \tr{K(\phi_1)\cdots K(\phi_r)}.
        \]
        By cyclicity of the ordinary trace on \(\mbM_{D^2}\),
        \[
        \begin{aligned}
            \tr{K(\phi_1)\cdots K(\phi_r)}
            &=
            \tr{
                K(\phi_j)\cdots K(\phi_r)
                K(\phi_1)\cdots K(\phi_{j-1})
            }
            \\
            &=
            \Tr{
                \phi_j\circ\phi_{j+1}\circ\cdots\circ\phi_r
                \circ\phi_1\circ\cdots\circ\phi_{j-1}
            }.
        \end{aligned}
        \]
        This proves the claim.
    \end{proof}
    
    Given $\phi\in\mapspace$, its adjoint $\phi\adj\in\mapspace$ (with respect to the Hilbert–Schmidt inner product) is the unique map satisfying
    \begin{equation}
        \inner{\phi(A)}{B}\;=\; \inner{A}{\phi\adj(B)}
        \qquad\text{for all }A,B\in\mbM_D,
    \end{equation}
    and the Liouville representation respects adjoints:
    \[
        K(\phi\adj)\;=\;K(\phi)\adj.
    \]
    For background on matrix norms and operator theory, see \cite{horn2012matrix,bhatia2013matrix}; for superoperators, see \cite{watrous2018theory}.

%%%%%%%%%%%%%%%%%%%%%%%%%%%%%%%%%%%%%%%%%%%%%%%%%%%%%%%%%%%%%%%%%%%%%%%%%%%%
%%%%%%%%%%%%%%%%%%%%%%%%%%%%%%%%%%%%%%%%%%%%%%%%%%%%%%%%%%%%%%%%%%%%%%%%%%%%

\subsection{Local Observables \& Transfer Operators}
\label{subsection:Observables_Transfer_op}

%%%%%%%%%%%%%%%%%%%%%%%%%%%%%%%%%%%%%%%%%%%%%%%%%%%%%%%%%%%%%%%%%%%%%%%%%%%%
%%%%%%%%%%%%%%%%%%%%%%%%%%%%%%%%%%%%%%%%%%%%%%%%%%%%%%%%%%%%%%%%%%%%%%%%%%%%

    \paragraph{Matrix Product State (MPS)}
        Let $(\mcA_n)_{n\in\mbZ}\subseteq\ten$ be the sequence of random tensors introduced above.
        For each $N\in\mbN$, we build (in general, unnormalized) random MPS on  $2N+1$ sites,
        \(
            [-N,N]:=\{-N,-N{+}1,\ldots,-1,0,1,\ldots,N{-}1,N\},
        \)
        by contracting the virtual (bond) indices of successive tensors and closing the remaining two open virtual legs (periodic boundary conditions).
        This yields a family $(\ket{\Psi_N^\omega})_{N\in\mbN}$; see \Cref{fig:MPS_N}.
        Strictly speaking, the physical state on $[-N,N]$ is the normalized pure state
        \[
            \frac{\ket{\Psi_N^\omega}\!\bra{\Psi_N^\omega}}{\inner{\Psi_N^\omega}{\Psi_N^\omega}},
        \]
        whenever \(\ket{\Psi_N^\omega}\neq0\). 
        But for brevity, we refer to the (unnormalized) vector $\ket{\Psi_N^\omega}$ as “the random MPS on $2N{+}1$ sites”.

    \medskip
    
    \paragraph{Transfer Operators}
        For each local tensor $\mcA_n=(A^{(n)}_1,\ldots,A^{(n)}_d)\in\ten$, the \emph{transfer operator} at site $n$ is the linear map
        \begin{equation}
        \label{eq:transfer_operator}
            \phi_n^\omega:\mbM_D\to\mbM_D,
            \qquad
            \phi_n^\omega(X)\;=\;\sum_{p=1}^d A^{(n),\omega}_p\,X\,(A^{(n),\omega}_p)\adj,
        \end{equation}
        which is completely positive but not necessarily trace-preserving.
        With the Liouville convention from \S\ref{subsection:Norm_Conventions}, the $D^2\times D^2$ matrix of $\phi_n^\omega$ is
        \begin{equation}
        \label{eq:K-phi}
            K(\phi_n^\omega)\;=\;\sum_{p=1}^d \overline{A^{(n),\omega}_p}\,\otimes\,A^{(n),\omega}_p.
        \end{equation}

    \medskip  
    
    \paragraph{Observable-dependent Transfer Operators}
        Let \(\{\ket{p}\}_{p=1}^d\) be the chosen physical basis, and write $(\mbO_n)_{p,q}:=\bra{p}\mbO_n\ket{q}$. 
        The \(\mbO_n\)-transfer superoperator is
        \begin{equation}
        \label{eq:On-superop}
            \mcO_n^\omega:\mbM_D\to\mbM_D,
            \qquad
            \mcO_n^\omega(X)
            :=
            \sum_{p,q=1}^d
            (\mbO_n)_{p,q}\,
            A_q^{(n),\omega}\,
            X\,
            \left(A_p^{(n),\omega}\right)\adj.
        \end{equation}
        Its Liouville matrix is
        \begin{equation}
        \label{eq:K-On}
            K\left(\mcO_n^\omega\right)
            =
            \sum_{p,q=1}^d
            (\mbO_n)_{p,q}\,
            \overline{A_p^{(n),\omega}}
            \otimes
            A_q^{(n),\omega}.
        \end{equation}
        In particular, setting \(\mbO_n=\mbI_d\) recovers the standard transfer operator:
        \[
            \mcO_n^\omega=\phi_n^\omega,
            \qquad
            K(\mcO_n^\omega)=K(\phi_n^\omega).
        \]
        For an observable \(\mbO_{[m,n]}\) supported on the block \(\{m,m+1,\ldots,n\}\), we write \(\mcO_{[m,n]}^\omega\) for the associated transfer superoperator obtained by inserting \(\mbO_{[m,n]}\) on the physical legs over that block and contracting as above.

    \medskip

    \paragraph{Expectation Values}
        Fix \(m\le n\) and \(N>\max\{|m|,|n|\}\).
        The normalized quantum expectation of a local observable \(\mbO_{[m,n]}\) in the state \(\ket{\Psi_N^\omega}\) is
        \[
            \frac{\bra{\Psi_N^\omega}\,\mbO_{[m,n]}\,\ket{\Psi_N^\omega}}
                 {\inner{\Psi_N^\omega}{\Psi_N^\omega}},
        \]
        whenever the denominator is not $0$.
        We write
        \[
            \Phi_N^{\mathrm R,\omega}
            :=
            \phi_N^\omega\circ\cdots\circ\phi_{n+1}^\omega,
            \qquad
            \Phi_N^{\mathrm L,\omega}
            :=
            \phi_{m-1}^\omega\circ\cdots\circ\phi_{-N}^\omega,
        \]
        and
        \[
            \Phi_{[m,n]}^\omega
            :=
            \phi_n^\omega\circ\cdots\circ\phi_m^\omega.
        \]
        Using standard tensor-network calculus together with \eqref{eq:K-phi}--\eqref{eq:K-On}, one obtains
        \begin{equation}
        \label{eq:block-exp-as-supertrace}
            \bra{\Psi_N^\omega}\,\mbO_{[m,n]}\,\ket{\Psi_N^\omega}
            =
            \Tr{
                \Phi_N^{\mathrm R,\omega}
                \circ
                \mcO_{[m,n]}^\omega
                \circ
                \Phi_N^{\mathrm L,\omega}
            }.
        \end{equation}
        Similarly,
        \begin{equation}
        \label{eq:block-norm-as-supertrace}
            \inner{\Psi_N^\omega}{\Psi_N^\omega}
            =
            \Tr{
                \Phi_N^{\mathrm R,\omega}
                \circ
                \Phi_{[m,n]}^\omega
                \circ
                \Phi_N^{\mathrm L,\omega}
            }.
        \end{equation}
        Hence
        \begin{equation}
        \label{eq:block-normalized-exp-as-supertrace}
            \frac{\bra{\Psi_N^\omega}\,\mbO_{[m,n]}\,\ket{\Psi_N^\omega}}
                 {\inner{\Psi_N^\omega}{\Psi_N^\omega}}
            =
            \frac{
                \Tr{
                    \Phi_N^{\mathrm R,\omega}
                    \circ
                    \mcO_{[m,n]}^\omega
                    \circ
                    \Phi_N^{\mathrm L,\omega}
                }
            }{
                \Tr{
                    \Phi_N^{\mathrm R,\omega}
                    \circ
                    \Phi_{[m,n]}^\omega
                    \circ
                    \Phi_N^{\mathrm L,\omega}
                }
            }.
        \end{equation}
        The one-site formula is obtained from \eqref{eq:block-exp-as-supertrace} by taking \(m=n\).
        Figures~\ref{fig:mcO}--\ref{fig:transfer} illustrate these constructions.
    
    \begin{figure}[ht]
        \captionsetup[subfigure]{}
        \centering
        \subcaptionbox{the $\mbO_n$-transfer operator, $\mcO_n$.}[.24\textwidth]
            {
                \begin{tikzpicture}    
                    \tikzstyle{vertical} = [thin, blue]
                        \node[tensor] (a) at (0,0) {$\mathcal{A}_{n}$};
                        \node[tensorT] (b) at (0,3) {$\overline{\mathcal{A}}_{n}$};
                        \draw[vertical] (a) -- (b);
                        \draw[fill=newyellow!40] (0,1.5) circle [radius=0.7];
                        \draw (0,1.5) node[] {$\mbO_n$};
                    
                        \begin{scope}[thin, black, decoration={
                            markings,
                            mark=at position 0.5 with {\arrow[red]{>}}
                        }] 
                        \draw[postaction={decorate}]  (1,0) -- (a);
                        \draw[postaction={decorate}]  (1,3) -- (b);
                        \draw[postaction={decorate}]  (a) -- (-1,0);
                        \draw[postaction={decorate}]  (b) -- (-1,3);
                        \end{scope}
                    \end{tikzpicture}
            }
        \subcaptionbox{$\mbO_{[m,n]}$-transfer operator, $\mcO_{[m,n]}$.}[.74\textwidth]
            {
                \begin{tikzpicture}
                    \tikzstyle{vertical} = [thin, blue]
                        \node[tensor] (a) at (-3,0) {$\mathcal{A}_{n}$};
                        \node[tensor] (b) at (-1,0) {$\mathcal{A}_{n-1}$};
                        \node at (0.5,0) {$\ldots$};
                        \node[tensor] (c) at (2,0) {$\mathcal{A}_{m+1}$};
                        \node[tensor] (d) at (4,0) {$\mathcal{A}_{m}$};
                        
                        \node[tensorT] (e) at (-3,3) {$\overline{\mathcal{A}}_{n}$};
                        \node[tensorT] (f) at (-1,3) {$\overline{\mathcal{A}}_{n-1}$};
                        \node at (0.5,3) {$\ldots$};
                        \node[tensorT] (g) at (2,3) {$\overline{\mathcal{A}}_{m+1}$};
                        \node[tensorT] (h) at (4,3) {$\overline{\mathcal{A}}_{m}$};
                    
                         \begin{scope}[thin, black, decoration={
                            markings,
                            mark=at position 0.5 with {\arrow[red]{<}}
                        }] 
                        \draw[postaction={decorate}]  (-4,0) -- (a);
                        \draw[postaction={decorate}]  (a) -- (b);
                        \draw[postaction={decorate}]  (b) -- (0,0);
                        \draw[postaction={decorate}]  (1,0) -- (c);
                        \draw[postaction={decorate}]  (c) -- (d);
                        \draw[postaction={decorate}]  (d) -- (5,0);
                    
                        \draw[postaction={decorate}]  (-4,3) -- (e);
                        \draw[postaction={decorate}]  (e) -- (f);
                        \draw[postaction={decorate}]  (f) -- (0,3);
                        \draw[postaction={decorate}]  (1,3) -- (g);
                        \draw[postaction={decorate}]  (g) -- (h);
                        \draw[postaction={decorate}]  (h) -- (5,3);
                        \end{scope}
                        
                        \draw[vertical] (a) -- (e);
                        \draw[vertical] (b) -- (f);
                        \draw[vertical] (c) -- (g);
                        \draw[vertical] (d) -- (h);
                     
                        \draw[fill=newyellow!40, rounded corners=0.2cm] (-3.5, 1) rectangle (4.5, 2);
                        \draw (0.5,1.5) node[] {$\mbO_{[m,n]}$};
                    \end{tikzpicture}
                \label{fig:transfer_m_n}
            }
        \caption{Transfer operators}
        \label{fig:mcO}
    \end{figure}
        
    \begin{figure}[ht]
        \centering
            \begin{tikzpicture}
            \draw (-5, 0) node[tensor] {$\mathcal{A}_{N}$};
            \draw (-3, 0) node[tensor] {$\mathcal{A}_{N-1}$};
            \draw (-1.5,0) node[] {\ldots};
            \draw (0, 0) node[tensor] {$\mathcal{A}_{n}$};
            \draw (1.5,0) node[] {\ldots};
            \draw (3, 0) node[tensor] {$\mathcal{A}_{-N+1}$};
            \draw (5, 0) node[tensor] {$\mathcal{A}_{-N}$};
            \draw (-5, 3) node[tensorT] {$\overline{\mathcal{A}}_{N}$};
            \draw (-3, 3) node[tensorT] {$\overline{\mathcal{A}}_{N-1}$};
            \draw (-1.5,3) node[] {\ldots};
            \draw (0, 3) node[tensorT] {$\overline{\mathcal{A}}_{n}$};
            \draw (1.5,3) node[] {\ldots};
            \draw (3, 3) node[tensorT] {$\overline{\mathcal{A}}_{-N+1}$};
            \draw (5, 3) node[tensorT] {$\overline{\mathcal{A}}_{-N}$};
            \draw (-6,0) -- (-6,-1);
            \draw (6,0) -- (6,-1);
            \draw (-6,-1) -- (6,-1);
            \draw (-6,3) -- (-6,4);
            \draw (6,3) -- (6,4);
            \draw (-6,4) -- (6,4);
            \draw (-5,1) -- (-5,2);
            \draw (-3,1) -- (-3,2);
            \draw[fill=newyellow!40] (0,1.5) circle [radius=0.7];
            \draw (0,1.5) node[] {$\mbO_n$};
            \draw (5,1) -- (5,2);
            \draw (3,1) -- (3,2);
        \end{tikzpicture}
            \caption{Expectation of observable $\mbO_n$ in (unnormalized) state $\ket{\Psi_N}$.}
        \label{fig:transfer}
        
    \end{figure} 
  
%%%%%%%%%%%%%%%%%%%%%%%%%%%%%%%%%%%%%%%%%%%%%%%%%%%%%%%%%%%%%%%%%%%%%%%%%%%%
%%%%%%%%%%%%%%%%%%%%%%%%%%%%%%%%%%%%%%%%%%%%%%%%%%%%%%%%%%%%%%%%%%%%%%%%%%%

\section{Main Results}
\label{sec:Main_Results}

%%%%%%%%%%%%%%%%%%%%%%%%%%%%%%%%%%%%%%%%%%%%%%%%%%%%%%%%%%%%%%%%%%%%%%%%%%%%
%%%%%%%%%%%%%%%%%%%%%%%%%%%%%%%%%%%%%%%%%%%%%%%%%%%%%%%%%%%%%%%%%%%%%%%%%%%
    
    As established above, normalized expectation values of local observables in random matrix product states (MPS) can be expressed in terms of the associated transfer superoperators.
    We now state our main results under the standing assumptions below.

    \paragraph{Stationary realization and notation.}
        Throughout this section we work on an invertible probability-preserving dynamical system \((\Omega,\mcF,\Pr,\theta)\), together with a measurable sampling map \(\mcS:\Omega\to\ten\), and we write
        \[
            \mcA_n(\omega):=\mcS(\theta^n\omega),
            \qquad n\in\mbZ.
        \]
        Thus \((\mcA_n)_{n\in\mbZ}\) is a strictly stationary sequence of local tensors with common marginal law on \(\ten\).
        This realization is available for every strictly stationary sequence, and we adopt it throughout.
        The associated transfer superoperators are denoted by \((\phi_n^\omega)_{n\in\mbZ}\).
        For \(\omega\in\Omega\) and \(n\in\mbN\), we write
        \begin{equation}
        \label{eq:Phi-n}
            \Phi^{(n)}_\omega
            :=
            \phi^\omega_{n-1}\circ\cdots\circ\phi^\omega_{1}\circ\phi^\omega_{0},
        \end{equation}
        with the convention \(\Phi^{(0)}_\omega:=\mathrm{Id}_{\mbM_D}\).

    A linear map \(\Psi:\mbM_D\to\mbM_D\) is called \emph{strictly positive} (or \emph{positivity improving}) if \(\Psi(X)> 0\) for every \(X\in\mbM_D\) with \(X\ge0\) and \(X\neq0\).
    We also write $\Psi \proj X = \frac{\Psi(X)}{\tr{\Psi(X)}}$ whenever the denominator is non-zero. 

    \begin{assumption}
    \label{assumption1}
        With probability one,
        \[
            \ker{\phi_0}\cap\states=\emptyset
            \qquad\text{and}\qquad
            \ker{\phi_0\adj}\cap\states=\emptyset.
        \]
    \end{assumption}

    Assumption~\ref{assumption1} ensures that neither \(\phi_0\) nor \(\phi_0\adj\) annihilates any state.
    By strict stationarity, the same kernel property then holds almost surely for every \(\phi_n\) and \(\phi_n\adj\).

    \begin{assumption}
    \label{assumption2}
        Almost surely there exists \(n_*(\omega)\in\mbN\) such that \(\Phi^{(n_*(\omega))}_\omega\) is strictly positive.
    \end{assumption}

    Assumption~\ref{assumption2} asserts that, with full probability, some finite composition of the local transfer maps is positivity improving.
    In particular (see \Cref{prop:ESP}), once strict positivity is attained it persists: for all \(k\ge n_*(\omega)\), the map \(\Phi^{(k)}_\omega\) is strictly positive.

    The next lemma records the boundary-state structure produced by \Cref{assumption1,assumption2}.
    Its proof is deferred to the appendix, where we establish a stronger quantitative version that will also be used in the proof of \Cref{thm:thermodynamic_limit}.
    See \Cref{lemma:boundary_states_quantitative} in \Cref{section:appen_1}.
    
    \begin{biglemma}
    \label{lemma:boundary_states_main}
        Let \((\mcA_n)_{n\in\mbZ}\) be the stationary realization above, and let \((\phi_n^\omega)_{n\in\mbZ}\) be the associated transfer maps.
        Assume \Cref{assumption1,assumption2}.
        Then there exists a \(\theta\)-invariant set \(\Omega_0\subseteq\Omega\) with \(\pr(\Omega_0)=1\) such that, for every \(\omega\in\Omega_0\), there exist two families of states
        \[
            \{Z_n(\omega)\}_{n\in\mbZ}\subset\states^{\mathrm o},
            \qquad
            \{Z_n'(\omega)\}_{n\in\mbZ}\subset\states^{\mathrm o},
        \]
        with the following properties.
        \begin{enumerate}
            \item[\emph{(A)}] \textbf{Projective limits.}
            For every fixed \(k\in\mbZ\),
            \begin{equation}
            \label{eq:boundary_limits_forward_main}
                \lim_{N\to\infty}
                \bigl(\phi_k^\omega\circ\cdots\circ\phi_{-N}^\omega\bigr)\proj\states
                =
                \{Z_k(\omega)\},
            \end{equation}
            and
            \begin{equation}
            \label{eq:boundary_limits_dual_main}
                \lim_{N\to\infty}
                \bigl(\phi_{k+N}^\omega\circ\cdots\circ\phi_k^\omega\bigr)\adj\proj\states
                =
                \{Z_k'(\omega)\},
            \end{equation}
            in trace norm, uniformly over the initial state.
        
            \item[\emph{(B)}] \textbf{Cocycle relations.}
            For every \(k\in\mbZ\),
            \begin{equation}
            \label{eq:boundary_cocycle_main}
                Z_k(\omega)
                =
                \frac{\phi_k^\omega\!\bigl(Z_{k-1}(\omega)\bigr)}
                     {\tr{\phi_k^\omega(Z_{k-1}(\omega))}},
                \qquad
                Z_k'(\omega)
                =
                \frac{(\phi_k^\omega)\adj\!\bigl(Z_{k+1}'(\omega)\bigr)}
                     {\tr{(\phi_k^\omega)\adj(Z_{k+1}'(\omega))}}.
            \end{equation}
        
            \item[\emph{(C)}] \textbf{Asymptotic rank--one approximation.}
            For every \(m\le n\), writing
            \[
                \phi^\omega_{[m,n]}
                :=
                \phi_n^\omega\circ\cdots\circ\phi_m^\omega,
            \]
            there exists a measurable nonnegative random variable \(\varepsilon_{m,n}:\Omega\to[0,\infty)\) such that
            \begin{equation}
            \label{eq:rank_one_main_qualitative}
                \norm{
                    \frac{\phi^\omega_{[m,n]}}
                         {\tr{(\phi^\omega_{[m,n]})\adj(\mbI_D)}}
                    -
                    \Xi^\omega_{[m,n]}
                }_{1\to1}
                \le
                \varepsilon_{m,n}(\omega),
            \end{equation}
            where
            \[
                \Xi^\omega_{[m,n]}(X)
                :=
                \tr{Z_m'(\omega)\,X}\,Z_n(\omega),
                \qquad X\in\mbM_D.
            \]
            Moreover, for \(\omega\in\Omega_0\),
            \[
                \varepsilon_{m,n}(\omega)\to0
                \quad\text{as }n\to+\infty\text{ with }m\text{ fixed},
            \]
            and
            \[
                \varepsilon_{m,n}(\omega)\to0
                \quad\text{as }m\to-\infty\text{ with }n\text{ fixed}.
            \]
            \end{enumerate}
    \end{biglemma}

    Lemma~\ref{lemma:boundary_states_main} yields the boundary states that govern the asymptotic left and right transfer dynamics.
    The following theorem shows that these boundary states determine the periodic thermodynamic limit under the same two standing assumptions.
    The proof uses cyclicity of the superoperator trace to combine the two exterior transfer blocks before applying a deterministic rank--one approximation.
    Thus no lower bound on the overlap \(\tr{Z'_{-N}(\omega)Z_N(\omega)}\) is required.
    
    With these conventions in place, we obtain the following thermodynamic-limit theorem.
    
    \begin{restatable}[]{thm}{thermodynamic}
    \label{thm:thermodynamic_limit}
        Let \((\mcA_n)_{n\in\mbZ}\) be a strictly stationary sequence with common marginal law on \(\ten\), and let \((\phi_n^\omega)_{n\in\mbZ}\) be the associated transfer superoperators.
        Assume \Cref{assumption1,assumption2}.
        Let \(\{Z_k(\omega)\}_{k\in\mbZ}\) and \(\{Z_k'(\omega)\}_{k\in\mbZ}\) be the boundary families furnished by \Cref{lemma:boundary_states_main}.
        Then, with probability \(1\), for every \(m,n\in\mbZ\) with \(m\le n\),
        \begin{equation}
        \label{eq:TDL-functional}
            \lim_{N\to\infty}
            \frac{\bra{\Psi_N^\omega}\,\mbO_{[m,n]}\,\ket{\Psi_N^\omega}}
                 {\inner{\Psi_N^\omega}{\Psi_N^\omega}}
            =
            \frac{
                \tr{
                    Z'_{n+1}(\omega)\,
                    \mcO_{[m,n]}^\omega\left(Z_{m-1}(\omega)\right)
                }
            }{
                \tr{
                    Z'_{n+1}(\omega)\,
                    \Phi_{[m,n]}^\omega\left(Z_{m-1}(\omega)\right)
                }
            }
            =:
            \mcT_\omega(\mbO_{[m,n]}),
        \end{equation}
        where \(\mbO_{[m,n]}\) is any Hermitian operator on \((\mbC^d)^{\otimes(n-m+1)}\) supported on the sites \([m,n]\), where \(\mcO_{[m,n]}^\omega\) is the observable-dependent transfer superoperator associated with \(\mbO_{[m,n]}\), and where \(\Phi_{[m,n]}^\omega:=\phi_n^\omega\circ\cdots\circ\phi_m^\omega\).
    \end{restatable}

    The denominator in \eqref{eq:TDL-functional} is strictly positive almost surely.
    Indeed, \(Z'_{n+1}(\omega)\in\states^{\mathrm o}\) by \Cref{lemma:boundary_states_main}.
    Moreover, \(\Phi_{[m,n]}^\omega\left(Z_{m-1}(\omega)\right)\ge0\), and repeated use of \Cref{assumption1} shows that this positive matrix is nonzero.
    Hence
    \[
        \tr{
            Z'_{n+1}(\omega)\,
            \Phi_{[m,n]}^\omega\left(Z_{m-1}(\omega)\right)
        }
        >
        0.
    \]
    We also point out that in the theorem statement above, the quantity \(\inner{\Psi_N^\omega}{\Psi_N^\omega}\) is strictly positive almost surely for sufficiently large $N$. 

    The thermodynamic limit in \Cref{thm:thermodynamic_limit} furnishes a rigorous infinite–volume expectation functional.
    In particular, it allows us to define \emph{two–point correlations} (connected correlations) between local observables in the thermodynamic limit.

    \begin{dfn}
    \label{dfn:correlation_function}
        For Hermitian local observables $\mbO_n$ and $\mbO_m$ supported at sites $n,m\in\mbZ$, respectively, the (connected) two–point correlation function is
        \begin{align}
              f^\omega(n,m)
              &:= \lim_{N\to\infty}
                  \left|
                    \frac{\bra{\Psi_N^\omega}\,\mbO_n\mbO_m\,\ket{\Psi_N^\omega}}
                         {\inner{\Psi_N^\omega}{\Psi_N^\omega}}
                    -
                    \frac{\bra{\Psi_N^\omega}\,\mbO_n\,\ket{\Psi_N^\omega}\;
                          \bra{\Psi_N^\omega}\,\mbO_m\,\ket{\Psi_N^\omega}}
                         {\inner{\Psi_N^\omega}{\Psi_N^\omega}^{2}}
                  \right| \label{eq:f-omega-def} \\
              &= \big|\,\mcT_\omega(\mbO_n\mbO_m)\;-\;\mcT_\omega(\mbO_n)\,\mcT_\omega(\mbO_m)\,\big|. \nonumber
        \end{align}
        Here $\mbO_n\mbO_m$ is understood, e.g.\ for $m<n$, as
        $\mbO_m\otimes\mbI_{[m+1,n-1]}\otimes\mbO_n$, where
        $\mbI_{[a,b]}:=\bigotimes_{i=a}^b \mbI_d$.
    \end{dfn}

    \medskip

    We next quantify the almost–sure decay of the two–point function, with a disorder–dependent rate and prefactor.

    \begin{restatable}[]{thm}{decayrandom}
    \label{thm:decay_random}
        Let \((\mcA_n)_{n\in\mbZ}\) be a strictly stationary sequence of local tensors with common marginal law on \(\ten\), and let \((\phi_n^\omega)_{n\in\mbZ}\) be the associated transfer operators.
        Assume \Cref{assumption1,assumption2}.
        Then there exists an almost surely positive random variable \(\alpha(\omega)>0\) such that, for every lattice point \(x\in\mbZ\), there exists an almost surely finite random prefactor \(g_x(\omega)\in(0,\infty)\) with the following property: for \(\pr\)-almost every \(\omega\), for all sites
        \[
            m<x<n
            \qquad\text{with}\qquad
            n-m\ge2,
        \]
        and for all local observables \(\mbO_n,\mbO_m\),
        \begin{equation}
        \label{eq:exp-decay}
            f^\omega(n,m)
            \le
            \norm{\mbO_n}_\infty\,\norm{\mbO_m}_\infty\,g_x(\omega)\,e^{-\alpha(\omega)(n-m)}.
        \end{equation}
        In particular, the two-point function decays exponentially in the separation \(|n-m|\), with a disorder-dependent rate and site- and disorder-dependent prefactor.
        
        Moreover:
        \begin{enumerate}
            \item
            If \((\mcA_n)_{n\in\mbZ}\) is i.i.d., or more generally stationary ergodic, then \(\alpha\) may be chosen deterministic.
        
            \item
            If \((\mcA_n)_{n\in\mbZ}\) is random homogeneous, then the prefactor may be chosen independent of \(x\): there exists an almost surely finite \(g:\Omega\to(0,\infty)\) such that, for \(\pr\)-almost every \(\omega\),
            \[
                f^\omega(n,m)
                \le
                g(\omega)\,\norm{\mbO_n}_\infty\,\norm{\mbO_m}_\infty\,e^{-\alpha(\omega)|n-m|}
                \qquad\text{for all }|n-m|\ge2.
            \]
        \end{enumerate}
    \end{restatable}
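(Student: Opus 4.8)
The plan is to derive the exponential decay from contraction properties of the forward/backward transfer maps in the Hilbert (projective) metric on the cone of positive matrices, and then transfer those contraction rates to the correlation function via the closed form in \Cref{thm:thermodynamic_limit}. Write $\mcT_\omega(\mbO_n\mbO_m)-\mcT_\omega(\mbO_n)\mcT_\omega(\mbO_m)$ using \eqref{eq:TDL-functional} with all three numerators and denominators expressed through the full-rank boundary states $Z_{m-1}(\omega)$, $Z'_{n+1}(\omega)$ and the compositions $\Phi^\omega_{[m,\,x]}$, $\Phi^\omega_{[x+1,\,n]}$, where $x$ is a cut point with $m<x<n$. After clearing denominators, the difference becomes a ratio whose numerator is a sum of terms each of the form $\langle Z'\,|\,\Psi_1\circ\mcO\circ\Psi_2\,|\,Z\rangle - (\text{product of two such})$; the point is that $\mcO_n$ (resp. $\mcO_m$) differs from $\phi_n$ (resp. $\phi_m$) only on the physical legs at a single site, and the block of $\phi$'s strictly between $m$ and $n$ has length $\ge n-m-1$.

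The key mechanism: by \Cref{assumption1} and \Cref{assumption2}, after the a.s.\ finite entry time $\tau$ the composition $\Phi^{(\tau)}_\omega$ is positivity improving, hence maps all of $\mbM_D$ (restricted to the positive cone) into the interior $\states^{\mathrm o}$; by Birkhoff's theorem such a map has finite Hilbert-metric diameter of its image, and any further composition strictly contracts the Hilbert metric by a factor $<1$. First I would establish, using stationarity together with the ergodic/subadditive machinery (Kingman's subadditive ergodic theorem applied to $\log$ of the Birkhoff contraction coefficients along the stationary sequence), that there is an a.s.\ positive $\alpha(\omega)$ such that a block of $\ell$ consecutive transfer maps contracts the Hilbert metric (equivalently, brings two normalized positive outputs within total-variation/trace distance $\lesssim e^{-\alpha(\omega)\ell}$) once $\ell$ exceeds the relevant entry times. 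Concretely: the normalized action of $\Phi^\omega_{[m+1,n-1]}$ on any two states is $e^{-\alpha(\omega)(n-m)}$-close; feeding this through the one-site perturbations $\mcO_m,\mcO_n$ (which are bounded in cb-norm by $\|\mbO_m\|_\infty,\|\mbO_n\|_\infty$ up to the transfer norms, absorbed into $g_x$) and through the fixed boundary functionals $Z_{m-1},Z'_{n+1}$ yields \eqref{eq:exp-decay}. The site-dependent prefactor $g_x(\omega)$ collects: the norms $\|\Phi^\omega_{[\cdot]}\|$ and inverse smallest-eigenvalues $\eta(\cdot)^{-1}$ of the boundary states near $x$, and the doubly-exponential-in-$\tau$ constants from Birkhoff's estimate at the two ends of the block — all a.s.\ finite by \Cref{assumption2} and the a.s.\ finiteness of the entry times.

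For the two refinements: (1) if $(\mcA_n)$ is i.i.d.\ or stationary ergodic, the Birkhoff contraction coefficients form a stationary ergodic sequence, so Kingman's theorem gives that $\tfrac1\ell\log(\text{contraction over a length-}\ell\text{ block})$ converges a.s.\ to a \emph{deterministic} constant $-\alpha<0$ (the top Lyapunov gap); this $\alpha$ is then $\omega$-independent, while the prefactor still depends on the local environment near $x$. (2) In the homogeneous (translation-invariant) case all transfer maps equal a single random $\phi^\omega$, so the boundary states $Z_{m-1}(\omega)=Z(\omega)$, $Z'_{n+1}(\omega)=Z'(\omega)$ and the entry-time constants do not vary with the site; one absorbs everything into a single a.s.\ finite $g(\omega)$, using that $\phi^\omega$ has a genuine Perron eigenvalue/eigenvector structure (its Liouville matrix has a simple dominant eigenvalue by the positivity-improving property applied to the single map), so the correlation decay is governed by the spectral gap of $\phi^\omega$ with a prefactor depending only on the eigenprojection geometry.

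The main obstacle I anticipate is making the passage from "Hilbert-metric contraction of the inner block" to "bound on the \emph{connected} two-point function" fully rigorous, including the case $n-m=2$ (empty inner block), where one must instead use positivity-improvement at a single adjacent site together with the full-rank boundary states $Z_{m-1},Z'_{n+1}$ to extract one unit of contraction; and, relatedly, controlling the denominators $\langle Z'_{n+1}|\Phi^\omega_{[m,n]}|Z_{m-1}\rangle$ away from $0$ uniformly in $n$ so that the prefactor $g_x(\omega)$ is genuinely a.s.\ finite — this requires the construction of the $Z$'s in \Cref{lemma:c_num}/\Cref{lemma:appn_c_num} (that they are the limiting Perron-type directions of the one-sided compositions) and a quantitative lower bound on their overlap with the forward cone, which is exactly where \Cref{assumption1} (no state is annihilated) does the work.
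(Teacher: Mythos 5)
Your proposal follows essentially the same route as the paper's proof: reduce the connected two-point function to the Birkhoff-type contraction coefficient $\cnum{\phi_{n-1}\circ\cdots\circ\phi_{m+1}}$ of the inner block, obtain the a.s.\ rate from Kingman's subadditive ergodic theorem applied to $\ln\cnum{\Phi^{(n)}}$ (deterministic in the ergodic/i.i.d.\ case), and split the inner block at the anchor $x$ so that the a.s.\ finite entry times produce a prefactor depending only on $x$ (and on nothing, in the homogeneous case). The one substantive difference concerns the normalizations, and it is exactly the point you flag as your main obstacle: rather than clearing the denominators $\langle Z'_{n+1}|\Phi^\omega_{[m,n]}|Z_{m-1}\rangle$ in \eqref{eq:TDL-functional} (which forces you to track three different boundary-state pairs and bound each denominator below), the paper first performs a dynamic gauge fixing (\Cref{section:gauge_transform}), conjugating each tensor by $\sqrt{Z'_k}$ so that every transfer map becomes CPTP while $\mcT_\omega$ is unchanged; the connected correlation is then exactly the trace of $\widetilde{\mcO}_n\circ\big(\widetilde\Phi_{[m+1,n-1]}-D\,\widetilde\Xi_{[m+1,n-1]}\big)\circ\widetilde{\mcO}_m$ applied to $\widetilde Z_{m-1}$, with no denominators at all, and \Cref{lemma:main_bound} yields $f^\omega(n,m)\le 8D\,\norm{\mbO_n}_\infty\norm{\mbO_m}_\infty\,\cnum{\phi_{n-1}\circ\cdots\circ\phi_{m+1}}$ in one line. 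Two small points for your write-up: for $n-m=2$ the inner block is the single map $\phi_{m+1}$, not empty, and since $\cnum{\,\cdot\,}\le 1$ the short-separation cases are absorbed into $g_x(\omega)$ trivially; and the decay factor must be taken strictly larger than the Kingman limit (the paper uses $\mu=(e^{\xi}+1)/2$), since the subadditive limit only gives $\cnum{\Phi^{(k)}}\le e^{(\xi+\epsilon)k}$ eventually, and $\xi=-\infty$ is not excluded --- using the limit value itself as the rate would make the entry times, hence the prefactor, potentially infinite.
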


%%%%%%%%%%%%%%%%%%%%%%%%%%%%%%%%%%%%%%%%%%%%%%%%%%%%%%%%%%%%%%%%%%%%%%%%%%%%
%%%%%%%%%%%%%%%%%%%%%%%%%%%%%%%%%%%%%%%%%%%%%%%%%%%%%%%%%%%%%%%%%%%%%%%%%%%

\subsection{Obtaining Uniform Bounds with High Probability} 
\label{subsection:uniform_bounds}

%%%%%%%%%%%%%%%%%%%%%%%%%%%%%%%%%%%%%%%%%%%%%%%%%%%%%%%%%%%%%%%%%%%%%%%%%%%%
%%%%%%%%%%%%%%%%%%%%%%%%%%%%%%%%%%%%%%%%%%%%%%%%%%%%%%%%%%%%%%%%%%%%%%%%%%%

    With \Cref{thm:decay_random} in hand, we now aim to obtain $\omega$–uniform prefactors and exponential rates with probabilities arbitrarily close to $1$.
    We begin with the two extreme cases—translation–invariant and i.i.d.—and then treat decorrelating (mixing) ensembles.

    \paragraph{Random Translation–invariant MPS}
        For random TI–MPS, we strengthen \Cref{thm:decay_random}: given any error tolerance $\epsilon\in(0,1)$, there exist \emph{deterministic} constants $K(\epsilon)>0$ and $\lambda(\epsilon)>0$ such that the two–point function decays exponentially in the separation, uniformly in $\omega$, with probability at least $1-\epsilon$.

        \begin{restatable}[]{thm}{TICase}
        \label{thm:TIcase}
            Let \((\mcA_n)_{n\in\mbZ}\) be random homogeneous and assume  \Cref{assumption1,assumption2}.
            For each \(\epsilon\in(0,1)\) there exist \(K(\epsilon)>0\) and \(\lambda(\epsilon)>0\) such that, for any local observables \(\mbO_n,\mbO_m\) at sites \(n,m\in\mbZ\) with \(2\le |n-m|\),
            \begin{equation}
            \label{ineq:ticase}
                \pr\left\{
                    f^\omega(n,m)
                    \le
                    K(\epsilon)\,\norm{\mbO_n}_\infty\,\norm{\mbO_m}_\infty\,e^{-\lambda(\epsilon)|n-m|}
                \right\}
                \ge
                1-\epsilon.
            \end{equation}
        \end{restatable}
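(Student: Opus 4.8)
The plan is to leverage Theorem~\ref{thm:decay_random} together with the structure of the translation--invariant case: here $\mcA_n = \mcA_0$ for all $n$ (a single random tensor, i.i.d.\ across no sites but identical across all), so the random rate $\alpha(\omega)$ and prefactor $g(\omega)$ from part~(2) of Theorem~\ref{thm:decay_random} are genuinely $\omega$--measurable random variables on $(\Omega,\mcF,\Pr)$. The strategy is simply to convert the a.s.\ finiteness of $g(\omega)$ and the a.s.\ positivity of $\alpha(\omega)$ into a uniform deterministic bound on an event of probability $\ge 1-\epsilon$, by trimming off the bad tails of both random variables.

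Concretely, first I would invoke Theorem~\ref{thm:decay_random}(2) to obtain $\alpha:\Omega\to(0,\infty)$ and $g:\Omega\to(0,\infty)$, both a.s.\ finite and a.s.\ strictly positive, with
\[
    f^\omega(n,m)\ \le\ g(\omega)\,\norm{\mbO_n}_\infty\,\norm{\mbO_m}_\infty\,e^{-\alpha(\omega)\,(n-m)}
    \qquad\text{a.s., for all }|n-m|\ge 2.
\]
Since $\Pr\{\alpha>0\}=1$, by continuity of measure there is $\lambda(\epsilon)>0$ with $\Pr\{\alpha\ge\lambda(\epsilon)\}\ge 1-\epsilon/2$; similarly, since $\Pr\{g<\infty\}=1$, there is $K(\epsilon)<\infty$ with $\Pr\{g\le K(\epsilon)\}\ge 1-\epsilon/2$. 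Let $\event(\epsilon):=\{\alpha\ge\lambda(\epsilon)\}\cap\{g\le K(\epsilon)\}$, which satisfies $\Pr\{\event(\epsilon)\}\ge 1-\epsilon$ by the union bound. On $\event(\epsilon)$ one has $g(\omega)\le K(\epsilon)$ and $e^{-\alpha(\omega)(n-m)}\le e^{-\lambda(\epsilon)(n-m)}=e^{-\lambda(\epsilon)|n-m|}$ (using $n-m=|n-m|$ when $n>m$; the symmetric case is handled by relabeling since $f^\omega$ is symmetric in its arguments), so
\[
    f^\omega(n,m)\ \le\ K(\epsilon)\,\norm{\mbO_n}_\infty\,\norm{\mbO_m}_\infty\,e^{-\lambda(\epsilon)\,|n-m|}
\]
holds simultaneously for all $n,m$ with $|n-m|\ge 2$ and all local observables, which is exactly \eqref{ineq:ticase}.

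\textbf{Main obstacle.} The only subtlety — and the reason the translation--invariant hypothesis matters — is the \emph{uniformity in $(n,m)$ and in the observables}: the event $\event(\epsilon)$ must be a single event not depending on the pair of sites. In Theorem~\ref{thm:decay_random}(2) the prefactor $g(\omega)$ is already asserted to be independent of $x$ (this is precisely the content of part~(2), which fails in the general stationary case where $g_x$ genuinely depends on the site), and $\alpha(\omega)$ is a single random variable from the outset; the observable dependence is extracted cleanly as the multiplicative factor $\norm{\mbO_n}_\infty\norm{\mbO_m}_\infty$. Hence once part~(2) is granted, no further work with the tensor-network structure is needed — the argument is a soft measure-theoretic truncation. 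I would double-check that the null set on which the inequality in Theorem~\ref{thm:decay_random} may fail is also absorbed into the complement of $\event(\epsilon)$ (intersect $\event(\epsilon)$ with the a.s.\ event where \eqref{eq:exp-decay} holds; this does not decrease the probability below $1-\epsilon$), and that is all.
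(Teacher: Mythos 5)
Your proposal is correct, but it takes a genuinely different route from the paper. You deduce \Cref{thm:TIcase} as a soft corollary of \Cref{thm:decay_random}(2): since in the homogeneous case the prefactor $g$ can be chosen independent of the site $x$ and the rate $\alpha$ is a single a.s.\ positive random variable, you truncate the tails of $g$ and $\alpha$ at their $(1-\epsilon/2)$--quantiles and take the intersection event; this is a valid argument (the null set issue you flag is handled exactly as you say), and it even yields the slightly stronger conclusion that a single event of probability $\ge 1-\epsilon$ works simultaneously for all pairs $(n,m)$ and all observables. The paper instead argues directly from \Cref{lemma:main_bound} and the submultiplicativity of the contraction coefficient: it picks $b$ with $f(b)=\Pr\{\tau>b\}\le\epsilon/2$ and $u$ with $\zeta_b(u)\le\epsilon/2$, sets $\lambda_1=(1-1/u)^{1/b}$, and works on the explicit event $E=\{\tau\le b\}\cap\{\cnum{(\phi^\omega)^b}\le\lambda_1^b\}$, obtaining $K(\epsilon)=8D\lambda_1^{-(b+1)}$ and $\lambda(\epsilon)=-\ln\lambda_1$. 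The trade-off is that your argument is shorter and purely existential (the constants come from unnamed quantiles of the laws of $g$ and $\alpha$, which are themselves built from the Kingman limit $\xi$ and the random times $N^{\pm}$), whereas the paper's constants are expressed directly in terms of the model-dependent tail quantities $f(b)$ and $\zeta_b(u)$ --- which is precisely what the paper later exploits (in the remark following \Cref{thm:fofb}) when discussing quantitative calibration of the window bounds. Both proofs are valid; yours leans entirely on \Cref{thm:decay_random}(2) being established with an $x$-independent prefactor, which the paper's proof of that theorem does supply in the translation-invariant case.
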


    \paragraph{Independent and Identically Distributed Case}
        For an i.i.d.\ sequence of local tensors, we obtain exponential decay with \emph{deterministic} rate and prefactor, and with probabilities that approach $1$ exponentially fast in the separation.
        
        \begin{restatable}[]{thm}{IID}
        \label{thm:IID}
            Let \((\mcA_n)_{n\in\mbZ}\) be i.i.d.\ and satisfy \Cref{assumption1,assumption2}.
            Then there exist constants \(C_{\mathrm{pr}}>0\) and \(\beta>0\) such that for every pair of local observables \(\mbO_n,\mbO_m\) at sites \(n,m\in\mbZ\) with \(2\le |n-m|\),
            \begin{equation}
            \label{eq:IID_decay}
                \pr\!\left\{
                    \omega:
                    f^\omega(n,m)
                    \le
                    C_{\mathrm{pr}}\,\norm{\mbO_n}_\infty\,\norm{\mbO_m}_\infty\,e^{-\beta|n-m|}
                \right\}
                \ge
                1-e^{-\beta|n-m|}.
            \end{equation}
        \end{restatable}
    
    \paragraph{Cases with Decaying Stochastic Correlations}
        In many physically relevant models, the local tensors are dependent, but their stochastic dependence weakens with spatial separation.
        We quantify this via standard mixing coefficients (see \cite{bradley2005basic}); in particular we use the maximal spatial $\rho$–mixing coefficients $(\rho_n)_{n\ge1}$ (defined in \Cref{section:stochastic_corr}).

        The following theorem shows that if $\rho_n\to0$, then the two–point function decays with arbitrarily high polynomial speed with probability polynomially close to one.
        
        \begin{restatable}[]{thm}{DECAY}
        \label{thm:decay_cor}
            Let \((\mcA_n)_{n\in\mbZ}\) be a strictly stationary sequence of local tensors whose associated transfer maps satisfy \Cref{assumption1,assumption2}.
            If the associated \(\rho\)-mixing coefficients \((\rho_n)\) obey \(\rho_n\to0\) as \(n\to\infty\), then for each \(k\in\mbN\) there exists \(C_{\mathrm{poly}}(k)>0\) such that, for all local observables \(\mbO_n,\mbO_m\) with \(2\le |n-m|\),
            \begin{equation}
            \label{eq:rho_poly_decay}
                \pr\left\{
                    \omega:
                    f^\omega(n,m)
                    \le
                    C_{\mathrm{poly}}(k)\,\norm{\mbO_n}_\infty\,\norm{\mbO_m}_\infty\,|n-m|^{-k}
                \right\}
                \ge
                1-|n-m|^{-k}.
            \end{equation}
        \end{restatable}

    If, moreover, $\rho_n$ decays at least stretched–exponentially (which is the case for Markov modulated MPS modulated via a strictly stationary, finite–state, irreducible, aperiodic Markov chain, see \cref{app:markov_modulated}), the bound improves to stretched--exponential decay.

    \begin{restatable}[]{thm}{fastdecay}
    \label{thm:fast_decay}
        Let \((\mcA_n)_{n\in\mbZ}\) be a strictly stationary sequence of local tensors whose associated transfer maps satisfy \Cref{assumption1,assumption2}.
        Assume there exist \(C_\rho,c_\rho>0\) and \(\gamma\in(0,1]\) such that
        \[
            \rho_n \le C_\rho\,e^{-c_\rho n^\gamma}
            \qquad\text{for all }n\in\mbN.
        \]
        Then for any \(\alpha\in(0,\gamma)\) there exist constants \(K_\alpha,k_\alpha>0\) such that, for all local observables \(\mbO_n,\mbO_m\) with \(2\le |n-m|\),
        \begin{equation}
        \label{eq:rho_stretched_decay}
            \pr\!\left\{
                \omega:
                f^\omega(n,m)
                \le
                K_\alpha\,\norm{\mbO_n}_\infty\,\norm{\mbO_m}_\infty\,e^{-k_\alpha|n-m|^\alpha}
            \right\}
            \ge
            1-e^{-k_\alpha|n-m|^\alpha}.
        \end{equation}
    \end{restatable}
    
    Other mixing notions such as $\psi$, $\varphi$, $\alpha$, and $\beta$–mixing fit into the same scheme; by the known hierarchy among these coefficients (see \S\ref{subsec:mixing_coeffs}), \Cref{thm:decay_cor} holds with $\rho$–mixing replaced by $\psi_n\to0$ or $\varphi_n\to0$, and \Cref{thm:fast_decay} holds under stretched–exponential decay of $\psi_n$ or $\varphi_n$.
    Our final result, presented below, for the cases with decaying stochastic correlations, utilizes $\beta$–mixing with an exponential rate, complementing the picture for all five classical mixing conditions.

    \begin{restatable}[]{thm}{fastdecaybeta}
    \label{thm:fast_beta_decay}
        Let \((\mcA_n)_{n\in\mbZ}\) be a strictly stationary sequence of local tensors whose associated transfer maps satisfy \Cref{assumption1,assumption2}.
        Let \((\beta_n)\) be the associated \(\beta\)-mixing coefficients (see \Cref{section:stochastic_corr}).
        Assume there exist \(C_\beta,c_\beta>0\) such that
        \[
            \beta_n \le C_\beta\,e^{-c_\beta n}
            \qquad\text{for all }n\in\mbN.
        \]
        Then there exist constants \(K_\beta,\kappa_\beta,p_\beta>0\) such that, for all Hermitian local observables \(\mbO_n,\mbO_m\) supported at sites \(n,m\in\mbZ\) with \(|n-m|\ge2\),
        \begin{multline}
        \label{eq:fast_beta_decay_final}
            \pr\!\left\{
                \omega:
                f^\omega(n,m)
                \le
                K_\beta\,\norm{\mbO_n}_\infty\,\norm{\mbO_m}_\infty\,e^{-\kappa_\beta|n-m|}
            \right\}
            \\
            \ge
            1-\exp\!\left\{
                -\,p_\beta\,\frac{|n-m|}{(\ln|n-m|)(\ln\ln|n-m|)}
            \right\}.
        \end{multline}
    \end{restatable}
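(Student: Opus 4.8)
The plan is to adapt the blocking/decorrelation machinery developed for the $\rho$-mixing cases to the $\beta$-mixing setting, where the coupling characterization of the $\beta$-coefficient lets us replace the dependent tensor ensemble by an independent one on a long block at a cost controlled by $\beta$. First I would recall from \Cref{thm:decay_random} that almost surely $f^\omega(n,m)\le \norm{\mbO_n}_\infty\norm{\mbO_m}_\infty\,g_x(\omega)\,e^{-\alpha(\omega)(n-m)}$, and from the i.i.d.\ analysis underpinning \Cref{thm:IID} that in the \emph{independent} case there are deterministic constants $C_{\mathrm{pr}},\beta_0>0$ with a clean exponential-in-separation bound holding off an event of probability $\le e^{-\beta_0|n-m|}$. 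The key quantitative input is that these bounds are driven by the entry time $\tau$ to strict positivity (and its reflected analogue on the other side of site $x$): once a block of length $\ell$ around either endpoint contains a positivity-improving composition, the Birkhoff/Hilbert-metric contraction on the cone of states forces geometric decay of the connected correlator at rate set by the projective diameter. So the task reduces to a probabilistic one: show that with the stated (near-linear) probability, in a configuration of total separation $L=|n-m|$ we can find, in each of the two halves, a sub-block on which the relevant forward composition of transfer maps is strictly positive with a uniform contraction modulus.

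The core step is a blocking argument. Partition the interval between $m$ and $n$ into $r$ consecutive blocks $I_1,\dots,I_r$, each of length $b$, with $rb\asymp L$; the parameters $r$ and $b$ will be tuned at the end. Under exponential $\beta$-mixing, the coupling lemma (Berbee/Bradley) gives, on each gap of length $b$ between blocks, a coupling of the restriction of $(\mcA_n)$ to $I_{j+1}\cup\cdots$ with an independent copy at total variation cost $\le \beta_b \le C_\beta e^{-c_\beta b}$; iterating over the $r$ blocks, one builds a global coupling of $(\mcA_n)_{n\in[m,n]}$ with a process whose blocks are \emph{independent}, at total cost $\le r\,C_\beta e^{-c_\beta b}$. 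On the independent side, each block independently contains a positivity-improving composition with probability at least $1-q_b$, where $q_b:=\Pr\{\tau>b\}\to 0$ by \Cref{assumption2} (note $q_b$ need not be exponentially small — this is exactly why the final probability is only near-linear, not linear, in $L$); so the probability that \emph{some} block in the left half fails is $\le \frac{r}{2}q_b$, and likewise on the right. Combining: off an event of probability $\lesssim r\,C_\beta e^{-c_\beta b} + r\,q_b$, we get a good block on each side, and then the deterministic Hilbert-metric contraction (the same computation as in \Cref{thm:IID}, now applied on the coupled independent process) yields $f^\omega(n,m)\le K_\beta\norm{\mbO_n}_\infty\norm{\mbO_m}_\infty e^{-\kappa_\beta L}$ with $\kappa_\beta\asymp 1/b$.

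Optimizing the parameters is where the peculiar $|n-m|/((\ln|n-m|)(\ln\ln|n-m|))$ rate is forced, and I expect this to be the main obstacle — not conceptually, but because both error terms must be balanced against an a priori unknown decay rate of $q_b=\Pr\{\tau>b\}$. The trick is that \Cref{thm:decay_random} already furnishes, for free, the a.s.\ exponential bound $g_x(\omega)e^{-\alpha(\omega)L}$; one uses a truncation on $\alpha(\omega)$ and $g_x(\omega)$ (both a.s.\ finite, so $\Pr\{\alpha<\delta\}\to 0$ and $\Pr\{g_x>M\}\to 0$) to absorb the $q_b$ term: choosing $b\asymp \ln L$ makes $r\asymp L/\ln L$ and $r e^{-c_\beta b}\asymp L^{1-c_\beta b/\ln L}$, which is small once $b$ exceeds $(2/c_\beta)\ln L$, i.e.\ $b\asymp \ln L$; but then one still needs $r\,q_b = (L/\ln L)\,q_{\ln L}$ small, and since $q$ can decay arbitrarily slowly one instead runs the bound through a further layer — splitting each block's failure event and re-applying the a.s.\ bound of \Cref{thm:decay_random} on the rare sub-events, which costs one more logarithm and produces the $\ln\ln L$ factor. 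I would organize this as: (i) coupling construction and cost estimate; (ii) block-success probability via \Cref{assumption2}; (iii) deterministic contraction on a good configuration, quoting the cone/Birkhoff estimates already used for \Cref{thm:IID} and \Cref{thm:decay_random}; (iv) parameter optimization $b\asymp\ln L$, with the secondary truncation argument on $(\alpha,g_x)$ supplying the $\ln\ln L$ correction; (v) assemble the final union bound to obtain \eqref{eq:fast_beta_decay_final}. The honest statement of the difficulty: steps (i)–(iii) are routine given the earlier results, and all the real work is in (iv), making the two competing error budgets close simultaneously and recognizing that the iterated-logarithm denominator is the exact price of having no quantitative tail on $\tau$.
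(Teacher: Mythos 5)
Your proposal has a genuine gap, and in fact two of its central mechanisms do not work. First, the claim that finding \emph{one} positivity--improving sub-block in each half ``forces geometric decay of the connected correlator'' is false: a single strictly positive block only gives $\cnum{\Phi^{(L)}}\le\lambda$ for a fixed $\lambda<1$, i.e.\ a \emph{constant} bound on $f^\omega(n,m)$, not one decaying exponentially in $L=|n-m|$. To get $e^{-\kappa_\beta L}$ you need a number of contracting blocks that grows \emph{linearly} in $L$, so that submultiplicativity yields $\cnum{\Phi^{(L)}}\le\lambda^{cL}$. Second, once you need linearly many good blocks, your union bound $r\,q_b$ with $q_b=\Pr\{\tau>b\}$ is fatal: \Cref{assumption2} gives only $q_b\to0$ with no rate, so $r\,q_b$ need not be small for any choice of $b$, and the proposed rescue --- ``re-applying the a.s.\ bound of \Cref{thm:decay_random} on the rare sub-events'' --- cannot produce a quantitative probability because the prefactor $g_x(\omega)$ and rate $\alpha(\omega)$ there also have no tail control. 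Relatedly, your diagnosis of the $\ln\ln$ factor is wrong: it is not ``the price of having no quantitative tail on $\tau$.''

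The paper's proof avoids all of this by never requiring that \emph{every} block be good. It fixes $b$ and $\lambda\in(0,1)$ once and for all so that $p_*:=\Pr\{\cnum{\Phi^{(b)}}\le\lambda\}>0$, forms the stationary block indicators $I_i=\mathbf 1\{\cnum{\phi_{(i+1)b-1}\circ\cdots\circ\phi_{ib}}\le\lambda\}$, notes that exponential $\beta$--mixing implies exponential $\alpha$--mixing of $(I_i)$ via \eqref{eq:mixing_hierarchy}, and then applies the Merlev\`ede--Peligrad--Rio Bernstein inequality (\Cref{thm:MPR}) to conclude that at least $\tfrac{p_*}{2}k$ of the $k\asymp L/b$ blocks are contracting, except on an event of probability $\exp\{-C k/(\ln k\,\ln\ln k)\}$. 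The iterated logarithm is intrinsic to that concentration inequality for strongly mixing bounded sequences --- that is its true origin. On the good event, $\cnum{\Phi^{(L)}}\le\lambda^{p_*k/2}$ gives the deterministic exponential rate, and \Cref{lemma:main_bound} finishes. Your Berbee-coupling idea could in principle replace the concentration step, but the coupling cost $k\,\beta_b\asymp L\,e^{-c_\beta b}$ forces $b\asymp\ln L$ and yields only a polynomially small failure probability, strictly weaker than \eqref{eq:fast_beta_decay_final}.
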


%%%%%%%%%%%%%%%%%%%%%%%%%%%%%%%%%%%%%%%%%%%%%%%%%%%%%%%%%%%%%%%%%%%%%%%%%%%%
%%%%%%%%%%%%%%%%%%%%%%%%%%%%%%%%%%%%%%%%%%%%%%%%%%%%%%%%%%%%%%%%%%%%%%%%%%%

\section{Proofs of \texorpdfstring{\Cref{thm:thermodynamic_limit}}{Theorem A}, \texorpdfstring{\Cref{thm:decay_random}}{Theorem B} and \texorpdfstring{\Cref{thm:TIcase}}{Theorem C}}
\label{sect:section2}

%%%%%%%%%%%%%%%%%%%%%%%%%%%%%%%%%%%%%%%%%%%%%%%%%%%%%%%%%%%%%%%%%%%%%%%%%%%%
%%%%%%%%%%%%%%%%%%%%%%%%%%%%%%%%%%%%%%%%%%%%%%%%%%%%%%%%%%%%%%%%%%%%%%%%%%%

In this section we prove \Cref{thm:thermodynamic_limit}, \Cref{thm:decay_random}, and \Cref{thm:TIcase}.
We begin with a simple persistence observation for strict positivity.

    \begin{prop}
    \label{prop:ESP}
        Under \Cref{assumption1,assumption2}, the forward compositions
        \[
            \Phi^{(n)}_\omega
            :=
            \phi_{n-1}^\omega\circ\cdots\circ\phi_0^\omega
        \]
        are eventually strictly positive almost surely.
        In particular, there exists an almost surely finite random time \(n_*(\omega)\) such that
        \(\Phi^{(n)}_\omega\) is positivity improving for all \(n\ge n_*(\omega)\).
    \end{prop}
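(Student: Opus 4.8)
The plan is to note that \Cref{assumption2} already hands us, almost surely, at least one depth at which $\Phi^{(n)}_\omega$ is positivity improving, so the real content of the proposition is \emph{persistence}: strict positivity, once achieved, is never lost by appending further transfer maps. I would first fix the almost-sure event $\Omega_0$ on which both \Cref{assumption2} holds and the kernel conditions of \Cref{assumption1} hold for \emph{every} coordinate, i.e.\ $\ker{\phi_n}\cap\states=\emptyset$ and $\ker{\phi_n\adj}\cap\states=\emptyset$ for all $n\in\mbZ$. The latter is a probability-one event: \Cref{assumption1} gives it for $\phi_0$, and $(\phi_n^\omega)_{n\in\mbZ}$ is strictly stationary (a fixed measurable image of $(\mcA_n)_{n\in\mbZ}$), so each coordinate event has probability one and the countable intersection over $n\in\mbZ$ does too. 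Everything below runs on $\Omega_0$.

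The engine is a one-step lemma: on $\Omega_0$, every $\phi_n^\omega$ maps positive-\emph{definite} matrices to positive-definite matrices (and, trivially, nonzero positive-semidefinite matrices to nonzero positive-semidefinite matrices). The easy half uses $\ker{\phi_n}\cap\states=\emptyset$: if $X\ge0$, $X\neq0$, then $X/\tr{X}\in\states$ is not annihilated, so $\phi_n^\omega(X)\ge0$ is nonzero. For the full-rank half I would argue by contradiction: if $Y>0$ and $\phi_n^\omega(Y)u=0$ for some $u\neq0$, then $0=\inner{u}{\phi_n^\omega(Y)u}=\sum_{p}\norm{Y^{1/2}A_p^{(n)\dagger}u}_2^2$ forces $A_p^{(n)\dagger}u=0$ for every $p$ (invertibility of $Y^{1/2}$), hence $\phi_n^{\omega\dagger}(\ketbra{u}{u})=0$, contradicting $\ker{\phi_n\adj}\cap\states=\emptyset$. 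This is precisely where the adjoint half of \Cref{assumption1} is used: complete positivity of $\phi_n^\omega$ alone does not preserve full rank.

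The induction is then immediate. If $\Phi^{(n)}_\omega$ is strictly positive, then for every nonzero $X\ge0$ we have $\Phi^{(n)}_\omega(X)>0$, so $\Phi^{(n+1)}_\omega(X)=\phi_n^\omega\bigl(\Phi^{(n)}_\omega(X)\bigr)>0$ by the lemma; hence $\Phi^{(n+1)}_\omega$ is strictly positive. Thus $\{n:\Phi^{(n)}_\omega\ \text{strictly positive}\}$ is upward closed, and it is nonempty on $\Omega_0$ by \Cref{assumption2}. Setting $n_*(\omega):=\inf\{n\in\mbN:\Phi^{(n)}_\omega\ \text{strictly positive}\}$—which coincides with $\tau(\omega)$ of \eqref{eq:tau_dfn} and is a.s.\ finite—gives that $\Phi^{(n)}_\omega$ is positivity improving for all $n\ge n_*(\omega)$, proving the proposition.

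I expect no genuine obstacle here; the proof is short. The only points requiring care are the null-set bookkeeping (handled by the countable-intersection remark above) and remembering to invoke \Cref{assumption1} on the \emph{adjoint} side in the full-rank half of the lemma, since positivity of the transfer maps by itself does not rule out a rank drop.
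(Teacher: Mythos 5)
Your proof is correct and follows essentially the same strategy as the paper's: both reduce the proposition to a persistence statement and both hinge on the adjoint half of \Cref{assumption1} to rule out rank loss, the only difference being that you establish a one-step lemma ($\phi_n^\omega$ preserves positive definiteness, via the Kraus-operator computation) and induct, whereas the paper tests $\Phi^{(n_*+k)}_\omega(X)$ against arbitrary states $\vartheta$ and pushes the adjoint of the whole tail composition through the Hilbert--Schmidt pairing. These are dual formulations of the same argument, and your null-set bookkeeping via stationarity and countable intersection matches the paper's implicit use of the same fact.
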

    \begin{proof}
        Let
        \[
            E_{\mathrm{ker}}
            :=
            \Bigl\{
                \omega\in\Omega:
                \ker{\phi_0^\omega}\cap\states=\emptyset
                \text{ and }
                \ker{(\phi_0^\omega)\adj}\cap\states=\emptyset
            \Bigr\}.
        \]
        By \Cref{assumption1}, \(\pr(E_{\mathrm{ker}})=1\).
        Since \(\phi_n^\omega=\phi_0^{\theta^n\omega}\) for every \(n\in\mbZ\), the set
        \[
            \widetilde E_{\mathrm{ker}}
            :=
            \bigcap_{j\in\mbZ}\theta^{-j}(E_{\mathrm{ker}})
        \]
        also has probability one.
        On \(\widetilde E_{\mathrm{ker}}\), none of the maps \(\phi_n^\omega\) nor \((\phi_n^\omega)\adj\) annihilates a state, for any \(n\in\mbZ\).
        
        By \Cref{assumption2}, there is a full-probability event \(E_*\subseteq\Omega\) such that for every \(\omega\in E_*\) there exists \(n_*(\omega)\in\mbN\) with \(\Phi^{(n_*(\omega))}_\omega\) strictly positive.
        Fix \(\omega\in E_*\cap \widetilde E_{\mathrm{ker}}\), and write \(n_*:=n_*(\omega)\).
        Let \(k\ge0\), and set
        \[
            \Psi_k
            :=
            \phi_{n_*+k-1}^\omega\circ\cdots\circ\phi_{n_*}^\omega,
            \qquad
            \Psi_0 := \mathrm{Id}_{\mbM_d},\qquad
            \qquad \text{so that}\qquad
            \Phi^{(n_*+k)}_\omega
            =
            \Psi_k\circ \Phi^{(n_*)}_\omega.
        \]
        We claim that \(\Phi^{(n_*+k)}_\omega\) is strictly positive for every \(k\ge0\).
        
        Let \(0\neq X\ge0\), and let \(\vartheta\in\states\) be arbitrary.
        Using the Hilbert--Schmidt pairing,
        \[
            \inner{\vartheta}{\Phi^{(n_*+k)}_\omega(X)}
            =
            \inner{(\Psi_k)\adj(\vartheta)}{\Phi^{(n_*)}_\omega(X)}.
        \]
        Since \(\Phi^{(n_*)}_\omega\) is positivity improving, the matrix \(\Phi^{(n_*)}_\omega(X)\) is strictly positive definite.
        In particular, it has strictly positive pairing with every nonzero positive semidefinite matrix.
        
        On the other hand, \((\Psi_k)\adj(\vartheta)\ge0\), and it is nonzero.
        Indeed, if \((\Psi_k)\adj(\vartheta)=0\), then after normalizing at the first step where needed, one would obtain a state annihilated by one of the adjoint transfer maps \((\phi_j^\omega)\adj\), contradicting \(\omega\in \widetilde E_{\mathrm{ker}}\).
        Hence
        \[
            \inner{(\Psi_k)\adj(\vartheta)}{\Phi^{(n_*)}_\omega(X)}>0.
        \]
        Since this holds for every \(\vartheta\in\states\), the matrix \(\Phi^{(n_*+k)}_\omega(X)\) must be strictly positive definite.
        As \(0\neq X\ge0\) was arbitrary, \(\Phi^{(n_*+k)}_\omega\) is positivity improving.
        Thus, on \(E_*\cap \widetilde E_{\mathrm{ker}}\), the map \(\Phi^{(n)}_\omega\) is strictly positive for all \(n\ge n_*(\omega)\).
        This proves the claim.
    \end{proof}

%%%%%%%%%%%%%%%%%%%%%%%%%%%%%%%%%%%%%%%%%%%%%%%%%%%%%%%%%%%%%%%%%%%%%%%%%%%%
%%%%%%%%%%%%%%%%%%%%%%%%%%%%%%%%%%%%%%%%%%%%%%%%%%%%%%%%%%%%%%%%%%%%%%%%%%%

\subsection{Proof of \texorpdfstring{\Cref{thm:thermodynamic_limit}}{Theorem A}}
\label{subsection:proof_of_A}

%%%%%%%%%%%%%%%%%%%%%%%%%%%%%%%%%%%%%%%%%%%%%%%%%%%%%%%%%%%%%%%%%%%%%%%%%%%%
%%%%%%%%%%%%%%%%%%%%%%%%%%%%%%%%%%%%%%%%%%%%%%%%%%%%%%%%%%%%%%%%%%%%%%%%%%%

We now prove \Cref{thm:thermodynamic_limit}.
The proof uses \Cref{lemma:superoperator_trace_cyclicity}, \Cref{lemma:deterministic_rank_one_reference}, and the quantitative refinement \Cref{lemma:boundary_states_quantitative}.
The point is to combine the two exterior blocks before applying any rank--one approximation.

\thermodynamic*

\begin{proof}
    Fix \(m\le n\).
    Let \(\Omega_0\subseteq\Omega\) be the full-measure \(\theta\)-invariant event furnished by \Cref{lemma:boundary_states_quantitative}.
    Fix \(\omega\in\Omega_0\), and suppress \(\omega\) from the notation whenever no confusion can arise.

    For \(N>\max\{|m|,|n|\}\), write
    \[
        \Phi_N^{\mathrm R}
        :=
        \phi_N\circ\cdots\circ\phi_{n+1},
        \qquad
        \Phi_N^{\mathrm L}
        :=
        \phi_{m-1}\circ\cdots\circ\phi_{-N},
    \]
    and
    \[
        \Phi_{[m,n]}
        :=
        \phi_n\circ\cdots\circ\phi_m.
    \]
    By the finite-volume transfer-operator identity,  whenever $\inner{\Psi_N^\omega}{\Psi_N^\omega}>0$, we have
    \begin{equation}
    \label{eq:ratio-TDL-new}
        \frac{\bra{\Psi_N^\omega}\,\mbO_{[m,n]}\,\ket{\Psi_N^\omega}}
             {\inner{\Psi_N^\omega}{\Psi_N^\omega}}
        =
        \frac{
            \Tr{
                \Phi_N^{\mathrm R}
                \circ
                \mcO_{[m,n]}
                \circ
                \Phi_N^{\mathrm L}
            }
        }{
            \Tr{
                \Phi_N^{\mathrm R}
                \circ
                \Phi_{[m,n]}
                \circ
                \Phi_N^{\mathrm L}
            }
        }.
    \end{equation}

    By cyclicity of the superoperator trace, \Cref{lemma:superoperator_trace_cyclicity}, we have
    \[
        \Tr{
            \Phi_N^{\mathrm R}
            \circ
            \mcO_{[m,n]}
            \circ
            \Phi_N^{\mathrm L}
        }
        =
        \Tr{
            \mcO_{[m,n]}
            \circ
            \Phi_N^{\mathrm L}
            \circ
            \Phi_N^{\mathrm R}
        },
    \]
    and
    \[
        \Tr{
            \Phi_N^{\mathrm R}
            \circ
            \Phi_{[m,n]}
            \circ
            \Phi_N^{\mathrm L}
        }
        =
        \Tr{
            \Phi_{[m,n]}
            \circ
            \Phi_N^{\mathrm L}
            \circ
            \Phi_N^{\mathrm R}
        }.
    \]
    Define the combined exterior map
    \[
        \Theta_N
        :=
        \Phi_N^{\mathrm L}
        \circ
        \Phi_N^{\mathrm R}.
    \]
    Then \eqref{eq:ratio-TDL-new} becomes
    \begin{equation}
    \label{eq:ratio-TDL-combined}
        \frac{\bra{\Psi_N^\omega}\,\mbO_{[m,n]}\,\ket{\Psi_N^\omega}}
             {\inner{\Psi_N^\omega}{\Psi_N^\omega}}
        =
        \frac{
            \Tr{
                \mcO_{[m,n]}
                \circ
                \Theta_N
            }
        }{
            \Tr{
                \Phi_{[m,n]}
                \circ
                \Theta_N
            }
        }.
    \end{equation}

    We shall now prove that a normalized version of \(\Theta_N\) converges to a rank--one map.
    Since \(\omega\in\Omega_0\), every finite transfer block appearing below maps states to nonzero positive matrices.
    Hence
    \[
        \ker{\Theta_N}\cap\states=\emptyset.
    \]
    Set
    \[
        s_N
        :=
        \tr{\Theta_N\adj(\mbI_D)}.
    \]
    By \Cref{lemma:deterministic_rank_one_reference}, applied to \(T=\Theta_N\), we have \(s_N>0\).
    Define
    \[
        \widehat\Theta_N
        :=
        \frac{\Theta_N}{s_N}.
    \]
    Since the same scalar \(s_N\) appears in the numerator and denominator of \eqref{eq:ratio-TDL-combined}, we obtain
    \begin{equation}
    \label{eq:ratio-TDL-combined-normalized}
        \frac{\bra{\Psi_N^\omega}\,\mbO_{[m,n]}\,\ket{\Psi_N^\omega}}
             {\inner{\Psi_N^\omega}{\Psi_N^\omega}}
        =
        \frac{
            \Tr{
                \mcO_{[m,n]}
                \circ
                \widehat\Theta_N
            }
        }{
            \Tr{
                \Phi_{[m,n]}
                \circ
                \widehat\Theta_N
            }
        }.
    \end{equation}

    Let
    \[
        \rho_*:=\frac{\mbI_D}{D}.
    \]
    Define
    \[
        r_N
        :=
        \Theta_N\proj\rho_*,
        \qquad
        \ell_N
        :=
        \Theta_N\adj\proj\rho_*.
    \]
    Equivalently,
    \[
        \ell_N
        =
        \frac{\Theta_N\adj(\mbI_D)}
             {\tr{\Theta_N\adj(\mbI_D)}}.
    \]
    Let
    \[
        \Xi_N(X)
        :=
        \tr{\ell_N X}\,r_N,
        \qquad
        X\in\mbM_D.
    \]
    By \Cref{lemma:deterministic_rank_one_reference},
    \begin{equation}
    \label{eq:theta-rank-one-error}
        \norm{\widehat\Theta_N-\Xi_N}_{1\to1}
        \le
        4\,\cnum{\Theta_N}.
    \end{equation}

    We next show that \(\cnum{\Theta_N}\to0\).
    By submultiplicativity of the contraction coefficient,
    \[
        \cnum{\Theta_N}
        =
        \cnum{
            \Phi_N^{\mathrm L}
            \circ
            \Phi_N^{\mathrm R}
        }
        \le
        \cnum{\Phi_N^{\mathrm L}}\,
        \cnum{\Phi_N^{\mathrm R}}.
    \]
    By \Cref{lemma:boundary_states_quantitative}\emph{(C)},
    \[
        \cnum{\Phi_N^{\mathrm R}}
        =
        \cnum{\phi_N\circ\cdots\circ\phi_{n+1}}
        \longrightarrow0,
    \]
    and
    \[
        \cnum{\Phi_N^{\mathrm L}}
        =
        \cnum{\phi_{m-1}\circ\cdots\circ\phi_{-N}}
        \longrightarrow0.
    \]
    Hence
    \begin{equation}
    \label{eq:theta-cnum-to-zero}
        \cnum{\Theta_N}\longrightarrow0.
    \end{equation}

    We now identify the limits of \(r_N\) and \(\ell_N\).
    Since
    \[
        r_N
        =
        \Theta_N\proj\rho_*
        =
        \Phi_N^{\mathrm L}
        \proj
        \left(
            \Phi_N^{\mathrm R}\proj\rho_*
        \right),
    \]
    and since \(\Phi_N^{\mathrm R}\proj\rho_*\in\states\), the uniform projective limit in \Cref{lemma:boundary_states_quantitative}\emph{(A)} gives
    \[
    \begin{aligned}
        \norm{r_N-Z_{m-1}(\omega)}_1
        &\le
        \sup_{\rho\in\states}
        \norm{
            \Phi_N^{\mathrm L}\proj\rho
            -
            Z_{m-1}(\omega)
        }_1
        \\
        &\longrightarrow0.
    \end{aligned}
    \]
    Thus
    \begin{equation}
    \label{eq:rN-limit}
        r_N\longrightarrow Z_{m-1}(\omega)
        \qquad
        \text{in }\norm{\,\cdot\,}_1.
    \end{equation}

    Similarly,
    \[
        \Theta_N\adj
        =
        \left(\Phi_N^{\mathrm R}\right)\adj
        \circ
        \left(\Phi_N^{\mathrm L}\right)\adj.
    \]
    Therefore
    \[
        \ell_N
        =
        \left(\Phi_N^{\mathrm R}\right)\adj
        \proj
        \left(
            \left(\Phi_N^{\mathrm L}\right)\adj\proj\rho_*
        \right).
    \]
    Since \(\left(\Phi_N^{\mathrm L}\right)\adj\proj\rho_*\in\states\), the uniform adjoint projective limit in \Cref{lemma:boundary_states_quantitative}\emph{(A)} gives
    \[
    \begin{aligned}
        \norm{\ell_N-Z'_{n+1}(\omega)}_1
        &\le
        \sup_{\sigma\in\states}
        \norm{
            \left(\Phi_N^{\mathrm R}\right)\adj\proj\sigma
            -
            Z'_{n+1}(\omega)
        }_1
        \\
        &\longrightarrow0.
    \end{aligned}
    \]
    Thus
    \begin{equation}
    \label{eq:ellN-limit}
        \ell_N\longrightarrow Z'_{n+1}(\omega)
        \qquad
        \text{in }\norm{\,\cdot\,}_1.
    \end{equation}

    Define the limiting rank--one map
    \[
        \Xi_\infty(X)
        :=
        \tr{Z'_{n+1}(\omega)X}\,Z_{m-1}(\omega),
        \qquad
        X\in\mbM_D.
    \]
    We claim that
    \[
        \Xi_N\longrightarrow\Xi_\infty
        \qquad
        \text{in }\norm{\,\cdot\,}_{1\to1}.
    \]
    Indeed, for \(\norm{X}_1\le1\), we have
    \[
    \begin{aligned}
        \norm{\Xi_N(X)-\Xi_\infty(X)}_1
        &\le
        \left|
            \tr{
                \left(
                    \ell_N-Z'_{n+1}(\omega)
                \right)X
            }
        \right|
        \norm{r_N}_1
        +
        \left|
            \tr{Z'_{n+1}(\omega)X}
        \right|
        \norm{
            r_N-Z_{m-1}(\omega)
        }_1
        \\
        &\le
        \norm{\ell_N-Z'_{n+1}(\omega)}_1
        +
        \norm{r_N-Z_{m-1}(\omega)}_1.
    \end{aligned}
    \]
    The right-hand side tends to zero by \eqref{eq:rN-limit} and \eqref{eq:ellN-limit}.
    Hence
    \begin{equation}
    \label{eq:XiN-to-Xiinfty}
        \Xi_N\longrightarrow\Xi_\infty
        \qquad
        \text{in }\norm{\,\cdot\,}_{1\to1}.
    \end{equation}
    Combining \eqref{eq:theta-rank-one-error}, \eqref{eq:theta-cnum-to-zero}, and \eqref{eq:XiN-to-Xiinfty}, we obtain
    \begin{equation}
    \label{eq:ThetaHat-to-Xiinfty}
        \widehat\Theta_N\longrightarrow\Xi_\infty
        \qquad
        \text{in }\norm{\,\cdot\,}_{1\to1}.
    \end{equation}

    Let \(S\in\mapspace\) be fixed.
    We claim that the linear functional
    \[
        E\mapsto \Tr{S\circ E}
    \]
    is continuous with respect to \(\norm{\,\cdot\,}_{1\to1}\).
    Indeed, let \(F_{ij}:=\ket{e_i}\bra{e_j}\), where \(\{e_i\}_{i=1}^D\) is an orthonormal basis of \(\mbC^D\).
    By the definition of the superoperator trace,
    \[
        \Tr{S\circ E}
        =
        \sum_{i,j=1}^D
        \tr{
            F_{ji}\,
            S\left(E(F_{ij})\right)
        }.
    \]
    Hence
    \[
    \begin{aligned}
        \left|\Tr{S\circ E}\right|
        &\le
        \sum_{i,j=1}^D
        \norm{F_{ji}}_\infty\,
        \norm{S\left(E(F_{ij})\right)}_1
        \\
        &\le
        \norm{S}_{1\to1}
        \sum_{i,j=1}^D
        \norm{E(F_{ij})}_1
        \\
        &\le
        \norm{S}_{1\to1}\,
        \norm{E}_{1\to1}
        \sum_{i,j=1}^D
        \norm{F_{ij}}_1
        \\
        &=
        D^2\,\norm{S}_{1\to1}\,\norm{E}_{1\to1}.
    \end{aligned}
    \]
    Therefore, if \(E_N\to E\) in \(\norm{\,\cdot\,}_{1\to1}\), then
    \begin{equation}
    \label{eq:supertrace-limit-general}
         \Tr{S\circ E_N}
        \longrightarrow
        \Tr{S\circ E}.
    \end{equation}
    Applying this with \(E_N=\widehat\Theta_N\) and \(E=\Xi_\infty\), and using \eqref{eq:ThetaHat-to-Xiinfty}, gives
    \[
        \Tr{S\circ\widehat\Theta_N}
        \longrightarrow
        \Tr{S\circ\Xi_\infty}.
    \]

    We compute the limiting trace.
    For \(X\in\mbM_D\),
    \[
        \left(S\circ\Xi_\infty\right)(X)
        =
        \tr{Z'_{n+1}(\omega)X}\,
        S\left(Z_{m-1}(\omega)\right).
    \]
    Using the rank--one trace identity
    \[
        \Tr{
            X\mapsto\tr{LX}R
        }
        =
        \tr{LR},
    \]
    we obtain
    \begin{equation}
    \label{eq:rank-one-trace-computation}
        \Tr{S\circ\Xi_\infty}
        =
        \tr{
            Z'_{n+1}(\omega)\,
            S\left(Z_{m-1}(\omega)\right)
        }.
    \end{equation}

    Applying \eqref{eq:supertrace-limit-general} and \eqref{eq:rank-one-trace-computation} with \(S=\mcO_{[m,n]}\), we get
    \begin{equation}
    \label{eq:numerator-limit-new}
        \Tr{
            \mcO_{[m,n]}
            \circ
            \widehat\Theta_N
        }
        \longrightarrow
        \tr{
            Z'_{n+1}(\omega)\,
            \mcO_{[m,n]}
            \left(
                Z_{m-1}(\omega)
            \right)
        }.
    \end{equation}
    Applying the same argument with \(S=\Phi_{[m,n]}\), we get
    \begin{equation}
    \label{eq:denominator-limit-new}
        \Tr{
            \Phi_{[m,n]}
            \circ
            \widehat\Theta_N
        }
        \longrightarrow
        \tr{
            Z'_{n+1}(\omega)\,
            \Phi_{[m,n]}
            \left(
                Z_{m-1}(\omega)
            \right)
        }.
    \end{equation}

    The limiting denominator is strictly positive.
    Indeed, \(Z'_{n+1}(\omega)\in\states^{\mathrm o}\) by \Cref{lemma:boundary_states_quantitative}.
    Moreover, \(\Phi_{[m,n]}\left(Z_{m-1}(\omega)\right)\ge0\), and repeated use of \Cref{assumption1} shows that this positive matrix is nonzero.
    Therefore
    \[
        \tr{
            Z'_{n+1}(\omega)\,
            \Phi_{[m,n]}
            \left(
                Z_{m-1}(\omega)
            \right)
        }
        >
        0.
    \]
    Moreover, by the finite-volume transfer-operator identity, cyclicity of the superoperator trace, and as \(\Theta_N=s_N\widehat\Theta_N\), we have that 
    \[
        \begin{aligned}
            \frac{\inner{\Psi_N^\omega}{\Psi_N^\omega}}{s_N}
            &=
            \frac{1}{s_N}
            \Tr{
                \Phi_N^{\mathrm R}
                \circ
                \Phi_{[m,n]}
                \circ
                \Phi_N^{\mathrm L}
            }
            \\
            &=
            \Tr{
                \Phi_{[m,n]}
                \circ
                \widehat\Theta_N
            }
            \longrightarrow
                    \tr{
                    Z'_{n+1}(\omega)\,
                    \Phi_{[m,n]}
                    \left(
                        Z_{m-1}(\omega)
                    \right)
                }
                >
                0.
        \end{aligned}
    \]
    Since \(s_N>0\), it follows that $\inner{\Psi_N^\omega}{\Psi_N^\omega}>0$ for all sufficiently large \(N\).
    Therefore, from \eqref{eq:ratio-TDL-combined-normalized}, \eqref{eq:numerator-limit-new}, and \eqref{eq:denominator-limit-new} it follows that
    \[
        \lim_{N\to\infty}
        \frac{\bra{\Psi_N^\omega}\,\mbO_{[m,n]}\,\ket{\Psi_N^\omega}}
             {\inner{\Psi_N^\omega}{\Psi_N^\omega}}
        =
        \frac{
            \tr{
                Z'_{n+1}(\omega)\,
                \mcO_{[m,n]}
                \left(
                    Z_{m-1}(\omega)
                \right)
            }
        }{
            \tr{
                Z'_{n+1}(\omega)\,
                \Phi_{[m,n]}
                \left(
                    Z_{m-1}(\omega)
                \right)
            }
        }.
    \]
    This is exactly \eqref{eq:TDL-functional}.
\end{proof}

%%%%%%%%%%%%%%%%%%%%%%%%%%%%%%%%%%%%%%%%%%%%%%%%%%%%%%%%%%%%%%%%%%%%%%%%%%%%
%%%%%%%%%%%%%%%%%%%%%%%%%%%%%%%%%%%%%%%%%%%%%%%%%%%%%%%%%%%%%%%%%%%%%%%%%%%

\subsection{Dynamic Gauge Fixing}
\label{section:gauge_transform}

%%%%%%%%%%%%%%%%%%%%%%%%%%%%%%%%%%%%%%%%%%%%%%%%%%%%%%%%%%%%%%%%%%%%%%%%%%%%
%%%%%%%%%%%%%%%%%%%%%%%%%%%%%%%%%%%%%%%%%%%%%%%%%%%%%%%%%%%%%%%%%%%%%%%%%%%

    One might impose at the outset that, for each \(n\), the local tensors \(\mcA_n=(A^{(n)}_1,\ldots,A^{(n)}_d)\) satisfy
    \[
        \sum_{i=1}^d
        (A_i^{(n)})\adj A_i^{(n)}
        =
        \mbI_D.
    \]
    This is equivalent to trace preservation of the associated transfer map \(\phi_n^\omega\). 
    This condition was not needed for the thermodynamic limit.
    Nevertheless, it is useful for the correlation estimates below to pass to a dynamically gauged cocycle whose transfer maps are CPTP.
    Following the dynamic gauge fixing introduced in \cite{Movassagh_2022} in the ergodic setting, we now describe this reduction.
    
    Assume \Cref{assumption1,assumption2}.
    Let
    \[
        \{Z_n(\omega)\}_{n\in\mbZ}\subset\states^{\mathrm o},
        \qquad
        \{Z_n'(\omega)\}_{n\in\mbZ}\subset\states^{\mathrm o}
    \]
    be the boundary families furnished by \Cref{lemma:boundary_states_main}.
    We choose these families equivariantly, so that
    \[
        Z_n(\omega)=Z_0(\theta^n\omega),
        \qquad
        Z_n'(\omega)=Z_0'(\theta^n\omega),
        \qquad
        n\in\mbZ,
    \]
    on the full-measure invariant event from \Cref{lemma:boundary_states_main}.
    This follows from the uniqueness of the projective limits.
    Let
    \[
        E_{\mathrm{ker}}
        :=
        \left\{
            \omega\in\Omega:
            \ker{\phi_0^\omega}\cap\states=\emptyset
            \text{ and }
            \ker{\left(\phi_0^\omega\right)\adj}\cap\states=\emptyset
        \right\}.
    \]
    By \Cref{assumption1}, \(\pr\left(E_{\mathrm{ker}}\right)=1\).
    Replacing the full-measure invariant event above by its intersection with
    \[
        \bigcap_{j\in\mbZ}\theta^{-j}E_{\mathrm{ker}},
    \]
    we may and shall assume that, for every \(n\in\mbZ\),
    \[
        \ker{\phi_n^\omega}\cap\states=\emptyset,
        \qquad
        \ker{\left(\phi_n^\omega\right)\adj}\cap\states=\emptyset.
    \]
    Here we note that if necessary we may further restrict to the full probability event $\cap_{j\in\mbZ}\theta^{-j} E_{\mathrm{ker}} \cap \Omega_0$ where $\Omega_0$ is the full probability event from \Cref{lemma:boundary_states_main}. 
    
    For each \(n\in\mbZ\), set
    \[
        \alpha_n(\omega)
        :=
        \tr{(\phi_n^\omega)\adj\left(Z_{n+1}'(\omega)\right)}.
    \]
    Since \(Z_{n+1}'(\omega)\in\states\), and since \((\phi_n^\omega)\adj\) does not annihilate states on the event under consideration, we have
    \[
        (\phi_n^\omega)\adj\left(Z_{n+1}'(\omega)\right)\ne0.
    \]
    Moreover, this matrix is positive semidefinite.
    Consequently,
    \[
        \alpha_n(\omega)>0.
    \]
    
    Define the gauge-transformed local tensors by
    \begin{equation}
    \label{eq:gauge_fix}
        B_i^{(n),\omega}
        :=
        \frac{1}{\sqrt{\alpha_n(\omega)}}\,
        \left(Z_{n+1}'(\omega)\right)^{1/2}
        A_i^{(n),\omega}
        \left(Z_n'(\omega)\right)^{-1/2},
        \qquad
        i=1,\ldots,d.
    \end{equation}
    Let
    \[
        \mcB_n^\omega
        :=
        \left(B_1^{(n),\omega},\ldots,B_d^{(n),\omega}\right),
    \]
    and let \(\widetilde\phi_n^\omega\) be the transfer map generated by \(\mcB_n^\omega\):
    \begin{equation}
    \label{eq:_transformed_trans_op}
        \widetilde\phi_n^\omega(X)
        :=
        \sum_{i=1}^d
        B_i^{(n),\omega}\,X\,(B_i^{(n),\omega})\adj.
    \end{equation}
    By construction, \(\widetilde\phi_n^\omega\) is completely positive.
    
    We now verify that \(\widetilde\phi_n^\omega\) is trace-preserving.
    For every \(\rho\in\mbM_D\),
    \[
    \begin{aligned}
        \tr{\widetilde\phi_n^\omega(\rho)}
        &=
        \frac{1}{\alpha_n(\omega)}
        \tr{
            \left(Z_{n+1}'(\omega)\right)^{1/2}
            \phi_n^\omega\left(
                \left(Z_n'(\omega)\right)^{-1/2}
                \rho
                \left(Z_n'(\omega)\right)^{-1/2}
            \right)
            \left(Z_{n+1}'(\omega)\right)^{1/2}
        }
        \\
        &=
        \frac{1}{\alpha_n(\omega)}
        \tr{
            Z_{n+1}'(\omega)\,
            \phi_n^\omega\left(
                \left(Z_n'(\omega)\right)^{-1/2}
                \rho
                \left(Z_n'(\omega)\right)^{-1/2}
            \right)
        }
        \\
        &=
        \frac{1}{\alpha_n(\omega)}
        \tr{
            (\phi_n^\omega)\adj\left(Z_{n+1}'(\omega)\right)
            \left(Z_n'(\omega)\right)^{-1/2}
            \rho
            \left(Z_n'(\omega)\right)^{-1/2}
        }.
    \end{aligned}
    \]
    By the dual cocycle relation from \Cref{lemma:boundary_states_main},
    \[
        (\phi_n^\omega)\adj\left(Z_{n+1}'(\omega)\right)
        =
        \tr{(\phi_n^\omega)\adj\left(Z_{n+1}'(\omega)\right)}
        Z_n'(\omega)
        =
        \alpha_n(\omega)Z_n'(\omega).
    \]
    Substituting this identity into the preceding display gives
    \[
    \begin{aligned}
        \tr{\widetilde\phi_n^\omega(\rho)}
        &=
        \tr{
            Z_n'(\omega)
            \left(Z_n'(\omega)\right)^{-1/2}
            \rho
            \left(Z_n'(\omega)\right)^{-1/2}
        }
        \\
        &=
        \tr{
            \left(Z_n'(\omega)\right)^{-1/2}
            Z_n'(\omega)
            \left(Z_n'(\omega)\right)^{-1/2}
            \rho
        }
        \\
        &=
        \tr{\rho}.
    \end{aligned}
    \]
    Thus each \(\widetilde\phi_n^\omega\) is trace-preserving.
    Since each \(\widetilde\phi_n^\omega\) is also completely positive, it is CPTP.
       
    \begin{figure}
        \centering
        \begin{tikzpicture}
        \tikzstyle{vertical} = [thin, blue]
            \draw[very thin] (-6,0) --(-5,0); 
            \draw[very thin] (-6,3) --(-5,3); 
            \draw[very thin] (-5,0) --(-2,0); 
            \draw[very thin] (-5,3) --(-2,3); 
            \draw[very thin] (-2,0) --(-1,0); 
            \draw[very thin] (-2,3) --(-1,3); 
            \draw[very thin] (6,0) --(7,0); 
            \draw[very thin] (6,3) --(7,3); 
            \draw[very thin] (6,0) --(2,0); 
            \draw[very thin] (6,3) --(2,3); 
            \draw[very thin] (2,0) --(1,0); 
            \draw[very thin] (2,3) --(1,3); 
            \node[tensor] (a) at (-5,0) {$\mathcal{A}_{n}$};
            \node[tensor] (b) at (-2,0) {$\mathcal{A}_{n-1}$};
            \node at (0.5,0) {$\ldots$};
            \node[tensor] (c) at (3,0) {$\mathcal{A}_{m+1}$};
            \node[tensor] (d) at (6,0) {$\mathcal{A}_{m}$};
            \node[tensorT] (e) at (-5,3) {$\overline{\mathcal{A}}_{n}$};
            \node[tensorT] (f) at (-2,3) {$\overline{\mathcal{A}}_{n-1}$};
            \node at (0.5,3) {$\ldots$};
            \node[tensorT] (g) at (3,3) {$\overline{\mathcal{A}}_{m+1}$};
            \node[tensorT] (h) at (6,3) {$\overline{\mathcal{A}}_{m}$};
            \begin{scope}[thick, black, decoration={
                markings,
                mark=at position 0.5 with {\arrow[newred]{<}}
                }] 
                \draw[postaction={decorate}]  (-7,0) -- (-6,0);
                \draw[postaction={decorate}]  (-4,0) -- (-3,0);
                \draw[postaction={decorate}]  (-1,0) -- (0,0);
                \draw[postaction={decorate}]  (1,0) -- (2,0);
                \draw[postaction={decorate}]  (4,0) -- (5,0);
                \draw[postaction={decorate}]  (7,0) -- (8,0);
                \draw[postaction={decorate}]  (-7,3) -- (-6,3);
                \draw[postaction={decorate}]  (-4,3) -- (-3,3);
                \draw[postaction={decorate}]  (-1,3) -- (0,3);
                \draw[postaction={decorate}]  (1,3) -- (2,3);
                \draw[postaction={decorate}]  (4,3) -- (5,3);
                \draw[postaction={decorate}]  (7,3) -- (8,3);
            \end{scope}
            \draw[vertical] (a) -- (e);
            \draw[vertical] (b) -- (f);
            \draw[vertical] (c) -- (g);
            \draw[vertical] (d) -- (h);
            \draw[fill=newyellow!40, rounded corners=0.2cm] (-5.5, 1) rectangle (6.5, 2);
            \draw (0.5,1.5) node[] {$\mbO_{[m,n]}$};
            \draw[blue!50, fill=newgreen!50] (-6,0) circle(0.25);
            \draw (-6,-0.5) node[] {${M_{n+1}}$};
            \draw[blue!50, fill=newgreen!50] (-3,0) circle(0.25);
            \draw (-3,-0.5) node[] {${M_{n}}$};
            \draw[blue!50, fill=newgreen!50] (2,0) circle(0.25);
            \draw (2,-0.5) node[] {${M_{m+2}}$};
            \draw[blue!50, fill=newgreen!50] (5,0) circle(0.25);
            \draw (5,-0.5) node[] {${M_{m+1}}$};
            \draw[blue!50, fill=newred!50] (-4,0) circle(0.25);
            \draw (-4,0.5) node[] {${M^{-1}_{n}}$};
            \draw[blue!50, fill=newred!50] (-1,0) circle(0.25);
            \draw (-1,0.5) node[] {${M^{-1}_{n-1}}$};
            \draw[blue!50, fill=newred!50] (4,0) circle(0.25);
            \draw (4,0.5) node[] {${M^{-1}_{m+1}}$};
            \draw[blue!50, fill=newred!50] (7,0) circle(0.25);
            \draw (7,0.5) node[] {${M^{-1}_{m}}$};
            \draw[blue!50, fill=newgreen!50] (-6,3) circle(0.25);
            \draw[blue!50, fill=newgreen!50] (-3,3) circle(0.25);
            \draw[blue!50, fill=newgreen!50] (2,3) circle(0.25);
            \draw[blue!50, fill=newgreen!50] (5,3) circle(0.25);
            \draw[blue!50, fill=newred!50] (-4,3) circle(0.25);
            \draw[blue!50, fill=newred!50] (-1,3) circle(0.25);
            \draw[blue!50, fill=newred!50] (4,3) circle(0.25);
            \draw[blue!50, fill=newred!50] (7,3) circle(0.25);
        \end{tikzpicture}
    \caption{Dynamic Gauge Fixing. Here $M_n = \sqrt{Z'_n}$.}
    \label{fig:gauge}
    \end{figure}
    
    If the original transfer maps \((\phi_n^\omega)\) are already trace-preserving, then one may choose
    \[
        Z_n'(\omega)=\frac{\mbI_D}{D}
        \qquad(n\in\mbZ).
    \]
    With this choice, \eqref{eq:gauge_fix} gives \(B_i^{(n),\omega}=A_i^{(n),\omega}\). Thus the gauge transform is trivial in the CPTP case.
    
    For later use, introduce the positive conjugation maps
    \[
        M_i^\omega(X):=\sqrt{Z_i'(\omega)}\,X\,\sqrt{Z_i'(\omega)},
        \qquad
        N_i^\omega(X):=\bigl(Z_i'(\omega)\bigr)^{-1/2}X\bigl(Z_i'(\omega)\bigr)^{-1/2}.
    \]
    Then
    \[
        \widetilde\phi_n^\omega
        =
        \alpha_n(\omega)^{-1}\,
        M_{n+1}^\omega\circ \phi_n^\omega\circ N_n^\omega.
    \]
    Consequently, for every \(a\le b\),
    \begin{equation}
    \label{eq:Phi-tilde-conj}
        \widetilde\Phi_{[a,b]}^\omega
        :=
        \widetilde\phi_b^\omega\circ\cdots\circ\widetilde\phi_a^\omega
        =
        \left(\prod_{k=a}^b \alpha_k(\omega)^{-1}\right)
        M_{b+1}^\omega\circ \Phi_{[a,b]}^\omega\circ N_a^\omega.
    \end{equation}

    \begin{prop}
    \label{prop:gauge_basic}
        Assume \Cref{assumption1,assumption2}. Then the gauge-transformed cocycle \((\widetilde\phi_n^\omega)_{n\in\mbZ}\) is strictly stationary and satisfies \Cref{assumption1,assumption2}. 
        Moreover, its dual boundary states are
        \[
            \widetilde Z_n'(\omega)=\frac{\mbI_D}{D},
            \qquad n\in\mbZ,
        \]
        and its right boundary states are given by
        \begin{equation}
        \label{eq:widetilde_z_formula}
            \widetilde Z_k(\omega)
            =
            \frac{
                \sqrt{Z_{k+1}'(\omega)}\,Z_k(\omega)\,\sqrt{Z_{k+1}'(\omega)}
            }{
                \tr{Z_{k+1}'(\omega)\,Z_k(\omega)}
            }.
        \end{equation}
    \end{prop}
    \begin{proof}
        The equivariance of the boundary families implies that
        \[
            B_i^{(n),\omega}
            =
            B_i^{(0),\theta^n\omega}
        \]
        on the invariant full-measure event under consideration.
        After modifying the tensors arbitrarily on the complement of this event, the gauged local tensors may be realized as a strictly stationary sequence.
        
        We next verify the assumptions. Since \(M_i^\omega\) and \(N_i^\omega\) are invertible positive conjugations, \eqref{eq:Phi-tilde-conj} implies that \(\widetilde\Phi_{[a,b]}^\omega\) is strictly positive if and only if \(\Phi_{[a,b]}^\omega\) is strictly positive. Thus \Cref{assumption2} passes from \((\phi_n^\omega)\) to \((\widetilde\phi_n^\omega)\).
        
        For \Cref{assumption1}, let \(\rho\in\states\). Since \(N_n^\omega(\rho)\) is nonzero and positive,
        \[
            \widetilde\phi_n^\omega(\rho)=0
            \quad\Longleftrightarrow\quad
            \phi_n^\omega\bigl(N_n^\omega(\rho)\bigr)=0.
        \]
        After normalizing \(N_n^\omega(\rho)\), this would contradict \Cref{assumption1} for \(\phi_n^\omega\). Hence
        \[
            \ker{\widetilde\phi_n^\omega}\cap\states=\emptyset.
        \]
        Similarly,
        \[
            (\widetilde\phi_n^\omega)\adj
            =
            \alpha_n(\omega)^{-1}\,
            N_n^\omega\circ(\phi_n^\omega)\adj\circ M_{n+1}^\omega,
        \]
        and since \(M_{n+1}^\omega(\rho)\) is nonzero and positive whenever \(\rho\in\states\), \Cref{assumption1} for \((\phi_n^\omega)\adj\) implies
        \[
            \ker{(\widetilde\phi_n^\omega)\adj}\cap\states=\emptyset.
        \]
        Thus, the transformed cocycle satisfies \Cref{assumption1}.

        Since \(\widetilde\phi_n^\omega\) is trace-preserving, its adjoint is unital:
        \[
            (\widetilde\phi_n^\omega)\adj(\mbI_D)=\mbI_D.
        \]
        Hence
        \[
            (\widetilde\phi_n^\omega)\adj\proj\frac{\mbI_D}{D}
            =
            \frac{\mbI_D}{D}.
        \]
        Applying the dual projective limit from \Cref{lemma:boundary_states_main} to the transformed cocycle, and choosing the initial state \(\mbI_D/D\), gives
        \[
            \widetilde Z_n'(\omega)=\frac{\mbI_D}{D}.
        \]
        
        It remains to identify \(\widetilde Z_k\). Fix \(k\in\mbZ\). From \eqref{eq:Phi-tilde-conj}, for every \(\rho\in\states\),
        \[
            \widetilde\Phi_{[-N,k]}^\omega\proj\rho
            =
            M_{k+1}^\omega\proj
            \left(
                \Phi_{[-N,k]}^\omega\proj
                \bigl(N_{-N}^\omega\proj\rho\bigr)
            \right).
        \]
        Since \(N_{-N}^\omega\proj\rho\in\states\), \Cref{lemma:boundary_states_main} gives
        \[
            \Phi_{[-N,k]}^\omega\proj
            \bigl(N_{-N}^\omega\proj\rho\bigr)
            \longrightarrow
            Z_k(\omega)
        \]
        as \(N\to\infty\), uniformly in \(\rho\in\states\). By continuity of the projective action of \(M_{k+1}^\omega\),
        \[
            \widetilde\Phi_{[-N,k]}^\omega\proj\rho
            \longrightarrow
            M_{k+1}^\omega\proj Z_k(\omega)
            =
            \frac{
                \sqrt{Z_{k+1}'(\omega)}\,Z_k(\omega)\,\sqrt{Z_{k+1}'(\omega)}
            }{
                \tr{Z_{k+1}'(\omega)\,Z_k(\omega)}
            }.
        \]
        By uniqueness of the projective limit for the transformed cocycle, this limit is \(\widetilde Z_k(\omega)\). This proves \eqref{eq:widetilde_z_formula}.
    \end{proof}

    Let \(\widetilde{\mcO}_{[m,n]}^\omega\) be the observable-dependent transfer map obtained from \(\mcB_m^\omega,\ldots,\mcB_n^\omega\).
    Dynamic gauge fixing preserves the infinite-volume expectation functional in the following sense.

    \begin{prop}
        \label{prop:gauge_thermo}
        Assume \Cref{assumption1,assumption2}.
        Let \(\mbO_{[m,n]}\) be a local observable supported on \([m,n]\), and denote by \(\ket{\widetilde\Psi_N^\omega}\) the periodic MPS on \(2N+1\) sites obtained from the gauge-transformed tensors \((\mcB_k)_{k\in\mbZ}\). 
        Then, for \(\pr\)-almost every \(\omega\),
        \[
            \lim_{N\to\infty}
            \frac{\bra{\widetilde\Psi_N^\omega}\,\mbO_{[m,n]}\,\ket{\widetilde\Psi_N^\omega}}
                 {\inner{\widetilde\Psi_N^\omega}{\widetilde\Psi_N^\omega}}
            =
            \lim_{N\to\infty}
            \frac{\bra{\Psi_N^\omega}\,\mbO_{[m,n]}\,\ket{\Psi_N^\omega}}
                 {\inner{\Psi_N^\omega}{\Psi_N^\omega}}
            =
            \mcT_\omega(\mbO_{[m,n]}).
        \]
        Here, both finite-volume norms appearing above are strictly positive for all sufficiently large \(N\). 
        Moreover,
        \begin{equation}
        \label{eq:TDL-gauge}
            \mcT_\omega(\mbO_{[m,n]})
            =
            \frac{1}{\tr{Z_m'(\omega)\,Z_{m-1}(\omega)}}\,
            \tr{
                \widetilde{\mcO}_{[m,n]}^\omega
                \!\left(
                    \sqrt{Z_m'(\omega)}\,Z_{m-1}(\omega)\,\sqrt{Z_m'(\omega)}
                \right)
            }.
        \end{equation}
    \end{prop}
    \begin{proof}
        Fix \(\omega\) in the full-measure event on which \Cref{lemma:boundary_states_main,thm:thermodynamic_limit} hold for the original cocycle and on which \Cref{prop:gauge_basic} holds for the transformed cocycle.
        
        By \Cref{prop:gauge_basic}, the transformed cocycle satisfies \Cref{assumption1,assumption2}.
        Therefore \Cref{thm:thermodynamic_limit} applies to the transformed MPS.
        Since \(\widetilde Z_{n+1}'=\mbI_D/D\), \Cref{thm:thermodynamic_limit} gives
        \[
            \widetilde{\mcT}_\omega(\mbO_{[m,n]})
            =
            \frac{
                \tr{
                    \frac{\mbI_D}{D}\,
                    \widetilde{\mcO}_{[m,n]}^\omega
                    \left(\widetilde Z_{m-1}(\omega)\right)
                }
            }{
                \tr{
                    \frac{\mbI_D}{D}\,
                    \widetilde\Phi_{[m,n]}^\omega
                    \left(\widetilde Z_{m-1}(\omega)\right)
                }
            }.
        \]
        Since \(\widetilde\Phi_{[m,n]}^\omega\) is trace-preserving and \(\widetilde Z_{m-1}(\omega)\in\states\), the denominator is \(1/D\).
        Therefore
        \[
            \widetilde{\mcT}_\omega(\mbO_{[m,n]})
            =
            \tr{
                \widetilde{\mcO}_{[m,n]}^\omega
                \left(\widetilde Z_{m-1}(\omega)\right)
            }.
        \]
        Using \eqref{eq:widetilde_z_formula} at \(k=m-1\), we have
        \[
            \widetilde Z_{m-1}(\omega)
            =
            \frac{
                \sqrt{Z_m'(\omega)}\,Z_{m-1}(\omega)\,\sqrt{Z_m'(\omega)}
            }{
                \tr{Z_m'(\omega)\,Z_{m-1}(\omega)}
            }.
        \]
        Hence
        \[
            \widetilde{\mcT}_\omega(\mbO_{[m,n]})
            =
            \frac{1}{\tr{Z_m'(\omega)\,Z_{m-1}(\omega)}}\,
            \tr{
                \widetilde{\mcO}_{[m,n]}^\omega
                \!\left(
                    \sqrt{Z_m'(\omega)}\,Z_{m-1}(\omega)\,\sqrt{Z_m'(\omega)}
                \right)
            }.
        \]
        
        It remains to identify this expression with \(\mcT_\omega(\mbO_{[m,n]})\). 
        Iterating the definition of the gauge transform gives
        \begin{equation}
        \label{eq:gauge-block-O-clean}
            \widetilde{\mcO}_{[m,n]}^\omega(X)
            =
            \frac{1}{\prod_{k=m}^n \alpha_k(\omega)}
            \bigl(Z_{n+1}'(\omega)\bigr)^{1/2}
            \mcO_{[m,n]}^\omega\!\left(
                \bigl(Z_m'(\omega)\bigr)^{-1/2}
                X
                \bigl(Z_m'(\omega)\bigr)^{-1/2}
            \right)
            \bigl(Z_{n+1}'(\omega)\bigr)^{1/2}.
        \end{equation}
        Similarly,
        \begin{equation}
        \label{eq:gauge-block-phi-clean}
            \widetilde\Phi_{[m,n]}^\omega(X)
            =
            \frac{1}{\prod_{k=m}^n \alpha_k(\omega)}
            \bigl(Z_{n+1}'(\omega)\bigr)^{1/2}
            \Phi_{[m,n]}^\omega\!\left(
                \bigl(Z_m'(\omega)\bigr)^{-1/2}
                X
                \bigl(Z_m'(\omega)\bigr)^{-1/2}
            \right)
            \bigl(Z_{n+1}'(\omega)\bigr)^{1/2}.
        \end{equation}
        
        Set
        \[
            X_m(\omega):=
            \sqrt{Z_m'(\omega)}\,Z_{m-1}(\omega)\,\sqrt{Z_m'(\omega)}.
        \]
        Then
        \[
            \bigl(Z_m'(\omega)\bigr)^{-1/2}
            X_m(\omega)
            \bigl(Z_m'(\omega)\bigr)^{-1/2}
            =
            Z_{m-1}(\omega).
        \]
        Thus, by \eqref{eq:gauge-block-O-clean},
        \[
            \tr{\widetilde{\mcO}_{[m,n]}^\omega(X_m(\omega))}
            =
            \frac{1}{\prod_{k=m}^n \alpha_k(\omega)}
            \tr{
                Z_{n+1}'(\omega)\,
                \mcO_{[m,n]}^\omega(Z_{m-1}(\omega))
            }.
        \]
        Likewise, by \eqref{eq:gauge-block-phi-clean},
        \[
            \tr{\widetilde\Phi_{[m,n]}^\omega(X_m(\omega))}
            =
            \frac{1}{\prod_{k=m}^n \alpha_k(\omega)}
            \tr{
                Z_{n+1}'(\omega)\,
                \Phi_{[m,n]}^\omega(Z_{m-1}(\omega))
            }.
        \]
        Since \(\widetilde\Phi_{[m,n]}^\omega\) is trace-preserving,
        \[
            \tr{\widetilde\Phi_{[m,n]}^\omega(X_m(\omega))}
            =
            \tr{X_m(\omega)}
            =
            \tr{Z_m'(\omega)\,Z_{m-1}(\omega)}.
        \]
        Combining the last three displays gives
        \[
            \frac{1}{\tr{Z_m'(\omega)\,Z_{m-1}(\omega)}}
            \tr{\widetilde{\mcO}_{[m,n]}^\omega(X_m(\omega))}
            =
            \frac{
                \tr{Z_{n+1}'(\omega)\,\mcO_{[m,n]}^\omega(Z_{m-1}(\omega))}
            }{
                \tr{Z_{n+1}'(\omega)\,\Phi_{[m,n]}^\omega(Z_{m-1}(\omega))}
            }.
        \]
        The right-hand side is \(\mcT_\omega(\mbO_{[m,n]})\) by \Cref{thm:thermodynamic_limit}. Therefore
        \[
            \widetilde{\mcT}_\omega(\mbO_{[m,n]})
            =
            \mcT_\omega(\mbO_{[m,n]}),
        \]
        which proves both the equality of the thermodynamic limits and \eqref{eq:TDL-gauge}.
    \end{proof}

    \begin{remark}
    \label{rem:gauge_replacement}
        For the transformed cocycle, the rank-one maps take the simple replacement form
        \[
            \widetilde\Xi_{[m,n]}^\omega(X)
            =
            \tr{X}\,\frac{\widetilde Z_n(\omega)}{D}.
        \]
        Indeed, \(\widetilde Z_m'(\omega)=\mbI_D/D\).
    \end{remark}

    \begin{figure}[hbt!]
        \hspace{-1cm}
        \centering{
            \scalebox{0.8}{
                \begin{tikzpicture}
                \tikzstyle{vertical} = [thin, blue]
                    \node[tensor] (a) at (-5,0) {$\mathcal{A}_{n}$};
                    \node[tensor] (b) at (-2,0) {$\mathcal{A}_{n-1}$};
                    \node at (0.5,0) {$\ldots$};
                    \node[tensor] (c) at (3,0) {$\mathcal{A}_{m+1}$};
                    \node[tensor] (d) at (6,0) {$\mathcal{A}_{m}$};    
                    \node[tensorT] (e) at (-5,3) {$\overline{\mathcal{A}}_{n}$};
                    \node[tensorT] (f) at (-2,3) {$\overline{\mathcal{A}}_{n-1}$};
                    \node at (0.5,3) {$\ldots$};
                    \node[tensorT] (g) at (3,3) {$\overline{\mathcal{A}}_{m+1}$};
                    \node[tensorT] (h) at (6,3) {$\overline{\mathcal{A}}_{m}$};
                    \draw[vertical] (a) -- (e);
                    \draw[vertical] (b) -- (f);
                    \draw[vertical] (c) -- (g);
                    \draw[vertical] (d) -- (h);
                    \draw (a) -- (b);
                    \draw (c) -- (d);
                    \draw (e) -- (f);
                    \draw (g) - -(h);
                    \draw (a) -- (-7,0);
                    \draw (d) -- (8,0);
                    \draw (e) -- (-7,3);
                    \draw (h) -- (8,3);
                    \draw (b) -- (-1,0);
                    \draw (f) -- (-1,3);
                    \draw (c) -- (2,0);
                    \draw (g) -- (2,3);
                    \draw (-7,0) -- (-7,3);
                    \draw (8,0) -- (8,3);
                    \draw[blue!50, fill=newgreen!50] (-7,1.5) circle(0.5);
                    \draw (-7,1.5) node[] {$Z'_{n+1}$};
                    \draw[blue!50, fill=newred!50] (8,1.5) circle(0.5);
                    \draw (8,1.5) node[] {$Z_{m-1}$};
                    \draw[fill=newyellow!40, rounded corners=0.2cm] (-5.5, 1) rectangle (6.5, 2);
                    \draw (0.5,1.5) node[] {$\mbO_{[m,n]}$}; 
                \end{tikzpicture}
            }\\
            \hspace{-0.8cm}=\\ \hspace{0.3cm}
            \scalebox{0.8}{
                \begin{tikzpicture}
                \tikzstyle{vertical} = [thin, blue]
                    \node[tensor] (a) at (-5,0) {$\mathcal{A}_{n}$};
                    \node[tensor] (b) at (-2,0) {$\mathcal{A}_{n-1}$};
                    \node at (0.5,0) {$\ldots$};
                    \node[tensor] (c) at (3,0) {$\mathcal{A}_{m+1}$};
                    \node[tensor] (d) at (6,0) {$\mathcal{A}_{m}$};    
                    \node[tensorT] (e) at (-5,3) {$\overline{\mathcal{A}}_{n}$};
                    \node[tensorT] (f) at (-2,3) {$\overline{\mathcal{A}}_{n-1}$};
                    \node at (0.5,3) {$\ldots$};
                    \node[tensorT] (g) at (3,3) {$\overline{\mathcal{A}}_{m+1}$};
                    \node[tensorT] (h) at (6,3) {$\overline{\mathcal{A}}_{m}$};
                    \draw[vertical] (a) -- (e);
                    \draw[vertical] (b) -- (f);
                    \draw[vertical] (c) -- (g);
                    \draw[vertical] (d) -- (h);
                    \draw (a) -- (b);
                    \draw (c) -- (d);
                    \draw (e) -- (f);
                    \draw (g) - -(h);
                    \draw (a) -- (-7,0);
                    \draw (d) -- (8,0);
                    \draw (e) -- (-7,3);
                    \draw (h) -- (8,3);
                    \draw (b) -- (-1,0);
                    \draw (f) -- (-1,3);
                    \draw (c) -- (2,0);
                    \draw (g) -- (2,3);
                    \draw (-7,0) -- (-7,3);
                    \draw (8,0) -- (8,3);
                    \draw[blue!50, fill=newred!50] (8,1.5) circle(0.5);
                    \draw (8,1.5) node[] {$Z_{m-1}$};
                    \draw[fill=newyellow!40, rounded corners=0.2cm] (-5.5, 1) rectangle (6.5, 2);
                    \draw (0.5,1.5) node[] {$\mbO_{[m,n]}$};
                    \draw[blue!50, fill=newgreen!50] (8,0.625) circle(0.25);
                    \draw[blue!50, fill=newgreen!50] (8,2.375) circle(0.25);
                    \draw[blue!50, fill=newgreen!50] (-6,0) circle(0.25);
                    \draw[blue!50, fill=newgreen!50] (-3,0) circle(0.25);
                    \draw[blue!50, fill=newgreen!50] (2,0) circle(0.25);
                    \draw[blue!50, fill=newgreen!50] (5,0) circle(0.25);
                    \draw[blue!50, fill=newred!50] (-4,0) circle(0.25);
                    \draw[blue!50, fill=newred!50] (-1,0) circle(0.25);
                    \draw[blue!50, fill=newred!50] (4,0) circle(0.25);
                    \draw[blue!50, fill=newred!50] (7,0) circle(0.25);
                    \draw[blue!50, fill=newgreen!50] (-6,3) circle(0.25);
                    \draw[blue!50, fill=newgreen!50] (-3,3) circle(0.25);
                    \draw[blue!50, fill=newgreen!50] (2,3) circle(0.25);
                    \draw[blue!50, fill=newgreen!50] (5,3) circle(0.25);
                    \draw[blue!50, fill=newred!50] (-4,3) circle(0.25);
                    \draw[blue!50, fill=newred!50] (-1,3) circle(0.25);
                    \draw[blue!50, fill=newred!50] (4,3) circle(0.25);
                    \draw[blue!50, fill=newred!50] (7,3) circle(0.25);
                    \draw (-6,0.5) node[] {$\sqrt{Z'_{n+1}}$};
                    \draw (-6,2.5) node[] {$\sqrt{Z'_{n+1}}$};
                    \draw (7,-0.5) node[] {$({Z'_{m}})^{-1/2}$};
                    \draw (7,3.5) node[] {$({Z'_{m}})^{-1/2}$};
                    \draw (9,0.625) node[] {$\sqrt{Z'_m}$};
                    \draw (9,2.375) node[] {$\sqrt{Z'_m}$};
                \end{tikzpicture}
            }\\
        }
        \caption{Tensor-network illustration of the gauge identity used in the proof of \Cref{prop:gauge_thermo}. The scalar factors \(\alpha_k^{-1/2}\) are omitted because their double-layer contribution cancels in the normalized expectation.}
        \label{fig:proof_of_cor}
    \end{figure}
  
%%%%%%%%%%%%%%%%%%%%%%%%%%%%%%%%%%%%%%%%%%%%%%%%%%%%%%%%%%%%%%%%%%%%%%%%%%%%
%%%%%%%%%%%%%%%%%%%%%%%%%%%%%%%%%%%%%%%%%%%%%%%%%%%%%%%%%%%%%%%%%%%%%%%%%%%

\subsection{Almost Sure Exponential Decay -- \texorpdfstring{\Cref{thm:decay_random}}{Theorem B}}
\label{subsec:proof_decay_random}

%%%%%%%%%%%%%%%%%%%%%%%%%%%%%%%%%%%%%%%%%%%%%%%%%%%%%%%%%%%%%%%%%%%%%%%%%%%%
%%%%%%%%%%%%%%%%%%%%%%%%%%%%%%%%%%%%%%%%%%%%%%%%%%%%%%%%%%%%%%%%%%%%%%%%%%%

    Equipped with \Cref{thm:thermodynamic_limit} and \Cref{prop:gauge_thermo}, we now derive the basic correlation estimate that underlies all subsequent decay results.

    The role of the dynamic gauge fixing is to reduce the problem to a CPTP cocycle.
    At this stage, we also introduce the contraction coefficient \(\cnum{\,\cdot\,}\), whose definition and basic properties are recalled in \Cref{dfn:cnum,lemma:cnum_properties}.
    Although \(\cnum{\,\cdot\,}\) did not appear in the main-results section, it now enters as a quantitative proof device.
    In particular, we invoke the quantitative refinement of \Cref{lemma:boundary_states_main}, proved later in \Cref{section:appen_1}.

    We first record the corresponding quantitative rank--one approximation for the gauged middle block.

    \begin{lemma}
    \label{lemma:gauged_rank_one_quant}
        Assume \Cref{assumption1,assumption2}. Let \((\widetilde\phi_n^\omega)_{n\in\mbZ}\) be the gauge-transformed cocycle from \Cref{section:gauge_transform}. 
       Then, for \(\pr\)-almost every \(\omega\), and simultaneously for every \(m\le n\), 
        \begin{equation}
        \label{eq:gauged_rank_one_quant}
            \norm{
                \frac{1}{D}\,\widetilde\Phi_{[m,n]}^\omega
                -
                \widetilde\Xi_{[m,n]}^\omega
            }_{1\to1}
            \le
            8\,\cnum{\widetilde\Phi_{[m,n]}^\omega},
        \end{equation}
        where
        \[
            \widetilde\Xi_{[m,n]}^\omega(X)
            =
            \tr{X}\,\frac{\widetilde Z_n(\omega)}{D}.
        \]
        Moreover,
        \begin{equation}
        \label{eq:cnum_gauge_compare}
            \cnum{\widetilde\Phi_{[m,n]}^\omega}
            \le
            \cnum{\Phi_{[m,n]}^\omega}.
        \end{equation}
    \end{lemma}
    \begin{proof}
        By \Cref{prop:gauge_basic}, the gauge-transformed cocycle satisfies \Cref{assumption1,assumption2}, and its dual boundary states are \(\widetilde Z_n'=\mbI_D/D\).
        Applying the quantitative refinement of \Cref{lemma:boundary_states_main} from \Cref{section:appen_1} to the transformed cocycle yields
        \[
            \norm{
                \frac{\widetilde\Phi_{[m,n]}^\omega}
                     {\tr{(\widetilde\Phi_{[m,n]}^\omega)\adj(\mbI_D)}}
                -
                \widetilde\Xi_{[m,n]}^\omega
            }_{1\to1}
            \le
            8\,\cnum{\widetilde\Phi_{[m,n]}^\omega}.
        \]
        Since \(\widetilde\Phi_{[m,n]}^\omega\) is trace-preserving, its adjoint is unital, and therefore
        \[
            \tr{(\widetilde\Phi_{[m,n]}^\omega)\adj(\mbI_D)}
            =
            \tr{\mbI_D}
            =
            D.
        \]
        This gives \eqref{eq:gauged_rank_one_quant}.
        To prove \eqref{eq:cnum_gauge_compare}, recall from \eqref{eq:Phi-tilde-conj} that
        \[
            \widetilde\Phi_{[m,n]}^\omega
            =
            c_\omega\,
            M_{n+1}^\omega\circ
            \Phi_{[m,n]}^\omega\circ
            N_m^\omega
        \]
        for some scalar \(c_\omega>0\).
        Since \(\cnum{\lambda\Psi}=\cnum{\Psi}\) for \(\lambda>0\), and since the projective actions of the invertible positive conjugations \(M_{n+1}^\omega\) and \(N_m^\omega\) are nonexpansive for the projective metric, submultiplicativity yields
        \[
            \cnum{\widetilde\Phi_{[m,n]}^\omega}
            \le
            \cnum{\Phi_{[m,n]}^\omega}.
        \]
    \end{proof}

    The previous lemma controls the long transfer block between the two observables.
    We next record a simple bound on the one-site gauged observable transfer maps.

    \begin{lemma}
    \label{lem:otilde_bd}
        For every single-site observable \(\mbO_k\), the corresponding gauge-transformed observable transfer map satisfies
        \[
            \norm{\widetilde{\mcO}_k}_{1\to1}
            \le
            \norm{\mbO_k}_\infty.
        \]
    \end{lemma} 
    \begin{proof}
        Using the dual characterization of the trace norm, we have
        \[
            \norm{\widetilde{\mcO}_k}_{1\to1}
            =
            \sup_{\norm{X}_1=1}
            \sup_{\norm{Y}_\infty\le1}
            \left|\tr{Y\adj\widetilde{\mcO}_k(X)}\right|.
        \]
        Writing \(\mcB_k=(B_1,\ldots,B_d)\), we obtain
        \[
            \tr{Y\adj\widetilde{\mcO}_k(X)}
            =
            \tr{
                \left[
                    \sum_{p,q=1}^d
                    (\mbO_k)_{p,q}\,
                    B_p\adj Y\adj B_q
                \right]X
            }.
        \]
        Hence
        \[
            \left|\tr{Y\adj\widetilde{\mcO}_k(X)}\right|
            \le
            \left\|
                \sum_{p,q=1}^d
                (\mbO_k)_{p,q}\,
                B_p\adj Y\adj B_q
            \right\|_\infty
            \norm{X}_1.
        \]
        Set
        \[
            M_Y
            :=
            \sum_{p,q=1}^d
            (\mbO_k)_{p,q}\,
            B_p\adj Y\adj B_q.
        \]
        For unit vectors \(u,v\in\mbC^D\),
        \[
            \inner{u}{M_Yv}
            =
            \sum_{p,q=1}^d
            (\mbO_k)_{p,q}\,
            \inner{YB_pu}{B_qv}.
        \]
        Let \(a=(YB_pu)_p\) and \(b=(B_qv)_q\), viewed in the direct-sum Hilbert space \((\mbC^D)^{\oplus d}\). Then
        \[
            \sum_{p,q=1}^d
            (\mbO_k)_{p,q}\,
            \inner{YB_pu}{B_qv}
            =
            \inner{a}{(\mbO_k\otimes\mbI_D)b}.
        \]
        By Cauchy--Schwarz,
        \[
            |\inner{u}{M_Yv}|
            \le
            \norm{\mbO_k}_\infty\,\norm{a}\,\norm{b}.
        \]
        Since \(\widetilde\phi_k\) is trace-preserving, \(\sum_i B_i\adj B_i=\mbI_D\). Therefore
        \[
            \norm{a}^2
            =
            \sum_p \norm{YB_pu}^2
            \le
            \norm{Y}_\infty^2
            \sum_p \inner{u}{B_p\adj B_p u}
            \le
            1,
        \]
        and
        \[
            \norm{b}^2
            =
            \sum_q \norm{B_qv}^2
            =
            \sum_q \inner{v}{B_q\adj B_q v}
            =
            1.
        \]
        Thus \(|\inner{u}{M_Yv}|\le \norm{\mbO_k}_\infty\). Taking the supremum over unit \(u,v\), and then over \(\norm{Y}_\infty\le1\), gives
        \[
            \norm{\widetilde{\mcO}_k}_{1\to1}
            \le
            \norm{\mbO_k}_\infty.
        \]
    \end{proof}

    We can now combine \Cref{prop:gauge_thermo,prop:gauge_basic,lemma:gauged_rank_one_quant,lem:otilde_bd} to obtain the basic almost-sure bound on connected correlations.

    \begin{lemma}
        \label{lemma:main_bound}
        Assume \Cref{assumption1,assumption2}.
        Fix sites \(m<n\) with \(n-m\ge2\), and let \(\mbO_m,\mbO_n\) be local observables supported at \(m\) and \(n\), respectively.
        Then
        \begin{equation}
        \label{thm:decay_ineq_2}
            f^\omega(n,m)
            \le
            8D\,\norm{\mbO_n}_\infty\,\norm{\mbO_m}_\infty\,
            \cnum{\phi_{n-1}^\omega\circ\cdots\circ\phi_{m+1}^\omega}
        \end{equation}
        for \(\pr\)-almost every \(\omega\).
    \end{lemma}

\begin{proof}
        By \Cref{prop:gauge_thermo}, \Cref{prop:gauge_basic}, and the definition of the gauge-transformed observable transfer maps,
        \[
            \mcT_\omega(\mbO_n\mbO_m)
            =
            \tr{
                \widetilde{\mcO}_n
                \circ
                \widetilde\Phi_{[m+1,n-1]}^\omega
                \circ
                \widetilde{\mcO}_m(\widetilde Z_{m-1})
            },
        \]
        where
        \[
            \widetilde\Phi_{[m+1,n-1]}^\omega
            =
            \widetilde\phi_{n-1}^\omega\circ\cdots\circ\widetilde\phi_{m+1}^\omega.
        \]
        By \Cref{rem:gauge_replacement},
        \[
            \widetilde\Xi_{[m+1,n-1]}^\omega(X)
            =
            \tr{X}\,\frac{\widetilde Z_{n-1}(\omega)}{D}.
        \]
        Thus \(D\,\widetilde\Xi_{[m+1,n-1]}^\omega(X)=\tr{X}\,\widetilde Z_{n-1}(\omega)\), and hence
        \[
            \tr{
                \widetilde{\mcO}_n
                \circ
                D\,\widetilde\Xi_{[m+1,n-1]}^\omega
                \circ
                \widetilde{\mcO}_m(\widetilde Z_{m-1})
            }
            =
            \mcT_\omega(\mbO_n)\,\mcT_\omega(\mbO_m).
        \]
        Therefore
        \[
            f^\omega(n,m)
            =
            \left|
                \tr{
                    \widetilde{\mcO}_n
                    \circ
                    \left(
                        \widetilde\Phi_{[m+1,n-1]}^\omega
                        -
                        D\,\widetilde\Xi_{[m+1,n-1]}^\omega
                    \right)
                    \circ
                    \widetilde{\mcO}_m(\widetilde Z_{m-1})
                }
            \right|.
        \]
        Using \(|\tr{X}|\le\norm{X}_1\), we obtain
        \[
            f^\omega(n,m)
            \le
            \norm{\widetilde{\mcO}_n}_{1\to1}\,
            \norm{
                \widetilde\Phi_{[m+1,n-1]}^\omega
                -
                D\,\widetilde\Xi_{[m+1,n-1]}^\omega
            }_{1\to1}\,
            \norm{\widetilde{\mcO}_m}_{1\to1}.
        \]
        By \Cref{lemma:gauged_rank_one_quant},
        \[
            \norm{
                \widetilde\Phi_{[m+1,n-1]}^\omega
                -
                D\,\widetilde\Xi_{[m+1,n-1]}^\omega
            }_{1\to1}
            \le
            8D\,\cnum{\widetilde\Phi_{[m+1,n-1]}^\omega}.
        \]
        By \Cref{lem:otilde_bd},
        \[
            \norm{\widetilde{\mcO}_n}_{1\to1}\le\norm{\mbO_n}_\infty,
            \qquad
            \norm{\widetilde{\mcO}_m}_{1\to1}\le\norm{\mbO_m}_\infty.
        \]
        Finally, \eqref{eq:cnum_gauge_compare} gives
        \[
            \cnum{\widetilde\Phi_{[m+1,n-1]}^\omega}
            \le
            \cnum{\Phi_{[m+1,n-1]}^\omega}
            =
            \cnum{\phi_{n-1}^\omega\circ\cdots\circ\phi_{m+1}^\omega}.
        \]
        Combining the preceding estimates yields \eqref{thm:decay_ineq_2}.
    \end{proof}

    We are now ready to prove \Cref{thm:decay_random}.

    \decayrandom*
    \begin{proof}
        By \Cref{lemma:boundary_states_quantitative}(C), there exists a \(\theta\)-invariant random variable
        \[
            \xi:\Omega\to[-\infty,0)
        \]
        such that, for every fixed \(x\in\mbZ\),
        \begin{equation}
        \label{eq:forward_xi_limit}
            \lim_{k\to\infty}
            \frac{1}{k}
            \ln \cnum{\phi_{x+k-1}^\omega\circ\cdots\circ\phi_x^\omega}
            =
            \xi(\omega),
        \end{equation}
        and
        \begin{equation}
        \label{eq:backward_xi_limit}
            \lim_{k\to\infty}
            \frac{1}{k}
            \ln \cnum{\phi_x^\omega\circ\cdots\circ\phi_{x-k+1}^\omega}
            =
            \xi(\omega),
        \end{equation}
        on a \(\theta\)-invariant full-measure event.

        By intersecting with the full-measure event on which \Cref{lemma:main_bound} holds, we may choose a full-measure event \(\Omega_*\subseteq\Omega\) on which \eqref{eq:forward_xi_limit}, \eqref{eq:backward_xi_limit}, and \Cref{lemma:main_bound} all hold simultaneously for every \(x\in\mbZ\).

        Set
        \[
            \mu(\omega)
            :=
            \frac{e^{\xi(\omega)}+1}{2}.
        \]
        Then
        \[
            \frac12 \le \mu(\omega) < 1
            \qquad\text{for }\pr\text{-almost every }\omega.
        \]
        For each \(x\in\mbZ\), define
        \begin{equation}
        \label{eq:Nxplus_def}
            N_x^+(\omega)
            :=
            \min\Bigl\{
                N\in\mbN:
                \cnum{\phi_{x+k-1}^\omega\circ\cdots\circ\phi_x^\omega}
                \le
                \mu(\omega)^k
                \text{ for all }k\ge N
            \Bigr\},
        \end{equation}
        and
        \begin{equation}
        \label{eq:Nxminus_def}
            N_x^-(\omega)
            :=
            \min\Bigl\{
                N\in\mbN:
                \cnum{\phi_x^\omega\circ\cdots\circ\phi_{x-k+1}^\omega}
                \le
                \mu(\omega)^k
                \text{ for all }k\ge N
            \Bigr\}.
        \end{equation}
        By \eqref{eq:forward_xi_limit} and \eqref{eq:backward_xi_limit}, these are almost surely finite random variables.

        Fix \(x\in\mbZ\), and fix \(\omega\in\Omega_*\).
        Let \(m<x<n\) with \(n-m\ge2\), and let \(\mbO_m,\mbO_n\) be local observables supported at \(m\) and \(n\), respectively.
        By \Cref{lemma:main_bound},
        \begin{equation}
        \label{eq:start_decay_random_proof}
            f^\omega(n,m)
            \le
            8D\,\norm{\mbO_n}_\infty\,\norm{\mbO_m}_\infty\,
            \cnum{\phi_{n-1}^\omega\circ\cdots\circ\phi_{m+1}^\omega}.
        \end{equation}

        We now split the middle block at the site \(x\).
        With the convention that an empty composition is the identity map and has contraction coefficient \(1\), submultiplicativity of \(\cnum{\,\cdot\,}\) yields
        \[
            \cnum{\phi_{n-1}^\omega\circ\cdots\circ\phi_{m+1}^\omega}
            \le
            \cnum{\phi_{n-1}^\omega\circ\cdots\circ\phi_{x+1}^\omega}\,
            \cnum{\phi_{x-1}^\omega\circ\cdots\circ\phi_{m+1}^\omega}.
        \]

        Consider first the right block.
        If \(n-x-1\ge N_{x+1}^+(\omega)\), then by definition of \(N_{x+1}^+(\omega)\),
        \[
            \cnum{\phi_{n-1}^\omega\circ\cdots\circ\phi_{x+1}^\omega}
            \le
            \mu(\omega)^{\,n-x-1}.
        \]
        If \(n-x-1< N_{x+1}^+(\omega)\), then trivially
        \[
            \cnum{\phi_{n-1}^\omega\circ\cdots\circ\phi_{x+1}^\omega}
            \le 1
            \le
            \mu(\omega)^{-N_{x+1}^+(\omega)}\,\mu(\omega)^{\,n-x-1}.
        \]
        Thus, in all cases,
        \begin{equation}
        \label{eq:right_block_bound}
            \cnum{\phi_{n-1}^\omega\circ\cdots\circ\phi_{x+1}^\omega}
            \le
            \mu(\omega)^{-N_{x+1}^+(\omega)}\,\mu(\omega)^{\,n-x-1}.
        \end{equation}
        
        Similarly, for the left block, if \(x-m-1\ge N_{x-1}^-(\omega)\), then
        \[
            \cnum{\phi_{x-1}^\omega\circ\cdots\circ\phi_{m+1}^\omega}
            \le
            \mu(\omega)^{\,x-m-1},
        \]
        while if \(x-m-1< N_{x-1}^-(\omega)\), then
        \[
            \cnum{\phi_{x-1}^\omega\circ\cdots\circ\phi_{m+1}^\omega}
            \le 1
            \le
            \mu(\omega)^{-N_{x-1}^-(\omega)}\,\mu(\omega)^{\,x-m-1}.
        \]
        Hence,
        \begin{equation}
        \label{eq:left_block_bound}
            \cnum{\phi_{x-1}^\omega\circ\cdots\circ\phi_{m+1}^\omega}
            \le
            \mu(\omega)^{-N_{x-1}^-(\omega)}\,\mu(\omega)^{\,x-m-1}.
        \end{equation}
        
        Combining \eqref{eq:start_decay_random_proof}, \eqref{eq:right_block_bound}, and \eqref{eq:left_block_bound}, we obtain
        \[
            f^\omega(n,m)
            \le
            8D\,\norm{\mbO_n}_\infty\,\norm{\mbO_m}_\infty\,
            \mu(\omega)^{-N_{x+1}^+(\omega)-N_{x-1}^-(\omega)}\,
            \mu(\omega)^{\,n-m-2}.
        \]
        Set
        \[
            T_x(\omega)
            :=
            \max\{N_{x+1}^+(\omega),\,N_{x-1}^-(\omega)\}.
        \]
        Then
        \[
            f^\omega(n,m)
            \le
            8D\,\norm{\mbO_n}_\infty\,\norm{\mbO_m}_\infty\,
            \mu(\omega)^{-2T_x(\omega)}\,
            \mu(\omega)^{\,n-m-2}.
        \]
        Since \(\mu(\omega)^{\,n-m-2}=\mu(\omega)^{-2}\mu(\omega)^{\,n-m}\), this becomes
        \[
            f^\omega(n,m)
            \le
            g_x(\omega)\,\norm{\mbO_n}_\infty\,\norm{\mbO_m}_\infty\,
            e^{-\alpha(\omega)(n-m)},
        \]
        where
        \[
            \alpha(\omega):=-\ln \mu(\omega)>0,
            \qquad
            g_x(\omega):=8D\,\mu(\omega)^{-2T_x(\omega)-2}.
        \]
        This proves \eqref{eq:exp-decay}.

        It remains to verify the two additional assertions.
        If \((\mcA_n)_{n\in\mbZ}\) is stationary ergodic, then the shift \(\theta\) is ergodic, and since \(\xi\) is \(\theta\)-invariant, it follows that \(\xi\) is almost surely constant. 
        Hence \(\mu\) and therefore \(\alpha\) may be chosen deterministic. This applies in particular in the i.i.d.\ case.
        If \((\mcA_n)_{n\in\mbZ}\) is random homogeneous, then for each fixed \(\omega\) we have
        \[
            \phi_n^\omega=\phi_0^\omega
            \qquad\text{for all }n\in\mbZ.
        \]
        Therefore the quantities
        \[
            \cnum{\phi_{x+k-1}^\omega\circ\cdots\circ\phi_x^\omega},
            \qquad
            \cnum{\phi_x^\omega\circ\cdots\circ\phi_{x-k+1}^\omega}
        \]
        do not depend on \(x\), and hence \(N_x^+(\omega)\), \(N_x^-(\omega)\), and \(T_x(\omega)\) may all be chosen independent of \(x\). 
        Consequently, the prefactor \(g_x\) may be replaced by a single almost surely finite random variable \(g\).
        This completes the proof.
    \end{proof}

%%%%%%%%%%%%%%%%%%%%%%%%%%%%%%%%%%%%%%%%%%%%%%%%%%%%%%%%%%%%%%%%%%%%%%%%%%%%
%%%%%%%%%%%%%%%%%%%%%%%%%%%%%%%%%%%%%%%%%%%%%%%%%%%%%%%%%%%%%%%%%%%%%%%%%%%

\subsection{Random TI-MPS Case -- \texorpdfstring{\Cref{thm:TIcase}}{Theorem C}} 

%%%%%%%%%%%%%%%%%%%%%%%%%%%%%%%%%%%%%%%%%%%%%%%%%%%%%%%%%%%%%%%%%%%%%%%%%%%%
%%%%%%%%%%%%%%%%%%%%%%%%%%%%%%%%%%%%%%%%%%%%%%%%%%%%%%%%%%%%%%%%%%%%%%%%%%%

    Before we state and prove the \Cref{thm:TIcase}, we define 
    \[
        \tau(\omega) := \inf \{n : \phi_{n-1}^\omega\circ\ldots\circ \phi_0^\omega \text{ is strictly positive }\},
    \]
    with the convention $\inf \, \emptyset = +\infty$, and set
    \[
    f(n) = \pr\{\tau > n\}.
    \]
    Since we are in the random TI case, after removing a null set we may simply write
    \[
        \phi_n^\omega=\phi^\omega
        \qquad\text{for every }n\in\mbZ.
    \]
    Under the \Cref{assumption2} we have that $\tau$ is almost surely finite and that $f(n) \to 0$ as $n\to \infty$.
    Furthermore for given $m,n \in \mbN$ we define 
    \begin{equation}
        \zeta_{n}(m) = \sum_{t=1}^n \pr\left(\{\tau=t\} \cap \{\cnum{\phi_{t-1}\circ\ldots\circ\phi_0} > 1-1/m\}\right)
    \end{equation}
    note that for a fixed $n\in\mbN$ $\zeta_n(m)$ is non-increasing in $m$ and $\lim_{m\to\infty} \zeta_{n}(m) = 0$. 
    
    With these definitions stated, we are now ready to prove \Cref{thm:TIcase}

    \TICase*

        \begin{proof}
            Fix $\epsilon\in(0,1)$. 
            Choose $b\in\mathbb N$ and $u\in\mathbb N$, $u\ge2$, such that
            \[
                f(b)\ \le\ \frac{\epsilon}{2},
                    \qquad
                \zeta_b(u)\ \le\ \frac{\epsilon}{2}.
            \]
        Set $\lambda_1:=(1-\frac{1}{u})^{1/b}\in(0,1)$ and $K(\epsilon):=8D\,\lambda_1^{-(b+1)}$.
        
        Let $m,n\in\mathbb Z$ with $2\le |n-m|$ and, w.l.o.g., $m<n$. Define the event
        \[
          E\ :=\ \{\tau\le b\}\ \cap\ \big\{\cnum{(\phi^\omega)^b}\le \lambda_1^b\big\}.
        \]
        We claim $\Pr(E)\ge 1-\epsilon$. 
        Indeed,
        \begin{align*}
            E^c 
            &=
            \{
                \tau > b
            \} 
            \cup 
            \left(
            \{\tau \le b\} 
            \cap 
            \{\cnum{(\phi^\omega)^b}  > \lambda_1^b \}
            \right)\,.
        \end{align*}
        Thus 
        \[
            \Pr(E^c)
            \ \le\ f(b) + \sum_{t=1}^b \Pr\Big(\{\tau=t\}\cap \big\{\cnum{(\phi^\omega)^b}>\lambda_1^b\big\}\Big).
        \]
        But for $t\in \{1, \ldots , b\}$ we have that, by submultiplicativity of $\cnum{\,\cdot\,}$, $\cnum{(\phi^\omega)^t} \ge \cnum{(\phi^\omega)^b} > 1-1/u$. 
        Therefore 
        \[
            \Pr(E^c)\ \le\ f(b)+\sum_{t=1}^b \Pr\big(\{\tau=t\}\cap\{\cnum{(\phi^\omega)^t}>1-\tfrac{1}{u}\}\big)
            \ =\ f(b)+\zeta_b(u)\ \le\ \epsilon.
        \]

        Now for $2 \le (n-m)$ we have that there is some $k\in\mbN_0$ so that $0 \le kb \le n-m-2< (k+1)b$.
        If $k= 0$ we trivially have from \Cref{lemma:main_bound} that, with probability $1$,
        \begin{align*}
           f^{\omega}(n,m) 
           &\leq 8 D \norm{\mbO_n}_\infty\norm{\mbO_m}_\infty\,,\\
           &\leq 8 D \norm{\mbO_n}_\infty\norm{\mbO_m}_\infty \left(1-\dfrac{1}{u}\right)^{\dfrac{n-m-2}{b} + \dfrac{1}{b} - 1}\,,\\
           &= 8 D \norm{\mbO_n}_\infty\norm{\mbO_m}_\infty \lambda_1^{-1 -b}\lambda_1^{n-m}\,.
        \end{align*}
        Now for $k\ge 1$, on the event $E$,
        \[
            \cnum{\phi^\omega_{n-1}\circ\cdots\circ\phi^\omega_{m+1}}
            \ =\ \cnum{(\phi^\omega)^{\,n-m-1}}
            \ \le\ \big(\cnum{(\phi^\omega)^b}\big)^{\lfloor (n-m-2)/b\rfloor}
            \ \le\ \lambda_1^{\,n-m}\,\lambda_1^{-(b+1)},
        \]
        where we have used that $k = \floor{(n-m-2)/b}$ and that $kb \ge (n-m) - (b+1)$.
        Thus by \Cref{lemma:main_bound},
        \[
            f^\omega(n,m)
            \ \le\ 8 D\,\norm{\mbO_n}_\infty\,\norm{\mbO_m}_\infty\;\cnum{\phi^\omega_{n-1}\circ\cdots\circ\phi^\omega_{m+1}}
            \ \le\ K(\epsilon)\,\norm{\mbO_n}_\infty\,\norm{\mbO_m}_\infty\;\lambda_1^{\,n-m}
        \]
        on the event $E$. Here $K(\epsilon) = 8 D\lambda_1^{-(1+b)}$.
        Considering all cases, we obtain ith probability at least $\pr(E) \ge 1 - \epsilon$, it must be the case that
        \[
            f^\omega(n,m)  \ \le\ K(\epsilon)\,\norm{\mbO_n}_\infty\,\norm{\mbO_m}_\infty\;\exp{-\lambda(\epsilon) |n-m|}
        \]
        where $\lambda(\epsilon) = -\ln{\lambda_1}> 0$.
        \end{proof}

%%%%%%%%%%%%%%%%%%%%%%%%%%%%%%%%%%%%%%%%%%%%%%%%%%%%%%%%%%%%%%%%%%%%%%%%%%%%
%%%%%%%%%%%%%%%%%%%%%%%%%%%%%%%%%%%%%%%%%%%%%%%%%%%%%%%%%%%%%%%%%%%%%%%%%%%

\section{Stochastically Generated MPS with Decaying Stochastic Correlations}
\label{section:stochastic_corr}

%%%%%%%%%%%%%%%%%%%%%%%%%%%%%%%%%%%%%%%%%%%%%%%%%%%%%%%%%%%%%%%%%%%%%%%%%%%%
%%%%%%%%%%%%%%%%%%%%%%%%%%%%%%%%%%%%%%%%%%%%%%%%%%%%%%%%%%%%%%%%%%%%%%%%%%%

    In this section we provide the proof of \Cref{thm:IID}, \Cref{thm:decay_cor}, \Cref{thm:fast_decay} and \Cref{thm:fast_beta_decay}. 
    First, we start with the extreme case where there is no stochastic correlation between sites. 
    
%%%%%%%%%%%%%%%%%%%%%%%%%%%%%%%%%%%%%%%%%%%%%%%%%%%%%%%%%%%%%%%%%%%%%%%%%%%%
%%%%%%%%%%%%%%%%%%%%%%%%%%%%%%%%%%%%%%%%%%%%%%%%%%%%%%%%%%%%%%%%%%%%%%%%%%%

\subsection{Random MPS with IID Sampling}
\label{sect:IID}

%%%%%%%%%%%%%%%%%%%%%%%%%%%%%%%%%%%%%%%%%%%%%%%%%%%%%%%%%%%%%%%%%%%%%%%%%%%%
%%%%%%%%%%%%%%%%%%%%%%%%%%%%%%%%%%%%%%%%%%%%%%%%%%%%%%%%%%%%%%%%%%%%%%%%%%%

    In the extreme case where the local tensor sequence $(\mcA_n)_{n\in\mbZ}$ is independent and identically distributed (IID), the spatial maximal correlation profile satisfies $\rho_n = 0$, for all $n\geq 1$ (and similar for all other mixing coefficients). 
    This strong stochastic de-correlation structure permits a significantly sharper decay bound for the stochastic expectation of the contraction coefficient of the associated  $n$-fold transfer operator $\Phi^{(n)}$. Specifically, the decay is exponential. 

    First, we need the following small result 

    \begin{prop}
    \label{prop:expectation_to_0}
    Under Assumption~\ref{assumption1}-\ref{assumption2}, $\lim_{n\to\infty}\,\mathbb{E}\bigl[\cnum{\Phi^{(n)}}\bigr]=0$.
    \end{prop}
        \begin{proof}
            By \Cref{lemma:boundary_states_quantitative}\emph{(C)}, applied  with \(x=0\), we get \(\cnum{\Phi^{(n)}}\to 0\) almost surely.
            Moreover,  $0\leq\cnum{\Phi^{(n)}}\leq1$, almost surely.
            The conclusion therefore follows from the dominated convergence theorem. 
        \end{proof}

    \begin{lemma}
    \label{lemma:C_decay_exp}
         Suppose the sequence $(\mcA_n)_{n\in\mbZ}$ of random local tensors is sampled in an IID fashion and that \Cref{assumption1} and \Cref{assumption2} hold. Then there exist constants $C_{\text{iid}}>0$, $\delta>0$ such that
         \begin{equation}
             \avg{\cnum{\Phi^{(n)}}} \leq C_{\text{iid}} e^{-\delta n}  \text{for all } n \in \mbN\, .
         \end{equation}
    \end{lemma}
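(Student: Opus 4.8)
The plan is to exploit the i.i.d.\ structure to reduce the problem to a product of independent, identically distributed ``blocks'' and then combine submultiplicativity of $\cnum{\,\cdot\,}$ with the fact that a single block already has a definite chance of being a strict contraction. First I would produce a \emph{deterministic} block length $b\in\mbN$, a deterministic $\lambda\in(0,1)$, and a fixed $p>0$ such that
\[
    \pr\bigl\{\,\cnum{\Phi^{(b)}_\omega}\le\lambda\,\bigr\}\ \ge\ p .
\]
To obtain this, note that by \Cref{assumption2} the entry time $\tau$ is a.s.\ finite, so $f(b)=\pr\{\tau>b\}\to0$; fix $b$ with $\pr\{\tau\le b\}\ge\frac12$. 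On the event $\{\tau\le b\}$, \Cref{prop:ESP} (persistence of strict positivity) gives that $\Phi^{(b)}_\omega$ is positivity improving, hence $\cnum{\Phi^{(b)}_\omega}<1$ by property (ii) of the contraction coefficient in \Cref{lemma:c_num}. Since $\omega\mapsto\cnum{\Phi^{(b)}_\omega}$ is measurable and the events $\{\cnum{\Phi^{(b)}_\omega}\le1-\tfrac1m\}\cap\{\tau\le b\}$ increase to $\{\tau\le b\}$ as $m\to\infty$, continuity of measure supplies an $m$ with $\pr\{\cnum{\Phi^{(b)}_\omega}\le1-\tfrac1m\}\ge\tfrac14$; set $\lambda:=1-\tfrac1m$ and $p:=\tfrac14$.

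Next I would chop $\Phi^{(n)}_\omega$ into blocks. Writing $n=qb+r$ with $0\le r<b$, factor
\[
    \Phi^{(n)}_\omega\;=\;R_\omega\,\circ\,\Psi^{(q)}_\omega\circ\cdots\circ\Psi^{(1)}_\omega,
    \qquad
    \Psi^{(j)}_\omega:=\phi^\omega_{jb-1}\circ\cdots\circ\phi^\omega_{(j-1)b},
\]
with $R_\omega$ the leftover composition of length $r$. Because the local tensors are i.i.d., the maps $\Psi^{(1)}_\omega,\ldots,\Psi^{(q)}_\omega$ are i.i.d., each distributed as $\Phi^{(b)}_\omega$, and independent of $R_\omega$. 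Submultiplicativity of $\cnum{\,\cdot\,}$ together with $\cnum{R_\omega}\le1$ yields $\cnum{\Phi^{(n)}_\omega}\le\prod_{j=1}^{q}\cnum{\Psi^{(j)}_\omega}$, and taking expectations with independence gives
\[
    \avg{\cnum{\Phi^{(n)}_\omega}}\ \le\ \Bigl(\avg{\cnum{\Phi^{(b)}_\omega}}\Bigr)^{q}\ \le\ \bigl(1-p(1-\lambda)\bigr)^{q},
\]
where the last inequality uses $\cnum{\Phi^{(b)}_\omega}\le\lambda$ on an event of probability $\ge p$ and $\cnum{\Phi^{(b)}_\omega}\le1$ otherwise. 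Since $q=\floor{n/b}\ge n/b-1$, the claim follows with $C_{\text{iid}}:=\bigl(1-p(1-\lambda)\bigr)^{-1}$ and $\delta:=-\tfrac1b\ln\bigl(1-p(1-\lambda)\bigr)>0$.

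The only genuinely delicate point is the first paragraph: upgrading the contraction coming from \Cref{assumption2} (which a priori only says $\cnum{\Phi^{(b)}_\omega}<1$ on a positive-probability event, with an $\omega$-dependent gap) to a \emph{deterministic} threshold $\lambda<1$ attained with \emph{fixed} positive probability. Once that uniformization is secured, the block decomposition and the independence of the blocks reduce the rest to a short computation. A minor check along the way is measurability of $\omega\mapsto\cnum{\Phi^{(b)}_\omega}$, which is immediate since $\cnum{\,\cdot\,}$ is a Borel function of the Liouville matrix and $\Phi^{(b)}_\omega$ depends measurably on $\omega$.
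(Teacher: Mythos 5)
Your proof is correct, and its core — chopping $\Phi^{(n)}$ into $\floor{n/b}$ disjoint length-$b$ blocks, invoking submultiplicativity of $\cnum{\,\cdot\,}$ plus $\cnum{\,\cdot\,}\le 1$ for the remainder, and factorizing the expectation of the product using independence of the blocks — is exactly the paper's argument. The only place you diverge is in how you certify that a single block satisfies $\avg{\cnum{\Phi^{(b)}}}<1$. The paper gets this in one line from \Cref{prop:expectation_to_0}: since $\cnum{\Phi^{(n)}}\to 0$ a.s.\ and is bounded by $1$, dominated convergence gives $\avg{\cnum{\Phi^{(N_0)}}}\le \tfrac12$ for some deterministic $N_0$, and the block length is taken to be $2N_0$. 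You instead uniformize by hand: pick $b$ with $\pr\{\tau\le b\}\ge\tfrac12$, note that $\cnum{\Phi^{(b)}}<1$ on that event by \Cref{prop:ESP} and \Cref{lemma:cnum_properties}, and use continuity of measure to extract a deterministic threshold $\lambda<1$ attained with probability $p\ge\tfrac14$, whence $\avg{\cnum{\Phi^{(b)}}}\le 1-p(1-\lambda)<1$. Both routes are valid; the paper's is shorter because the dominated-convergence step has already been packaged as a proposition, while yours makes the mechanism (entry time plus a quantitative contraction event) explicit — essentially re-deriving the $f(b)$/$\zeta_b(u)$ device the paper uses elsewhere (e.g.\ in the proofs of \Cref{thm:TIcase} and \Cref{thm:fofb}). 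The remaining bookkeeping ($q\ge n/b-1$, choice of $C_{\text{iid}}$ and $\delta$, and the trivial case $q=0$) checks out.
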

        \begin{proof}
            From Assumptions~\ref{assumption1} and \ref{assumption2} we must have that there is some $N_0$ such that $\avg{\cnum{\Phi^{(n)}}} \leq 1/2$ for all $n\geq N_0$ (see \Cref{prop:expectation_to_0}). 
            Now fix $\delta = \ln(1/2)/(-2N_0)$ so that $\delta>0$. 
            Now for given $n\geq 2N_0$ there must exist some $i\in\mbN$ such that 
            \[n \in [2iN_0, 2(i+1)N_0)\ .\]
            Using submultiplicativity of the contraction coefficient, we obtain that
            \[
                \cnum{\Phi^{(n)}} \leq  \cnum{\Phi^{(2iN_0)}} \leq \prod_{k=1}^i \cnum{\phi_{2kN_0-1}\circ\ldots \circ \phi_{2(k-1)N_0}}\, ,
            \]
            where each block $\cnum{\phi_{2kN_0-1}\circ\ldots \circ \phi_{2(k-1)N_0}}$ is independent and identically distributed with the same distribution as $\Phi^{2N_0}$. Now using stationarity and independence of the sequence $(\phi_n)_n$, we obtain that
            \[
                 \avg{\cnum{\Phi^{(n)}}} \leq \avg{\cnum{\Phi^{(2N_0)}}}^i  \leq \avg{\cnum{\Phi^{(N_0)}}}^{2i} \leq \dfrac{1}{2^{2i}}= e^{-4iN_0\delta} \leq e^{-n\delta}
            \]
            where the last inequality uses that $n< 2(i+1)N_0 \le 4iN_0$. Since the bound above holds for all $n\geq 2N_0$ we must have that there is some $C_{\text{iid}}>0$ such that
            \[
                 \avg{\cnum{\Phi^{(n)}}} \leq C_{\text{iid}} e^{-n\delta} \quad \text{for all } n \in \mbN\, .
            \]
        \end{proof}

    We are now ready to prove \Cref{thm:IID}:
    
    \IID*
    \begin{proof}
        We may therefore assume without loss of generality that \(m<n\).
        First note from \Cref{lemma:main_bound} we have that 
        $f(n,m) \leq 8D \norm{\mbO_n}_\infty\norm{\mbO_m}_\infty \cnum{\phi_{n-1} \circ\ldots \phi_{m+1}}$, almost surely.
        Also from \Cref{lemma:C_decay_exp} we have the existence of $\delta>0$ and a constant $C_{\text{iid}}>0$ such that $\avg{\cnum{\Phi^{(k)}}} \leq C_{\text{iid}}e^{-\delta k}$ for all $k\in\mbN$. 
        We also have that
        \[
            \pr\{\omega: \cnum{\Phi^{(n-m-1)}} \leq C_{\text{iid}} e^{\delta}e^{-(1/2)\delta (n-m)}\}
            = 
            \pr\{\omega: \cnum{\phi_{n-1}\circ\ldots\circ\phi_{m+1}} \leq C_{\text{iid}} e^{\delta} e^{-(1/2)\delta (n-m)}\,\}
        \]
        by strict stationarity.
        However, from Markov's inequality, we have that 
        \[
            \pr\{\omega: \cnum{\Phi^{(n-m-1)}} \leq C_{\text{iid}} e^\delta e^{-(1/2)\delta (n-m)}\} 
            \geq 
            1 
            - 
            \dfrac{\avg{\cnum{\Phi^{(n-m-1)}}}}{C_{\text{iid}}e^{\delta} e^{-(1/2)\delta (n-m)}} \geq 1 - e^{-(1/2)\delta(n-m)}
        \]
        Therefore, we have that for $C_{\text{pr}} := 8DC_{\text{iid}}e^{\delta} $ and $\beta = \delta/2$
        \[
            \pr\{\omega:f^\omega(n,m) \leq C_{\text{pr}}\norm{\mbO_m}_\infty\norm{\mbO_n}_\infty e^{-\beta (n-m)}\} \geq  1 - e^{-\beta(n-m)}\, ,
        \]
        concluding the proof.
    \end{proof}
    
%%%%%%%%%%%%%%%%%%%%%%%%%%%%%%%%%%%%%%%%%%%%%%%%%%%%%%%%%%%%%%%%%%%%%%%%%%%%
%%%%%%%%%%%%%%%%%%%%%%%%%%%%%%%%%%%%%%%%%%%%%%%%%%%%%%%%%%%%%%%%%%%%%%%%%%%

\subsection{Mixing Coefficients for Random Systems}
\label{subsec:mixing_coeffs}

%%%%%%%%%%%%%%%%%%%%%%%%%%%%%%%%%%%%%%%%%%%%%%%%%%%%%%%%%%%%%%%%%%%%%%%%%%%%
%%%%%%%%%%%%%%%%%%%%%%%%%%%%%%%%%%%%%%%%%%%%%%%%%%%%%%%%%%%%%%%%%%%%%%%%%%%
    In a system described by random objects, there are various \emph{measures} of stochastic correlation/decorrelation given by so-called mixing coefficients. 
    Five such classical mixing coefficients are described in this section. 
    Before we discuss the mixing coefficients, we start with the stochastic correlation between two random variables. 
    Let $(\Omega,\mcF,\Pr)$ be a probability space. 
    For real–valued random variables $X,Y$ on $(\Omega,\mcF,\Pr)$, the \emph{maximal stochastic correlation} is
    \begin{equation}
    \label{eq:rho_1}
          \rho(X,Y)
          \ :=\
          \sup\bigl\{\,|\corr{f(X)}{g(Y)}|\ :\ f,g\in L^2\ \bigr\}.
    \end{equation}
    It is classical that $X\!\perp\!\!\!\perp Y$ (i.e. $X$ and $Y$ are independent)  if and only if $\rho(X,Y)=0$. 
    The notion extends to random elements with values in measurable spaces. Writing
    \[
        \mcA:=\sigma(X),\qquad \mcB:=\sigma(Y),
    \]
    where $\sigma(\,\cdot\,)$ denotes the generated $\sigma$-algebra.
    One has the equivalent formulation
    \begin{equation}
    \label{eq:rho_2}
        \rho(X,Y)=\rho(\mcA,\mcB)
        \ :=\
        \sup\bigl\{\,|\corr{U}{V}|\ :\ U\in L^2(\mcA),\ V\in L^2(\mcB) \, \bigr\}.
    \end{equation}

    \medskip

    Beyond $\rho$ there are standard mixing coefficients for pairs of sub–$\sigma$–algebras $\mcA,\mcB\subset\mcF$:
    \begin{align*}
          \psi(\mcA,\mcB)
                &:= \sup\Bigl\{\Bigl|1-\frac{\Pr(A\cap B)}{\Pr(A)\Pr(B)}\Bigr|:\ A\in\mcA,\ B\in\mcB,\ \Pr(A),\Pr(B)>0\Bigr\},\\
          \varphi(\mcA,\mcB)
                &:= \sup\bigl\{\,|\Pr(B\mid A)-\Pr(B)|:\ A\in\mcA,\ \Pr(A)>0,\ B\in\mcB\,\bigr\},\\
          \alpha(\mcA,\mcB)
                &:= \sup\bigl\{\,|\Pr(A\cap B)-\Pr(A)\Pr(B)|:\ A\in\mcA,\ B\in\mcB\,\bigr\},\\
          \beta(\mcA,\mcB)
                &:= \sup\biggl\{\frac12\sum_{i=1}^{I}\sum_{j=1}^{J}\bigl|\Pr(A_i\cap B_j)-\Pr(A_i)\Pr(B_j)\bigr|\biggr\},
    \end{align*}
    where the last supremum is over all finite partitions $\{A_i\}_{i=1}^I\subset\mcA$ and $\{B_j\}_{j=1}^J\subset\mcB$.
    These satisfy the (one–sided) hierarchy (see \cite{bradley2005basic,bradley2007introduction})
    \begin{equation}
    \label{mixing_hierarchy}
        \rho(\mcA,\mcB)\le \psi(\mcA,\mcB),
        \qquad
        \rho(\mcA,\mcB)\le 2\varphi(\mcA,\mcB)^{1/2},
        \qquad
        \alpha(\mcA,\mcB)\le \tfrac12\,\beta(\mcA,\mcB).
    \end{equation}

    Now these mixing coefficients generalize further to a random system described by a sequence of random objects $(\mcA_n)_{n\in\mbZ}$ (for our case, the random local tensors) as follows:
    Define the forward and backward filtrations
    \begin{equation}
    \label{eq:sigma_algebras_forward_backward}
        \mcF_k:=\sigma(\mcA_n: n\le k),
        \qquad
        \mcF^{k}:=\sigma(\mcA_n: n\ge k).
    \end{equation}
    The \emph{maximal spatial $\rho$–profile} is
    \begin{equation}
    \label{eq:rho_spatial}
        \rho_n\ :=\ \sup_{k\in\mbZ}\ \rho(\mcF_k,\mcF^{k+n})\qquad(n\in\mbN).
    \end{equation}
    We say the sequence is \emph{$\rho$–mixing} if $\rho_n\to0$ as $n\to\infty$ (cf.~\cite{kolmogorov1960strong}). 
    Analogously define spatial profiles $\psi_n,\varphi_n,\alpha_n,\beta_n$ by replacing $\rho$ in \eqref{eq:rho_spatial} with the corresponding coefficient.

    \begin{remark}
    \label{rem:stationary_profile}
    If $(\mcA_n)_{n\in\mbZ}$ is strictly stationary (our standing assumption), then $\rho(\mcF_k,\mcF^{k+n})$ does not depend on $k$, hence
    \(
        \rho_n=\rho(\mcF_0,\mcF^{n})\ \ \text{and similarly for }\psi_n,\varphi_n,\alpha_n,\beta_n.
    \)
    Thus, the ``$\sup_k$'' in \eqref{eq:rho_spatial} can be dropped without changing the value.
    \end{remark}

    With these definitions in place, we are now ready to prove the remaining results. 
    We start with the following practical lemma.

    \begin{lemma}[{\cite[\S1.2 Thm.~3]{doukhan2012mixing}}, {\cite[Thm.~3.8]{bradley2007introduction}}]
    \label{lemma:covariance_bounds}
        If $\mcA,\mcB\subset\mcF$ are sub–$\sigma$–algebras, $U\in L^2(\mcA)$ and $V\in L^2(\mcB)$, then
        \[
            |\cov{U}{V}|\ \le\ \rho(\mcA,\mcB)\,\|U\|_{L^2}\,\|V\|_{L^2}\,.
        \]
    \end{lemma}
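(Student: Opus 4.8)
The plan is to reduce the statement to the very definition of $\rho(\mcA,\mcB)$ given in \eqref{eq:rho_2}, which is phrased in terms of correlation coefficients of non-degenerate $L^2$ random variables. The only genuine work is the bookkeeping needed to pass from the uncentered $L^2$ norm appearing in the lemma to the centered second moment (the variance) that implicitly enters the definition of the maximal correlation.

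First I would dispose of the degenerate cases: if either $U$ or $V$ is almost surely constant, then $\cov{U}{V}=0$ while the right-hand side is nonnegative, so the inequality holds trivially. Hence assume henceforth that neither $U$ nor $V$ is a.s.\ constant. Next I would center the variables, setting $\tilde U := U-\mbE[U]$ and $\tilde V := V-\mbE[V]$. Since constants are measurable with respect to every $\sigma$-algebra, $\tilde U\in L^2(\mcA)$ and $\tilde V\in L^2(\mcB)$, and by the non-degeneracy just assumed neither is a.s.\ zero, so $\corr{\tilde U}{\tilde V}$ is well-defined. Centering leaves the covariance unchanged, $\cov{U}{V}=\cov{\tilde U}{\tilde V}=\mbE[\tilde U\tilde V]$, and can only decrease the $L^2$ norm: the Pythagorean identity gives $\|\tilde U\|_{L^2}^2=\|U\|_{L^2}^2-(\mbE[U])^2\le\|U\|_{L^2}^2$, and likewise for $V$.

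Finally I would invoke the definition \eqref{eq:rho_2}. Because $\tilde U$ and $\tilde V$ are centered, $\|\tilde U\|_{L^2}^2=\vari(\tilde U)$ and $\|\tilde V\|_{L^2}^2=\vari(\tilde V)$, so that
\[
    \corr{\tilde U}{\tilde V}=\frac{\cov{\tilde U}{\tilde V}}{\|\tilde U\|_{L^2}\,\|\tilde V\|_{L^2}},
\]
and \eqref{eq:rho_2} yields $|\corr{\tilde U}{\tilde V}|\le\rho(\mcA,\mcB)$. Rearranging and then applying the two norm bounds from the centering step gives
\[
    |\cov{U}{V}|=|\mbE[\tilde U\tilde V]|\le\rho(\mcA,\mcB)\,\|\tilde U\|_{L^2}\,\|\tilde V\|_{L^2}\le\rho(\mcA,\mcB)\,\|U\|_{L^2}\,\|V\|_{L^2},
\]
which is the claimed inequality. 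There is essentially no obstacle in this argument; the single point deserving a moment's care is exactly the centering reduction, which bridges the mismatch between the $L^2$ norm in the lemma and the variance implicit in the definition of the maximal correlation coefficient.
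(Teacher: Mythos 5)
Your proof is correct. Note that the paper does not actually prove \Cref{lemma:covariance_bounds} — it is quoted verbatim from the cited references (Doukhan; Bradley) — so there is no in-paper argument to compare against; your derivation (dispose of the degenerate constant case, center $U$ and $V$, use that centering preserves the covariance and can only shrink the $L^2$ norm, then invoke the definition \eqref{eq:rho_2} of the maximal correlation) is the standard one and is precisely the argument those sources give.
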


    The following lemma shall be helpful in the subsequent analysis. 

    \begin{lemma}
    \label{lemma:covariance_recursive_bound}
        Let $(\mcA_n)_{n\in\mbZ}$ be a strictly stationary sequence of local tensors satisfying \Cref{assumption1}. 
        Then we have that 
        \begin{equation}
        \label{eq:covariance_recursive_bound}
              \avg{\cnum{\Phi^{(i+q+r)}}}
              \ \le\
              \avg{\cnum{\Phi^{(i)}}}\,\avg{\cnum{\Phi^{(r)}}}
              \ +\
              \rho_q\,\sqrt{\avg{\cnum{\Phi^{(i)}}}\,\avg{\cnum{\Phi^{(r)}}}}
              \qquad\text{for all }i,q,r\in\mbN.
        \end{equation}
    \end{lemma}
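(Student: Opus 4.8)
The plan is to split the $(i+q+r)$-fold forward composition into three consecutive blocks, of lengths $i$, $q$ and $r$; to discard the middle block of length $q$ using only the bound $\cnum{\,\cdot\,}\le 1$; and then to estimate the expectation of the resulting product of two contraction coefficients, which are measurable with respect to $\sigma$-algebras separated by a gap of $q$ sites.

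Concretely, write $\Phi^{(i+q+r)}_\omega = A_\omega\circ B_\omega\circ C_\omega$, where $C_\omega := \phi^\omega_{r-1}\circ\cdots\circ\phi^\omega_0$ has length $r$, $B_\omega := \phi^\omega_{r+q-1}\circ\cdots\circ\phi^\omega_r$ has length $q$, and $A_\omega := \phi^\omega_{r+q+i-1}\circ\cdots\circ\phi^\omega_{r+q}$ has length $i$. Submultiplicativity of the contraction coefficient together with $\cnum{B_\omega}\le 1$ (properties (iii) and (i) of \Cref{lemma:c_num}) gives the pointwise estimate $\cnum{\Phi^{(i+q+r)}_\omega}\le \cnum{A_\omega}\,\cnum{C_\omega}$. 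Taking expectations and expanding the mean of the product,
\[
\avg{\cnum{\Phi^{(i+q+r)}}}\ \le\ \avg{\cnum{A}\,\cnum{C}}\ =\ \avg{\cnum{A}}\,\avg{\cnum{C}}\ +\ \cov{\cnum{A}}{\cnum{C}}\ \le\ \avg{\cnum{A}}\,\avg{\cnum{C}}\ +\ \bigl|\cov{\cnum{A}}{\cnum{C}}\bigr|.
\]

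It then remains to identify the two means and to bound the covariance. Since $C_\omega = \Phi^{(r)}_\omega$ exactly, $\avg{\cnum{C}} = \avg{\cnum{\Phi^{(r)}}}$; and since $A_\omega$ is the same measurable functional of the block $(\mcA_{r+q},\dots,\mcA_{r+q+i-1})$ that $\Phi^{(i)}$ is of $(\mcA_0,\dots,\mcA_{i-1})$, strict stationarity gives $\cnum{A}\overset{d}{=}\cnum{\Phi^{(i)}}$, hence $\avg{\cnum{A}}=\avg{\cnum{\Phi^{(i)}}}$ (and likewise for second moments). For the covariance, $\cnum{C_\omega}$ is $\mcF_{r-1}$-measurable, $\cnum{A_\omega}$ is $\mcF^{r+q}$-measurable, and both take values in $[0,1]$, so they lie in the respective $L^2$ spaces; \Cref{lemma:covariance_bounds} yields $\bigl|\cov{\cnum{A}}{\cnum{C}}\bigr|\le \rho(\mcF_{r-1},\mcF^{r+q})\,\norm{\cnum{A}}_{L^2}\,\norm{\cnum{C}}_{L^2}$. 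Since $\mcF^{r+q}=\mcF^{(r-1)+(q+1)}$ and the maximal spatial $\rho$-profile is non-increasing (enlarging the separation shrinks the right $\sigma$-algebra and can only decrease $\rho$), $\rho(\mcF_{r-1},\mcF^{r+q})\le\rho_{q+1}\le\rho_q$. Finally, because $0\le\cnum{A_\omega}\le 1$ we have $\cnum{A_\omega}^2\le\cnum{A_\omega}$, so $\norm{\cnum{A}}_{L^2}\le\sqrt{\avg{\cnum{\Phi^{(i)}}}}$, and similarly $\norm{\cnum{C}}_{L^2}\le\sqrt{\avg{\cnum{\Phi^{(r)}}}}$. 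Substituting these three facts into the displayed inequality gives exactly \eqref{eq:covariance_recursive_bound}.

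The only step that requires genuine care is the measurability bookkeeping: one must check that the surviving left block $A_\omega$ depends only on tensors with index $\ge r+q$ and the surviving right block $C_\omega$ only on tensors with index $\le r-1$, so that the separation between the relevant $\sigma$-algebras is at least $q$ and monotonicity of the $\rho$-profile applies. Everything else — submultiplicativity and the range $[0,1]$ of $\cnum{\,\cdot\,}$ from \Cref{lemma:c_num}, strict stationarity, the elementary bound $\cnum{\,\cdot\,}^2\le\cnum{\,\cdot\,}$, and the covariance estimate of \Cref{lemma:covariance_bounds} — is used off the shelf.
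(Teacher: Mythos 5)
Your proof is correct and is essentially the paper's own argument: split off a middle block of length $q$, drop it by $\cnum{\,\cdot\,}\le 1$ and submultiplicativity, apply \Cref{lemma:covariance_bounds} to the two remaining blocks together with $\cnum{\,\cdot\,}^2\le\cnum{\,\cdot\,}$ and stationarity. The only (immaterial) difference is that you place the length-$r$ block at sites $0,\dots,r-1$ and the length-$i$ block at the far end, while the paper does the reverse; the bound is symmetric in $i$ and $r$, so this changes nothing.
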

        \begin{proof}
            Fix $i,q,r\in\mbN$. By submultiplicativity and the bound $\cnum{\,\cdot\,}\le 1$,
            \[
                \cnum{\Phi^{(i+q+r)}}
                \ \le\
                \cnum{\Phi^{(i)}}\,\cnum{\phi_{i+q+r-1}\circ\cdots\circ\phi_{i+q}},
            \]
            hence
            \[
                \avg{\cnum{\Phi^{(i+q+r)}}}
                \ \le\
                \avg{\,\cnum{\Phi^{(i)}}\,\cnum{\phi_{i+q+r-1}\circ\cdots\circ\phi_{i+q}}\,}.
            \]
            Here $\cnum{\Phi^{(i)}}$ is $\mcF_{i-1}$–measurable, and hence $\mcF_i$–measurable, while $\cnum{\phi_{i+q+r-1}\circ\cdots\circ\phi_{i+q}}$ is $\mcF^{\,i+q}$–measurable. 
            Applying \Cref{lemma:covariance_bounds} with $\mcA=\mcF_i$ and $\mcB=\mcF^{\,i+q}$, and with $U=\cnum{\Phi^{(i)}}$ and $V=\cnum{\phi_{i+q+r-1}\circ\cdots\circ\phi_{i+q}}$, and using $(\cnum{\,\cdot\,})^2\le \cnum{\,\cdot\,}$, we obtain
            \begin{multline}
            \label{eq:covariance_recursive_bound_precise}
                \avg{\cnum{\Phi^{(i+q+r)}}}\\
                \ \le\
                \avg{\cnum{\Phi^{(i)}}}\,\avg{\cnum{\phi_{i+q+r-1}\circ\cdots\circ\phi_{i+q}}}
                \ +\
                \rho_q\,\sqrt{\avg{\cnum{\Phi^{(i)}}}\,\avg{\cnum{\phi_{i+q+r-1}\circ\cdots\circ\phi_{i+q}}}}.
            \end{multline}
            By \emph{strict stationarity} of $(\mcA_n)_{n\in\mbZ}$, the distribution of any length–$r$ block is translation invariant, so
            \[
                \avg{\cnum{\phi_{i+q+r-1}\circ\cdots\circ\phi_{i+q}}}\ =\ \avg{\cnum{\Phi^{(r)}}}.
            \]
            Thus \eqref{eq:covariance_recursive_bound_precise} reduces to
            \begin{equation}
                  \avg{\cnum{\Phi^{(i+q+r)}}}
                  \ \le\
                  \avg{\cnum{\Phi^{(i)}}}\,\avg{\cnum{\Phi^{(r)}}}
                  \ +\
                  \rho_q\,\sqrt{\avg{\cnum{\Phi^{(i)}}}\,\avg{\cnum{\Phi^{(r)}}}},
            \end{equation}
            proving \eqref{eq:covariance_recursive_bound}. 
        \end{proof}

%%%%%%%%%%%%%%%%%%%%%%%%%%%%%%%%%%%%%%%%%%%%%%%%%%%%%%%%%%%%%%%%%%%%%%%%%%%%
%%%%%%%%%%%%%%%%%%%%%%%%%%%%%%%%%%%%%%%%%%%%%%%%%%%%%%%%%%%%%%%%%%%%%%%%%%%

\subsection{Random MPS with Vanishing Spatial Stochastic Correlations}
\label{section:rho_to_0}

%%%%%%%%%%%%%%%%%%%%%%%%%%%%%%%%%%%%%%%%%%%%%%%%%%%%%%%%%%%%%%%%%%%%%%%%%%%%
%%%%%%%%%%%%%%%%%%%%%%%%%%%%%%%%%%%%%%%%%%%%%%%%%%%%%%%%%%%%%%%%%%%%%%%%%%%

    In scenarios where the maximal spatial stochastic correlation $\rho_n$ (as defined in \eqref{eq:rho_spatial}) decays to zero as $n\to\infty$, we obtain quantitative decay estimates on the stochastic expectation of the contraction coefficient $\cnum{\Phi^{(n)}}$. 
    Specifically, the decay can be shown to be faster than any prescribed polynomial rate:

    \begin{lemma}
    \label{lemma:C_decay_in_poly}
          Let $(\mcA_n)_{n\in\mbZ}$ be a \emph{strictly stationary} bi--infinite sequence of random local tensors satisfying Assumptions~\ref{assumption1}-\ref{assumption2}, and suppose the maximal spatial stochastic correlation $\rho_n\to 0$ as $n\to\infty$. 
          Then for each $p\in\mbN$ there exists a constant $C_p>0$ such that 
          \begin{equation}
          \label{eq:poly_decay_expectation}
                \avg{\cnum{\Phi^{(n)}}}\ \le\ \frac{C_p}{n^p}
            \qquad\text{for all }n\in\mbN\, .
          \end{equation}
    \end{lemma}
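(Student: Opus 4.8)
The plan is to argue entirely at the level of the scalar sequence $a_n:=\avg{\cnum{\Phi^{(n)}}}\in[0,1]$, combining the two ingredients already in hand: $a_n\to0$ (\Cref{prop:expectation_to_0}) and the recursion of \Cref{lemma:covariance_recursive_bound}. First I would record that $a_n$ is non-increasing: since $\Phi^{(n+1)}=\phi_n\circ\Phi^{(n)}$ and $\cnum{\phi_n}\le 1$ (property (i) of the contraction coefficient, as $\phi_n$ is completely positive), submultiplicativity of $\cnum{\,\cdot\,}$ gives $\cnum{\Phi^{(n+1)}}\le\cnum{\Phi^{(n)}}$ pointwise, hence $a_{n+1}\le a_n$.

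Then, fixing $p\in\mbN$, I would choose a \emph{fixed} gap length $q=q(p)\in\mbN$ with $\rho_q\le 2^{-(p+1)}$ (possible since $\rho_n\to0$; note $2\rho_q<1$), and then $m_0=m_0(p)\in\mbN$ with $a_{m_0}\le\rho_q$ (possible since $a_n\to0$), so that by monotonicity $a_m\le\rho_q$, and therefore $a_m^2\le\rho_q\,a_m$, for every $m\ge m_0$. \Cref{lemma:covariance_recursive_bound} applied with $i=r=m$ and gap $q$ then gives, for all $m\ge m_0$,
\[
    a_{2m+q}\ \le\ a_m^2+\rho_q\,a_m\ \le\ 2\rho_q\,a_m .
\]
Iterating along the scales $m_{j+1}:=2m_j+q$, which satisfy $m_j+q=2^{\,j}(m_0+q)$ and hence $m_j\le 2^{\,j}(m_0+q)$, this yields $a_{m_j}\le(2\rho_q)^{\,j}a_{m_0}\le(2\rho_q)^{\,j}$. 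For $n\ge m_0$ I would set $j(n):=\max\{j\ge 0:\ m_j\le n\}$, note $j(n)\ge\log_2(n/(m_0+q))-1$, and use monotonicity to obtain $a_n\le a_{m_{j(n)}}\le(2\rho_q)^{\,j(n)}$. With $\beta:=-\log_2(2\rho_q)\ge p$ this rearranges into $a_n\le(2\rho_q)^{-1}(m_0+q)^{\beta}\,n^{-\beta}\le C_p\,n^{-p}$ for a suitable $C_p=C_p(m_0,q)$, the finitely many remaining $n$ being absorbed into $C_p$ because $a_n\le 1$.

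The step I expect to be the main obstacle — or rather the point one must not get wrong — is the commitment to a \emph{fixed} gap $q=q(p)$. The natural instinct is to let $q$ grow with $n$ so as to drive $\rho_q$ below any prescribed polynomially small threshold; but $\rho_n\to0$ comes with no quantitative rate, so a growing gap would make the scales $m_j$ grow at an uncontrolled, mixing-rate-dependent speed, and would only deliver decay of $a_n$ along a sparse, model-dependent subsequence. Holding $q$ fixed is exactly what forces the geometric growth $m_j\le 2^{\,j}(m_0+q)$, and it is precisely this that upgrades the geometric-in-$j$ decay $a_{m_j}\le(2\rho_q)^{\,j}$ to genuine polynomial-in-$n$ decay $a_n\lesssim n^{-\beta}$; the only cost — that the exponent obtained is $\beta\approx\log_2(1/\rho_q)$ rather than something uniform — is harmless since $q(p)$ may be chosen arbitrarily large. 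Everything else (monotonicity of $a_n$, the single invocation of \Cref{lemma:covariance_recursive_bound}, and the bookkeeping of constants) is routine.
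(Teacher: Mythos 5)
Your argument is correct and follows essentially the same route as the paper's: both fix a gap length at which the mixing coefficient and the expectation are small, run the recursion of \Cref{lemma:covariance_recursive_bound} along the doubling scales $m_{j+1}=2m_j+q$ to get geometric decay in $j$, and then interpolate using monotonicity of $a_n=\avg{\cnum{\Phi^{(n)}}}$. The only point to patch is your choice of $m_0$ with $a_{m_0}\le\rho_q$, which is impossible when $\rho_q=0$ (e.g.\ in the i.i.d.\ case, which the hypotheses allow); replacing the threshold by $2^{-(p+1)}$ for both $\rho_q$ and $a_{m_0}$ — as the paper effectively does via its combined condition $\rho_{N_0}+\avg{\cnum{\Phi^{(N_0)}}}\le 4^{-p}$ — repairs this without changing anything else.
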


        \begin{proof}
            Since $\rho_n\to0$ and, by \Cref{prop:expectation_to_0}, $\avg{\cnum{\Phi^{(n)}}}\to0$, for any given $p\in\mbN$ there exists $N_0\in\mbN$ such that
            \begin{equation}
            \label{eq:choice_N0}
                \rho_{N_0}+\avg{\cnum{\Phi^{(N_0)}}}\ \le\ 4^{-p}.
            \end{equation}
            Apply \eqref{eq:covariance_recursive_bound} with $i=q=r=N_0$ to get
            \begin{equation}
            \label{eq:first_bootstrap}
                \avg{\cnum{\Phi^{(3N_0)}}}
                \ \le\
                \avg{\cnum{\Phi^{(N_0)}}}^2+\rho_{N_0}\,\avg{\cnum{\Phi^{(N_0)}}}
                \ \le\
                4^{-p}\,\avg{\cnum{\Phi^{(N_0)}}}.
            \end{equation}
            Next, setting $i=r=3N_0$ and $q=N_0$,
            \[
                  \avg{\cnum{\Phi^{(7N_0)}}}
                  \ \le\
                  \avg{\cnum{\Phi^{(3N_0)}}}^2+\rho_{N_0}\,\avg{\cnum{\Phi^{(3N_0)}}}.
            \]
            Using \eqref{eq:first_bootstrap} and \eqref{eq:choice_N0}, together with $\avg{\cnum{\Phi^{(N_0)}}}\ge \avg{\cnum{\Phi^{(3N_0)}}}$, yields
            \begin{equation}
            \label{eq:second_bootstrap}
                  \avg{\cnum{\Phi^{(7N_0)}}}
                  \ \le\
                  4^{-2p}\,\avg{\cnum{\Phi^{(N_0)}}}.
            \end{equation}

            Define \(M_k:=(2^k-1)N_0\) for \(k\geq1\). 
            Since \(M_{k+1}=2M_k+N_0\), \eqref{eq:covariance_recursive_bound}  with \(i=r=M_k\) and \(q=N_0\) gives
            \begin{align*}
                \avg{\cnum{\Phi^{(M_{k+1})}}}
                &\leq
                \left(
                    \avg{\cnum{\Phi^{(M_k)}}}
                    +
                    \rho_{N_0}
                \right)
                \avg{\cnum{\Phi^{(M_k)}}}
                \\
                &\leq
                \left(
                    \avg{\cnum{\Phi^{(N_0)}}}
                    +
                    \rho_{N_0}
                \right)
                \avg{\cnum{\Phi^{(M_k)}}}
                \\
                &\leq
                4^{-p}\,
                \avg{\cnum{\Phi^{(M_k)}}}.
            \end{align*}
            Here the second inequality follows from \(M_k\geq N_0\), submultiplicativity, and \(\cnum{\,\cdot\,}\leq1\). 
            Iterating this  estimate gives
            \begin{equation}
            \label{eq:dyadic_sequence}
                \avg{\cnum{\Phi^{(M_k)}}}
                \leq
                4^{-(k-1)p}\,
                \avg{\cnum{\Phi^{(N_0)}}}
                \qquad(k\geq1).
            \end{equation}
        
            Now let $n\ge N_0$ and choose $k$ with $n\in[M_k,M_{k+1})$. By submultiplicativity,
            $\avg{\cnum{\Phi^{(n)}}}\le \avg{\cnum{\Phi^{(M_k)}}}$. Since $M_{k+1}=(2^{k+1}-1)N_0\le 2^{k+1}N_0\le 4^{k-1}N_0$ for $k\ge 3$, we have $4^{k-1}\ge n/N_0$ (for $n$ large enough). Combining this with \eqref{eq:dyadic_sequence} gives
            \[
                  \avg{\cnum{\Phi^{(n)}}}
                  \ \le\
                  \avg{\cnum{\Phi^{(N_0)}}}\,\frac{N_0^{p}}{n^{p}}
                  \qquad\text{for all large }n.
            \]
            Absorbing the finite initial segment into the constant completes the proof of \eqref{eq:poly_decay_expectation}.
            \end{proof}
    
    Using \Cref{lemma:C_decay_in_poly}, we obtain the polynomial high–probability clustering bound in \Cref{thm:decay_cor}.

    \DECAY*
        \begin{proof}
            The proof follows the argument of \Cref{thm:IID}, replacing the exponential expectation bound with the polynomial one from \Cref{lemma:C_decay_in_poly}. Fix $k\in\mbN$ and apply Markov's inequality to $\cnum{\Phi^{(|n-m|-1)}}$ with the threshold $t=2^{2k}C_{2k}/|n-m|^{k}$; using \Cref{lemma:C_decay_in_poly} at exponent $2k$ and the observation $|n-m|-1\ge |n-m|/2$ for $|n-m|\ge 2$, we get
            \[
                  \pr\!\left\{\cnum{\Phi^{(|n-m|-1)}}\le \frac{2^{2k}C_{2k}}{|n-m|^{k}}\right\}
                  \ \ge\
                  1-\frac{C_{2k}/(|n-m|-1)^{2k}}{(2^{2k}C_{2k})/|n-m|^k}
                  \ \ge\
                  1-\frac{1}{|n-m|^{k}}.
            \]
            By \Cref{lemma:main_bound},
                \[
                  f^\omega(n,m)\ \le\ 8\,D\,\norm{\mbO_n}_\infty\,\norm{\mbO_m}_\infty\,
                  \cnum{\phi^\omega_{n-1}\circ\cdots\circ\phi^\omega_{m+1}}
            \]
            Therefore for $C_{\mathrm{poly}}(k):=8\,D\,2^{2k}C_{2k}$ we have 
            \[
                \pr\left\{f^\omega(n,m)\ \le\ \frac{C_{\mathrm{poly}}(k)\,\norm{\mbO_n}_\infty\,\norm{\mbO_m}_\infty}{|n-m|^{k}}\right\}
                    \, \ge \,
                1 - \dfrac{1}{|n-m|^k},
            \]
            where we have used stationarity to obtain that 
            \[
                \pr\!\left\{\cnum{\Phi^{(|n-m|-1)}}\le \frac{2^{2k}C_{2k}}{|n-m|^{k}}\right\}
                =
                \pr\!\left\{\cnum{\phi_{n-1}\circ\ldots\circ\phi_{m+1}}\le \frac{2^{2k}C_{2k}}{|n-m|^{k}}\right\}.
            \]
        \end{proof}

%%%%%%%%%%%%%%%%%%%%%%%%%%%%%%%%%%%%%%%%%%%%%%%%%%%%%%%%%%%%%%%%%%%%%%%%%%%%
%%%%%%%%%%%%%%%%%%%%%%%%%%%%%%%%%%%%%%%%%%%%%%%%%%%%%%%%%%%%%%%%%%%%%%%%%%%

\subsection{Sufficiently Fast \texorpdfstring{$\rho$}{rho}-mixing}
\label{section:rho_fast}

%%%%%%%%%%%%%%%%%%%%%%%%%%%%%%%%%%%%%%%%%%%%%%%%%%%%%%%%%%%%%%%%%%%%%%%%%%%%
%%%%%%%%%%%%%%%%%%%%%%%%%%%%%%%%%%%%%%%%%%%%%%%%%%%%%%%%%%%%%%%%%%%%%%%%%%%

    In the previous section, we worked under the minimal decorrelation hypothesis $\rho_n\to0$, which yielded (super) polynomial decay estimates for $\avg{\cnum{\Phi^{(n)}}}$. 
    We now strengthen the assumption to a stretched–exponential or exponential spatial $\rho$–-mixing profile and show how this accelerates the contraction bootstrapping accordingly. 
    Throughout this subsection we assume that there exist constants $C_\rho,c_\rho>0$ and $\gamma\in(0,1]$ such that
    \begin{equation}
    \label{ineq:rho_mixing_fast}
        \rho_n\ \le\ C_\rho\,\exp\{-c_\rho\,n^\gamma\}
        \qquad(n\in\mbN).
    \end{equation}
    
    We present the following elementary lemma.

    \begin{lemma}
    \label{lemma:fast_mixing_1}
        Assume \eqref{ineq:rho_mixing_fast}. Then for every $\alpha\in(0,\gamma)$ and every $\theta\in(0,1)$ there exists $L_0\in\mbN$ such that for all $L\ge L_0$,
        \begin{equation}
        \label{ine:rho_fast_1}
              C_\rho\,\exp{-\tfrac{c_\rho}{2}\,(\theta L)^{\gamma}}
              \ \le\
              \tfrac12\,\exp{-2L^\alpha}\,.
        \end{equation}
    \end{lemma}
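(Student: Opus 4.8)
This is a purely elementary comparison of growth rates, so the plan is to take logarithms and reduce \eqref{ine:rho_fast_1} to a one-line asymptotic dominance. First I would note that both sides of \eqref{ine:rho_fast_1} are strictly positive, so the inequality is equivalent to its logarithmic form; after moving all terms to one side this reads
\[
    \frac{c_\rho}{2}\,\theta^{\gamma}\,L^{\gamma}\;-\;2L^{\alpha}\;\ge\;\ln(2C_\rho).
\]
Here $c_\rho,\theta,\gamma,\alpha,C_\rho$ are all fixed, with $\theta^{\gamma}>0$ since $\theta>0$, and $\ln(2C_\rho)$ is a fixed real number.

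Next I would factor the left-hand side as $L^{\alpha}\bigl(\tfrac{c_\rho}{2}\theta^{\gamma}L^{\gamma-\alpha}-2\bigr)$. Because $\alpha\in(0,\gamma)$ we have $\gamma-\alpha>0$, so $L^{\gamma-\alpha}\to\infty$ and hence the bracket tends to $+\infty$; multiplying by $L^{\alpha}\to\infty$ shows the whole left-hand side tends to $+\infty$. Since the right-hand side is a fixed constant, there is a real threshold past which the displayed inequality holds, and taking $L_0\in\mbN$ to be its ceiling (it depends only on $C_\rho,c_\rho,\gamma,\alpha,\theta$) yields \eqref{ine:rho_fast_1} for every $L\ge L_0$ after exponentiating back.

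There is no genuine obstacle here; the only subtlety worth recording is that the strict inequality $\alpha<\gamma$ is essential. Were $\alpha=\gamma$ allowed, the left-hand side would collapse to $\bigl(\tfrac{c_\rho}{2}\theta^{\gamma}-2\bigr)L^{\gamma}$, whose coefficient is negative whenever $\theta$ is small — which is exactly the regime relevant to the application, where $\theta$ will play the role of a contraction factor strictly below $1$. Keeping $\alpha$ strictly below $\gamma$ leaves the surplus power $L^{\gamma-\alpha}$, which outgrows every constant and thereby preserves the stretched-exponential gain coming from \eqref{ineq:rho_mixing_fast} despite the loss incurred by passing from $L$ to $\theta L$.
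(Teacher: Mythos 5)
Your proposal is correct and follows essentially the same route as the paper: take logarithms to reduce \eqref{ine:rho_fast_1} to $\tfrac{c_\rho}{2}\theta^{\gamma}L^{\gamma}-2L^{\alpha}\ge\ln(2C_\rho)$ and use $\gamma-\alpha>0$ to conclude. The only difference is cosmetic: the paper splits the inequality into two explicit sufficient conditions and writes down a concrete $L_0$, whereas you argue by asymptotic dominance, which establishes existence of $L_0$ just as well.
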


        \begin{proof}
            Inequality \eqref{ine:rho_fast_1} is equivalent to
            \[
                \frac{c_\rho}{2}\,\theta^\gamma\,L^{\gamma} - 2L^\alpha\ \ge\ \ln(2C_\rho).
            \]
            A sufficient condition is to enforce simultaneously
            \begin{equation}
            \label{eq:two_conditions}
                \frac{c_\rho}{2}\,\theta^\gamma\,L^{\gamma}\ \ge\ 4L^\alpha,
              \qquad
              2L^\alpha\ \ge\ \bigl(\ln(2C_\rho)\bigr)^+,
            \end{equation}
            where $x^+:=\max\{x,0\}$. Since $\alpha\in(0,\gamma)$, the first inequality in \eqref{eq:two_conditions} is implied by
            \[
                  L\ \ge\ \left(\frac{4}{\tfrac{c_\rho}{2}\,\theta^\gamma}\right)^{\!1/(\gamma-\alpha)}
                  \ =\ \left(\frac{8}{c_\rho\,\theta^\gamma}\right)^{\!1/(\gamma-\alpha)}.
            \]
            The second is implied by
            \[
                L\ \ge\ \left(\frac{(\ln(2C_\rho))^+}{2}\right)^{\!1/\alpha}.
            \]
            Therefore the choice
            \[
                  L_0\ :=\ \Biggl\lceil\,
                  \max\!\left\{
                        \left(\frac{8}{c_\rho\,\theta^\gamma}\right)^{\!1/(\gamma-\alpha)},
                        \ \left(\frac{(\ln(2C_\rho))^+}{2}\right)^{\!1/\alpha},
                        \ 1
                  \right\}
                  \,\Biggr\rceil
            \]
            ensures \eqref{eq:two_conditions} for all $L\ge L_0$, which in turn yields \eqref{ine:rho_fast_1}.
        \end{proof}

    Next, we present a simple yet useful comparison of stretched–exponential rates.

    \begin{lemma}
    \label{lemma:fast_mixing_2}
        For every $\alpha\in(0,1)$ there exists $\theta\in(0,1)$ and $L_0\in\mbN$ such that
        \begin{equation}
        \label{ine:rho_fast_2}
              \frac{3}{2}\,\exp\{-2L^\alpha\}\ \le\ \exp\{-((2+\theta)L)^\alpha\}
              \qquad\text{for all }L\ge L_0.
        \end{equation}
    \end{lemma}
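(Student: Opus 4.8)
The plan is to take $-\ln$ of both sides and reduce \eqref{ine:rho_fast_2} to an elementary inequality that is affine in $L^\alpha$. Indeed, \eqref{ine:rho_fast_2} is equivalent to
\[
    \ln\tfrac{3}{2} - 2L^\alpha \;\le\; -(2+\theta)^\alpha\,L^\alpha ,
\]
that is, to $\ln\tfrac{3}{2} \le \bigl(2-(2+\theta)^\alpha\bigr)\,L^\alpha$. So it suffices to (i) choose $\theta\in(0,1)$ making the coefficient $c_\alpha := 2-(2+\theta)^\alpha$ strictly positive, and then (ii) choose $L_0$ so large that $c_\alpha L^\alpha \ge \ln\tfrac{3}{2}$ for all $L\ge L_0$.

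For step (i) I would exploit the hypothesis $\alpha<1$. Since $t\mapsto t^\alpha$ is strictly increasing on $(0,\infty)$, the requirement $(2+\theta)^\alpha<2$ is equivalent to $2+\theta<2^{1/\alpha}$; and because $1/\alpha>1$ we have $2^{1/\alpha}>2$, hence $2^{1/\alpha}-2>0$. One may therefore set, e.g.,
\[
    \theta \;:=\; \tfrac12\,\min\bigl\{1,\ 2^{1/\alpha}-2\bigr\}\ \in(0,1),
\]
which gives $c_\alpha = 2-(2+\theta)^\alpha > 0$.

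For step (ii), since $L\mapsto L^\alpha$ is increasing and unbounded, the inequality $c_\alpha L^\alpha \ge \ln\tfrac{3}{2}$ holds as soon as $L \ge \bigl(\ln(3/2)/c_\alpha\bigr)^{1/\alpha}$; so take $L_0 := \max\bigl\{1,\ \lceil (\ln(3/2)/c_\alpha)^{1/\alpha}\rceil\bigr\}$, and unwinding the logarithm yields \eqref{ine:rho_fast_2} for all $L\ge L_0$. There is no genuine obstacle here: the only point worth isolating is that the strict concavity of $t\mapsto t^\alpha$ for $\alpha\in(0,1)$ (equivalently $2^{1/\alpha}>2$) is precisely what produces the positive gap $c_\alpha$ between the exponents $2L^\alpha$ and $((2+\theta)L)^\alpha$; at $\alpha=1$ this gap collapses, which is exactly why $\alpha<1$ is indispensable. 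Absorbing the constant $\ln\tfrac32$ into $L_0$ completes the argument.
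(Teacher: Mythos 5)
Your proof is correct and follows essentially the same route as the paper's: both reduce \eqref{ine:rho_fast_2} by taking logarithms to the condition $\bigl(2-(2+\theta)^\alpha\bigr)L^\alpha\ge\ln\tfrac32$, pick $\theta$ with $(2+\theta)^\alpha<2$ (your choice of $\theta$ coincides with the paper's), and then take $L_0\ge\bigl(\ln(3/2)/(2-(2+\theta)^\alpha)\bigr)^{1/\alpha}$. No issues.
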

    
        \begin{proof}
            Fix $\alpha\in(0,1)$ and choose $\theta\in(0,1)$ so small that $(2+\theta)^\alpha<2$, e.g.
            \[
                \theta\ :=\ \min\!\left\{\frac{2^{1/\alpha}-2}{2},\,\frac12\right\}\in(0,1).
            \]
            Since the map $x\mapsto e^{-x}$ is strictly decreasing on $(0,\infty)$, \eqref{ine:rho_fast_2} is equivalent to
            \[
              2L^\alpha - ((2+\theta)L)^\alpha\ \ge\ \ln(3/2).
            \]
            By the choice of $\theta$ the coefficient $2-(2+\theta)^\alpha$ is strictly positive. Therefore, it suffices to take
            \[
              L_0\ \ge\ \left(\frac{\ln(3/2)}{\,2-(2+\theta)^\alpha\,}\right)^{\!1/\alpha},
            \]
            which yields the desired bound for all $L\ge L_0$.
        \end{proof}

    Equipped with the lemmas above, we are now ready to prove the following result.

    \begin{lemma}
    \label{lemma:main_fast_decay}
        Let $(\mcA_n)_{n\in\mbZ}$ be a strictly stationary sequence of random local tensors satisfying \Cref{assumption1,assumption2}.
        Suppose there are constants $C_\rho,c_\rho>0$ and $\gamma\in(0,1]$ such that
        \[
            \rho_n\ \le\ C_\rho\,\exp\{-c_\rho n^\gamma\}\qquad(n\in\mbN).
        \]
        Then for every $\alpha\in(0,\gamma)$ there exist $C_\alpha,c_\alpha>0$ such that
        \[
            \avg{\cnum{\Phi^{(n)}}}\ \le\ C_\alpha\,\exp\{-c_\alpha n^\alpha\}\qquad\forall\,n\in\mbN.
        \]
    \end{lemma}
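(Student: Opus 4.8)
The plan is to reduce the statement to a single scalar recursion for $a_n:=\avg{\cnum{\Phi^{(n)}}}$ and then iterate \Cref{lemma:covariance_recursive_bound} along geometrically growing blocks. Two structural facts come for free: $(a_n)$ is non-increasing, since $\cnum{\phi_n\circ\Phi^{(n)}}\le\cnum{\phi_n}\,\cnum{\Phi^{(n)}}\le\cnum{\Phi^{(n)}}$ (each $\phi_n$ is positive, so $\cnum{\phi_n}\le1$), and $a_n\to0$ by \Cref{prop:expectation_to_0}. The engine is the $i=r$ specialization of \Cref{lemma:covariance_recursive_bound}: $a_{2\ell+q}\le a_\ell^{2}+\rho_q\,a_\ell$ for all $\ell,q\in\mbN$. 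Fix $\alpha\in(0,\gamma)$ once and for all.

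First I would build the block lengths. Choose $L_0\in\mbN$ large enough (possible since $a_n\to0$) that $a_{L_0}<\tfrac12$, set $\mu:=\ln\frac{1}{2a_{L_0}}>0$ and $\delta_k:=\tfrac12 e^{-\mu 2^{k}}$, so that $\delta_0=a_{L_0}$ and $2\delta_k^{2}=\delta_{k+1}$. For each $k$ pick a gap
\[
  q_k\;:=\;\Big\lceil\big((\ln(2C_\rho)+\mu 2^{k})_{+}/c_\rho\big)^{1/\gamma}\Big\rceil\vee 1 ,
\]
which is of order $2^{k/\gamma}$ and, by the hypothesis $\rho_n\le C_\rho e^{-c_\rho n^{\gamma}}$, obeys $\rho_{q_k}\le\delta_k$. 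Put $L_{k+1}:=2L_k+q_k$. Then a one–line induction gives $a_{L_k}\le\delta_k$ for every $k\ge0$: the base case is $\delta_0=a_{L_0}$, and the recursion yields $a_{L_{k+1}}\le a_{L_k}^{2}+\rho_{q_k}a_{L_k}\le\delta_k^{2}+\delta_k\cdot\delta_k=2\delta_k^{2}=\delta_{k+1}$.

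The remaining step is to convert the doubly–exponential-in-$k$ bound $a_{L_k}\le\delta_k=\tfrac12 e^{-\mu 2^{k}}$ into a bound in the running length. Summing $L_{k+1}=2L_k+q_k$ gives $L_k=2^{k}L_0+\sum_{j<k}2^{k-1-j}q_j$; inserting $q_j\asymp2^{j/\gamma}$ shows $L_k\asymp2^{k/\gamma}$ when $\gamma<1$ and $L_k\asymp k2^{k}$ when $\gamma=1$, hence $2^{k}\ge c_1 L_k^{\alpha}$ for all large $k$ (when $\gamma<1$ even $2^{k}\ge c_1 L_k^{\gamma}$; when $\gamma=1$ one uses $L_k/\log L_k\ge L_k^{\alpha}$). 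Therefore $a_{L_k}\le\tfrac12 e^{-\mu c_1 L_k^{\alpha}}$. Since $L_{k+1}/L_k=2+q_k/L_k$ stays bounded, any $n$ lies in some window $L_k\le n<L_{k+1}\le C_2 L_k$, and monotonicity of $(a_n)$ gives $a_n\le a_{L_k}\le\tfrac12 e^{-\mu c_1(n/C_2)^{\alpha}}$; absorbing an initial segment of values into the constant produces $a_n\le C_\alpha e^{-c_\alpha n^{\alpha}}$ for all $n\in\mbN$.

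I expect the only genuine obstacle to be the length bookkeeping in the last paragraph—checking $L_k\asymp2^{k/\gamma}$ (resp.\ $\asymp k2^{k}$) and that $L_{k+1}/L_k$ remains bounded, so that the passage from the subsequence $(L_k)$ to arbitrary $n$ loses nothing; with the explicit gap $q_k$ above these are routine estimates, and everything else is a mechanical iteration of \Cref{lemma:covariance_recursive_bound}. An equivalent packaging, which is the one matching \Cref{lemma:fast_mixing_1} and \Cref{lemma:fast_mixing_2}, instead uses the strictly geometric schedule $L_{k+1}=(2+\theta)L_k$ with $\theta\in(0,1)$ chosen so that $(2+\theta)^{\alpha}<2$, and carries the running bound $a_{L_k}\le e^{-2L_k^{\alpha}}$: \Cref{lemma:fast_mixing_1} absorbs $\rho_{\theta L_k}$ and \Cref{lemma:fast_mixing_2} closes the inductive step, the only extra ingredient being a starting scale $L_0$ with $a_{L_0}\le e^{-2L_0^{\alpha}}$, which is furnished by the scheme above (or by any preliminary super–polynomial bound, e.g.\ via \Cref{lemma:C_decay_in_poly}).
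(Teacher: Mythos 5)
Your argument is correct, and it reaches the conclusion by a genuinely different block schedule than the paper's. The paper iterates the recursion along $L_{k+1}=2L_k+\lfloor\theta L_k\rfloor$ (gaps proportional to the current block length, so $L_k\asymp(2+\theta)^k$) and carries the induction target $c(L_k)\le e^{-L_k^\alpha}$; closing that induction requires the two calibration lemmas (\Cref{lemma:fast_mixing_1}, \Cref{lemma:fast_mixing_2}) with the constraint $(2+\theta)^\alpha<2$, and--crucially--a separate ``Claim 1'' proved by contradiction (playing the doubly exponential contraction $c(L_k)\le(3/2)^{2^k-1}c(L_0)^{2^k}$ against the hypothetical lower bound $e^{-L_0^\alpha(2+\theta)^{\alpha k}}$) just to obtain a base case $K_0$ with $c(L_{K_0})\le e^{-L_{K_0}^\alpha}$. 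You instead let the recursion dictate the target: $\delta_{k+1}=2\delta_k^2$ is doubly exponential in $k$ by construction, the base case is free ($\delta_0=a_{L_0}$), and the gaps $q_k\asymp 2^{k/\gamma}$ are chosen just large enough that $\rho_{q_k}\le\delta_k$. This eliminates the contradiction step entirely and pushes all the work into the final length bookkeeping ($L_k\asymp 2^{k/\gamma}$ for $\gamma<1$, $L_k\asymp k2^k$ for $\gamma=1$, boundedness of $L_{k+1}/L_k$, and monotonicity of $a_n$ to interpolate between the $L_k$), each of which you have verified correctly. As a bonus your schedule actually yields the sharper exponent $e^{-c\,n^{\gamma}}$ when $\gamma<1$ and $e^{-c\,n/\ln n}$ when $\gamma=1$, both of which imply the stated $e^{-c_\alpha n^\alpha}$ for every $\alpha\in(0,\gamma)$; the paper's geometric schedule only delivers the exponent $\alpha<\gamma$ directly. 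Both proofs rest on the same two pillars (\Cref{lemma:covariance_recursive_bound} with $i=r$, and $\avg{\cnum{\Phi^{(n)}}}\to0$ from \Cref{prop:expectation_to_0}), so the difference is purely in how the bootstrap is organized.
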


        \begin{proof}
            Fix $\alpha\in(0,\gamma)$. Choose $\theta\in(0,1)$ as in \Cref{lemma:fast_mixing_2}, so that $(2+\theta)^\alpha<2$. Increase $L_0\in\mbN$ so that simultaneously:
            \begin{enumerate}
                \item \Cref{lemma:fast_mixing_1} holds for all $L\ge L_0$ (i.e.\ $C_\rho e^{-\tfrac{c_\rho}{2}(\theta L)^\gamma}\le \tfrac12 e^{-2L^\alpha}$),
                \item \Cref{lemma:fast_mixing_2} holds for all $L\ge L_0$ (i.e.\ $\tfrac32 e^{-2L^\alpha}\le e^{-((2+\theta)L)^\alpha}$),
                \item $\avg{\cnum{\Phi^{(L_0)}}}\le\tfrac12$,
                \item $\theta L_0\ge 2$.
            \end{enumerate}
            Define recursively $L_{k+1}:=2L_k+\lfloor\theta L_k\rfloor$ for $k\ge0$, and abbreviate $c(n):=\avg{\cnum{\Phi^{(n)}}}$.

            We start by establishing the following claim. 
            
            \medskip\noindent\emph{Claim 1.} There exists $K_0$ such that $c(L_{K_0})\le e^{-L_{K_0}^\alpha}$.
        
            \smallskip
                \begin{proof}[Proof of Claim 1:]
                    Assume for contradiction that $c(L_k) > e^{-L_k^\alpha}$ for all $k\ge 0$. 
                    From \Cref{lemma:covariance_recursive_bound} (with $i=r=L_k$ and $q  =q_k = \floor{{\theta L_k}}$)  we have that
                    \[
                        c(L_{k+1})\ \le\ c(L_k)^2+\rho_q\,c(L_k).
                    \]
                    Since $\theta L_k\ge2$, we have $\lfloor\theta L_k\rfloor\ge\tfrac12\theta L_k$, whence
                    \[
                        \rho_q\ \le\ C_\rho\,\exp{-c_\rho\bigl(\dfrac{\theta L_k}{2}\bigr)^\gamma}
                        \ \le\ 
                        C_\rho\,\exp{-\dfrac{c_\rho}{2}\,\theta^\gamma L_k^\gamma}
                        \ \le\ 
                        \dfrac12\,e^{-2L_k^\alpha},
                    \]
                    by \Cref{lemma:fast_mixing_1} and $2^{-\gamma}\ge\tfrac12$. Using $c(L_k)\le1$ and $c(L_k) > e^{-L_k^\alpha}$,
                    \[
                        c(L_{k+1})\ \le\ c(L_k)^2 + \dfrac12 e^{-2L_k^\alpha}\ \le\ \dfrac32\,c(L_k)^2.
                    \]
                    Iterating this with $c(L_0)\le\tfrac12$ gives
                    \begin{equation}
                    \label{ine:contradictory_rate_1}
                        c(L_k)\ \le\ \left(\dfrac32\right)^{2^k-1} c(L_0)^{2^k}
                        \ \le\ \dfrac{2}{3}\,\left(\dfrac{3}{4}\right)^{2^k}
                        \ =\ \frac{2}{3}\,\exp{-\ln(4/3)\,2^k}.
                    \end{equation}
                    On the other hand, $L_{k+1}\le(2+\theta)L_k$ implies $L_k\le L_0(2+\theta)^k$, and hence, by the standing assumption $c(L_k)>e^{-L_k^\alpha}$,
                    \begin{equation}
                    \label{ine:contradictory_rate_2}
                        c(L_k)\ >\ \exp\{-L_k^\alpha\}\ \ge\ \exp\{-L_0^\alpha(2+\theta)^{\alpha k}\}.
                    \end{equation}
                    Since $(2+\theta)^\alpha<2$, the two bounds in \eqref{ine:contradictory_rate_1} and \eqref{ine:contradictory_rate_2} contradict each other for large $k$. This proves Claim~1.
                \end{proof}

            Having obtained $K_0$ so that $c(L_{K_0})\le e^{-L_{K_0}^\alpha}$ we claim the following: 

            \medskip\noindent\emph{Claim 2.} For all $k\ge K_0$ one has $c(L_k)\le e^{-L_k^\alpha}$.

            \smallskip
            \begin{proof}[Proof of Claim 2.]
                Assume $c(L_k)\le e^{-L_k^\alpha}$ for some $k\ge K_0$. Then, as above,
                \[
                    c(L_{k+1})\ \le\ c(L_k)^2 + C_\rho \exp{-\dfrac{c_\rho}{2}\theta^\gamma L_k^\gamma}
                    \ \le\ 
                    \exp{-2L_k^\alpha} + \dfrac12 \exp{-2L_k^\alpha}
                    \ =\ 
                    \dfrac32\,\exp{-2L_k^\alpha}.
                \]
                By \Cref{lemma:fast_mixing_2}, $\dfrac32 e^{-2L_k^\alpha}\le \exp{-((2+\theta)L_k)^\alpha}$. Since $L_{k+1}=2L_k+\floor{\theta L_k}\le(2+\theta)L_k$ and $x\mapsto e^{-x}$ decreases,
                \[
                  c(L_{k+1})\ \le\ \exp{-((2+\theta)L_k)^\alpha}\ \le\ \exp{-(L_{k+1})^\alpha}.
                \]
                Thus, the property propagates by induction to all $k\ge K_0$.
            \end{proof}
            
            \medskip

            Finally, let $n>2L_{K_0}$. Since $L_{k+1}\ge2L_k$, there exists $k\ge K_0$ such that $2L_k<n\le 2L_{k+1}$. 
            Writing $n=2L_k+q$ with $q>0$, \eqref{eq:covariance_recursive_bound} yields
                \[
                  c(n)\ \le\ c(L_k)^2 + \rho_q\,c(L_k)
                  \ \le\ 2\,c(L_k)
                  \ \le\ 2\,\exp{-L_k^\alpha}.
                \]
                Since $n\le 2(2+\theta)L_k$, we have $L_k\ge n/(2(2+\theta))$, hence
                \[
                  c(n)\ \le\ 2\,\exp\!\left\{-\bigl(2(2+\theta)\bigr)^{-\alpha}\,n^\alpha\right\}.
                \]
                Absorbing the finite initial segment into $C_\alpha$ and setting $c_\alpha:=(2(2+\theta))^{-\alpha}$ gives the desired bound.
        \end{proof}

    The proof of \Cref{thm:fast_decay} is now a small exercise using \Cref{lemma:main_fast_decay}.

    \fastdecay*
        \begin{proof}
            Fix $\alpha\in(0,\gamma)$. By \Cref{lemma:main_fast_decay} there are $C_\alpha,c_\alpha>0$ with
            \begin{equation}
            \label{eq:exp_expect}
                \avg{\cnum{\Phi^{(k)}}}\ \le\ C_\alpha\,e^{-c_\alpha k^\alpha}\qquad\forall\,k\in\mbN.
            \end{equation}
            Without loss of generality, assume $m<n$ and set $L:=n-m-1\ge 1$. By \emph{strict stationarity}, the law of the length-$L$ block is shift invariant, hence
            \begin{equation}
            \label{eq:law_equal_blocks}
                \cnum{\phi_{n-1}\circ\cdots\circ\phi_{m+1}}\ \stackrel{d}{=}\ \cnum{\Phi^{(L)}}.
            \end{equation}
            Applying Markov’s inequality to the nonnegative random variable $\cnum{\Phi^{(L)}}$ and using \eqref{eq:exp_expect}, we obtain
            \[
                \pr \left\{\cnum{\Phi^{(L)}} \le C_\alpha \exp{-\dfrac{c_\alpha}{2}L^\alpha} \right\}
                \ \ge\ 1-\dfrac{C_\alpha \exp{-c_\alpha L^\alpha}}{C_\alpha \exp{-\dfrac{c_\alpha}{2}L^\alpha}}
                \ =\ 1-\exp{-\dfrac{c_\alpha}{2}L^\alpha}.
            \]
            Invoking \eqref{eq:law_equal_blocks} and the fact that $x\mapsto e^{-x}$ is \emph{decreasing} on $(0,\infty)$, and using $|n-m|\ge 2\Rightarrow L=n-m-1\ge \tfrac12|n-m|$, we obtain
            \begin{multline}
            \label{eq:prob_cnum_block}
                \pr\!\left\{\cnum{\phi_{n-1}\circ\cdots\circ\phi_{m+1}}\ \le\ C_\alpha\,\exp{-\dfrac{c_\alpha}{2^{\alpha+1}}|n-m|^\alpha}\right\} \\
                \ \ge\ 1-\exp{-\dfrac{c_\alpha}{2}L^\alpha}
                \ \ge\ 1-\exp{-\dfrac{c_\alpha}{2^{\alpha+1}}|n-m|^\alpha}.
            \end{multline}

            By \Cref{lemma:main_bound}, for $\pr$-almost every $\omega$,
            \[
                f^\omega(n,m)
                \le
                8D\,\norm{\mbO_n}_\infty\,\norm{\mbO_m}_\infty\,
                \cnum{
                    \phi_{n-1}^\omega\circ\cdots\circ\phi_{m+1}^\omega
                }.
            \]
            Intersecting the event in \eqref{eq:prob_cnum_block} with this full-measure event does not change its probability. 
            Therefore, setting
            \[
                K_\alpha:=8D\,C_\alpha,
                \qquad
                k_\alpha:=\frac{c_\alpha}{2^{\alpha+1}},
            \]
            we obtain
            \[
                \pr\!\left\{
                    \omega:
                    f^\omega(n,m)
                    \le
                    K_\alpha\,
                    \norm{\mbO_n}_\infty\,
                    \norm{\mbO_m}_\infty\,
                    e^{-k_\alpha|n-m|^\alpha}
                \right\}
                \ge
                1-e^{-k_\alpha|n-m|^\alpha}.
            \]
    \end{proof}
    
    Due to the hierarchy in \eqref{mixing_hierarchy}, we obtain the following immediate consequences.
    
    \begin{cor}
    \label{cor:poly_from_psi_phi}
    Under the hypotheses of \Cref{thm:decay_cor} except that we do \emph{not} assume $\rho_n\to0$, the same conclusion holds if either $\psi_n\to0$ or $\varphi_n\to0$.
    \end{cor}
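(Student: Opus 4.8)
The plan is to reduce both cases directly to \Cref{thm:decay_cor} by invoking the mixing hierarchy \eqref{eq:mixing_hierarchy}. First I would record the spatial-profile versions of the two relevant pointwise inequalities. For any pair of sub-$\sigma$-algebras one has $\rho(\mcA,\mcB)\le\psi(\mcA,\mcB)$ and $\rho(\mcA,\mcB)\le\varphi(\mcA,\mcB)^{1/2}$; applying $\sup_{k\in\mbZ}$ over the filtration pairs $(\mcF_k,\mcF^{k+n})$ (equivalently, by \Cref{rem:stationary_profile}, simply evaluating at $k=0$, since strict stationarity removes the dependence on $k$) gives $\rho_n\le\psi_n$ and $\rho_n\le\varphi_n^{1/2}$ for every $n\in\mbN$.

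Consequently, if $\psi_n\to0$ then $\rho_n\le\psi_n\to0$, and if $\varphi_n\to0$ then $\rho_n\le\varphi_n^{1/2}\to0$; in either case the hypothesis $\rho_n\to0$ of \Cref{thm:decay_cor} is satisfied, so \Cref{lemma:C_decay_in_poly} applies and the conclusion follows verbatim: for each $k\in\mbN$ there is a constant $C_{\mathrm{poly}(k)}>0$ (depending only on $D$, the exponent $k$, and the decorrelation profile) with $\pr\{\omega:\ f^\omega(n,m)\le C_{\mathrm{poly}(k)}\,\norm{\mbO_n}_\infty\,\norm{\mbO_m}_\infty\,|n-m|^{-k}\}\ge 1-|n-m|^{-k}$ for all local observables $\mbO_n,\mbO_m$ with $2\le|n-m|$.

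There is essentially no obstacle: the only point requiring a line of justification is that passing from the pointwise coefficient comparisons to the spatial-profile comparisons is legitimate, which is immediate from monotonicity of the supremum, as exactly the same collection of pairs $(\mcF_k,\mcF^{k+n})$ occurs on both sides. The identical remark yields the analogous strengthening of \Cref{thm:fast_decay}: if $\psi_n$ or $\varphi_n$ decays stretched-exponentially, then so does $\rho_n$ (with the constant $c_\rho$ merely halved in the $\varphi$ case, owing to the square root), and \Cref{lemma:main_fast_decay} then gives the stretched-exponential high-probability bound, as already indicated in the discussion preceding \Cref{thm:fast_beta_decay}.
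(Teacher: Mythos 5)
Your proposal is correct and follows exactly the paper's own argument: both reduce to \Cref{thm:decay_cor} by using the hierarchy \eqref{eq:mixing_hierarchy} to conclude $\rho_n\le\psi_n$ and $\rho_n\le\varphi_n^{1/2}$, hence $\rho_n\to0$ in either case. The extra remarks you add about passing from pointwise coefficient comparisons to the spatial profiles are harmless elaboration of the same step.
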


        \begin{proof}
            By \eqref{mixing_hierarchy}, for any two sub-$\sigma$-algebras,
            \[
                \rho(\mcA,\mcB)\ \le\ \psi(\mcA,\mcB),
                \qquad
                \rho(\mcA,\mcB)\ \le\ 2\varphi(\mcA,\mcB)^{1/2}.
            \]
            Hence $\psi_n\to0$ implies $\rho_n\to0$, and likewise $\varphi_n\to0$ implies $\rho_n\to0$. 
            The proof of \Cref{thm:decay_cor} depends only on the fact that $\rho_n\to0$ and therefore, the same bound follows verbatim under either $\psi$–mixing or $\varphi$–mixing.
        \end{proof}

    \begin{cor}
    \label{cor:fast_from_psi_phi}
    Assume the hypotheses of \Cref{thm:fast_decay} except that instead of assuming a stretched–exponential bound on $\rho_n$, we assume a stretched–exponential bound on either $\psi_n$ or $\varphi_n$:
    \[
        \psi_n\ \le\ C_\psi\,e^{-c_\psi n^\gamma}
        \quad\text{or}\quad
        \varphi_n\ \le\ C_\varphi\,e^{-c_\varphi n^\gamma}
        \qquad(n\in\mbN),
    \]
    for some $C_\bullet,c_\bullet>0$ and $\gamma\in(0,1]$. Then the conclusion of \Cref{thm:fast_decay} holds as stated. In particular, for every $\alpha\in(0,\gamma)$ there exist $K_\alpha,k_\alpha>0$ such that for all local observables $\mbO_n,\mbO_m$ with $|n-m|\ge2$,
    \[
          \pr\!\left\{\omega:\ f^\omega(n,m)\ \le\ K_\alpha\,\norm{\mbO_n}_\infty\norm{\mbO_m}_\infty\,e^{-k_\alpha |n-m|^\alpha}\right\}
          \ \ge\ 1-e^{-k_\alpha |n-m|^\alpha}.
    \]
    \end{cor}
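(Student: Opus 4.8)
The plan is to obtain \Cref{cor:fast_from_psi_phi} as an immediate consequence of \Cref{thm:fast_decay}, routing the $\psi$- and $\varphi$-mixing hypotheses through the mixing hierarchy \eqref{eq:mixing_hierarchy} in the same spirit as the proof of \Cref{cor:poly_from_psi_phi}. The single structural fact to isolate is that both $\psi$-mixing and $\varphi$-mixing control the maximal-correlation profile $\rho_n$ \emph{without degrading the stretched-exponential exponent} $\gamma$: passing from a bound on $\psi_n$ or $\varphi_n$ to a bound on $\rho_n$ changes only the prefactor and the rate constant, never the exponent $\gamma$.

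First I would dispatch the $\psi$-case. By the first comparison in \eqref{eq:mixing_hierarchy}, $\rho(\mcA,\mcB)\le\psi(\mcA,\mcB)$ for every pair of sub-$\sigma$-algebras; taking the supremum over the base site $k$ in the definitions of the spatial profiles (equivalently, using \Cref{rem:stationary_profile} under strict stationarity) gives $\rho_n\le\psi_n\le C_\psi e^{-c_\psi n^\gamma}$ for all $n\in\mbN$. Thus the hypothesis of \Cref{thm:fast_decay} is met verbatim, with $C_\rho:=C_\psi$, $c_\rho:=c_\psi$, and the same $\gamma\in(0,1]$; its conclusion is precisely the assertion of the corollary, so nothing further is needed.

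Next I would treat the $\varphi$-case using the second comparison, $\rho(\mcA,\mcB)\le\varphi(\mcA,\mcB)^{1/2}$, which yields $\rho_n\le\varphi_n^{1/2}\le C_\varphi^{1/2}\,e^{-(c_\varphi/2)\,n^\gamma}$. Taking the square root of a stretched-exponential bound with exponent $\gamma$ halves the rate constant and takes a square root of the prefactor, but leaves $\gamma$ intact; hence \Cref{thm:fast_decay} again applies, now with $C_\rho:=C_\varphi^{1/2}$, $c_\rho:=c_\varphi/2$, and the same $\gamma$. Invoking \Cref{thm:fast_decay} then produces, for every $\alpha\in(0,\gamma)$, constants $K_\alpha,k_\alpha>0$ such that for all local observables $\mbO_n,\mbO_m$ with $|n-m|\ge 2$,
\[
    \pr\!\left\{\omega:\ f^\omega(n,m)\ \le\ K_\alpha\,\norm{\mbO_n}_\infty\,\norm{\mbO_m}_\infty\,e^{-k_\alpha|n-m|^\alpha}\right\}\ \ge\ 1-e^{-k_\alpha|n-m|^\alpha},
\]
which is exactly the claimed bound.

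There is essentially no obstacle here; the only point requiring (minimal) care is that the admissible range $\alpha\in(0,\gamma)$ of the decay exponent in the conclusion is governed by $\gamma$ alone, and since neither inequality in \eqref{eq:mixing_hierarchy} alters $\gamma$, the full range $(0,\gamma)$ survives in both the $\psi$- and the $\varphi$-cases. In particular, the $\varphi$-case uses the power $1/2$ in $\rho\le\varphi^{1/2}$ — the only power of $\varphi$ the hierarchy provides — and this is precisely what keeps $\gamma$ unchanged.
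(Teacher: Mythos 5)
Your proposal is correct and follows essentially the same route as the paper: both cases are reduced to \Cref{thm:fast_decay} via the hierarchy \eqref{eq:mixing_hierarchy}, using $\rho_n\le\psi_n$ with unchanged constants in the $\psi$-case and $\rho_n\le\varphi_n^{1/2}$ with $(C_\rho,c_\rho)=(\sqrt{C_\varphi},c_\varphi/2)$ and the same exponent $\gamma$ in the $\varphi$-case. No gaps.
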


        \begin{proof}
            If $\psi_n\le C_\psi e^{-c_\psi n^\gamma}$, then $\rho_n\le\psi_n$ by \eqref{mixing_hierarchy}, hence $\rho_n$ itself obeys the same stretched–exponential bound with the same $(C_\rho,c_\rho,\gamma)=(C_\psi,c_\psi,\gamma)$. Applying \Cref{lemma:main_fast_decay} and then \Cref{thm:fast_decay} yields the claim.
            
            If $\varphi_n\le C_\varphi e^{-c_\varphi n^\gamma}$, then again by \eqref{mixing_hierarchy}
            \[
                \rho_n\ \le\ 2\varphi_n^{1/2}
                \ \le\ 2\sqrt{C_\varphi}\,e^{-(c_\varphi/2)\,n^\gamma}.
            \]
            Thus $\rho_n$ is stretched–exponentially decaying with the same exponent $\gamma$ and modified constants $(C_\rho,c_\rho)=(2\sqrt{C_\varphi},\,c_\varphi/2)$. Applying \Cref{lemma:main_fast_decay} and then \Cref{thm:fast_decay} with these constants gives the stated result.
        \end{proof}

%%%%%%%%%%%%%%%%%%%%%%%%%%%%%%%%%%%%%%%%%%%%%%%%%%%%%%%%%%%%%%%%%%%%%%%%%%%%
%%%%%%%%%%%%%%%%%%%%%%%%%%%%%%%%%%%%%%%%%%%%%%%%%%%%%%%%%%%%%%%%%%%%%%%%%%%

\subsection{Sufficiently Fast \texorpdfstring{$\beta$}{beta}-mixing}
\label{section:beta_mixing}

%%%%%%%%%%%%%%%%%%%%%%%%%%%%%%%%%%%%%%%%%%%%%%%%%%%%%%%%%%%%%%%%%%%%%%%%%%%%
%%%%%%%%%%%%%%%%%%%%%%%%%%%%%%%%%%%%%%%%%%%%%%%%%%%%%%%%%%%%%%%%%%%%%%%%%%%

    Assume throughout this subsection that the random system is \emph{exponentially $\beta$-mixing}, i.e., there exist constants $C_\beta,c_\beta>0$ such that $\beta_n\le C_\beta e^{-c_\beta n}$ for all $n\in\mbN$. 
    By the hierarchy \eqref{mixing_hierarchy}, this entails exponential strong mixing:
    \[
        \alpha_n\ \le\ \tfrac12\,\beta_n\ \le\ \tfrac12 C_\beta\,e^{-c_\beta n}\qquad(n\in\mbN).
    \]

    We shall need the following Bernstein-type concentration inequality (see \cite[Theorem~1]{Merlev_de_2009}).

    \begin{theorem}[Merlev\`ede--Peligrad--Rio]
    \label{thm:MPR}
        Let $(X_n)_{n\ge1}$ be a sequence of centered, real-valued, uniformly bounded random variables, i.e.\ $\mathbb E[X_n]=0$ and $\sup_n\|X_n\|_{L^\infty(\Omega)}\le M<\infty$, which is strongly mixing with $\alpha_n\le e^{-2cn}$ for some $c>0$. Then there exists $C_{\mathrm{MPR}}>0$ such that for all $n\ge4$ and $x\ge0$,
            \begin{equation}
              \Pr\!\left\{\left|\sum_{k=1}^n X_k\right|\ge x\right\}
              \ \le\
              \exp\!\left(\!-\,\frac{C_{\mathrm{MPR}}x^2}{\,nM^2 + Mx(\ln n)(\ln\ln n)}\right).
            \end{equation}
    \end{theorem}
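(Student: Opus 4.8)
The plan is to prove the bound by controlling the logarithmic Laplace transform of $S_n := \sum_{k=1}^n X_k$ and then invoking the Chernoff bound. Since the statement concerns $|S_n|$, I would first reduce to a one-sided estimate on $\Pr\{S_n\ge x\}$ (the bound for $-S_n$ being identical, because $(-X_n)$ inherits the same mixing coefficients and sup-norm bound), so that $\Pr\{|S_n|\ge x\}\le 2\max\{\Pr\{S_n\ge x\},\Pr\{-S_n\ge x\}\}$ and the factor $2$ is absorbed into $C_1$. For any $\lambda\in[0,\lambda_0]$ the Chernoff bound gives $\Pr\{S_n\ge x\}\le \exp(-\lambda x + \log\mathbb E[e^{\lambda S_n}])$, so the entire problem reduces to producing a Bernstein-type estimate $\log\mathbb E[e^{\lambda S_n}]\le \tfrac12\lambda^2 V_n(1-\lambda M')^{-1}$ for a suitable effective variance $V_n$ and scale $M'$, followed by optimizing $\lambda$.

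The second step is Bernstein's big-block/small-block decomposition. I would partition $\{1,\dots,n\}$ into alternating \emph{large} blocks of length $p$ and \emph{gap} blocks of length $q$, obtaining $r\asymp n/(p+q)$ large blocks; write $U_j$ for the centered, bounded sum over the $j$-th large block, so $\|U_j\|_\infty\le pM$. Two facts about single blocks are needed. First, because exponential $\alpha$-mixing makes $\sum_{t\ge0}\alpha_t$ finite, Rio's covariance inequality bounds the block variance by $\mathrm{Var}(U_j)\le C\,p\,M^2$, the dependence only inflating the independent-case variance by the constant $\sum_t\alpha_t$. Second, for an \emph{independent} bounded variable the classical Bennett estimate gives $\log\mathbb E[e^{\lambda U_j}]\le \tfrac{\lambda^2\mathrm{Var}(U_j)}{2}\,(1-\lambda pM/3)^{-1}$. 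The gap blocks contribute a sum of total length $\le rq$ whose fluctuation must be shown negligible by the same mechanism, which is why $q$ is chosen much smaller than $p$.

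The crux is \emph{decoupling} the exponential moments of the well-separated large blocks: I must compare $\mathbb E\bigl[\prod_{j=1}^r e^{\lambda U_j}\bigr]$ with $\prod_{j=1}^r\mathbb E[e^{\lambda U_j}]$. Here the naive strong-mixing covariance bound $|\mathrm{Cov}(e^{\lambda U},e^{\lambda V})|\le 4\alpha_q\|e^{\lambda U}\|_\infty\|e^{\lambda V}\|_\infty$ is fatal, since $\|e^{\lambda U_j}\|_\infty\le e^{\lambda pM}$ is exponentially large. The remedy, following Merlev\`ede--Peligrad--Rio, is to run the decoupling \emph{recursively} on a Cantor-type dyadic splitting of the block index set, at each scale replacing a conditional exponential moment by its unconditional counterpart with an error that is \emph{linear} in $\alpha_q$ rather than multiplied by the full sup-norm, and feeding the already-bounded Laplace transforms of the sub-collections back into the next scale. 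Telescoping these recursive errors accumulates a total decoupling cost of order $r\,\alpha_q\,e^{O(\lambda pM)}$, which is forced to be negligible by taking the gap $q$ of order $c^{-1}\log(\text{number of scales}\times\text{moment size})$.

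The final step is the joint optimization of $(\lambda,p,q)$. Summing the per-block Bennett bounds yields a leading sub-Gaussian term $\tfrac12\lambda^2\bigl(\sum_j\mathrm{Var}(U_j)\bigr)\le \tfrac12\lambda^2\cdot C\,nM^2$ with a sub-exponential correction $(1-\lambda pM/3)^{-1}$, while the decoupling constraint ties $q$, and hence the logarithmic loss, to $n$, $\lambda$, and $p$. Balancing the term $\lambda^2 nM^2$ against the correction $\lambda pM$ and against the gap length $q\asymp\log n$, and then optimizing $\lambda$, produces the characteristic two-regime denominator $nM^2 + Mx(\ln n)(\ln\ln n)$: the $(\ln n)$ comes from the logarithmic gap needed to kill the exponential blow-up in the decoupling, and the extra $(\ln\ln n)$ arises from the number of recursive Cantor scales. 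The main obstacle is exactly this recursive exponential decoupling: making the per-scale error linear in $\alpha_q$ (rather than catastrophically multiplied by $e^{\lambda pM}$) while keeping the accumulated Laplace-transform bounds consistent across all dyadic scales is the entire technical content, and it is what separates this Bernstein inequality from its independent-case prototype.
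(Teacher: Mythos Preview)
The paper does not prove this statement at all: it is quoted verbatim as \cite[Theorem~1]{Merlev_de_2009} and used as a black box in the proof of \Cref{thm:fast_beta_decay}. There is therefore no ``paper's own proof'' to compare your proposal against.

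For what it is worth, your sketch does track the actual Merlev\`ede--Peligrad--Rio argument reasonably well at the level of ideas (Bernstein blocking, recursive Cantor-type decoupling to avoid the $e^{\lambda pM}$ blow-up, then optimization producing the $(\ln n)(\ln\ln n)$ factor). But as written it is a plan, not a proof: the ``recursive decoupling with error linear in $\alpha_q$'' is asserted rather than carried out, and that step is precisely where all the work lies. If you intend to include a self-contained proof, you would need to actually execute the Cantor-scheme recursion with explicit control of the accumulated errors; otherwise, simply citing the original paper, as done here, is the appropriate choice.
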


    Here $\norm{\, \cdot \,}_{L^\infty(\Omega)}$ denotes the essential supremum. 
    The theorem above shall play a central role in the proof of \Cref{thm:fast_beta_decay}.

    \fastdecaybeta*

        \begin{proof}
            By \Cref{assumption1,assumption2} and \Cref{lemma:cnum_properties}, there exists $b\in\mbN$, $\lambda\in(0,1)$, and $p_*:=\Pr\{\,\cnum{\Phi^{(b)}}\le\lambda\,\}>0$. 
            Define the block indicators and centered random variables
            \[
                  I_i:=\mathbf 1_{\{\,\cnum{\phi_{(i+1)b-1}\circ\cdots\circ\phi_{ib}}\le\lambda\,\}},
                  \qquad
                  X_i:=I_i-p_*,
                  \qquad \forall i\in \mbZ.
            \]
            Since $(\mcA_n)_{n\in\mbZ}$ is \emph{strictly stationary}, $(I_i)_i$ is stationary with $\mathbb E[I_i]=p_*$, hence $\mathbb E[X_i]=0$. Clearly $\|X_i\|_{L^\infty}\le1$.
            We also take $B_i:=\phi_{(i+1)b-1}\circ\cdots\circ\phi_{ib}$. 

            We wish to apply \Cref{thm:MPR} to the sequence $(X_n)_{n\ge 0}$. 
            Consider the forward and backward $\sigma$-algebras generated by $(X_n)_n$:
            \[
                  \mcG_k := \sigma(X_i :\, i \le k), 
                  \qquad 
                  \mcG^k:= \sigma(X_i:\,i\ge k).
            \]
            From the definition of $X_n$,
            \[
                  \mcG_k \subseteq \sigma(\mcA_n: n\le (k+1)b-1) = \mcF_{(k+1)b-1},
                  \qquad
                  \mcG^{k+n} \subseteq \sigma(\mcA_s : s \ge (k+n)b) = \mcF^{(k+n)b}.
            \]
            Therefore, for the $\alpha$-mixing profile of $X:=(X_n)_n$,
            \[
                  \alpha_n^{X} 
                  \ :=\ \sup_{k\in\mbZ} \alpha\!\big(\mcG_k, \mcG^{k+n}\big)
                  \ \le\ \sup_{k\in\mbZ} \alpha\!\big(\mcF_{(k+1)b-1}, \mcF^{(k+n)b}\big)
                  \ \le\ \alpha_{(n-1)b+1}.
            \]
            So, 
            \[
                \alpha_n^{X} \ \le\ \dfrac12 C_\beta\, e^{-c_\beta((n-1)b+1)}
                  \ \le\ C'\,e^{-c' n},
            \]
            for suitable $C',c'>0$. 
            Since every strong-mixing coefficient is at most $1/4$, the preceding estimate implies, after decreasing the exponential rate if necessary, that there exists $c_X>0$ such that
            \[
                \alpha_n^X\le e^{-2c_Xn}
                \qquad(n\ge1).
            \]
            Hence \Cref{thm:MPR} applies to $(X_j)_{j\ge 0}$ with $M=1$.
       
            \medskip
            Now let 
            \[
                S_k:=\sum_{i=0}^{k-1}X_i=\sum_{i=0}^{k-1}I_i-kp_*.
            \]
            By \Cref{thm:MPR}, for all $k\ge 4$,
            \begin{equation}
            \label{eq:bernstein_blocks_exact}
                  \Pr\!\left\{|S_k|\ge \dfrac{p_*}{2}k\right\}
                  \ \le\
                  \exp\!\left(
                    -\dfrac{C_{\mathrm{MPR}} \, p_*^2}{4}\cdot
                    \dfrac{k}{\,1 + \tfrac{p_*}{2}\,\ln k\,\ln\ln k\,}
                  \right).
            \end{equation}
            Choose
            \[
              K_\star\ :=\ \max\!\left\{\,4,\ \ceil{\exp{\exp{2/p_*}}}\right\}.
            \]
            Then for all $k\ge K_\star$ we have $\ln\ln k>0$, and $\dfrac{p_*}{2}\,\ln k\,\ln\ln k\ge 1$, hence
            \[
                1+\dfrac{p_*}{2}\,\ln k\,\ln\ln k\ \le\ p_*\,\ln k\,\ln\ln k,
            \]
            and \eqref{eq:bernstein_blocks_exact} yields
            \[
                  \Pr\!\left\{|S_k|\ge \dfrac{p_*}{2}k\right\}
                  \ \le\
                  \exp{
                    -\,\frac{C_{\mathrm{MPR}}\, p_*}{4}\cdot\frac{k}{\ln k\,\ln\ln k}
                  }.
            \]
            Since $\{\sum_{i=0}^{k-1}I_i\le \tfrac{p_*}{2}k\}\subseteq\{|S_k|\ge \tfrac{p_*}{2}k\}$, letting $C_2:=C_{\mathrm{MPR}}\, p_*/4>0$ we get, for all $k\ge K_\star$,
            \begin{equation}
            \label{eq:bernstein_blocks_large_k}
              \Pr\!\left\{\sum_{i=0}^{k-1}I_i\le \dfrac{p_*}{2}k\right\}
              \ \le\
              \exp{-\,\frac{C_2\,k}{\ln k\,\ln\ln k}}.
            \end{equation}

            \medskip
            Fix $m<n$ and set $L:=n-m-1$. Then for sufficiently large $L$ there is $k\in\mbN$ so that $kb\ \le\ L-1\ <\ (k+1)b$. 
            Then by submultiplicativity and $0\le\cnum{\,\cdot\,}\le 1$,
            \[
              \cnum{\Phi^{(L)}}\ \le\ \prod_{i=0}^{k-1}\cnum{\phi_{(i+1)b-1}\circ\ldots\circ\phi_{ib}} =\prod_{i=0}^{k-1}\cnum{B_i} .
            \]
            On the event $\sum_{i=0}^{k-1} I_i \ge \frac{p_*}{2}k$ we have
            \[
                  \cnum{\Phi^{(L)}}
                  \ \le\ 
                  \prod_{i=0}^{k-1}\cnum{B_i}
                  \ \le\
                  \lambda^{\sum_{i=0}^{k-1} I_i}
                  \ \le\
                  \lambda^{\frac{p_*}{2}k}.
            \]
            Since $k\ge \frac{L-1}{b}-1$ and $0<\lambda<1$, 
            \[
                  \lambda^{\frac{p_*}{2}k}
                  \ \le\
                  \lambda^{\frac{p_*}{2}\left(\frac{L-1}{b}-1\right)}
                  \ =\
                  \lambda^{-\frac{p_*}{2}}\,
                  \Bigl(\lambda^{\frac{p_*}{2b}}\Bigr)^{L-1}
                  \ =\
                  \lambda^{-\frac{p_*}{2}\left(1+\frac{1}{b}\right)}\,
                  \Bigl(\lambda^{\frac{p_*}{2b}}\Bigr)^{L}.
            \]
            Recalling $L=n-m-1$ and using the previous inequality, we have on the event
            $\sum_{i=0}^{k-1} I_i \ge \frac{p_*}{2}k$ that
            \[
              \cnum{\Phi^{(n-m-1)}}\ \le\ A_\beta\,e^{-\kappa_\beta (n-m-1)}.
            \]
            for some $A_\beta, \kappa_\beta >0$.
            Specifically, we take 
            \[
                A_\beta = \lambda^{-\frac{p_*}{2}\left(1+\frac{1}{b}\right)}\,
                \qquad
                \kappa_\beta = -\ln{\lambda^{\frac{p_*}{2b}}}\,
            \]
            and thus
            \begin{equation}
            \label{eq:cnum_prob_L_to_nm}
                  \Pr\!\left\{\cnum{\Phi^{(n-m-1)}} \le A_\beta\,e^{-\kappa_\beta (n-m-1)}\right\}
                  \ \ge\ 1-\exp\!\left(-\,\frac{C_2\,k}{\ln k\,\ln\ln k}\right).
            \end{equation}

            Since
            \[
                k=\left\lfloor\frac{n-m-2}{b}\right\rfloor,
            \]
            after increasing the large-separation threshold if necessary, we have
            \[
                k\ge\frac{n-m}{2b},
                \qquad
                \ln k\le\ln{(n-m)},
                \qquad
                \ln\ln k\le\ln\ln{(n-m)}.
            \]
            Therefore, with
            \[
                p_\beta:=\frac{C_2}{2b},
            \]
            we obtain, for all sufficiently large $(n-m)$,
            \begin{equation}
            \label{eq:beta_cnum_final_corrected}
                \Pr\!\left\{
                    \cnum{\Phi^{(n-m-1)}}
                    \le
                    A_\beta e^{-\kappa_\beta(n-m-1)}
                \right\}
                \ge
                1-\exp\!\left(
                    -\frac{p_\beta (n-m)}{\ln{(n-m)}\,\ln\ln {(n-m)}}
                \right).
            \end{equation}

            By strict stationarity, the law of
            $\cnum{\phi_{n-1}\circ\cdots\circ\phi_{m+1}}$ coincides with that of
            $\cnum{\Phi^{(n-m-1)}}$.
            Using \Cref{lemma:main_bound}, we have, for $\pr$-almost every $\omega$,
            \[
              f^\omega(n,m)\ \le\ 8\, D\,\norm{\mbO_n}_\infty\,\norm{\mbO_m}_\infty\,
            \cnum{\phi^\omega_{n-1}\circ\cdots\circ\phi^\omega_{m+1}}.
            \]
            Therefore, there exists $N_\beta\in\mbN$ such that for all $|n-m|\ge N_\beta$,
            \begin{equation}
            \label{eq:beta_large_sep}
                  \Pr\!\left\{\,f^\omega(n,m)\ \le\ K'_\beta\,\norm{\mbO_n}_\infty\,\norm{\mbO_m}_\infty\,
                  e^{-\kappa_\beta |n-m|}\right\}
                  \ \ge\
                  1-\exp\!\left(-\,\frac{p_\beta\,|n-m|}{\ln|n-m|\,\ln\ln|n-m|}\right),
            \end{equation}
            where $K'_\beta:=8\,D\,A_\beta e^{\kappa_\beta}$.
            
            For the finitely many separations $2\le |n-m|\le N_\beta-1$, we use the deterministic bound
            \[
              f^\omega(n,m)\ \le\ 8\,D\,\norm{\mbO_n}_\infty\,\norm{\mbO_m}_\infty
              \ \le\ 8\,D\,e^{\,\kappa_\beta(N_\beta-1)}\,\norm{\mbO_n}_\infty\,\norm{\mbO_m}_\infty\,e^{-\kappa_\beta |n-m|}
              \quad\text{a.s.}
            \]
            Hence, with
            \begin{equation}
            \label{eq:Kbeta_final}
                K_\beta\ :=\ \max\!\left\{\,K'_\beta,\ 8\,D\,e^{\,\kappa_\beta(N_\beta-1)}\,\right\},
            \end{equation}
            we obtain the uniform bound
            \begin{equation}
            \label{eq:beta_uniform_all_final}
                \Pr\!\left\{\,f^\omega(n,m)\ \le\ K_\beta\,\norm{\mbO_n}_\infty\,\norm{\mbO_m}_\infty\,
                e^{-\kappa_\beta |n-m|}\right\}
                \ \ge\
                1-\exp\!\left(-\,\frac{p_\beta\,|n-m|}{\ln|n-m|\,\ln\ln|n-m|}\right),
            \end{equation}
            for all $|n-m|\ge 2$.
        \end{proof}

%%%%%%%%%%%%%%%%%%%%%%%%%%%%%%%%%%%%%%%%%%%%%%%%%%%%%%%%%%%%%%%%%%%%%%%%%%%%
%%%%%%%%%%%%%%%%%%%%%%%%%%%%%%%%%%%%%%%%%%%%%%%%%%%%%%%%%%%%%%%%%%%%%%%%%%%

\section*{Acknowledgments}
\addcontentsline{toc}{section}{Acknowledgments}
\noindent 
The authors acknowledge support from Villum Fonden Grant No. 25452 and Grant No. 60842, as well as QMATH Center of Excellence Grant No. 10059. 
LP was also supported for a part of this work by the Danish e-infrastructure Consortium (DeiC) Grant No.~5260-00014B.

%%%%%%%%%%%%%%%%%%%%%%%%%%%%%%%%%%%%%%%%%%%%%%%%%%%%%%%%%%%%%%%%%%%%%%%%%%%%
%%%%%%%%%%%%%%%%%%%%%%%%%%%%%%%%%%%%%%%%%%%%%%%%%%%%%%%%%%%%%%%%%%%%%%%%%%%

\appendix
\section*{Appendices}
\crefalias{section}{appendix}
%%%%%%%%%%%%%%%%%%%%%%%%%%%%%%%%%%%%%%%%%%%%%%%%%%%%%%%%%%%%%%%%%%%%%%%%%%%%
%%%%%%%%%%%%%%%%%%%%%%%%%%%%%%%%%%%%%%%%%%%%%%%%%%%%%%%%%%%%%%%%%%%%%%%%%%%

%%%%%%%%%%%%%%%%%%%%%%%%%%%%%%%%%%%%%%%%%%%%%%%%%%%%%%%%%%%%%%%%%%%%%%%%%%%%
%%%%%%%%%%%%%%%%%%%%%%%%%%%%%%%%%%%%%%%%%%%%%%%%%%%%%%%%%%%%%%%%%%%%%%%%%%%
\makeatletter 
\renewcommand{\thesection}{\Alph{section}}
\makeatother

%%%%%%%%%%%%%%%%%%%%%%%%%%%%%%%%%%%%%%%%%%%%%%%%%%%%%%%%%%%%%%%%%%%%%%%%%%%%
%%%%%%%%%%%%%%%%%%%%%%%%%%%%%%%%%%%%%%%%%%%%%%%%%%%%%%%%%%%%%%%%%%%%%%%%%%%

\section{Quantitative refinement of \texorpdfstring{\Cref{lemma:boundary_states_main}}{the boundary-state lemma}}
\label{section:appen_1}
\makeatletter 
\renewcommand{\thesection}{\Alph{section}}
\makeatother

%%%%%%%%%%%%%%%%%%%%%%%%%%%%%%%%%%%%%%%%%%%%%%%%%%%%%%%%%%%%%%%%%%%%%%%%%%%%
%%%%%%%%%%%%%%%%%%%%%%%%%%%%%%%%%%%%%%%%%%%%%%%%%%%%%%%%%%%%%%%%%%%%%%%%%%%

Before proving the quantitative refinement, we recall the contraction coefficient
\(\cnum{\,\cdot\,}:\mapspace\to[0,1]\) introduced in \cite{Movassagh_2022} for positive cone-preserving superoperators.
This notion is purely deterministic and independent of any randomness.

\paragraph{Projective action.}
    Throughout this appendix, the projective action \(\phi\proj(\,\cdot\,)\) of a positive map \(\phi\in\mapspace\) acts on the positive cone by trace-normalization:
    \begin{equation}
    \label{eq:projective_action}
        \phi\proj X
        :=
        \begin{cases}
            \displaystyle \frac{\phi(X)}{\norm{\phi(X)}_1}, & \text{if }\phi(X)\neq 0,\\[1.2ex]
            0, & \text{if }\phi(X)=0.
        \end{cases}
    \end{equation}
    Thus, whenever \(X\in\states\) and \(\phi(X)\neq0\), we have \(\phi\proj X\in\states\).

\begin{definition}[Contraction coefficient]
\label{dfn:cnum}
    For a positive map \(\phi\in\mapspace\) with \(\ker{\phi}\cap\states=\emptyset\), define
    \begin{equation}
    \label{eq:cnum}
        \cnum{\phi}
        :=
        \sup\Bigl\{
            \mathrm d\!\bigl(\phi\proj A,\phi\proj B\bigr)
            :
            A,B\in\states
        \Bigr\}\in[0,1],
    \end{equation}
    where \(\mathrm d\) is the metric on \(\states\) given by
    \begin{equation}
    \label{eq:d}
        \mathrm d(A,B)
        :=
        \frac{1-m(A,B)m(B,A)}{1+m(A,B)m(B,A)},
        \qquad
        m(A,B):=\sup\{\lambda\ge0:\lambda B\le A\}.
    \end{equation}
\end{definition}

\begin{remark}
\label{remark:c_and_d}
    We shall use, without reproving, the properties of \(\mathrm d\) and \(\cnum{\,\cdot\,}\) established in \cite{Movassagh_2022}.
    For convenience, we recall below the properties of \(\mathrm d\) and \(\cnum{\,\cdot\,}\) that will be used later.
\end{remark}

\begin{prop}[{\cite[Lemma~3.9]{Movassagh_2022}}]
\label{prop:d}
    Let \(\rho,\delta\in\states\). Then:
    \begin{enumerate}
        \item \(\dfrac12\norm{\rho-\delta}_1\le \mathrm d(\rho,\delta)\).
        \item \(\sup_{\rho,\delta\in\states}\mathrm d(\rho,\delta)=1\).
        \item If \(\rho\in\states^{\mathrm o}\) and \(\delta\in\states\), then \(\mathrm d(\rho,\delta)=1\) iff \(\delta\in\partial\states\).
        \item On \(\states^{\mathrm o}\), the topology induced by \(\norm{\,\cdot\,}_1\) coincides with that induced by \(\mathrm d\).
    \end{enumerate}
\end{prop}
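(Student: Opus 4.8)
The plan is to reduce all four claims to elementary facts about the domination coefficient $m(A,B)=\sup\{\lambda\ge 0:\lambda B\le A\}$ together with the scalar function $h(t)=\frac{1-t}{1+t}$, using that $\mathrm d(A,B)=h\bigl(m(A,B)\,m(B,A)\bigr)$ and that $h$ is a strictly decreasing bijection of $[0,1]$ onto $[0,1]$ with $h(0)=1$ and $h(1)=0$.

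First I would record the structural properties of $m$ on $\states$: (a) $0\le m(A,B)\le 1$, since $\lambda B\le A$ forces $\lambda=\lambda\,\tr{B}\le\tr{A}=1$; (b) $m(A,B)=1$ iff $A=B$, since $B\le A$ together with $\tr{A-B}=0$ and $A-B\ge 0$ forces $A=B$; (c) $m(A,B)>0$ iff $\operatorname{supp}B\subseteq\operatorname{supp}A$ — a kernel vector of $A$ lying outside $\ker B$ obstructs $\lambda B\le A$ for every $\lambda>0$, while if $\operatorname{supp}B\subseteq\operatorname{supp}A$ the value $\lambda=\eta\bigl(A|_{\operatorname{supp}A}\bigr)/\norm{B}_\infty>0$ works. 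Claims (2) and (3) then follow at once. For (2): $m\ge 0$ gives $\mathrm d\le h(0)=1$ always, and any $A,B\in\states$ with orthogonal supports satisfy $m(A,B)=m(B,A)=0$, hence $\mathrm d(A,B)=1$; so the supremum is $1$, and attained. For (3): if $\rho\in\states^{\mathrm o}$ then $\operatorname{supp}\rho=\mbC^{D}$, so $m(\rho,\delta)>0$ automatically, whereas $m(\delta,\rho)>0$ iff $\operatorname{supp}\rho\subseteq\operatorname{supp}\delta$ iff $\delta\in\states^{\mathrm o}$; thus $m(\rho,\delta)\,m(\delta,\rho)>0\iff\delta\in\states^{\mathrm o}$, i.e.\ $\mathrm d(\rho,\delta)=1\iff\delta\in\partial\states$.

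The quantitative estimate (1) is where the real work lies. Write $m_1:=m(\rho,\delta)$, $m_2:=m(\delta,\rho)$ and $t:=m_1m_2$. For the lower bound, assume w.l.o.g.\ $m_1\ge m_2$; writing $\rho=m_1\delta+R$ with $R:=\rho-m_1\delta\ge 0$ and $\tr{R}=1-m_1$, the triangle inequality gives $\norm{\rho-\delta}_1=\norm{R-(1-m_1)\delta}_1\le\norm{R}_1+(1-m_1)\norm{\delta}_1=2(1-m_1)$. It then suffices to check $1-m_1\le h(t)$, which after clearing denominators reads $m_2(2-m_1)\le 1$ and holds since $m_2\le m_1$ and $m_1(2-m_1)=2m_1-m_1^2\le 1$; hence $\tfrac12\norm{\rho-\delta}_1\le\mathrm d(\rho,\delta)$. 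For the upper bound I may assume $\varepsilon:=\norm{\rho-\delta}_\infty/\eta(\rho)<1$ — which already forces $\delta=\rho-(\rho-\delta)\ge(\eta(\rho)-\norm{\rho-\delta}_\infty)\mbI_D>0$ — since otherwise $\norm{\rho-\delta}_1\ge\norm{\rho-\delta}_\infty\ge\eta(\rho)\ge\eta(\rho)\,\mathrm d(\rho,\delta)$ trivially. When $\varepsilon<1$, I derive the \emph{asymmetric} bounds $m(\rho,\delta)\ge\frac{1}{1+\varepsilon}$ (from $\rho-\lambda\delta=(1-\lambda)\rho-\lambda(\delta-\rho)\ge\bigl[(1-\lambda)\eta(\rho)-\lambda\norm{\rho-\delta}_\infty\bigr]\mbI_D$) and $m(\delta,\rho)\ge 1-\varepsilon$ (from $\delta-\lambda\rho=(\delta-\rho)+(1-\lambda)\rho\ge\bigl[(1-\lambda)\eta(\rho)-\norm{\rho-\delta}_\infty\bigr]\mbI_D$), so $t\ge\frac{1-\varepsilon}{1+\varepsilon}$, whence $\mathrm d(\rho,\delta)=h(t)\le h\!\bigl(\tfrac{1-\varepsilon}{1+\varepsilon}\bigr)=\varepsilon\le\norm{\rho-\delta}_1/\eta(\rho)$.

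Finally, (4) is formal once (1) is available: on $\states^{\mathrm o}$ the bound $\tfrac12\norm{\rho-\delta}_1\le\mathrm d(\rho,\delta)$ makes the identity map $(\states^{\mathrm o},\mathrm d)\to(\states^{\mathrm o},\norm{\cdot}_1)$ Lipschitz, while $\mathrm d(\rho,\delta)\le\eta(\rho)^{-1}\norm{\rho-\delta}_1$, combined with continuity of $\rho\mapsto\eta(\rho)$ and the triangle inequality for $\mathrm d$, makes the inverse continuous; hence the two topologies coincide on $\states^{\mathrm o}$. I expect the only genuinely delicate point to be the constant bookkeeping inside (1): in the lower bound, realising that the elementary estimate $\norm{\rho-\delta}_1\le 2(1-m_1)$ is exactly dominated by $h(t)$ thanks to the inequality $m_1(2-m_1)\le 1$; and in the upper bound, selecting the asymmetric lower bounds $m(\rho,\delta)\ge(1+\varepsilon)^{-1}$, $m(\delta,\rho)\ge 1-\varepsilon$ precisely so that $h\bigl(\tfrac{1-\varepsilon}{1+\varepsilon}\bigr)$ collapses exactly to $\varepsilon$. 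Everything else is routine linear algebra, and the degenerate boundary cases are absorbed into the trivial regime $\varepsilon\ge 1$.
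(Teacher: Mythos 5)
Your proof is correct. Note, however, that the paper does not prove this proposition at all: it is imported verbatim from \cite[Lemma~3.9]{Movassagh_2022}, and the surrounding remark explicitly states that the properties of $\mathrm d$ are used ``without reproving.'' So there is no in-paper argument to compare against; what you have supplied is a self-contained derivation, and it holds up. The reductions of (2) and (3) to the support characterization of $m(A,B)>0$ are standard and sound (the only implicit step worth making explicit is that the supremum defining $m$ is attained, because $\{\lambda\ge 0:\lambda B\le A\}$ is closed and bounded, which you need for $R=\rho-m_1\delta\ge 0$ in the lower bound). The two quantitative halves of (1) are the substantive part and both check out: for the lower bound, $\norm{\rho-\delta}_1\le 2(1-m_1)$ combined with $m_2(2-m_1)\le m_1(2-m_1)\le 1$ (i.e.\ $-(m_1-1)^2\le 0$) gives exactly $1-m_1\le h(m_1m_2)$; for the upper bound, the asymmetric estimates $m(\rho,\delta)\ge(1+\varepsilon)^{-1}$ and $m(\delta,\rho)\ge 1-\varepsilon$ with $\varepsilon=\norm{\rho-\delta}_\infty/\eta(\rho)$ collapse to $\mathrm d\le\varepsilon$ via the identity $h\bigl(\tfrac{1-\varepsilon}{1+\varepsilon}\bigr)=\varepsilon$, and the regime $\varepsilon\ge 1$ is handled by $\mathrm d\le 1$. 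Item (4) then follows from the two-sided bound pointwise in $\rho$, as you say. The only cosmetic caveat is that the upper bound in (1) is vacuous (or requires the convention $1/0=\infty$) when $\rho\in\partial\states$, but that is a feature of the statement as quoted, not of your argument.
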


\begin{lemma}
\label{lemma:cnum_properties}
    Suppose \(\phi\in\mapspace\) satisfies \(\ker{\phi}\cap\states=\emptyset\). Then:
    \begin{enumerate}
        \item For all \(\rho,\delta\in\states\),
        \[
            \mathrm d\!\bigl(\phi\proj\rho,\phi\proj\delta\bigr)
            \le
            \cnum{\phi}\,\mathrm d(\rho,\delta).
        \]
        \item \(\cnum{\phi}\le1\), and if \(\phi\) is strictly positive, then \(\cnum{\phi}<1\).
        \item If there exist \(\rho,\delta\in\states\) with \(\phi\proj\rho\in\states^{\mathrm o}\) and \(\phi\proj\delta\in\partial\states\), then \(\cnum{\phi}=1\).
        \item If \(\psi\in\mapspace\) also satisfies \(\ker{\psi}\cap\states=\emptyset\), then
        \[
            \cnum{\phi\circ\psi}\le \cnum{\phi}\,\cnum{\psi}.
        \]
        \item If additionally \(\ker{\phi\adj}\cap\states=\emptyset\), then
        \[
            \cnum{\phi}=\cnum{\phi\adj}.
        \]
    \end{enumerate}
\end{lemma}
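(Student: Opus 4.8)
The plan is to obtain items (2)--(4) from the contraction estimate (1) together with \Cref{prop:d}, and to prove (1) and (5) by the Birkhoff--Hopf circle of ideas. Throughout I would write $\Theta(A,B):=-\log\!\big(m(A,B)\,m(B,A)\big)\in[0,+\infty]$ for Hilbert's projective metric and $M(A,B):=\inf\{\mu\ge0:A\le\mu B\}=1/m(B,A)$; from \eqref{eq:d} one has $\mathrm d(A,B)=\tanh\!\big(\tfrac12\Theta(A,B)\big)$, and hence $\cnum{\phi}=\tanh\!\big(\tfrac12\Delta(\phi)\big)$ where $\Delta(\phi):=\sup_{A,B\in\states}\Theta(\phi A,\phi B)\in[0,+\infty]$ is the projective diameter of the image. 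I would also record the functoriality $(\phi\circ\psi)\proj A=\phi\proj(\psi\proj A)$ and the fact $\ker(\phi\circ\psi)\cap\states=\emptyset$, both immediate from the kernel hypotheses (all normalising denominators stay nonzero).

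\emph{Items (2)--(4).} For (2): $\cnum{\phi}\le1$ since $\mathrm d\le1$ (\Cref{prop:d}(2)); if moreover $\phi$ is positivity improving, then $\{\phi\proj A:A\in\states\}$ is a compact subset of the \emph{open} set $\states^{\mathrm{o}}$, so by continuity of $\mathrm d$ there (\Cref{prop:d}(4)) the supremum in \Cref{dfn:cnum} is attained at a pair of interior states, where \Cref{prop:d}(3) forces the value to be $<1$. For (3): if $\phi\proj\rho\in\states^{\mathrm{o}}$ and $\phi\proj\delta\in\partial\states$ then $\mathrm d(\phi\proj\rho,\phi\proj\delta)=1$ by \Cref{prop:d}(3), hence $\cnum{\phi}=1$ together with (2). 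For (4): from (1) and functoriality, $\mathrm d\big((\phi\circ\psi)\proj A,(\phi\circ\psi)\proj B\big)\le\cnum{\phi}\,\mathrm d(\psi\proj A,\psi\proj B)\le\cnum{\phi}\,\cnum{\psi}$ for all $A,B\in\states$; taking the supremum gives $\cnum{\phi\circ\psi}\le\cnum{\phi}\,\cnum{\psi}$.

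\emph{Item (1).} Trivial cases first: if $\rho=\delta$ both sides vanish, and if $\mathrm d(\rho,\delta)=1$ then $\mathrm d(\phi\proj\rho,\phi\proj\delta)\le\cnum{\phi}=\cnum{\phi}\,\mathrm d(\rho,\delta)$ straight from \Cref{dfn:cnum}. Otherwise $\rho\ne\delta$ and $\rho,\delta$ are comparable; put $\alpha:=m(\rho,\delta)\in(0,1)$, $\beta:=m(\delta,\rho)\in(0,1)$, so $\alpha\beta<1$. The matrices $P:=\phi(\rho-\alpha\delta)$, $Q:=\phi(\delta-\beta\rho)$ are nonzero positive semidefinite (normalise $\rho-\alpha\delta,\ \delta-\beta\rho$ to states and use $\ker\phi\cap\states=\emptyset$), and a direct computation gives $(1-\alpha\beta)\phi(\rho)=P+\alpha Q$ and $(1-\alpha\beta)\phi(\delta)=\beta P+Q$. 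With $\gamma:=M(Q,P)$ and $\gamma':=M(P,Q)$ (so $\gamma\gamma'=e^{\Theta(P,Q)}$), elementary manipulation of these order relations yields
\[
  m(\phi\rho,\phi\delta)\ \ge\ \frac{1+\alpha\gamma}{\beta+\gamma},\qquad
  M(\phi\rho,\phi\delta)\ \le\ \frac{\alpha+\gamma'}{1+\beta\gamma'},
\]
hence $\Theta(\phi\proj\rho,\phi\proj\delta)\le\log\dfrac{(\alpha+\gamma')(\beta+\gamma)}{(1+\alpha\gamma)(1+\beta\gamma')}$. Since the normalisations of $P,Q$ lie in $\phi\proj\states$ we have $\gamma\gamma'\le e^{\Delta(\phi)}$, so $\cnum{\phi}\ge\tanh\!\big(\tfrac12\log\gamma\gamma'\big)$, and item (1) reduces to the scalar inequality
\[
  \tanh\!\Big(\tfrac12\log\tfrac{(\alpha+\gamma')(\beta+\gamma)}{(1+\alpha\gamma)(1+\beta\gamma')}\Big)
  \ \le\
  \tanh\!\Big(\tfrac12\log(\gamma\gamma')\Big)\,\tanh\!\Big(\tfrac12\log\tfrac1{\alpha\beta}\Big),
  \qquad \alpha,\beta\in(0,1),\ \gamma,\gamma'>0.
\]
This last step is the \emph{main obstacle}: it is precisely where the sharp Birkhoff--Hopf contraction constant appears. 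I would prove it by elementary calculus, exploiting the symmetry $(\alpha,\beta,\gamma,\gamma')\mapsto(\beta,\alpha,\gamma',\gamma)$ of the expression and the constraint linking $\gamma\gamma'$ to the diameter; one checks (and this already works at first order, where it amounts to $\tfrac12\big(\tfrac{1-\alpha}{1+\alpha}+\tfrac{1-\beta}{1+\beta}\big)\le\tfrac{1-\alpha\beta}{1+\alpha\beta}$) that the bound is attained only in the limit $\alpha\to0$ or $\beta\to0$, i.e.\ on near-boundary configurations. Alternatively one may simply invoke the Birkhoff--Hopf theorem for the positive semidefinite cone as used in \cite{Movassagh_2022}.

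\emph{Item (5).} Here I would use a cross-ratio description of the diameter. For $X\ge0$ one has $\tr{\phi(A)X}=\tr{A\,\phi\adj(X)}$, and $M(\phi A,\phi B)=\sup_{X\ge0,\,X\ne0}\tr{\phi(A)X}/\tr{\phi(B)X}$, where the supremum may be restricted to rank-one $X$ (the ratio is quasi-linear on the psd cone, so extremised at extreme rays; the same holds for $m$ with an infimum). Combining the formulas for $M$ and $m$,
\[
  e^{\Delta(\phi)}
  \;=\;
  \sup\ \frac{\tr{\phi(A)X}\,\tr{\phi(B)Y}}{\tr{\phi(A)Y}\,\tr{\phi(B)X}}
  \;=\;
  \sup\ \frac{\tr{A\,\phi\adj(X)}\,\tr{B\,\phi\adj(Y)}}{\tr{A\,\phi\adj(Y)}\,\tr{B\,\phi\adj(X)}},
\]
the suprema over rank-one $A,B,X,Y\in\mbM_D$. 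The last expression is invariant under interchanging $\phi$ with $\phi\adj$ (together with relabelling $\{A,B\}\leftrightarrow\{X,Y\}$), so $\Delta(\phi)=\Delta(\phi\adj)$; the value $+\infty$ is allowed, a degenerate (vanishing-denominator) configuration for $\phi$ corresponding to one for $\phi\adj$ via $\ker\phi\adj\cap\states=\emptyset$ (which guarantees $\phi\adj$ does not annihilate the relevant rank-one test matrices). Since $\cnum{\,\cdot\,}=\tanh\!\big(\tfrac12\Delta(\,\cdot\,)\big)$ is monotone in the diameter, $\cnum{\phi}=\cnum{\phi\adj}$.
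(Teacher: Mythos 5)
The paper does not actually prove this lemma: \Cref{remark:c_and_d} explicitly imports the properties of $\mathrm d$ and $\cnum{\,\cdot\,}$ from \cite{Movassagh_2022} (Lemma~3.10 there) without reproving them. Your proposal is therefore a genuinely different route, namely a self-contained derivation, and after checking it I find it essentially correct. Items (2)--(4) do follow from item (1) together with \Cref{prop:d} exactly as you argue (for (2), compactness of $\phi\proj\states$ inside $\states^{\mathrm o}$ plus \Cref{prop:d}(3)--(4) is the right mechanism), and your cross-ratio/duality identity for item (5) is a clean way to get $\cnum{\phi}=\cnum{\phi\adj}$; the self-duality of the positive semidefinite cone is what makes $M(P,Q)=\sup_{X\ge0}\tr{PX}/\tr{QX}$ legitimate. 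For item (1), the decomposition $(1-\alpha\beta)\phi(\rho)=P+\alpha Q$, $(1-\alpha\beta)\phi(\delta)=\beta P+Q$ and the resulting bounds on $m(\phi\rho,\phi\delta)$ and $M(\phi\rho,\phi\delta)$ are correct. What the paper's citation buys is brevity; what your argument buys is independence of the appendix from \cite{Movassagh_2022} and a transparent view of where $\ker\phi\cap\states=\emptyset$ enters.

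The one place you stop short, the four-parameter scalar inequality you call the ``main obstacle,'' is in fact not an obstacle, and no calculus or symmetry argument is needed. With $t:=\alpha\beta\in(0,1)$ and $s:=\gamma\gamma'\ge1$, expanding the products gives
\[
  \tanh\!\Big(\tfrac12\log\tfrac{(\alpha+\gamma')(\beta+\gamma)}{(1+\alpha\gamma)(1+\beta\gamma')}\Big)
  \;=\;\frac{(1-t)(s-1)}{(1+t)(1+s)+2(\alpha\gamma+\beta\gamma')}
  \;\le\;\frac{(1-t)(s-1)}{(1+t)(1+s)}
  \;=\;\tanh\!\big(\tfrac12\log s\big)\,\tanh\!\big(\tfrac12\log\tfrac1t\big),
\]
using $\tanh(\tfrac12\log R)=\tfrac{R-1}{R+1}$: both sides have the same nonnegative numerator, and the left denominator exceeds the right by $2(\alpha\gamma+\beta\gamma')>0$. (The degenerate case $\gamma\gamma'=\infty$ forces $\cnum{\phi}=1$ and is covered by the trivial bound.) So your reduction already did all the work; I would only ask you to replace the appeal to ``elementary calculus'' and the unproved first-order check by this two-line computation, and note that invoking the classical Birkhoff--Hopf theorem (contraction of $\Theta$ by $\tanh(\Delta/4)$) would \emph{not} directly yield the stated $\mathrm d$-contraction, since $\tanh(kx)\le\tfrac{2k}{1+k^2}\tanh(x)$ fails for large $x$; one genuinely needs the $\tanh(\Theta/2)$ version, which is what you (and \cite{Movassagh_2022}) prove.
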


\begin{cor}
    \label{cor:appn_sp_iff_cnum_1}
    Assume \(\ker{\phi}\cap\states=\emptyset\) and \(\ker{\phi\adj}\cap\states=\emptyset\). Then
    \[
        \cnum{\phi}<1
        \quad\Longleftrightarrow\quad
        \phi \text{ is strictly positive}.
    \]
\end{cor}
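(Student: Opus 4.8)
The plan is to treat the two implications separately. The forward direction --- if $\phi$ is strictly positive then $\cnum{\phi}<1$ --- requires no new argument: it is exactly part~(2) of \Cref{lemma:cnum_properties}, so I would simply cite it. All the content sits in the converse.

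For the converse I would argue contrapositively and invoke \Cref{lemma:cnum_properties}(3): to conclude $\cnum{\phi}=1$ it suffices to produce $\rho,\delta\in\states$ with $\phi\proj\rho\in\states^{\mathrm o}$ and $\phi\proj\delta\in\partial\states$. The state $\delta$ is cheap: if $\phi$ fails to be strictly positive there is $X\ge0$, $X\neq0$ with $\phi(X)$ rank-deficient, and since $\ker\phi\cap\states=\emptyset$ forces $\phi(X)\neq0$, the normalization $\delta:=X/\tr{X}$ satisfies $\phi\proj\delta\in\partial\states$. The real point is to produce $\rho$ with \emph{strictly} positive image, and here I would take $\rho:=\mbI_D/D$ and prove $\phi(\mbI_D)>0$ from the remaining hypothesis $\ker{\phi\adj}\cap\states=\emptyset$: if $v$ were a unit kernel vector of $\phi(\mbI_D)$, the pairing identity $\bra{v}\phi(\mbI_D)\ket{v}=\tr{\phi\adj(\ketbra{v}{v})}$ together with positivity of $\phi\adj$ (the Hilbert--Schmidt adjoint of a positive map is positive, since $\tr{\phi\adj(A)B}=\tr{A\,\phi(B)}\ge0$ for all $A,B\ge0$) would force $\phi\adj(\ketbra{v}{v})=0$, contradicting the kernel hypothesis for $\phi\adj$. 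With both $\rho$ and $\delta$ in hand, \Cref{lemma:cnum_properties}(3) gives $\cnum{\phi}=1$, which is the contrapositive of $\cnum{\phi}<1\Rightarrow\phi\ \text{strictly positive}$. An essentially equivalent, slightly slicker variant I might use instead: assuming $\cnum{\phi}<1$, apply \Cref{lemma:cnum_properties}(1) to the pair $(\mbI_D/D,\delta)$ to get $\mathrm d(\phi\proj(\mbI_D/D),\phi\proj\delta)\le\cnum{\phi}<1$ for every $\delta\in\states$; since $\phi\proj(\mbI_D/D)\in\states^{\mathrm o}$, \Cref{prop:d}(3) says the value $1$ would characterize $\phi\proj\delta\in\partial\states$, so the strict bound forces $\phi(\delta)>0$ for all $\delta\in\states$, i.e.\ $\phi$ is strictly positive.

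I expect the only genuine step --- everything else being a direct invocation of \Cref{lemma:cnum_properties} and \Cref{prop:d} --- to be the observation that the maximally mixed state is mapped projectively into $\states^{\mathrm o}$, and in particular noticing that this is precisely what the adjoint kernel hypothesis (plus positivity of $\phi\adj$) delivers. Once that is in place the corollary is immediate.
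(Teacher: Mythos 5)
Your proposal is correct and follows essentially the same route as the paper: the easy direction is \Cref{lemma:cnum_properties}(2), and the hard direction exhibits a state sent to the boundary together with one sent to the interior and invokes \Cref{lemma:cnum_properties}(3). The only difference is that you explicitly justify, via the pairing $\bra{v}\phi(\mbI_D)\ket{v}=\tr{\phi\adj(\ketbra{v}{v})}$, the claim that the adjoint-kernel hypothesis forces a full-rank image of the maximally mixed state --- a step the paper's proof asserts without argument --- so your write-up is, if anything, slightly more complete.
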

    \begin{proof}
        If \(\phi\) is strictly positive, then \(\cnum{\phi}<1\) by \Cref{lemma:cnum_properties}(2).
        Conversely, suppose \(\cnum{\phi}<1\), and assume toward a contradiction that \(\phi\) is not strictly positive.
        Then there exists \(\delta\in\states\) such that \(\phi(\delta)\in\partial\states\).
        We first observe that \(\phi\) maps \(\states^{\mathrm o}\) into \(\states^{\mathrm o}\). Indeed, let \(\rho\in\states^{\mathrm o}\). If \(\phi(\rho)\) had a nontrivial kernel, then for some rank-one projection \(P\in\states\) we would have
        \[
            0=\tr{P\phi(\rho)}=\tr{\phi\adj(P)\rho}.
        \]
        Since \(\phi\adj(P)\ge0\) and \(\rho>0\), this forces \(\phi\adj(P)=0\), contradicting \(\ker{\phi\adj}\cap\states=\emptyset\).
        Thus \(\phi\proj\rho\in\states^{\mathrm o}\) for every \(\rho\in\states^{\mathrm o}\).
        Hence there exist \(\rho,\delta\in\states\) such that \(\phi\proj\rho\in\states^{\mathrm o}\) and \(\phi\proj\delta\in\partial\states\). By \Cref{lemma:cnum_properties}(3), this forces \(\cnum{\phi}=1\), a contradiction.
    \end{proof}

\begin{remark}[Measurability convention]
\label{rem:cnum_measurable}
    Whenever \(\omega\mapsto\phi_\omega\) is a measurable positive random map satisfying
    \[
        \ker{\phi_\omega}\cap\states=\emptyset
        \qquad\text{almost surely},
    \]
    we regard \(\omega\mapsto\cnum{\phi_\omega}\) as a measurable random variable after redefining it arbitrarily off the full-measure kernel event. This convention is harmless in almost all arguments below.
    See \Cite{Pathirana_2025} for measurability of \(\cnum{\,\cdot\,}\). 
\end{remark}

\begin{lemma}[Deterministic rank--one approximation]
\label{lemma:deterministic_rank_one_reference}
    Let \(T\in\mapspace\) be a positive map satisfying \(\ker{T}\cap\states=\emptyset\). 
    Set
    \[
        \rho_*:=\frac{\mbI_D}{D},
        \qquad
        s_T:=\tr{T\adj(\mbI_D)}.
    \]
    Then \(s_T>0\).
    Define
    \[
        \widehat T:=\frac{T}{s_T},
        \qquad
        r_T:=T\proj\rho_*,
        \qquad
        \ell_T:=T\adj\proj\rho_* =
        \frac{T\adj(\mbI_D)}{\tr{T\adj(\mbI_D)}}.
    \]
    Define the rank--one super operator, \(\Xi_T(X) := \tr{\ell_T X}\,r_T \), for all  \(X\in\mbM_D\).
    Then
    \[
        \norm{\widehat T-\Xi_T}_{1\to1}
        \le
        4\,\cnum{T}.
    \]
\end{lemma}

\begin{proof}
    Since \(T\) is positive and \(\ker{T}\cap\states=\emptyset\), one has \(T(\rho)\ge0\) and \(T(\rho)\neq0\) for every \(\rho\in\states\).
    Hence
    \[
        \tr{T(\rho)}>0
        \qquad
        \text{for every }\rho\in\states.
    \]
    In particular,
    \[
        s_T
        =
        \tr{T\adj(\mbI_D)}
        =
        D\,\tr{T(\rho_*)}
        >
        0.
    \]
    Let \(\rho\in\states\).
    Since
    \[
        \tr{T(\rho)}
        =
        \tr{T\adj(\mbI_D)\rho},
    \]
    we have
    \[
        \frac{\tr{T(\rho)}}{s_T}
        =
        \tr{\ell_T\rho}.
    \]
    Therefore
    \[
        \widehat T(\rho)
        =
        \frac{T(\rho)}{s_T}
        =
        \tr{\ell_T\rho}\,T\proj\rho.
    \]
    It follows that
    \[
        \widehat T(\rho)-\Xi_T(\rho)
        =
        \tr{\ell_T\rho}\,
        \left(
            T\proj\rho-T\proj\rho_*
        \right).
    \]
    Since \(\ell_T\in\states\), we have
    \[
        0\le\tr{\ell_T\rho}\le1.
    \]
    Consequently,
    \[
    \begin{aligned}
        \norm{\widehat T(\rho)-\Xi_T(\rho)}_1
        &\le
        \norm{T\proj\rho-T\proj\rho_*}_1
        \\
        &\le
        2\,\mathrm d\left(T\proj\rho,T\proj\rho_*\right)
        \\
        &\le
        2\,\cnum{T}\,\mathrm d\left(\rho,\rho_*\right)
        \\
        &\le
        2\,\cnum{T}.
    \end{aligned}
    \]
    Here we used \Cref{prop:d}(1), \Cref{lemma:cnum_properties}(1), and \Cref{prop:d}(2).
    Now let \(X\in\mbM_D\).
    Write
    \[
        X=A+iB,
        \qquad
        A=A\adj,
        \qquad
        B=B\adj.
    \]
    Decompose
    \[
        A=A_+-A_-,
        \qquad
        B=B_+-B_-,
    \]
    with \(A_\pm,B_\pm\ge0\).
    After omitting zero terms and normalizing the nonzero positive parts, one may write
    \[
        X=\sum_{j=1}^4\lambda_j\rho_j,
        \qquad
        \rho_j\in\states,
        \qquad
        \sum_{j=1}^4|\lambda_j|\le2\norm{X}_1.
    \]
    By linearity,
    \[
    \begin{aligned}
        \norm{\left(\widehat T-\Xi_T\right)(X)}_1
        &\le
        \sum_{j=1}^4|\lambda_j|\,
        \norm{\left(\widehat T-\Xi_T\right)(\rho_j)}_1
        \\
        &\le
        2\,\cnum{T}\sum_{j=1}^4|\lambda_j|
        \\
        &\le
        4\,\cnum{T}\,\norm{X}_1.
    \end{aligned}
    \]
    Taking the supremum over \(\norm{X}_1\le1\) gives the claim.
\end{proof}

We now prove the quantitative refinement of \Cref{lemma:boundary_states_main}.
The qualitative lemma stated in the main text follows immediately from the result below.

\begin{lemma}[Quantitative refinement of \texorpdfstring{\Cref{lemma:boundary_states_main}}{the boundary-state lemma}]
\label{lemma:boundary_states_quantitative}
    Let \((\mcA_n)_{n\in\mbZ}\subset\ten\) be the stationary realization from \Cref{sec:Main_Results}, and let \((\phi_n^\omega)_{n\in\mbZ}\) be the associated transfer maps on \(\mbM_D\).
    Assume \Cref{assumption1,assumption2}.
    Then there exists a \(\theta\)-invariant set \(\Omega_0\subseteq\Omega\) with \(\pr(\Omega_0)=1\) such that, for every \(\omega\in\Omega_0\), there exist two families of states
    \[
        \{Z_n(\omega)\}_{n\in\mbZ}\subset\states^{\mathrm o},
        \qquad
        \{Z_n'(\omega)\}_{n\in\mbZ}\subset\states^{\mathrm o},
    \]
    with the following properties.

    \begin{enumerate}
        \item[\emph{(A)}] \textbf{Projective limits and cocycle relations.}
        For every fixed \(k\in\mbZ\),
        \begin{equation}
        \label{eq:quant_projective_limits}
            \lim_{N\to\infty}
            \bigl(\phi_{k+N}^\omega\circ\cdots\circ\phi_k^\omega\bigr)\adj\proj\states
            =
            \{Z_k'(\omega)\},
            \qquad
            \lim_{N\to\infty}
            \bigl(\phi_k^\omega\circ\cdots\circ\phi_{-N}^\omega\bigr)\proj\states
            =
            \{Z_k(\omega)\},
        \end{equation}
        in \(\norm{\,\cdot\,}_1\), uniformly over the initial state. Moreover,
        \begin{equation}
        \label{eq:quant_cocycle}
            \phi_k^\omega\proj Z_{k-1}(\omega)=Z_k(\omega),
            \qquad
            (\phi_k^\omega)\adj\proj Z_{k+1}'(\omega)=Z_k'(\omega)
            \qquad\text{for all }k\in\mbZ.
        \end{equation}
        The boundary families may be chosen equivariantly:
        \[
            Z_k(\omega)=Z_0(\theta^k\omega),
            \qquad
            Z_k'(\omega)=Z_0'(\theta^k\omega).
        \]
    
        \item[\emph{(B)}] \textbf{Asymptotic rank--one form.}
        For every \(m\le n\),
        \begin{equation}
        \label{eq:rank-one-approx}
            \norm{
                \frac{\phi_n^\omega\circ\cdots\circ\phi_m^\omega}
                     {\tr{(\phi_n^\omega\circ\cdots\circ\phi_m^\omega)\adj(\mbI_D)}}
                -
                \Xi_{[m,n]}^\omega
            }_{1\to1}
            \le
            8\,\cnum{\phi_n^\omega\circ\cdots\circ\phi_m^\omega},
        \end{equation}
        where
        \[
            \Xi_{[m,n]}^\omega(X)
            :=
            \tr{Z_m'(\omega)\,X}\,Z_n(\omega),
            \qquad X\in\mbM_D.
        \]
    
        \item[\emph{(C)}] \textbf{Two-sided Lyapunov-type contraction exponent.}
        There exists a \(\theta\)-invariant random variable
        \[
            \xi:\Omega\to[-\infty,0)
        \]
        such that, for every \(\omega\in\Omega_0\) and every \(x\in\mbZ\),
        \begin{equation}
        \label{eq:xi_plus_all_x}
            \lim_{k\to\infty}
            \frac1k
            \ln\cnum{\phi_{x+k-1}^\omega\circ\cdots\circ\phi_x^\omega}
            =
            \xi(\omega),
        \end{equation}
        and
        \begin{equation}
        \label{eq:xi_minus_all_x}
            \lim_{k\to\infty}
            \frac1k
            \ln\cnum{\phi_x^\omega\circ\cdots\circ\phi_{x-k+1}^\omega}
            =
            \xi(\omega).
        \end{equation}
        In particular,
        \[
            \cnum{\phi_n^\omega\circ\cdots\circ\phi_m^\omega}\to0
            \quad\text{as }n\to+\infty\text{ with }m\text{ fixed},
        \]
        and
        \[
            \cnum{\phi_n^\omega\circ\cdots\circ\phi_m^\omega}\to0
            \quad\text{as }m\to-\infty\text{ with }n\text{ fixed}.
        \]
    \end{enumerate}
    \end{lemma}
    \begin{proof}
        We divide the proof into four steps.
        
        \medskip
        \noindent
        \textbf{Step 1: almost sure exponential contraction in both spatial directions.}
        
        Let
        \[
            E_{\mathrm{ker}}
            :=
            \Bigl\{
                \omega\in\Omega:
                \ker{\phi_0^\omega}\cap\states=\emptyset
                \text{ and }
                \ker{(\phi_0^\omega)\adj}\cap\states=\emptyset
            \Bigr\}.
        \]
        By \Cref{assumption1}, \(\pr(E_{\mathrm{ker}})=1\). Define
        \[
            \Omega_{\mathrm{ker}}:=\bigcap_{j\in\mbZ}\theta^{-j}(E_{\mathrm{ker}}).
        \]
        Then \(\Omega_{\mathrm{ker}}\) is \(\theta\)-invariant and \(\pr(\Omega_{\mathrm{ker}})=1\), since $\theta$ is $\pr$-preserving.
        For \(\omega\in\Omega_{\mathrm{ker}}\), all maps \(\phi_j^\omega\) and \((\phi_j^\omega)\adj\) satisfy the kernel hypotheses of \Cref{dfn:cnum}, for every \(j\in\mbZ\).
        By induction, the same is true for every finite composition and its adjoint.
        
        For \(n\ge1\), define the forward and backward blocks
        \[
            \Phi_{\omega,+}^{(n)}
            :=
            \phi_{n-1}^\omega\circ\cdots\circ\phi_0^\omega,
            \qquad
            \Phi_{\omega,-}^{(n)}
            :=
            \phi_0^\omega\circ\phi_{-1}^\omega\circ\cdots\circ\phi_{-n+1}^\omega.
        \]
        Set
        \[
            F_n^+(\omega):=\ln\cnum{\Phi_{\omega,+}^{(n)}},
            \qquad
            F_n^-(\omega):=\ln\cnum{\Phi_{\omega,-}^{(n)}}.
        \]
        Both variables take values in \([-\infty,0]\), and \((F_1^\pm)^+=0\in L^1(\pr)\).
        
        For the forward blocks, submultiplicativity gives
        \[
            F_{n+m}^+(\omega)
            \le
            F_n^+(\omega)+F_m^+(\theta^n\omega)
            \qquad(m,n\ge1).
        \]
        For the backward blocks, submultiplicativity gives
        \[
            F_{n+m}^-(\omega)
            \le
            F_n^-(\omega)+F_m^-(\theta^{-n}\omega)
            \qquad(m,n\ge1).
        \]
        Indeed,
        \[
            \Phi_{\omega,-}^{(n+m)}
            =
            \Phi_{\omega,-}^{(n)}
            \circ
            \Phi_{\theta^{-n}\omega,-}^{(m)}.
        \]

        Let
        \[
            \mcJ:=\{A\in\mcF:\theta^{-1}A=A \text{ mod }\pr\}
        \]
        be the invariant \(\sigma\)-algebra.
        Since the positive parts of \(F_n^\pm\) are integrable, Kingman's subadditive ergodic theorem \cite{Kingman_1973} applies in the extended-valued sense. We also note that it is enough that \((F_n^\pm)^+\) is integrable to use a generalized notion of conditional expectation without requiring \(F_n^\pm\in L^1(\pr)\), see \cite[Remark~8.16]{Klenke_2020}. Thus there exist invariant random variables
        \[
            \xi_+,\xi_-:\Omega\to[-\infty,0]
        \]
        such that
        \begin{equation}
        \label{eq:kingman_plus}
            \lim_{n\to\infty}\frac1n F_n^+(\omega)=\xi_+(\omega),
        \end{equation}
        and
        \begin{equation}
        \label{eq:kingman_minus}
            \lim_{n\to\infty}\frac1n F_n^-(\omega)=\xi_-(\omega)
        \end{equation}
        almost surely. Moreover, by the theory of subadditive functions \cite{hille1996functional}, applied pointwise to the conditional expectations, we have
        \begin{equation}
        \label{eq:kingman_plus_inf}
            \xi_+=\inf_{n\ge1}\frac1n\,\mbE[F_n^+\mid\mcJ],
        \end{equation}
        and
        \begin{equation}
        \label{eq:kingman_minus_inf}
            \xi_-=\inf_{n\ge1}\frac1n\,\mbE[F_n^-\mid\mcJ],
        \end{equation}
        almost surely.
        
        We next show that \(\xi_+=\xi_-\) almost surely. For every \(n\ge1\),
        \[
            F_n^-(\omega)=F_n^+(\theta^{-n+1}\omega).
        \]
        Since \(\mcJ\) is invariant under \(\theta\), conditional expectation onto \(\mcJ\) is unchanged by composition with powers of \(\theta\). Hence
        \[
            \mbE[F_n^-\mid\mcJ]
            =
            \mbE[F_n^+\mid\mcJ].
        \]
        Using \eqref{eq:kingman_plus_inf} and \eqref{eq:kingman_minus_inf}, we obtain
        \[
            \xi_-
            =
            \inf_{n\ge1}\frac1n\,\mbE[F_n^-\mid\mcJ]
            =
            \inf_{n\ge1}\frac1n\,\mbE[F_n^+\mid\mcJ]
            =
            \xi_+
            \qquad\text{almost surely}.
        \]
        We denote their common value by
        \[
            \xi:=\xi_+=\xi_-.
        \]
        
        We claim that \(\xi<0\) almost surely.
        Let
        \[
            A:=\{\xi=0\}\in\mcJ.
        \]
        Since \(F_n^+\le0\), on \(A\) the identity \eqref{eq:kingman_plus_inf} gives
        \[
            0=\xi\le\frac1n\mbE[F_n^+\mid\mcJ]\le0.
        \]
        Thus \(\mbE[F_n^+\mid\mcJ]=0\) almost surely on \(A\), and hence
        \[
            \mbE[F_n^+\mathbf1_A]
            =
            \mbE\!\left[\mathbf1_A\mbE(F_n^+\mid\mcJ)\right]
            =
            0.
        \]
        Since \(F_n^+\mathbf1_A\le0\), this implies \(F_n^+=0\) almost surely on \(A\), for every \(n\). Equivalently,
        \[
            \cnum{\Phi_{\omega,+}^{(n)}}=1
            \qquad\text{for every }n\ge1
        \]
        for almost every \(\omega\in A\).
        
        On the other hand, by \Cref{assumption2} and \Cref{prop:ESP}, \(\Phi_{\omega,+}^{(n)}\) is strictly positive for all sufficiently large \(n\), almost surely. On \(\Omega_{\mathrm{ker}}\), \Cref{cor:appn_sp_iff_cnum_1} applies to these finite compositions, and therefore
        \[
            \cnum{\Phi_{\omega,+}^{(n)}}<1
        \]
        for all sufficiently large \(n\), almost surely. Hence \(\pr(A)=0\), and \(\xi<0\) almost surely.
        
        Now choose a full-measure set \(\Omega_1\subseteq\Omega_{\mathrm{ker}}\) on which the two Kingman limits hold and on which \(\xi<0\). Define
        \[
            \Omega_0:=\bigcap_{j\in\mbZ}\theta^{-j}\Omega_1.
        \]
        Then \(\Omega_0\) is \(\theta\)-invariant and \(\pr(\Omega_0)=1\).
        For \(\omega\in\Omega_0\) and \(x\in\mbZ\), applying the forward limit to \(\theta^x\omega\) gives
        \[
            \lim_{k\to\infty}
            \frac1k
            \ln\cnum{\phi_{x+k-1}^\omega\circ\cdots\circ\phi_x^\omega}
            =
            \xi(\theta^x\omega)
            =
            \xi(\omega),
        \]
        because \(\xi\) is \(\theta\)-invariant.
        Similarly, applying the backward limit to \(\theta^x\omega\) gives
        \[
            \lim_{k\to\infty}
            \frac1k
            \ln\cnum{\phi_x^\omega\circ\cdots\circ\phi_{x-k+1}^\omega}
            =
            \xi(\theta^x\omega)
            =
            \xi(\omega).
        \]
        This proves \eqref{eq:xi_plus_all_x} and \eqref{eq:xi_minus_all_x}. In particular, the corresponding contraction coefficients tend to zero in both spatial directions.
        
        \medskip
        \noindent
        \textbf{Step 2: construction of the boundary states.}
        
        Fix \(k\in\mbZ\) and \(\omega\in\Omega_0\). Let
        \[
            \rho_*:=\frac{\mbI_D}{D}\in\states.
        \]
        For \(N\) sufficiently large, define
        \[
            Z_k^{(N)}(\omega)
            :=
            \bigl(\phi_k^\omega\circ\cdots\circ\phi_{-N}^\omega\bigr)\proj\rho_*,
            \qquad
            Z_k^{\prime,(N)}(\omega)
            :=
            \bigl(\phi_{k+N}^\omega\circ\cdots\circ\phi_k^\omega\bigr)\adj\proj\rho_*.
        \]
        
        We first treat \(Z_k^{(N)}\). For \(M>N\), both \(Z_k^{(N)}(\omega)\) and \(Z_k^{(M)}(\omega)\) belong to
        \[
            \mcK_N^{(k)}(\omega)
            :=
            \bigl(\phi_k^\omega\circ\cdots\circ\phi_{-N}^\omega\bigr)\proj\states,
        \]
        and \(\mcK_M^{(k)}(\omega)\subseteq\mcK_N^{(k)}(\omega)\).
        Hence
        \[
            \norm{Z_k^{(M)}(\omega)-Z_k^{(N)}(\omega)}_1
            \le
            2\,\operatorname{diam}_{\mathrm d}\!\bigl(\mcK_N^{(k)}(\omega)\bigr)
            \le
            2\,\cnum{\phi_k^\omega\circ\cdots\circ\phi_{-N}^\omega}.
        \]
        The right-hand side tends to zero by Step~1, so \((Z_k^{(N)}(\omega))_N\) is Cauchy in trace norm. Define
        \[
            Z_k(\omega):=\lim_{N\to\infty}Z_k^{(N)}(\omega)\in\states.
        \]
        
        The same argument, applied to
        \[
            \mcL_N^{(k)}(\omega)
            :=
            \bigl(\phi_{k+N}^\omega\circ\cdots\circ\phi_k^\omega\bigr)\adj\proj\states,
        \]
        shows that
        \[
            Z_k'(\omega):=\lim_{N\to\infty}Z_k^{\prime,(N)}(\omega)
        \]
        exists in trace norm and belongs to \(\states\).

        We shall use the following elementary consequence of the nested construction.
        For each fixed \(N\), the set \(\mcK_N^{(k)}(\omega)\) is compact in trace norm, because it is the continuous projective image of the compact set \(\states\).
        Moreover, for every \(M\ge N\), one has
        \[
            Z_k^{(M)}(\omega)\in\mcK_N^{(k)}(\omega).
        \]
        Passing to the limit \(M\to\infty\), and using that \(\mcK_N^{(k)}(\omega)\) is closed, gives
        \[
            Z_k(\omega)\in\mcK_N^{(k)}(\omega).
        \]
        Similarly,
        \[
            Z_k'(\omega)\in\mcL_N^{(k)}(\omega)
        \]
        for every \(N\).
        
        The maps \(\omega\mapsto Z_k(\omega)\) and \(\omega\mapsto Z_k'(\omega)\) are measurable as pointwise limits of measurable maps.
        
        We next prove that \(Z_k(\omega),Z_k'(\omega)\in\states^{\mathrm o}\).
        By Step~1, for \(N\) large enough,
        \[
            \cnum{\phi_k^\omega\circ\cdots\circ\phi_{-N}^\omega}<1.
        \]
        Since \(\omega\in\Omega_{\mathrm{ker}}\), \Cref{cor:appn_sp_iff_cnum_1} implies that \(\phi_k^\omega\circ\cdots\circ\phi_{-N}^\omega\) is strictly positive.
        Hence \(\mcK_N^{(k)}(\omega)\subset\states^{\mathrm o}\) for all sufficiently large \(N\).
        If \(Z_k(\omega)\in\partial\states\), then by \Cref{prop:d}(3),
        \[
            \mathrm d\!\bigl(Z_k^{(N)}(\omega),Z_k(\omega)\bigr)=1
        \]
        for all sufficiently large \(N\). But \(Z_k(\omega)\in\mcK_N^{(k)}(\omega)\), and hence
        \[
            \mathrm d\!\bigl(Z_k^{(N)}(\omega),Z_k(\omega)\bigr)
            \le
            \operatorname{diam}_{\mathrm d}\!\bigl(\mcK_N^{(k)}(\omega)\bigr)
            \le
            \cnum{\phi_k^\omega\circ\cdots\circ\phi_{-N}^\omega}
            \longrightarrow0,
        \]
        a contradiction. Thus \(Z_k(\omega)\in\states^{\mathrm o}\).
        The proof that \(Z_k'(\omega)\in\states^{\mathrm o}\) is identical.
        
        For the uniform projective limit, let \(\rho\in\states\).
        Since \(Z_k(\omega)\in\mcK_N^{(k)}(\omega)\), we have
        \[
            \norm{
                \bigl(\phi_k^\omega\circ\cdots\circ\phi_{-N}^\omega\bigr)\proj\rho
                -
                Z_k(\omega)
            }_1
            \le
            2\,\operatorname{diam}_{\mathrm d}
            \left(
                \mcK_N^{(k)}(\omega)
            \right)
            \le
            2\,\cnum{\phi_k^\omega\circ\cdots\circ\phi_{-N}^\omega}.
        \]
        Taking the supremum over \(\rho\in\states\) yields
        \[
            \sup_{\rho\in\states}
            \norm{
                \bigl(\phi_k^\omega\circ\cdots\circ\phi_{-N}^\omega\bigr)\proj\rho
                -
                Z_k(\omega)
            }_1
            \le
            2\,\cnum{\phi_k^\omega\circ\cdots\circ\phi_{-N}^\omega}.
        \]
        Similarly,
        \[
            \sup_{\rho\in\states}
            \norm{
                \bigl(\phi_{k+N}^\omega\circ\cdots\circ\phi_k^\omega\bigr)\adj\proj\rho
                -
                Z_k'(\omega)
            }_1
            \le
            2\,\cnum{
                \bigl(\phi_{k+N}^\omega\circ\cdots\circ\phi_k^\omega\bigr)\adj
            }.
        \]
        Since \(\omega\in\Omega_{\mathrm{ker}}\), \Cref{lemma:cnum_properties}(5) gives
        \[
            \cnum{
                \bigl(\phi_{k+N}^\omega\circ\cdots\circ\phi_k^\omega\bigr)\adj
            }
            =
            \cnum{\phi_{k+N}^\omega\circ\cdots\circ\phi_k^\omega}.
        \]
        The two right-hand sides tend to zero by Step~1.
        Hence the projective limits in \eqref{eq:quant_projective_limits} hold uniformly over the initial state.      
        
        The equivariance follows from uniqueness of these limits. Indeed,
        \[
            Z_k(\omega)=Z_0(\theta^k\omega),
            \qquad
            Z_k'(\omega)=Z_0'(\theta^k\omega).
        \]
        
        \medskip
        \noindent
        \textbf{Step 3: cocycle relations.}
        
        For \(N\) sufficiently large,
        \[
            Z_k^{(N)}(\omega)
            =
            \phi_k^\omega\proj Z_{k-1}^{(N)}(\omega).
        \]
        Since \(\omega\in\Omega_{\mathrm{ker}}\), the projective action of \(\phi_k^\omega\) is continuous on \(\states\).
        Passing to the limit \(N\to\infty\) gives
        \[
            \phi_k^\omega\proj Z_{k-1}(\omega)=Z_k(\omega).
        \]
        Similarly,
        \[
            Z_k^{\prime,(N+1)}(\omega)
            =
            (\phi_k^\omega)\adj\proj Z_{k+1}^{\prime,(N)}(\omega),
        \]
        and letting \(N\to\infty\) gives
        \[
            (\phi_k^\omega)\adj\proj Z_{k+1}'(\omega)=Z_k'(\omega).
        \]
        This proves \eqref{eq:quant_cocycle}.
        
        \medskip
        \noindent
        \textbf{Step 4: asymptotic rank--one approximation.}
        
        Fix \(m\le n\), and write
        \[
            T:=\phi_n^\omega\circ\cdots\circ\phi_m^\omega.
        \]
        By iterating the cocycle relations from Step~3,
        \[
            T\proj Z_{m-1}(\omega)=Z_n(\omega),
            \qquad
            T\adj\proj Z_{n+1}'(\omega)=Z_m'(\omega).
        \]
        
        Let \(\rho\in\states\). Then
        \begin{align}
        \label{eq:ket-bound}
            \norm{T\proj\rho-Z_n(\omega)}_1
            &=
            \norm{T\proj\rho-T\proj Z_{m-1}(\omega)}_1 \notag\\
            &\le
            2\,\mathrm d\!\bigl(T\proj\rho,T\proj Z_{m-1}(\omega)\bigr) \notag\\
            &\le
            2\,\cnum{T}\,\mathrm d\!\bigl(\rho,Z_{m-1}(\omega)\bigr) \notag\\
            &\le
            2\,\cnum{T}.
        \end{align}
        Here we used \Cref{prop:d}(1), \Cref{lemma:cnum_properties}(1), and \Cref{prop:d}(2).
        
        Since \(\omega\in\Omega_{\mathrm{ker}}\), we also have \(\ker{T\adj}\cap\states=\emptyset\), and hence \(\cnum{T\adj}=\cnum{T}\) by \Cref{lemma:cnum_properties}(5).
        Applying the same argument to \(T\adj\), for all \(\sigma\in\states\),
        \begin{equation}
        \label{eq:bra-bound-general}
            \norm{T\adj\proj\sigma-Z_m'(\omega)}_1
            \le
            2\,\cnum{T}.
        \end{equation}
        Choosing \(\sigma=\rho_*=\mbI_D/D\), we obtain
        \begin{equation}
        \label{eq:bra-bound-I}
            \norm{
                \frac{T\adj(\mbI_D)}{\tr{T\adj(\mbI_D)}}
                -
                Z_m'(\omega)
            }_1
            \le
            2\,\cnum{T}.
        \end{equation}
        
        Consequently, for any \(\rho\in\states\),
        \begin{equation}
        \label{eq:scalar-gap}
            \left|
                \frac{\tr{T(\rho)}}{\tr{T\adj(\mbI_D)}}
                -
                \tr{Z_m'(\omega)\rho}
            \right|
            =
            \bigl|\tr{X\rho}\bigr|
            \le
            \norm{X}_1
            \le
            2\,\cnum{T},
        \end{equation}
        where
        \[
            X:=
            \frac{T\adj(\mbI_D)}{\tr{T\adj(\mbI_D)}}
            -
            Z_m'(\omega),
        \]
        and we used \(|\tr{AB}|\le\norm{A}_1\norm{B}_\infty\) together with \(\norm{\rho}_\infty\le1\).
        
        Now set
        \[
            C(\rho):=\frac{\tr{T(\rho)}}{\tr{T\adj(\mbI_D)}}\,Z_n(\omega).
        \]
        Then
        \begin{align*}
            \norm{
                \frac{T(\rho)}{\tr{T\adj(\mbI_D)}}
                -
                \tr{Z_m'(\omega)\rho}\,Z_n(\omega)
            }_1
            &\le
            \norm{
                \frac{T(\rho)}{\tr{T\adj(\mbI_D)}}
                -
                C(\rho)
            }_1
            +
            \norm{
                C(\rho)
                -
                \tr{Z_m'(\omega)\rho}\,Z_n(\omega)
            }_1                                      \\
            &=
            \frac{\tr{T(\rho)}}{\tr{T\adj(\mbI_D)}}\,
            \norm{T\proj\rho-Z_n(\omega)}_1
            +
            \left|
                \frac{\tr{T(\rho)}}{\tr{T\adj(\mbI_D)}}
                -
                \tr{Z_m'(\omega)\rho}
            \right|
            \norm{Z_n(\omega)}_1.
        \end{align*}
        Since \(T\adj(\mbI_D)\ge0\) and \(\rho\in\states\),
        \[
            0
            \le
            \frac{\tr{T(\rho)}}{\tr{T\adj(\mbI_D)}}
            =
            \frac{\tr{T\adj(\mbI_D)\rho}}{\tr{T\adj(\mbI_D)}}
            \le
            1,
        \]
        and \(\norm{Z_n(\omega)}_1=\tr{Z_n(\omega)}=1\).
        Combining this with \eqref{eq:ket-bound} and \eqref{eq:scalar-gap} gives
        \begin{equation}
        \label{eq:pointwise-4c}
            \norm{
                \frac{T(\rho)}{\tr{T\adj(\mbI_D)}}
                -
                \tr{Z_m'(\omega)\rho}\,Z_n(\omega)
            }_1
            \le
            4\,\cnum{T}.
        \end{equation}
        
        Finally, let \(X\in\mbM_D\). As usual, \(X\) can be written as a linear combination of at most four states,
        \[
            X=\sum_{j=1}^4 \lambda_j\rho_j,
            \qquad
            \sum_{j=1}^4|\lambda_j|\le2\norm{X}_1.
        \]
        Applying \eqref{eq:pointwise-4c} termwise and using linearity yields
        \[
            \norm{
                \frac{T(X)}{\tr{T\adj(\mbI_D)}}
                -
                \tr{Z_m'(\omega)X}\,Z_n(\omega)
            }_1
            \le
            8\,\cnum{T}\,\norm{X}_1.
        \]
        Taking the supremum over \(\norm{X}_1\le1\) proves \eqref{eq:rank-one-approx}. This completes the proof.
        \end{proof}
        
        \begin{remark}
        \label{rem:main_qualitative_from_appendix}
        The qualitative statement of \Cref{lemma:boundary_states_main} follows from \Cref{lemma:boundary_states_quantitative} by taking
        \[
            \varepsilon_{m,n}(\omega)
            :=
            8\,\cnum{\phi_n^\omega\circ\cdots\circ\phi_m^\omega},
        \]
        on a full probability event. 
        The convergence of \(\varepsilon_{m,n}(\omega)\) in the two one-sided limits follows from \Cref{lemma:boundary_states_quantitative}(C).
    \end{remark}
    
%%%%%%%%%%%%%%%%%%%%%%%%%%%%%%%%%%%%%%%%%%%%%%%%%%%%%%%%%%%%%%%%%%%%%%%%%%%
%%%%%%%%%%%%%%%%%%%%%%%%%%%%%%%%%%%%%%%%%%%%%%%%%%%%%%%%%%%%%%%%%%%%%%%%%%%
    
\section{Examples of Random MPS Satisfying the Standing Assumptions}
\label{section:examples}

%%%%%%%%%%%%%%%%%%%%%%%%%%%%%%%%%%%%%%%%%%%%%%%%%%%%%%%%%%%%%%%%%%%%%%%%%%%%
%%%%%%%%%%%%%%%%%%%%%%%%%%%%%%%%%%%%%%%%%%%%%%%%%%%%%%%%%%%%%%%%%%%%%%%%%%%%

Before we present the examples, we need the following lemma. 

 \begin{lemma}
    \label{lem:full-span-strict-positive-appendix}
        Let $\{K_r\}_{r=1}^R\subset\mbM_D(\mbC)$ satisfy $\mathrm{span}\{K_r\}_{r=1}^R=\mbM_D(\mbC)$ and define
        $\Psi(X):=\sum_{r=1}^R K_r X K_r\adj$. Then $\Psi$ is \emph{strictly positive}:
        for every non-zero positive semi-definite $X\in\matrices$, we have $\Psi(X)$ is strictly positive definite.
    \end{lemma}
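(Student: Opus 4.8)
The plan is to argue by contradiction, extracting positivity termwise and then invoking the spanning hypothesis. First I would observe that when $X\ge 0$ each summand $K_r X K_r\adj$ is positive semidefinite, so $\Psi(X)=\sum_{r=1}^R K_r X K_r\adj\ge 0$ holds automatically and the only content of the lemma is \emph{strict} positive definiteness. So suppose, toward a contradiction, that $X\ge 0$, $X\neq 0$, yet $\Psi(X)$ is not strictly positive definite. Then there is a unit vector $v\in\mbC^D$ with $\inner{v}{\Psi(X)v}=0$, i.e. $\sum_{r=1}^R\inner{v}{K_r X K_r\adj v}=0$.

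Second, I would note that each term is nonnegative: $\inner{v}{K_r X K_r\adj v}=\inner{K_r\adj v}{X\,K_r\adj v}\ge 0$ since $X\ge 0$. Hence the vanishing of the sum forces every term to vanish, $\inner{K_r\adj v}{X\,K_r\adj v}=0$ for all $r$. Writing this as $\norm{X^{1/2}K_r\adj v}_2^2=0$ gives $X^{1/2}K_r\adj v=0$, and applying $X^{1/2}$ once more yields $X\,K_r\adj v=0$ for every $r=1,\dots,R$; equivalently $K_r\adj v\in\ker{X}$ for all $r$.

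Third — the only step carrying an actual idea — I would use $\mathrm{span}\{K_r\}_{r=1}^R=\matrices$, which gives $\mathrm{span}\{K_r\adj\}_{r=1}^R=\matrices$ as well. For an arbitrary $z\in\mbC^D$, the rank-one matrix $\ket{z}\bra{v}$ lies in $\matrices$, so $\ket{z}\bra{v}=\sum_{r=1}^R c_r\,K_r\adj$ for some scalars $c_r$; evaluating both sides on the unit vector $v$ gives $z=\sum_{r=1}^R c_r\,K_r\adj v\in\mathrm{span}\{K_r\adj v\}_{r=1}^R$. Thus $\{K_r\adj v\}_{r=1}^R$ spans all of $\mbC^D$. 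Combined with the inclusion $K_r\adj v\in\ker{X}$ for every $r$ from the previous step, this forces $\ker{X}=\mbC^D$, i.e. $X=0$, contradicting $X\neq 0$. Therefore $\Psi(X)$ is strictly positive definite, proving the lemma.

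I do not expect a genuine obstacle here: the argument is short and self-contained, relying only on the positive-semidefinite structure of $\Psi$, the existence of operator square roots, and the elementary fact that a spanning family of matrices maps any fixed nonzero vector to a spanning family of vectors. The one point worth stating with care is precisely that last passage, since it is the mechanism by which the algebraic fullness of the Kraus family upgrades mere positivity of $\Psi$ to positivity improvement; everything else is routine.
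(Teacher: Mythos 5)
Your proof is correct and is essentially the paper's own argument: contradiction via a kernel vector $v$ of $\Psi(X)$, termwise vanishing of $\norm{X^{1/2}K_r\adj v}_2^2$, and then the spanning hypothesis forcing $\ker{X}=\mbC^D$. The only cosmetic difference is that you spell out the surjectivity of $Z\mapsto Z\adj v$ via the rank-one matrix $\ket{z}\bra{v}$, which the paper simply asserts.
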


        \begin{proof}
            Suppose $\Psi(X) \not> 0$ for some $X\ge0$ with $X\ne 0$. 
            Then there exists $v\ne 0$ with
            \[
                0\ =\ v\adj\Psi(X)v\ =\ \sum_{r=1}^R \norm{\,X^{1/2} K_r\adj v\,}_2^2,
            \]
            so $X^{1/2}K_r\adj v=0$ for all $r$. Hence $K_r\adj v\in\ker X = \ker{X^{1/2}}$ for all $r$ and, by taking  complex linear combinations, $Z\adj v\in\ker X$ for every $Z$ in the span of $\{K_r\}$, i.e.\ for all $Z\in\mbM_D(\mbC)$. 
            But $Z\mapsto Z\adj v$ is surjective. 
            Indeed, for any $w\in\mbC^D$, taking \(Z=\frac{vw\adj}{\norm{v}_2^2}\) gives  $Z\adj v=w$.
            Thus $\{Z\adj v: Z\in\mbM_D(\mbC)\}=\mbC^D$, whence $\ker X=\mbC^D$, i.e.\ $X=0$, a contradiction. 
        \end{proof}

        \begin{proof}[Alternative]
            We may use the same method in \cite[prop. 1]{Sanz_2010} to prove this for the inhomogeneous case:

            Recall that there is a one-to-one correspondence  between super operators $\Psi\in\mapspace$ and their Choi matrix:
            \[
            \Psi(\, \cdot\,) = \sum_{r=1}^R K_r(\,\cdot\,)K_r\adj \leftrightarrow \mcJ(\Psi) = \sum_{r=1}^R \vec{K_r}\vec{K_r}\adj
            \]
            Hence $\rank\{\mcJ(\Psi)\} = \dim\{\mathrm{span}\{{\vec{K_r} : 1\le r\le R\}}\}$.
            By the hypothesis $\mathrm{span}\{K_r\}=\mbM_D(\mbC)$ and the linear isomorphism $\mathrm{vec}:\mbM_D(\mbC)\to\mbC^{D^2}$, we obtain $\rank\,\mcJ(\Psi)=D^2$, so $\mcJ(\Psi)$ is full-rank. 
            With the convention
            \[
                \vec{A\rho B^T}=(B\otimes A)\vec{\rho},
                    \]
            we have, for every \(v\in\mbC^D\),
            \[
                \Psi(\ket{v}\bra{v})
                =
                (\bra{\overline v}\otimes I)
                 \mcJ(\Psi)
                (\ket{\overline v}\otimes I).
            \]
            Thus $\Psi(\ket{v}\bra{v})$ is also full-rank, whenever $v\neq 0$.  
            Indeed, for \(u,v\in\mbC^D\setminus\{0\}\), we have that 
            \[
                 u\adj\Psi(\ket{v}\bra{v})u = (\bra{\overline v}\otimes\bra u)   \mcJ(\Psi) (\ket{\overline v}\otimes\ket u) > 0
            \]
            Since \( \mcJ(\Psi)>0\) and \(\ket{\overline v}\otimes\ket u\ne0\), we get that \(\Psi(\ket{v}\bra{v})>0\) for each $v\neq 0$.
            This improves to any non-zero positive semi-definite matrix $X$ by the spectral decomposition $X = \sum_j \lambda_j \ket{v_j}\bra{v_j}$.
        \end{proof}
        
\begin{example}[Homogeneous absolutely continuous MPS]
\label{example:ti}
    Let \((\Omega,\mcF)=(\ten,\mcB)\), and let \(\nu\ll\mathrm{Leb}\) on \(\ten\).
    Draw a single Kraus tuple
    \[
        \mcA=(A_1,\ldots,A_d)\sim \nu,
    \]
    and define the homogeneous transfer map
    \[
        \phi(X):=\sum_{i=1}^d A_i\,X\,A_i\adj,
        \qquad X\in\mbM_D(\mbC).
    \]
    Then, \(\nu\)-almost surely, the resulting homogeneous random MPS satisfies
    \Cref{assumption1,assumption2}.
\end{example}

    \begin{proof}
        We verify \Cref{assumption1} and \Cref{assumption2}.
    
        \smallskip
        \noindent\emph{\Cref{assumption1}.}
        For each \(i\), the singular set
        \[
            \mcN_i
            :=
            \left\{
                \mcA=(A_1,\ldots,A_d)\in\ten:
                \det A_i=0
            \right\}
        \]
        is a proper algebraic variety and hence has Lebesgue measure zero.
        Since \(\nu\ll\mathrm{Leb}\) and \(d<\infty\), we have \(\nu\)-almost surely that every \(A_i\) is invertible.
        If \(X\in\states\) and \(\phi(X)=0\), then each term  \(A_iXA_i\adj\) is positive semidefinite and
        \[
            \sum_{i=1}^d A_iXA_i\adj=0
        \]
        forces \(A_iXA_i\adj=0\) for every \(i\).
        Since \(A_i\) is invertible, this implies \(X=0\), a contradiction.
        Hence
        \[
            \ker\phi\cap\states=\emptyset.
        \]
        Applying the same argument to
        \[
            \phi\adj(Y)=\sum_{i=1}^d A_i\adj Y A_i
        \]
        gives
        \[
            \ker\phi\adj\cap\states=\emptyset.
        \]
        This proves \Cref{assumption1}.
    
        \smallskip
        \noindent\emph{\Cref{assumption2}.}
        For \(L\in\mbN\), write
        \[
            S_L(\mcA)
            :=
            \operatorname{span}
            \left\{
                A_{i_L}\cdots A_{i_1}:
                1\le i_1,\ldots,i_L\le d
            \right\}.
        \]
        Set
        \[
            L_\star
            :=
            2\Big\lceil\log_d D\Big\rceil.
        \]
        By \cite[Corollary~1]{Jia_2024}, Lebesgue-almost every Kraus tuple \(\mcA\) satisfies
        \[
            S_{L_\star}(\mcA)=\mbM_D(\mbC).
        \]
        Since \(\nu\ll\mathrm{Leb}\), the same holds  \(\nu\)-almost surely.
        The Kraus operators of \(\phi^{L_\star}\) are
        \[
            A_{i_{L_\star}}\cdots A_{i_1},
            \qquad
            1\le i_1,\ldots,i_{L_\star}\le d.
        \]
        Since these operators span \(\mbM_D(\mbC)\), \Cref{lem:full-span-strict-positive-appendix} shows that \(\phi^{L_\star}\) is strictly positive \(\nu\)-almost surely.
        Since the same Kraus tuple is used at every site,\(\Phi_\omega^{(L_\star)} = \phi^{L_\star}\).
        Thus \Cref{assumption2} holds with the deterministic choice  \(n_*(\omega)=L_\star\).
    \end{proof}

\begin{example}[Homogeneous Gaussian MPS]
\label{example:gaussian_homogeneous}
    Let \(\nu\) be a nondegenerate multivariate Gaussian law on \(\ten\), i.e.\ a Gaussian law on the
    underlying real vector space of \(\ten\) with positive definite covariance.
    Draw a single Kraus tuple
    \[
        \mcA=(A_1,\ldots,A_d)\sim\nu,
    \]
    and place the same tensor at every site.
    Then, almost surely, the resulting homogeneous Gaussian random MPS satisfies
    \Cref{assumption1,assumption2}.
    In particular, this covers the homogeneous Gaussian models studied in \cite{Lancien_2021}.
\end{example}

\begin{proof}
    Any nondegenerate Gaussian law on a finite-dimensional real vector space is absolutely continuous
    with respect to Lebesgue measure.
    Thus \(\nu\ll\mathrm{Leb}\) on \(\ten\), and the conclusion follows immediately from \Cref{example:ti}.
\end{proof}

\begin{example}[IID Absolutely Continuous MPS]
    \label{example:IID}
        Let $\nu\ll\mathrm{Leb}$ on $\ten$ and set $(\Omega,\mcF,\pr)=(\tens^\mbZ,\mcB^{\otimes\mbZ},\nu^{\otimes\mbZ})$, where 
        \[
            \tens = \ten. 
        \]
        Define the strictly stationary local tensors by $\mcA_n(\bar a):=a_n$ for $\bar a=(a_k)_{k\in\mbZ}$, and let $\phi_n$ be the associated transfer maps. Then $(\mcA_n)_{n\in\mbZ}$ satisfies \Cref{assumption1} and \Cref{assumption2}.
    \end{example}

        \begin{proof}
            \emph{\Cref{assumption1}.}
                Let $\phi_{\mcA}(X):=\sum_{i=1}^d A_i X A_i\adj$ for $\mcA=(A_1,\ldots,A_d)\in\ten$ and set
                \[
                    E\ :=\ \Big\{\,\mcA\in\ten:\ \ker{\phi_{\mcA}}\cap\states=\emptyset=\ker{\phi_{\mcA}}\adj\cap\states\,\Big\}.
                \]
                By \Cref{example:ti} we have that $\nu(E)=1$.
                Now on $(\Omega,\mcF,\pr)$, define the pullback events
                \[
                    E_n\ :=\ \mcA_n^{-1}(E)\ =\ \{\omega\in\Omega:\ \mcA_n(\omega)\in E\}\qquad(n\in\mbZ).
                \]
                Because the coordinates are IID with $\mathrm{Law}(\mcA_n)=\nu$, we have
                \[
                    \Pr(E_n)\ =\ \nu(E)\ =\ 1\qquad\text{for every }n\in\mbZ.
                \]
                Thus, $\pr\!\left(\bigcap_{n\in\mbZ} E_n\right)\ =\ 1.$
                Therefore on the full-probability event $\bigcap_{n\in\mbZ} E_n$ we have, for every $n\in\mbZ$, $\ker{\phi_n}\cap\states=\emptyset=\ker{\phi_n\adj}\cap\states$. This establishes \Cref{assumption1} for the IID process under the product measure $\pr$.
            
            \medskip
            
            \noindent\emph{\Cref{assumption2}:}
                We prove that, under $\nu\ll\mathrm{Leb}$ on $\tens$, there exists a finite length
                \[
                    L_\star\ :=\ 2\Big\lceil \log_d D\Big\rceil
                \]
                such that for \emph{$\nu^{\otimes L_\star}$-almost every} length-$L_\star$ block
                $\pmb{A}=(a_0,\ldots,a_{L_\star-1})\in\tens^{L_\star}$ the composed transfer map
                $\Phi^{(L_\star)}:=\phi_{L_\star-1}\circ\cdots\circ\phi_0$ is \emph{strictly positive}. By stationarity of the IID sequence, this yields \Cref{assumption2}.
                The technique closely follows the proof in \Cite{Jia_2024}.

                \medskip

                \noindent\emph{Step 1:}
                    Let $S=\{1,\ldots,d\}$ and denote by $S^{L_\star}_{\mathrm{words}}$ the words of length $L_\star$ on $S$.
                    By the ``sweeping words'' construction \cite{Klep_2016}, there exist $D^2$ words
                    $w_1,\ldots,w_{D^2}\in S^{L_\star}_{\mathrm{words}}$ and a $d$-tuple
                    $\mcB=(B_1,\ldots,B_d)\in\tens$ such that
                    \[
                        \big\{\,w_\ell(\mcB)\,:=\ B_{w_\ell^{(L_\star)}}\cdots B_{w_\ell^{(1)}}\ \big\}_{\ell=1}^{D^2}
                        \quad\text{is linearly independent in } \mbM_D(\mbC),
                    \]
                    hence spans $\mbM_D(\mbC)$.

                \medskip

                \noindent\emph{Step 2:}
                    For a block $\pmb{A}=(a_0,\ldots,a_{L_\star-1})$, with $a_j=(A_1^{(j)},\ldots,A_d^{(j)})\in\tens$,
                    define the \emph{length-$L_\star$ Kraus products}
                    \[
                        K_w(\pmb{A})\ :=\ A_{w^{(L_\star)}}^{(L_\star-1)}\cdots A_{w^{(1)}}^{(0)}\in \mbM_D(\mbC),
                        \qquad w\in S^{L_\star}_{\mathrm{words}}.
                    \]
                    Form the $D^2\times D^2$ matrix of vectorized products
                    \[
                        M(\pmb{A})\ :=\ \big[\,\mathrm{vec}\,K_{w_1}(\pmb{A})\ \ \cdots\ \ \mathrm{vec}\,K_{w_{D^2}}(\pmb{A})\,\big],
                        \qquad
                        P(\pmb{A})\ :=\ \det M(\pmb{A}).
                    \]
                    Each entry of $M(\pmb{A})$ is a polynomial in the entries of the $A_i^{(j)}$, hence $P$ is a
                    polynomial on $\tens^{L_\star}$. Evaluating at the \emph{homogeneous} block
                    \[
                        \pmb{B}\ :=\ (\mcB,\ldots,\mcB)\in\tens^{L_\star}
                    \]
                    gives $K_{w_\ell}(\pmb{B})=w_\ell(\mcB)$; by the choice of $\{w_\ell\}$ and $\mcB$ we have
                    $P(\pmb{B})\neq 0$. Therefore $P\not\equiv 0$ and its zero set $\{P=0\}$ has Lebesgue measure zero in $\tens^{L_\star}$.
                    
                    Since $\nu\ll\mathrm{Leb}$ on $\tens$, the product law $\nu^{\otimes L_\star}$ is absolutely
                    continuous with respect to Lebesgue measure on $\tens^{L_\star}$. Consequently,
                    \begin{equation}
                    \label{eq:block_span_full}
                          \nu^{\otimes L_\star}\!\left\{\,
                                \pmb{A}\in\tens^{L_\star}:\ 
                                \mathrm{span}\{K_{w_\ell}(\pmb{A})\}_{\ell=1}^{D^2}\ =\ \mbM_D(\mbC)\,
                          \right\}\ =\ 1.
                    \end{equation}

                    \medskip

                \noindent\emph{Step 3:}
                    For every block $\pmb{A}$, the composed map has the Kraus
                    representation
                    \[
                        \Phi^{(L_\star)}(X)
                        =
                        \sum_{w\in S^{L_\star}_{\mathrm{words}}}
                        K_w(\pmb{A})XK_w(\pmb{A})\adj.
                    \]
                    On the event in \eqref{eq:block_span_full}, the selected family $\{K_{w_\ell}(\pmb{A})\}_{\ell=1}^{D^2}$ is a subfamily of this full Kraus family and already spans $\mbM_D(\mbC)$. 
                    Hence the full Kraus family also spans $\mbM_D(\mbC)$. 
                    Applying \Cref{lem:full-span-strict-positive-appendix} to the full Kraus family shows that, for $\nu^{\otimes L_\star}$-a.e.\ $\pmb{A}$, the map $\Phi^{(L_\star)}$ is strictly positive.
                    Now on $(\Omega,\mcF,\pr)=(\tens^{\mbZ},\mcB^{\otimes\mbZ},\nu^{\otimes\mbZ})$, every consecutive block
                    $(\mcA_n,\ldots,\mcA_{n+L_\star-1})$ has law $\nu^{\otimes L_\star}$. Thus, for each fixed $n\in\mbZ$,
                    \[
                        \pr\!\left\{\phi_{n+L_\star-1}\circ\cdots\circ\phi_n\ \text{is strictly positive}\right\}\ =\ 1.
                    \]
                    Taking the countable intersection over $n\in\mbZ$ yields probability one for the event
                    that \emph{all} such length-$L_\star$ translates are strictly positive. This proves \Cref{assumption2}.
        \end{proof}

%%%%%%%%%%%%%%%%%%%%%%%%%%%%%%%%%%%%%%%%%%%%%%%%%%%%%%%%%%%%%%%%%%%%%%%%%%%%
%%%%%%%%%%%%%%%%%%%%%%%%%%%%%%%%%%%%%%%%%%%%%%%%%%%%%%%%%%%%%%%%%%%%%%%%%%%

\section{Further Examples of Stochastically Generated MPS}
\label{app:examples}

%%%%%%%%%%%%%%%%%%%%%%%%%%%%%%%%%%%%%%%%%%%%%%%%%%%%%%%%%%%%%%%%%%%%%%%%%%%%
%%%%%%%%%%%%%%%%%%%%%%%%%%%%%%%%%%%%%%%%%%%%%%%%%%%%%%%%%%%%%%%%%%%%%%%%%%%

The purpose of this section is to illustrate how several classical stochastic mechanisms can be used to modulate local tensor laws and thereby generate strictly stationary random MPS ensembles. 
The constructions below are valid for arbitrary branch laws. 
If every branch law almost surely produces transfer maps satisfying the two
kernel conditions in \Cref{assumption1}, and if the modulator reaches in finite nonnegative time a branch that almost surely produces strictly positive transfer maps, then the resulting transfer-map process also satisfies \Cref{assumption1,assumption2}; see \Cref{rem:modulated_examples_assumptions} at the end of this section.

%%%%%%%%%%%%%%%%%%%%%%%%%%%%%%%%%%%%%%%%%%%%%%%%%%%%%%%%%%%%%%%%%%%%%%%%%%%%
%%%%%%%%%%%%%%%%%%%%%%%%%%%%%%%%%%%%%%%%%%%%%%%%%%%%%%%%%%%%%%%%%%%%%%%%%%%

\subsection{Markov–Modulated Random MPS}
\label{app:markov_modulated}

%%%%%%%%%%%%%%%%%%%%%%%%%%%%%%%%%%%%%%%%%%%%%%%%%%%%%%%%%%%%%%%%%%%%%%%%%%%%
%%%%%%%%%%%%%%%%%%%%%%%%%%%%%%%%%%%%%%%%%%%%%%%%%%%%%%%%%%%%%%%%%%%%%%%%%%%

    Fix $m\ge 2$ and $S:=\{1,\dots,m\}$. Let $T\in[0,1]^{m\times m}$ be an irreducible, aperiodic transition matrix with stationary distribution $\pi=(\pi_1, \ldots, \pi_m)$ (so that $\pi T=\pi$). 
    Let $X=(X_n)_{n\in\mbZ}$ be the $S$–valued Markov chain with transition matrix $T$ started in stationarity, so $(X_n)$ is strictly stationary. 
    For each branch $i\in S$ let $\lambda_{B_i}\in\mcP(\tens)$ be a one–site law on rank–three local Kraus tensors, and let $(B^{(i)}_n)_{n\in\mbZ}$ be i.i.d.\ with $\Law(B^{(i)}_0)=\lambda_{B_i}$. Assume the $m$ arrays $\{(B^{(i)}_n)_{n\in\mbZ}:1\le i\le m\}$ and the modulator $X$ are mutually independent. Set
    \[
        (\Omega,\mcF,\pr)
        \ :=\ 
        \left(
            S^{\mbZ}\times\prod_{i=1}^m\tens^{\mbZ},\ \mcB(S)^{\otimes\mbZ}\otimes\bigotimes_{i=1}^m\mcB(\tens)^{\otimes\mbZ},\ \mbP_X\otimes\bigotimes_{i=1}^m\lambda_{B_i}^{\otimes\mbZ}
        \right),
    \]
    where $\mbP_X$ is the stationary path measure for the Markov chain with transition kernel T started at the stationary law $\pi$.
    A typical point $\omega \in \Omega$ is
        \[
                \omega=\left((x_k)_{k\in\mbZ},\ \left(b^{(1)}_k\right)_{k\in\mbZ},\ldots,\left(b^{(m)}_k\right)_{k\in\mbZ} \right)
                \quad\text{with }x_k\in S,\ b^{(i)}_k\in\tens.
        \]
        
    Let $\theta:\Omega\to\Omega$ be the left shift acting simultaneously on all coordinates, $(\theta\omega)_k=(x_{k+1},\,b^{(1)}_{k+1},\ldots,b^{(m)}_{k+1})$.
    Define the \emph{selected branch tensor} at site $n$ by
        \[
            \mcA_n(\omega)\ := \ b_n^{(x_n)}\ \in\tens,\qquad n\in\mbZ.
        \]
    
    Then $\theta$ is $\pr$-preserving as each factor measure $\mbP_X$ and $\lambda_{B_i}  ^{\otimes\mbZ}$ is shift–invariant. And 
    \[
        \mcA_n(\omega) = (\mcA_0\circ\theta^n)(\omega),
    \]
    whence the random local tensors sequence is strictly stationary, with one-site marginal
    \[
        \Law(\mcA_0)
        =
        \sum_{i=1}^m\pi_i\lambda_{B_i}.
    \]

    We next verify the stochastic mixing property of the selected tensor sequence. 
    Set
    \[
        Y_n
        :=
        \bigl(X_n,B_n^{(1)},\ldots,B_n^{(m)}\bigr).
    \]
    Since $X$ is a stationary, finite-state, irreducible, and aperiodic Markov chain, there exist constants $C,c>0$ such that
    \[
        \rho_n^X\le Ce^{-cn},
        \qquad n\in\mbN;
    \]
    see \cite[Theorem~3.1 ]{bradley2005basic}.
    The independent site noise does not increase maximal stochastic correlation. 
    More precisely, if
    \[
        U\in L^2\bigl(\sigma(Y_k:k\le0)\bigr),
        \qquad
        V\in L^2\bigl(\sigma(Y_k:k\ge n)\bigr)
    \]
    are centered, then, upon setting
    \[
        \widetilde U
        :=
        \mbE\!\left[U\mid\sigma(X_k:k\le0)\right],
        \qquad
        \widetilde V
        :=
        \mbE\!\left[V\mid\sigma(X_k:k\ge n)\right],
    \]
    the independence of the branch arrays gives
    \[
        \mbE[UV]=\mbE[\widetilde U\widetilde V].
    \]
    Indeed, let
    \[
        \mcX:=\sigma(X_k:k\in\mbZ).
    \]
    Conditionally on \(\mcX\), the variables \(U\) and \(V\) depend on disjoint independent branch-noise blocks. 
    Hence
    \[
        \mbE[UV\mid\mcX]
        =
        \mbE[U\mid\mcX]\,
        \mbE[V\mid\mcX].
    \]
    Moreover, independence of the branch arrays from \(X\) gives
    \[
        \mbE[U\mid\mcX]=\widetilde U,
        \qquad
        \mbE[V\mid\mcX]=\widetilde V.
    \]
    Therefore
    \[
        \mbE[UV]=\mbE[\widetilde U\widetilde V].
    \]
    Consequently,
    \[
        |\mbE[UV]|
        \le
        \rho_n^X
        \norm{\widetilde U}_{L^2}
        \norm{\widetilde V}_{L^2}
        \le
        \rho_n^X
        \norm{U}_{L^2}
        \norm{V}_{L^2}.
    \]
    Hence $\rho_n^Y\le\rho_n^X$. Since $X$ is a coordinate of $Y$, the reverse inequality also holds, and therefore
    \[
        \rho_n^Y=\rho_n^X.
    \]
    Finally, $\mcA_n$ is a measurable function of $Y_n$, so monotonicity of  maximal correlation under measurable factors yields
    \[
        \rho_n^{\mcA}
        \le
        \rho_n^Y
        =
        \rho_n^X
        \le
        Ce^{-cn}.
    \]
    Thus the selected tensor process is exponentially $\rho$-mixing.
%%%%%%%%%%%%%%%%%%%%%%%%%%%%%%%%%%%%%%%%%%%%%%%%%%%%%%%%%%%%%%%%%%%%%%%%%%%%
%%%%%%%%%%%%%%%%%%%%%%%%%%%%%%%%%%%%%%%%%%%%%%%%%%%%%%%%%%%%%%%%%%%%%%%%%%%

\subsection{Bernoulli–Modulated Random MPS}
\label{app:bernoulli_modulated}

%%%%%%%%%%%%%%%%%%%%%%%%%%%%%%%%%%%%%%%%%%%%%%%%%%%%%%%%%%%%%%%%%%%%%%%%%%%%
%%%%%%%%%%%%%%%%%%%%%%%%%%%%%%%%%%%%%%%%%%%%%%%%%%%%%%%%%%%%%%%%%%%%%%%%%%%

    Fix $p\in(0,1)$ and two branch laws $\lambda_B,\lambda_C\in\mcP(\tens)$ on rank–three local Kraus tensors. 
    Let $(X_n)_{n\in\mbZ}$ be i.i.d.\ $\mathrm{Bernoulli}(p)$ and independent of two i.i.d.\ arrays $(\mcB_n)_{n\in\mbZ}$ and $(\mcC_n)_{n\in\mbZ}$ with one–site laws $\lambda_B$ and $\lambda_C$, respectively. 
    Work on the product space
    \[
        (\Omega,\mcF,\pr)
        \ :=\ 
        \left(
            \{0,1\}^{\mbZ}\times\tens^{\mbZ}\times\tens^{\mbZ},\ 
        \mcB(\{0,1\})^{\otimes\mbZ}\otimes\mcB(\tens)^{\otimes\mbZ}\otimes\mcB(\tens)^{\otimes\mbZ},\ 
        \kappa^{\otimes\mbZ}\otimes\lambda_B^{\otimes\mbZ}\otimes\lambda_C^{\otimes\mbZ}
        \right),
    \]
    where $\kappa:=(1-p)\,\delta_0+p\,\delta_1$, and let $\theta$ denote the left shift on all coordinates. 
    For $\omega = \left((x_n)_{n\in\mbZ}, (b_n)_{n\in\mbZ}, (c_n)_{n\in\mbZ}\right)$, define the selected branch tensor at site $n$ by
    \[
        \mcA_n(\omega)\ :=\
        \begin{cases}
        b_n,& x_n=0,\\
        c_n,& x_n=1,
        \end{cases}
        \qquad n\in\mbZ.
    \]
    Then $(\mcA_n)_{n\in\mbZ}$ is i.i.d.\ (hence strictly stationary) under $\pr$, with one–site law
    \[
        \Law(\mcA_0)\ =\ (1-p)\,\lambda_B\ +\ p\,\lambda_C\ \in\ \mcP(\tens).
    \]
    Placing the tuples $\mcA_n(\omega)$ along $\mbZ$ yields the Bernoulli–modulated random MPS with branches $(\lambda_B,\lambda_C)$.

%%%%%%%%%%%%%%%%%%%%%%%%%%%%%%%%%%%%%%%%%%%%%%%%%%%%%%%%%%%%%%%%%%%%%%%%%%%%
%%%%%%%%%%%%%%%%%%%%%%%%%%%%%%%%%%%%%%%%%%%%%%%%%%%%%%%%%%%%%%%%%%%%%%%%%%%
\subsection{Conditionally Bernoulli–Modulated Random MPS }
\label{app:cond_bernoulli_modulated}
%%%%%%%%%%%%%%%%%%%%%%%%%%%%%%%%%%%%%%%%%%%%%%%%%%%%%%%%%%%%%%%%%%%%%%%%%%%%
%%%%%%%%%%%%%%%%%%%%%%%%%%%%%%%%%%%%%%%%%%%%%%%%%%%%%%%%%%%%%%%%%%%%%%%%%%%

    Let $\nu\in\mcP([0,1])$ be a \emph{prior law} for a latent parameter $P$, and let $\lambda_B,\lambda_C\in\mcP(\tens)$ be branch laws. 
    Work on
    \[
    \begin{aligned}
        \Omega &:= [0,1]\times(0,1)^{\mbZ}\times\tens^{\mbZ}\times\tens^{\mbZ},\\
        \mcF   &:= \mcB([0,1])\otimes\mcB((0,1))^{\otimes\mbZ}\otimes\mcB(\tens)^{\otimes\mbZ}\otimes\mcB(\tens)^{\otimes\mbZ},\\
        \pr    &:= \nu\otimes L^{\otimes\mbZ}\otimes\lambda_B^{\otimes\mbZ}\otimes\lambda_C^{\otimes\mbZ}.
    \end{aligned}
    \]
    and write a typical point as $\omega=(p,(u_k)_k,(b_k)_k,(c_k)_k)$, where $(u_k)$ are i.i.d.\ $\mathrm{Unif}(0,1)$, $(b_k)$ are i.i.d.\ with law $\lambda_B$, $(c_k)$ are i.i.d.\ with law $\lambda_C$, and the three blocks are independent, all independent of $P\sim\nu$. 
    Let $\theta$ be the left shift on the $\mbZ$–indexed coordinates, leaving $p$ fixed:
    \[
        \theta\bigl(p,(u_k)_k,(b_k)_k,(c_k)_k\bigr)
        =\bigl(p,(u_{k+1})_k,(b_{k+1})_k,(c_{k+1})_k\bigr).
    \]

    Define the Bernoulli selectors and the selected branch tensors by
    \[
        H_n(\omega):=\mathbf 1_{\{u_n\le p\}},\qquad 
        \mcA_n(\omega):=\mathbf 1_{\{u_n\le p\}}\,c_n+\mathbf 1_{\{u_n>p\}}\,b_n\ \in\tens,\qquad n\in\mbZ.
    \]

    Then $\theta$ is $\pr$-preserving and
    \[
        \mcA_n=\mcA_0\circ\theta^n,
        \qquad n\in\mbZ.
    \]
    Hence $(\mcA_n)_{n\in\mbZ}$ is strictly stationary, with one-site law
    \[
        \Law(\mcA_0)
        =
        (1-\mbE_\nu[P])\,\lambda_B
        +
        \mbE_\nu[P]\,\lambda_C
        \in\mcP(\tens).
    \]
    Provided $\lambda_B\neq\lambda_C$, the sequence $(\mcA_n)_{n\in\mbZ}$ is not i.i.d.\ unless $P$ is $\nu$-a.s.\ constant.
   
%%%%%%%%%%%%%%%%%%%%%%%%%%%%%%%%%%%%%%%%%%%%%%%%%%%%%%%%%%%%%%%%%%%%%%%%%%%%
%%%%%%%%%%%%%%%%%%%%%%%%%%%%%%%%%%%%%%%%%%%%%%%%%%%%%%%%%%%%%%%%%%%%%%%%%%%

\subsection{Renewal–Modulated Random MPS}
\label{app:renewal_modulated}

%%%%%%%%%%%%%%%%%%%%%%%%%%%%%%%%%%%%%%%%%%%%%%%%%%%%%%%%%%%%%%%%%%%%%%%%%%%%
%%%%%%%%%%%%%%%%%%%%%%%%%%%%%%%%%%%%%%%%%%%%%%%%%%%%%%%%%%%%%%%%%%%%%%%%%%%

    Before we define these MPS, we give a brief informal description of the selection process of the local tensors.
    
   \paragraph{Informal description.}
        Let $m(\ell)=\Pr\{G=\ell\}$ be an inter–arrival law on $\mathbb N$ with mean $\mu\in(1,\infty)$, and set 
        $q(\ell):=\ell\,m(\ell)/\mu$ (size–biased). 
        Draw $L\sim q$ (the length of the block containing the origin) and, conditional on $L$, pick 
        $U\in\{0,\dots,L-1\}$ uniformly (the offset of the origin within its block). 
        Independently, draw i.i.d.\ gaps $\{G_k\}_{k\in\mathbb Z\setminus\{0\}}$ with law $m$ to specify the block lengths away from the origin.
        Define block endpoints by
        \[
            S_0:=-U,\qquad S_1:=S_0+L,\qquad 
            S_{n+1}:=S_n+G_n\ (n\ge 1),\quad S_{n-1}:=S_n-G_{n-1}\ (n\le 0).
        \]
        Mark $n$ as a \emph{block start} iff $n\in\{S_j:j\in\mathbb Z\}$. 
        Independently sample branch arrays $(a_n)$ and $(b_n)$ with one–site laws $\lambda_B$ and $\lambda_C$, and set
        \[
            \mcA_n\ :=\ 
                \begin{cases}
                    a_n, & \text{$n$ is a block start},\\[2pt]
                    b_n, & \text{otherwise}.
                \end{cases}
        \]  
        The size--biased choice of $L$, together with the uniform offset $U$, produces the stationary renewal environment; fixing a block start at the origin would not. 
        The law $m$ is the \emph{inter-arrival distribution}, while $q$ is the size--biased law of the gap containing the origin.

        \begin{figure}
        \centering
        \resizebox{\textwidth}{!}{%
            \begin{tikzpicture}[
              x=0.95cm,y=0.95cm,>=Stealth,
              renew/.style   ={line width=0.9pt,dashed},
              brace/.style   ={decorate,decoration={brace,amplitude=4pt}},
              start/.style   ={draw,fill=teal!35,minimum width=4.6mm,minimum height=4.2mm,inner sep=0pt,rounded corners=0.6mm,thick}, % r_n=0 → a_n
              cont/.style    ={draw,fill=orange!28,minimum width=4.6mm,minimum height=4.2mm,inner sep=0pt,rounded corners=0.6mm,thick},% r_n≠0 → b_n
              blockOrigin/.style={fill=blue!7  ,draw=blue!35  ,rounded corners=1mm,thick},
              blockRight/.style ={fill=green!9 ,draw=green!35 ,rounded corners=1mm,thick},
              blockLeft/.style  ={fill=purple!8,draw=purple!35,rounded corners=1mm,thick},
              lab/.style     ={font=\footnotesize},
              tiny/.style    ={font=\scriptsize},
              every node/.style={align=center}
            ]
                \def\U{3}      
                \def\L{5}      
                \def\Gmone{2}  
                \def\Gmtwo{3}  
                \def\Gone{4}  
                \def\Gtwo{3}  
        
                \pgfmathtruncatemacro{\Szero}{-1*\U}                 
                \pgfmathtruncatemacro{\Sone}{\Szero + \L}            
                \pgfmathtruncatemacro{\Sminusone}{\Szero - \Gmone}   
                \pgfmathtruncatemacro{\Sminustwo}{\Sminusone - \Gmtwo}
                \pgfmathtruncatemacro{\Stwo}{\Sone + \Gone}         
                \pgfmathtruncatemacro{\Sthree}{\Stwo + \Gtwo}

                \newcommand{\DrawBlockAB}[3]{% A=start integer, B=end integer (B>A), side flag {-1,0,+1}
                  \begin{pgfonlayer}{background}
                    \ifnum#3=0
                      \path[blockOrigin] (#1+0.05,-0.75) rectangle (#2-0.05,0.75);
                    \else\ifnum#3>0
                      \path[blockRight ] (#1+0.05,-0.75) rectangle (#2-0.05,0.75);
                    \else
                      \path[blockLeft  ] (#1+0.05,-0.75) rectangle (#2-0.05,0.75);
                    \fi\fi
                  \end{pgfonlayer}
                  % dashed renewal lines at A and B
                  \draw[renew] (#1,-2.0) -- (#1,2.25);
                  \draw[renew] (#2,-2.0) -- (#2,2.25);
                  % glyphs at integer sites n = A, ..., B-1
                  \foreach \n in {#1,...,\numexpr#2-1\relax} {
                    \ifnum\n=#1
                      \node[start] at (\n,0) {$a$};   % r_n=0 → a_n
                    \else
                      \node[cont]  at (\n,0) {$b$};   % r_n≠0 → b_n
                    \fi
                  }
                  \node[start] at (\Sthree,0) {$a$};
                }

                \DrawBlockAB{\Sminustwo}{\Sminusone}{-1}  
                \DrawBlockAB{\Sminusone}{\Szero}{-1}     
                \DrawBlockAB{\Szero}{\Sone}{0}            
                \DrawBlockAB{\Sone}{\Stwo}{+1}            
                \ifnum\Gtwo>0
                  \DrawBlockAB{\Stwo}{\Sthree}{+1}        
                \fi

                \node[lab,above=2pt] at (\Sminustwo, 2.25) {};
                \node[lab,above=2pt] at (\Sminusone, 2.25) {$S_{-1}=S_0-G_{-1}$};
                \node[lab,above=2pt] at (\Szero    , 2.25) {$S_0$};
                \node[lab,above=2pt] at (\Sone     , 2.25) {$S_1=-U+L$};
                \node[lab,above=2pt] at (\Stwo     , 2.25) {};
                \node[lab,above=2pt] at (\Sthree     , 2.25) {$S_3 = S_2+G_2$};
        
                \draw[brace] (\Szero+0.12,-1.35) -- (-0.12,-1.35)
                  node[midway,lab,yshift=-8pt] {$U\in\{0,\dots,L-1\}$};
                
                \draw[brace] (\Szero+0.12,1.55) -- (\Sone-0.12,1.55)
                  node[midway,lab,yshift=7pt] {$L$};
                
                \draw[brace] (\Sminusone+0.12,1.15) -- (\Szero-0.12,1.15)
                  node[midway,lab,yshift=7pt] {$G_{-1}$};
                
                \draw[brace] (\Sone+0.12,1.15) -- (\Stwo-0.12,1.15)
                  node[midway,lab,yshift=7pt] {$G_{1}$};
                
                \draw[brace] (\Stwo+0.12,1.15) -- (\Sthree-0.12,1.15)
                  node[midway,lab,yshift=7pt] {$G_{2}$};
                
                \draw[brace] (\Sminustwo+0.12,1.15) -- (\Sminusone-0.12,1.15)
                  node[midway,lab,yshift=7pt] {$G_{-2}$};

                \pgfmathsetmacro{\xmin}{\Sminustwo}
                \pgfmathsetmacro{\xmax}{\ifnum\Gtwo>0 \Sthree \else \Stwo \fi}
                \pgfmathsetmacro{\xmarg}{1.2}
                \pgfmathsetmacro{\xa}{\xmin - \xmarg}
                \pgfmathsetmacro{\xb}{\xmax + \xmarg}
                \pgfmathtruncatemacro{\TickL}{floor(\xa)}
                \pgfmathtruncatemacro{\TickR}{ceil (\xb)}
        
                \draw[->,thick] (\xa,-1.8) -- (\xb+0.8,-1.8) node[below] {$\mathbb{Z}$};
                \draw[very thick] (0,-1.74) -- (0,-1.86); 
                \node[below=6pt] at (0,-1.85) {$0$};
                \draw[very thick] (-3,-1.74) -- (-3,-1.86); 
                \node[below=6pt] at (-3,-1.85) {$-U$};
                \draw[very thick] (-8,-1.74) -- (-8,-1.86); 
                \node[below=6pt] at (-8,-1.85) {$S_{-2}= S_{-1}-G_{-2}$};
                \draw[very thick] (6,-1.74) -- (6,-1.86); 
                \node[below=6pt] at (6,-1.85) {$S_{2}= S_{1}+G_{1}$};
            \end{tikzpicture}
        } 
        \caption{One realization of the renewal--modulated placement of local tensors}
        \end{figure}

    \paragraph{Formal description.}
        Let $G$ be an $\mbN$–valued inter-arrival random variable with mass $m(\ell):=\Pr\{G=\ell\}$ and finite mean
        \[
            \mu\ :=\ \mbE[G]\ \in(1,\infty).
        \]
        Define the \emph{size–biased} law $q(\ell):=\ell\,m(\ell)/\mu$ on $\mbN$. For each $\ell\in\mbN$, let $K(\ell,\cdot)$ be the \emph{uniform offset kernel} on $\{0,1,\dots,\ell-1\}$:
        \[
            K(\ell,C)\ :=\ \frac{1}{\ell}\,\#\bigl(C\cap\{0,1,\dots,\ell-1\}\bigr),\qquad C\subseteq\mbN_0.
        \]
        
        \paragraph{Environment space and probability.}
        Set
        \[
            \Omega_{\mathrm{env}}\ :=\ \left\{(\bar{g},r):\ \bar{g}=(g_k)_{k\in\mbZ}\in\mbN^{\mbZ},\ r\in\mbN_0,\ 0\le r\le g_0-1\right\},
        \]
        equipped with the subspace $\sigma$--algebra
        \[
            \mathcal A
            :=
            \left\{
                E\cap\Omega_{\mathrm{env}}:
                E\in
                \mcB(\mbN)^{\otimes\mbZ}\otimes\mcB(\mbN_0)
            \right\}.
        \]

        Define a probability measure $\mbP_{\mathrm{env}}$ on $\Omega_{\mathrm{env}}$ by the identity
        \begin{equation}
        \label{eq:env_measure_def}
            \int_{\Omega_{\mathrm{env}}} f(\bar{g},r)\,d\mbP_{\mathrm{env}}(\bar{g},r)
            \ :=\ \frac{1}{\mu}\;
            \mbE_{\bar{g}\sim m^{\otimes\mbZ}}\!\left[\ \sum_{r=0}^{g_0-1} f(\bar{g},r)\ \right]
        \end{equation}
        for all bounded measurable $f$. In particular, conditionally on $\bar{g}\in\mbN^\mbZ$, the fiber $r\in\{0,\dots,g_0-1\}$ is uniform.
        Indeed, it can be shown that for $A\subseteq \mbN_0$ 
        \[
            \mbP_{\mathrm{env}}\left(r \in A \, | \, \bar{g}\right) = \dfrac{\#\left(A \cap \{0,1,\ldots, g_0-1\}\right)}{g_0}.
        \]
        Unconditionally we get
        \[
            \mbP_{\mathrm{env}}(r=0)\ =\ \frac{1}{\mu}.
        \]
    
        \paragraph{Block–shift map.}
        Define $T:\Omega_{\mathrm{env}}\to\Omega_{\mathrm{env}}$ by
        \[
            T(\bar{g},r)\ :=\
            \begin{cases}
                (\bar{g},r+1), & \text{if } r<g_0-1,\\
                (\sigma(\bar{g}),0), & \text{if } r=g_0-1,
            \end{cases}
        \]
        where $(\sigma(\bar{g}))_k:=g_{k+1}$ is the left shift on the bi–infinite gap sequence.
       The map \(T\) is \(\mathcal A\)-measurable and invertible, with \(\mathcal A\)-measurable inverse
        \[
            T^{-1}(\bar g,r)
            =
            \begin{cases}
                (\bar g,r-1), & \text{if } r>0,\\[2pt]
                (\sigma^{-1}(\bar g),g_{-1}-1), & \text{if } r=0.
            \end{cases}
        \]
        The following lemma verifies that \(T\) preserves \(\mbP_{\mathrm{env}}\).
    
        \begin{lemma}
        \label{lem:T_preserving}
            $T$ preserves $\mbP_{\mathrm{env}}$.
        \end{lemma}
    
            \begin{proof}
                Let $f$ be a bounded measurable function. 
                Using \eqref{eq:env_measure_def} and splitting the fiber sum at $r=g_0-1$,
                \begin{align*}
                    \int f\!\circ T\,d \ \mbP_{\mathrm{env}}
                        &=
                            \frac{1}{\mu}\,\mbE_{\bar{g}\sim m^{\otimes\mbZ}}\left[ \sum_{r=0}^{g_0-2} f(\bar{g},r+1)\ +\ f(\sigma(\bar{g}),0)\right]\\
                        &=
                            \frac{1}{\mu}\,\mbE_{\bar{g}\sim m^{\otimes\mbZ}}\left[ \sum_{r=1}^{g_0-1} f(\bar{g},r)\right]\ +\ \frac{1}{\mu}\,\mbE_{\bar{g}\sim m^{\otimes\mbZ}}\left[f(\sigma(\bar{g}),0)\right].
                \end{align*}
            Since $m^{\otimes\mbZ}$ is shift–invariant, $\mbE_{\bar{g}\sim m^{\otimes\mbZ}}[f(\sigma (\bar{g}),0)]=\mbE_{\bar{g}\sim m^{\otimes\mbZ}}[f(\bar{g},0)]$. Hence
            \[
            \int f\!\circ T\,d\mbP_{\mathrm{env}}
            =\frac{1}{\mu}\,\mbE\left[\sum_{r=0}^{g_0-1} f(\bar{g},r)\right]
            =\int f\,d\mbP_{\mathrm{env}}.
            \]
            \end{proof}
            
        \paragraph{Renewal Modulated MPS}
            Let $\lambda_B,\lambda_C\in\mcP(\tens)$ be branch laws on rank–three local tensors. On
            \[
                (\Omega,\mcF,\pr)\ :=\ \bigl(\Omega_{\mathrm{env}}\times\tens^{\mbZ}\times\tens^{\mbZ},\ 
                \mathcal{A}\otimes\mcB(\tens)^{\otimes\mbZ}\otimes\mcB(\tens)^{\otimes\mbZ},\ 
                \mbP_{\mathrm{env}}\otimes\lambda_B^{\otimes\mbZ}\otimes\lambda_C^{\otimes\mbZ}\bigr),
            \]
            let $(a_k)_{k\in\mbZ}$ and $(b_k)_{k\in\mbZ}$ be the coordinate arrays, independent of $(\bar g,r)$. Define
            \[
                \theta\ :=\ T\ \times\ \text{(left shift)}\ \times\ \text{(left shift)}.
            \]
            By Lemma~\ref{lem:T_preserving} and product structure, $\theta$ preserves $\pr$.
            For $n\in\mbZ$, define $(\bar g^{(n)},r_n):=T^n(\bar g,r)$,  and set
            \[
                \mcA_n \ := \ \mathbf 1_{\{r_n=0\}}a_n + \mathbf 1_{\{r_n\neq 0\}}b_n .
            \]
            Thus $\mcA_n = \mcA_0\circ\theta^n$ and we have that the sequence of local tensors is strictly stationary with one-site marginal given by 
            \[
            \frac{1}{\mu}\,\lambda_B\ +\ \Bigl(1-\frac{1}{\mu}\Bigr)\,\lambda_C\ \in\mcP(\tens).
            \]

\begin{remark}
\label{rem:modulated_examples_assumptions}
    The constructions in \Cref{app:examples} can also be chosen to satisfy \Cref{assumption1,assumption2}. For a local tensor $A$, let  $\phi_A$ denote its associated transfer map. Suppose that:

    \begin{enumerate}
        \item
            every branch law gives full measure to tensors $A$ such that
            \[
                \ker{\phi_A}\cap\states=\emptyset
                \qquad\text{and}\qquad
                \ker{\phi_A\adj}\cap\states=\emptyset;
            \]
        \item
            there is at least one distinguished branch law giving full measure to tensors whose associated transfer maps are strictly positive, and the background modulation process selects such a branch at some finite nonnegative time almost surely.
    \end{enumerate}

    Condition~(1) immediately gives \Cref{assumption1}. 
    Moreover, the first kernel condition ensures that every branch map sends nonzero positive semidefinite matrices to nonzero positive semidefinite matrices. 
    The adjoint kernel condition ensures that every branch map sends positive-definite matrices to positive-definite matrices (see \cite[Lemma~3.1]{Movassagh_2022}). 
    More precisely, if
    \[
        \tau
        :=
        \inf\{n\ge0:\text{a distinguished branch is selected at site }n\},
    \]
    then, on the event $\{\tau<\infty\}$,
    \[
        \Phi_\omega^{(\tau+1)}
        =
        \phi_\tau^\omega\circ\cdots\circ\phi_0^\omega
    \]
    is strictly positive. Thus \Cref{assumption2} holds with \( n_*(\omega)=\tau(\omega)+1\).
\end{remark}

%%%%%%%%%%%%%%%%%%%%%%%%%%%%%%%%%%%%%%%%%%%%%%%%%%%%%%%%%%%%%%%%%%%%%%%%%%%%
%%%%%%%%%%%%%%%%%%%%%%%%%%%%%%%%%%%%%%%%%%%%%%%%%%%%%%%%%%%%%%%%%%%%%%%%%%%

\section*{Data Availability Statement}
\noindent
Data sharing not applicable to this article as no datasets were generated or analyzed during the current study.

\section*{Competing Interests}
\noindent
The authors have no competing interests to declare that are relevant to the content of this article. 

%%%%%%%%%%%%%%%%%%%%%%%%%%%%%%%%%%%%%%%%%%%%%%%%%%%%%%%%%%%%%%%%%%%%%%%%%%%%
%%%%%%%%%%%%%%%%%%%%%%%%%%%%%%%%%%%%%%%%%%%%%%%%%%%%%%%%%%%%%%%%%%%%%%%%%%%

\addcontentsline{toc}{section}{Bibliography}
\printbibliography

%%%%%%%%%%%%%%%%%%%%%%%%%%%%%%%%%%%%%%%%%%%%%%%%%%%%%%%%%%%%%%%%%%%%%%%%%%%%
%%%%%%%%%%%%%%%%%%%%%%%%%%%%%%%%%%%%%%%%%%%%%%%%%%%%%%%%%%%%%%%%%%%%%%%%%%%

\end{document}